\definecolor{dark-red}{rgb}{.54,.0,.0}
\definecolor{dark-green}{rgb}{.0,.4,.0}
\definecolor{dark-blue}{rgb}{.04,.04,.4}
\renewenvironment{proof}   
{\noindent
	{\em Proof}.}   
{\nopagebreak\mbox{}\hfill $\Box$\par\addvspace{0.5cm}}   
\newenvironment{altproof}[1]   
{\noindent
	{\em Proof of {#1}}.}   
{\nopagebreak\mbox{}\hfill $\Box$\par\addvspace{0.5cm}}
\DeclareMathOperator{\arcsec}{arcsec}
\newcommand{\flt}{\frac{\Lambda}{3}}
\newcommand{\be}{\begin{equation}}
\newcommand{\ee}{\end{equation}}
\newcommand{\ben}{\begin{eqnarray*}}
\newcommand{\een}{\end{eqnarray*}}
\newcommand{\bea}{\begin{eqnarray}}
\newcommand{\eea}{\end{eqnarray}}
\newcommand{\md}{{\,\mathrm{d}}}
\newcommand{\mmd}{{\mathrm{d}}}
\newcommand{\beq}{\begin{equation}}
\newcommand{\eeq}{\end{equation}}
\DeclareMathAlphabet{\mathpzc}{OT1}{pzc}{m}{it}
\def\bbR{{\mathbb R}}
\def\bbS{{\mathbb S}}
\newcommand{\dV}{{\rm dVol}}
\newtheorem{thm}{Theorem}[section]
\newtheorem{lem}[thm]{Lemma}
\newtheorem{cor}[thm]{Corollary}
\newtheorem{rmk}[thm]{Remark}
\newtheorem{ex}[thm]{Example}
\let\truegamma=\gamma
\newcommand{\nabb}{\mbox{$\nabla \mkern-13mu /$\,}}
\newcommand{\GGamma}{\mbox{$\Gamma \mkern-12mu /$\,}}
\newcommand{\gin}{g \mkern-8.7mu /}
\renewcommand{\gamma}{{\gin}}
\newcommand{\vch}{v_{\mbox{\tiny${\cal CH}^+$}}}
\newcommand{\bch}{b_{\mbox{\tiny${\cal CH}^+$}}^{\phi_\star}}
\newcommand{\phich}{\phi_{\star,{\cal CH}^+}}
\newcommand{\grad}{\mbox{grad}\,}
\newcommand{\bphi}{b^{\phi_{\star}}}
\newcommand{\bphir}{\left.b^{\phi_{\star}}\right|_{r=r_-}}
\newcommand{\Omegasqch}{\Omega^2_{\mbox{\tiny${\cal CH}^+$}}}
\newcommand{\Omegafch}{\Omega^4_{\mbox{\tiny${\cal CH}^+$}}}
\newcommand{\bphib}{b_{\phi_\star}}
\newcommand{\bthetab}{b_{\theta_\star}}
\newcommand{\normb}{\|\bphi\|^2}
\newcommand{\normbch}{\|\bch\|^2}
\newcommand{\f}{\tilde{f}}
\newcommand{\g}{\tilde{g}}
\newcommand{\h}{\tilde{h}}
\newcommand{\tldej}{\tilde{\mbox{\j}}}
\newcommand{\pu}{\partial_u}
\newcommand{\pv}{\partial_{\vch}}
\newcommand{\pt}{\partial_{\theta_\star}}
\newcommand{\pp}{\partial_{\phi_\star}}
\newcommand{\pvpnormal}{{\textstyle\frac{1}{\Omega^2}}
	\left(\partial_{v}+\bphi\partial_{\phi_\star}\right)}
\newcommand{\pvpnormaldois}{\frac{\partial_{v}+\bphi\partial_{\phi_\star}}{\Omega^2}}
\newcommand{\pvp}{{\textstyle\frac{1}{\Omegasqch}}
	\left(\partial_{\vch}\!\!+\bch\partial_{\phi_\star}\right)}
\newcommand{\pvpdois}{
	\frac{\partial_{\vch}\!\!+\bch\partial_{\phi_\star}}{\Omegasqch}}
\newcommand{\pr}{\partial_{r_\star}}
\newcommand{\vh}{v_{{\cal H}^+}}
\newcommand{\uh}{u_{{\cal H}^+}}
\newcommand{\tuh}{\tilde{u}_{{\cal H}^+}}
\newcommand{\bh}{b_{\mbox{\tiny${\cal H}^+$}}^{\phi_\star}}
\newcommand{\phih}{\phi_{\star,{\cal H}^+}}
\newcommand{\bphirh}{\left.b^{\phi_{\star}}\right|_{r=r_+}}
\newcommand{\Omegasqh}{\Omega^2_{\mbox{\tiny${\cal H}^+$}}}
\newcommand{\Omegafh}{\Omega^4_{\mbox{\tiny${\cal H}^+$}}}
\newcommand{\pvh}{\partial_{\vh}}
\newcommand{\pvph}{
	\left(\partial_{\vh}\!\!+\bh\partial_{\phi_\star}\right)}
\newcommand{\puh}{\partial_{\uh}}
\newcommand{\puph}{{\textstyle\frac{1}{\Omegasqh}}\partial_{\uh}}
\newcommand{\nCv}{n_{\underline{\cal C}_v}}
\newcommand{\nCu}{n_{{\cal C}_u}}
\newcommand{\VCv}{\dV_{\underline{\cal C}_v}}
\newcommand{\VCu}{\dV_{{{\cal C}_u}}}
\newcommand{\VS}{\dV_{\Sigma_{r_\star}}}
\newcommand{\tinyt}{\mbox{\tiny $\overline{\mbox{\tt T}}$}}
\newcommand{\smallt}{\mbox{\small $\overline{\mbox{\tt T}}$}}
\newcommand{\tinyy}{\mbox{\tiny $\overline{\mbox{\tt Y}}$}}
\newcommand{\smally}{\mbox{\small $\overline{\mbox{\tt Y}}$}}
\newcommand{\omt}{\omega_{\tinyt}}
\newcommand{\omy}{\omega_{\tinyy}}
\newcommand{\omtt}{\omega_{\theta_\star}}
\newcommand{\omp}{\omega_{\phi_\star}}
\newcommand{\ot}{\otimes}
\newcommand{\rhos}{\hat{\rho}}
\newcounter{mnotecount}[section]
\renewcommand{\themnotecount}{\thesection.\arabic{mnotecount}}
\newcommand{\mnote}[1]
{\protect{\stepcounter{mnotecount}}$^{\mbox{\footnotesize $%
\!\!\!\!\!\!\,\bullet$\themnotecount}}$ \marginpar{
\raggedright\tiny\em $\!\!\!\!\!\!\,\bullet$\themnotecount: #1} }
\def\XXint#1#2#3{{\setbox0=\hbox{$#1{#2#3}{\int}$ }
\vcenter{\hbox{$#2#3$ }}\kern-.6\wd0}}
\def\thesection{\arabic{section}}
\def\p@subsection{}
\def\p@subsubsection{}
\def\p@paragraph{}
\renewcommand{\thefootnote}{\fnsymbol{footnote}}
\begin{document}
\begin{center}

{\bf {\large Regularity of a double null coordinate system\\ for Kerr--Newman--de Sitter spacetimes}}

\bigskip
Anne T.\ Franzen\footnote{e-mail address: anne.franzen@tecnico.ulisboa.pt.} and
Pedro M.\ Gir\~ao\footnote{e-mail address: pgirao@math.ist.utl.pt.}

\bigskip
{Center for Mathematical Analysis, Geometry and Dynamical Systems,}\\
{Instituto Superior T\'ecnico, Universidade de Lisboa,}\\
{Av.\ Rovisco Pais, 1049-001 Lisbon, Portugal}

\bigskip
{\bf Dedicated to Giorgio Fusco}

\newcommand\blfootnote[1]{%
	\begingroup
	\renewcommand\thefootnote{}\footnote{#1}%
	\addtocounter{footnote}{-1}%
	\endgroup
}

\blfootnote{2020 Mathematics Subject Classification. Primary: 83C57; Secondary: 35L05, 35R01, 58J45.}

\blfootnote{Key words and phrases. Black holes, positive cosmological constant, wave equation.}

\end{center}
\medskip

\centerline{\bf Abstract}

\noindent We construct a double null coordinate system $(u,v,\theta_\star,\phi_\star)$ for 
Kerr--Newman--de Sitter black hole interior spacetimes and prove that
the two dimensional spheres given by the intersection of the
hypersurfaces $u=\mbox{constant}$ and $v=\mbox{constant}$ are $C^\infty$
in Boyer--Lindquist coordinates (including at the ``poles").
The null coordinates allow one to immediately extend some results 
previously proven for Kerr. As an example,
we 
illustrate how Sbierski's result in~\cite{Sbierski},
for the wave equation on the black hole interior,
for Reissner--Nordstr\"{o}m and Kerr spacetimes,
applies to Kerr--Newman--de Sitter spacetimes.

\medskip

\tableofcontents

\section{Introduction}

The Kerr--Newman--de Sitter (KNdS) metric is a solution of the Einstein--Maxwell
equations with a positive cosmological constant $\Lambda$:
\begin{eqnarray*}
&&R_{\mu\nu}-\,\frac{1}{2}Rg_{\mu\nu}+\Lambda g_{\mu\nu}=2T_{\mu\nu},\\
&&dF=d\star F=0,\\
&&T_{\mu\nu}=F_{\mu\alpha}F_\nu^{\ \alpha}-\,\frac{1}{4}F_{\alpha\beta}
F^{\alpha\beta}g_{\mu\nu}.
\end{eqnarray*}
Here $R_{\mu\nu}$ are the components of the Ricci tensor of the spacetime
metric $g$, $R$ is the scalar curvature, $\star$ is the Hodge star operator, and
$F_{\mu\nu}$ is the Faraday electromagnetic 2-form. So, 
the KNdS metric is an electrovacuum solution of the Einstein field equations,
i.e.\ it is a solution in which the only nongravitational mass-energy present is an electromagnetic field.
The spacetime is the four-dimensional manifold $\bbR^2\times\bbS^2$ with metric
given by 
\bea
g&=&\frac{\rho^2}{\Delta_r}\md r^2+\frac{\rho^2}{\Delta_{\theta}}\md \theta^2
+\frac{1}{\rho^2}(a^2\sin^2\theta\Delta_{\theta}-\Delta_r)\md t^2\nonumber \\
&& +\frac{\sin^2\theta}{\Xi^2\rho^2}((r^2+a^2)^2\Delta_{\theta}-a^2\sin^2\theta\Delta_r)\md \phi^2
-\,\frac{2a\sin^2\theta}{\Xi\rho^2}((r^2+a^2)\Delta_{\theta}-\Delta_r)\md \phi \md t
\label{rt_metric}
\eea 
in Boyer--Lindquist coordinates,
where
\begin{eqnarray*}
\Delta_r&=&(r^2+a^2)\left(1-\,\frac{\Lambda}{3}r^2\right)-2Mr+e^2,\nonumber\\
\Delta_{\theta}&=& 1+\frac{\Lambda}{3}a^2\cos^2\theta,\nonumber\\
\rho^2&=&r^2+a^2\cos^2\theta,\nonumber\\
\Xi&=&1+\frac{\Lambda}{3}a^2,\nonumber
\end{eqnarray*}
$\theta$ is the colatitude, with
$$
0\leq\theta\leq\pi,
$$
and $\phi$ is the longitude, with 
$$
\phi\in \bbS^1,
$$
(see
Carter~\cite{carter1}, and Akcay and Matzner~\cite{akcay} and Kraniotis~\cite{krani}, for example).
Here $M>0$ and $a\neq 0$ are mass and angular momentum parameters, respectively,
 and $e$ is a charge parameter (which may be zero).
Without loss of generality, 
we assume that the magnetic charge is zero and $a$ is positive.
This metric is supposed to represent a rotating black hole,
with charge, in a universe which is expanding at an accelerated rate (as ours is).
We refer to Appendix~\ref{app-D},
where we calculate Komar integrals over the event horizon,
for the relation between these parameters
and the physical quantities of the black hole.
We wish to consider subextremal metrics, meaning that
$\Delta_r$ has four distinct real roots
$$
r_n<0<r_-<r_+<r_c.
$$
The event horizon ${\cal H}$ corresponds to the hypersurface where $r=r_+$,
the Cauchy horizon ${\cal CH}$ corresponds to the hypersurface where $r=r_-$,
and the cosmological horizon corresponds to the hypersurface $r=r_c$.
We are interested in studying solutions of the wave equation in the 
black hole region,
$r_-<r<r_+$, and we wish
to construct double null coordinates and understand their relation with
Boyer--Lindquist coordinates.
The coordinate $t$ takes values in $\mathbb{R}$.

This article follows closely the strategy and tools developed in~\cite{m-luk}.
Double null coordinate systems were constructed by 
Pretorius and Israel~\cite{pretorius} for Kerr spacetimes,
by Balushi and Mann~\cite{Mann1} for Kerr--(anti) de Sitter spacetimes,
and by Imseis, Balushi and Mann~\cite{mann2}
for Kerr--Newman--(anti) de Sitter spacetimes.
In~\cite{Mann1} and~\cite{mann2} the authors also study the formation of caustics.
In Section~\ref{null} we construct a double null coordinate
system for Kerr--Newman--de Sitter spacetimes. This construction only differs
from the one in~\cite{Mann1} and~\cite{mann2} (that we were
unaware of until the completion of this work)
 in the choice of 
$\lambda$ (our choice $\lambda=\sin^2\theta_\star$ is identical to the one in~\cite{m-luk} and~\cite{pretorius}).
Consider the transformation $(t,r,\theta,\phi)\mapsto(t,r_\star,\theta_\star,\phi)$
in the black hole region $r_-<r<r_+$, where $\Delta_r<0$.
The coordinate $\theta_{\star}$ is defined implicitly as the solution
of $F(r,\theta,\theta_{\star})=0$, where $F$ is given by 
$$
F(r, \theta,\theta_{\star})
=\int_{\theta_{\star}}^{\theta} \frac{\md \theta'}
{a\sqrt{\sin^2\theta_{\star}\Delta_{\theta'}-\sin^2\theta'}}
+\int_r^{r_+}\frac{\md r'}{\sqrt{((r')^2+a^2)^2-a^2\sin^2\theta_{\star}\Delta_{r'}}}.
$$
The coordinate 
$r_\star$
is defined by $r_\star=\varrho(r,\theta,\sin^2\theta_{\star}(r,\theta))$, 
where
$\varrho$ is given by 
\begin{eqnarray*}
\varrho(r,\theta,\lambda)&=&
\int^r_{r_0}\frac{(r^{\prime 2}+a^2)}{\Delta_{r'}}\md r'+
\int_r^{r_+}\frac{(r^{\prime 2}+a^2)-
	\sqrt{((r')^2+a^2)^2-a^2\lambda\Delta_{r'}}}{\Delta_{r'}}\md r'
\nonumber\\
&&+\int_0^\theta \frac{a\sqrt{\lambda\Delta_{\theta'}-\sin^2\theta'}}
{\Delta_{\theta'}}\md\theta'+\frac{a^2}{2}f(\lambda),
\end{eqnarray*}
for some fixed $r_0\in(r_-,r_+)$, and where
$f$ is the function which satisfies
$f(0)=0$ and
$$ 
f'(\lambda)=-
\int_0^{\arcsin\sqrt{\lambda}}
\frac{a\sqrt{\lambda\Delta_{\theta'}-\sin^2\theta'}}{\Delta_{\theta'}}{\md \theta'}
$$ 
(note the difference, $\varrho$ versus $\rho$ in~\eqref{rt_metric}). 

The regularity of transformation of coordinates  $(r_\star,\theta_\star)\mapsto(r,\theta)$ for Kerr spacetimes,
namely at $\theta_\star=0$,
was shown by Dafermos and Luk~\cite{m-luk}.  
We adapt their work to the setting of Kerr--Newman--de Sitter spacetimes.
We check that $\theta_{\star}$ is well defined and continuous
at $\theta=0$ and $\theta=\frac{\pi}{2}$ with
\be\label{aim}
\frac{\sin\theta_{\star}}{\sin\theta}+\frac{\cos\theta}{\cos\theta_{\star}}
\lesssim 1.
\ee
The proof of~\eqref{aim} requires that we use conditions characterizing subextremal 
black holes which are deduced in Appendix~\ref{charge_appendix}. Namely,
we use 
\be\label{Xii}
\Xi<1+\frac{1}{3}\,l\left(\frac{r_+}{r_-},\Lambda e^2\right),
\ee
where the function $l$ is given by~\eqref{l}.
This implies the inequality
$$
\Xi<\csc^2\left(\arctan\sqrt{\frac{r_+}{r_-}}+\arctan\sqrt{\frac{r_-}{r_+}}\right),
$$
which in turn implies~\eqref{aim}. We use the fact that $\Lambda$ is nonnegative
so that our computations are not immediately applicable to the setting of Kerr--AdS.

We also show
that $r$ and $\theta$ are smooth functions of $r_\star$ and $\theta_{\star}$.
When the cosmological constant is equal to zero
and $e=0$,
$\Delta_{\theta }$ is equal to $1$
and our formulas reduce to 
the ones for the Kerr spacetime in~\cite{m-luk}.
The trigonometric identity
$$
\sin ^2(2 \theta_{\star}) \Delta_\theta=-\sin(2\theta)\,\partial_\theta D(\theta,\theta_{\star})+
2 (\cos (2 \theta)+\cos (2 \theta_{\star}))D(\theta,\theta_{\star}),
$$
for $D(\theta,\theta_{\star})=\sin^2\theta_{\star}\Delta_{\theta }-\sin^2\theta$
is the key to 
the calculation of $\partial_r\theta_\star$ and $\partial_\theta\theta_\star$,
as well as 
the successful completion of some new identities, such as~\eqref{ddtheta} and~\eqref{d_sqrt}, which
 we need in order to calculate the derivatives of $r$
and $\theta$ with respect to $r_\star$ and $\theta_\star$.
 We would like to emphasize that 
our calculations are successful because the dependence
of $\Delta_{\theta }$ on $\theta$ occurs through
$\sin^2\theta$,
and not through $\sin\theta$, or on any other non-smooth
function of $\sin^2\theta$.
(This is a reflection of the fact that the Kerr--Newman--de Sitter metric 
is regular on a manifold diffeomorphic to $\mathbb{R}^2\times \mathbb{S}^2_{\theta,\phi}$,
i.e.\ it is regular on the full Boyer--Lindquist spheres of constant time coordinate $t$
and radial coordinate $r$.)
We also obtain bounds on the derivatives
of $r$ and $\theta$ which we need later on. These bounds 
parallel the ones in~\cite{m-luk}.

Using
$$ 
u=\frac{r_\star-t}{2}\qquad\mbox{and}\qquad v=\frac{t+r_\star}{2},
$$ 
the final transformation 
$$
\phi_\star=\phi-h(r_\star,\phi_\star),
$$
with $h$ given by
$$
\partial_{r_{\star}} h(r_{\star},\theta _{\star})=-\,\frac{\Xi a ((r^2+a^2) \Delta _{\theta }-\Delta _r)}
{(r^2+a^2)^2\Delta_{\theta}-a^2\sin^2\theta\Delta_{r}},\qquad\mbox{with}\ h(0,\theta_{\star})=0,
$$
allows one to
bring the metric to the double null form
\begin{eqnarray}
g&=&-2\Omega^2 \left(\mmd u\otimes \mmd v+\md v\otimes \mmd u\right)
\nonumber\\
&&+\gamma_{\theta_{\star}\theta_{\star}}
\md\theta_{\star}\otimes \mmd\theta_{\star}
+\gamma_{\theta_{\star}\phi_{\star}}\md\theta_{\star}\otimes(\mmd\phi_{\star}-b^{\phi_{\star}}\md v)
+\gamma_{\theta_{\star}\phi_{\star}}(\mmd\phi_{\star}-b^{\phi_{\star}}\md v)\otimes
\mmd\theta_{\star}\nonumber\\
&&+\gamma_{\phi_{\star}\phi_{\star}}
(\mmd\phi_{\star}-b^{\phi_{\star}}\md v)\otimes(\mmd\phi_{\star}-b^{\phi_{\star}}\md v),\label{double_null}
\end{eqnarray}
which one can use to carry out energy estimates.

Neither the Boyer--Lindquist coordinate system $(t,r,\theta,\phi)$ nor the double null
coordinate system $(u,v,\theta_\star,\phi_\star)$ cover the axis $\theta_\star=0$,
obviously. But they can be naturally extended to an atlas that does
cover $\theta_\star=0$ using 
$$
\left\{
\begin{array}{rcl}
x&=&\sin\theta_\star\cos\phi,\\
y&=&\sin\theta_\star\sin\phi,
\end{array}\right.\qquad\qquad\left\{
\begin{array}{rcl}
\tilde{x}&=&\sin\theta\cos\phi,\\
\tilde{y}&=&\sin\theta\sin\phi.
\end{array}\right.
$$
In Subsection~\ref{polo}, we prove that the two atlases 
${\cal A}_{\mbox{\tiny BL}}=\{(t,r,\theta,\phi),(t,r,\tilde{x},\tilde{y})\}$ and
${\cal A}_{\mbox{\tiny DN}}=\{(u,v,\theta_\star,\phi_\star),(u,v,x,y)\}$
are compatible (which would be clear if we were to exclude the points
where $\theta=\theta_\star=0$ and $\theta=\theta_\star=\pi$ from our manifold).
This implies that the two-spheres given by the intersection of the
hypersurfaces $u=\mbox{constant}$ and $v=\mbox{constant}$ are $C^\infty$
with respect to ${\cal A}_{\mbox{\tiny BL}}$ (see Theorem~\ref{thm1}).

Following~\cite{m-luk}, we analyze the decay of $\Omega^2$ at the future event and Cauchy horizons,
${\cal H}^+$ and~${\cal CH}^+$.
Finally, we give regular coordinates at ${\cal H}^+$ and~${\cal CH}^+$.
The Christoffel symbols of $g$ 
in the double null coordinates $(u,v,\theta_\star,\phi_\star)$
are given in Appendix~\ref{Chris},
along with some covariant derivatives that are needed to carry out energy 
estimates.

In Section~\ref{aplicacao},
using the vector field method, we study, in the black hole interior,
the energy of solutions of the wave equation 
which have compact support on ${\cal H}^+$.
We apply the form~\eqref{double_null} of the metric to construct certain blue-shift and red-shift 
vector fields and to calculate their covariant derivatives.
We obtain the usual inequalities relating the vector and scalar currents
associated to these vector fields. This allows us to 
illustrate how Sbierski's result in~\cite{Sbierski} 
applies to Kerr--Newman--de Sitter spacetimes.

This work is a first step of our broader project to generalize 
the results of~\cite{CF}, which provides a sufficient condition, in terms of surface gravities and a parameter for an exponential decaying Price Law, for energy of waves to remain bounded up to ${\cal CH}^+$.
The work~\cite{CF} used the fact that the generators of spherical symmetry are three
Killing vector fields, which is not true in the context of 
Kerr--Newman--de Sitter
spacetimes. We expect to address this in a forthcoming paper. 

An alternative approach towards extending the results of~\cite{CF} to the
Kerr--Newman--de Sitter setting would be to work in Boyer--Lindquist coordinates
as is done in the work~\cite{luk-sbierski} on Kerr black hole interiors.
However, as an example,
the double null coordinates allow one to immediately extend the results in~\cite{anne_kerr} to KNdS.

In Appendix~\ref{charge_appendix}, we characterize subextremal
Kerr--Newman--de Sitter black holes in terms of $(r_-, r_+,\Lambda a^2,\Lambda e^2)$, proving~\eqref{Xii}, in particular, as mentioned above.
The subset of $\bbR^3$ where one can choose
$\left(\frac{r_+}{r_-},\Lambda a^2,\Lambda e^2\right)$ is sketched in
Figure~\ref{region} on page~\pageref{region}. 
We make additional remarks concerning 
alternative choices of parameters,
namely $\left(\Lambda,\frac{r_+}{r_-},a,e\right)$ or
$(\Lambda,M,a,e)$. Related characterizations of the
parameters of subextremal Kerr--de Sitter solutions, for the case when there is
no charge, can be found in Lake and Zannias~\cite{lake} 
and Borthwick~\cite{borthwick}.

Hintz and Vasy give a uniform analysis of linear
waves up to the Cauchy horizon using methods from scattering
theory and microlocal analysis in~\cite{peter2}.
Moreover, Hintz proves non-linear stability of the Kerr--Newman--de Sitter family of charged black holes in~\cite{peter1}.

\section{A double null coordinate system}
\label{null}

\subsection{Construction of the double null coordinate system}

\subsubsection{The coordinates $r_\star$ and $\theta_{\star}$}

Given the manifold $\mathbb{R}^2\times \mathbb{S}^2$,
with metric~\eqref{rt_metric},
we look for a function $r_\star$ such that the axisymmetric hypersurface,
$$
v(t,r,\theta)=t\pm r_\star(r,\theta)=\mbox{constant}
$$
(ingoing when the plus sign is chosen, and outgoing
when when the minus sign is chosen), is lightlike. Then, the function
$v$ must satisfy the eikonal equation
\begin{eqnarray*}
	g^{\alpha \beta}(\partial_{\alpha} v)(\partial_{\beta} v)
	=\frac{1}{\rho^2}\left[\Delta_r (\partial_r {r}_{\star})^2+\Delta_{\theta}(\partial_{\theta} {r}_{\star})^2
	-\,\frac{1}{\Delta_{r}\Delta_{\theta}}\left((r^2+a^2)^2\Delta_{\theta}-a^2\sin^2\theta\Delta_{r}\right)\right]=0.
\end{eqnarray*}
We follow~\cite{pretorius} and construct particular separable solutions of
the eikonal equation.
We define $P$ and $Q$ by
\begin{eqnarray*}
	P(\theta,\theta_{\star})&=&a\sqrt{\sin^2\theta_{\star}\Delta_{\theta}-\sin^2\theta}, \\
	Q(r,\theta_{\star})&=&\sqrt{(r^2+a^2)^2-a^2\sin^2\theta_{\star}\Delta_r}.
\end{eqnarray*}
Note that
\be
\frac{1}{\Delta_{r}\Delta_{\theta}}\left((r^2+a^2)^2\Delta_{\theta}-a^2\sin^2\theta\Delta_{r}\right)=\frac{Q^2}{\Delta_{r}}+\frac{P^2}{\Delta_{\theta}},\label{sum}
\ee
and so the eikonal equation becomes
$$
\Delta_r (\partial_r {r}_{\star})^2+\Delta_{\theta}(\partial_{\theta} {r}_{\star})^2=\frac{Q^2}{\Delta_{r}}+\frac{P^2}{\Delta_{\theta}}.
$$
As $P$ is independent of $r$, and $Q$ is independent of $\theta$,
we look for special solutions $r_\star$ of this equation, where
\be\label{QP}
\partial_r {r}_{\star}=\frac{Q}{\Delta_r}\quad\mbox{and}\quad
\partial_{\theta} {r}_{\star}=\frac{P}{\Delta_\theta},
\ee
so that
\be\label{dr*}
\md r_{\star}=\frac{Q}{\Delta_r}\md r+\frac{P}{\Delta_\theta}\md\theta.
\ee
Both $P$ and $Q$ depend on (what is so far the parameter) $\theta_\star$,
which arises because of the degree of freedom one has in
breaking up the left-hand side of~\eqref{sum} to a sum. Indeed, 
to the left-hand side of~\eqref{sum}
we subtracted and added 
the quantity $a^2\sin^2\theta_{\star}$ (which is independent of
both $r$ and $\theta$) and then we decomposed the resulting
expression into a 
sum of a function depending solely on $r$ and a function depending
solely on $\theta$. 
We integrate~\eqref{dr*} and obtain
$$
r_\star=\int_{r_+}^r\frac{Q(r',\theta_{\star})}{\Delta_{r'}}\md r'+\int_0^\theta\frac{P(\theta',\theta_{\star})}{\Delta_{\theta'}}\md\theta'+
\frac{a^2}{2}f(\sin^2\theta_{\star}),
$$
where the function $f$ accounts for an integration constant.
Thus we have
\be\label{r*}
r_\star=\varrho(r,\theta,\lambda),
\ee
$\varrho:(r_-,r_+)\times[0,\pi]\times[0,1]\to\mathbb{R}$,
with
\begin{eqnarray*}
	\varrho(r,\theta,\lambda)&=&
	\int^r_{r_0}\frac{(r^{\prime 2}+a^2)}{\Delta_{r'}}\md r'+
	\int_r^{r_+}\frac{(r^{\prime 2}+a^2)-\hat{Q}(r',\lambda)}{\Delta_{r'}}\md r'\\
	&&+\int_0^\theta \frac{\hat{P}(\theta',\lambda)}{\Delta_{\theta'}}\md\theta'+\frac{a^2}{2}f(\lambda),
\end{eqnarray*}
for some fixed $r_0\in(r_-,r_+)$,
and
$$
\lambda=\sin^2\theta_{\star}.
$$
The expression for $\varrho$ is written so that the second integral converges.
Here 
\bea
\nonumber
\hat{P}^2(\theta,\lambda)&=&a^2\left(\lambda\Delta_{\theta}-\sin^2\theta\right), \\
\nonumber
\hat{Q}^2(r,\lambda)&=&\left((r^2+a^2)^2-a^2\lambda\Delta_r\right),
\eea
so that $\hat{P}(\theta,\lambda)=
P\bigl(\theta,\arcsin\sqrt{\lambda}\bigr)
=P(\theta,\theta_{\star})$ and
$\hat{Q}(r,\lambda)=
Q\bigl(r,\arcsin\sqrt{\lambda}\bigr)=Q(r,\theta_{\star})$.

For each fixed $\lambda$, \eqref{r*} is a solution of~\eqref{dr*}.
We now proceed to obtain another solution of~\eqref{dr*}.
Calculating the differential of $\varrho$, we obtain
$$
\md\varrho=\frac{Q}{\Delta_r}\md r+\frac{P}{\Delta_\theta}\md\theta+
\partial_\lambda\varrho\md\lambda,
$$
where
$$
\partial_\lambda\varrho=\frac{a^2}{2}\left(
\int_0^\theta\frac{\md\theta'}{\hat{P}(\theta',\lambda)}+
\int_r^{r_+}\frac{\md r'}{\hat{Q}(r',\lambda)}+f'(\lambda)
\right).
$$
Define the function
$$
F(r,\theta,\theta_\star):=
\int_0^\theta\frac{\md\theta'}{P(\theta',\theta_{\star})}+
\int_r^{r_+}\frac{\md r'}{Q(r',\theta_{\star})}+f'(\sin^2\theta_{\star}).
$$
Note that the function $f$ is still free. We choose $f:[0,1]\to\mathbb{R}$ to be the function which satisfies $f(0)=0$ and
$$ 
f'(\lambda)=-
\int_0^{\arcsin\sqrt{\lambda}}
\frac{\md\theta'}{\hat{P}\bigl(\theta', \lambda)},\qquad\mbox{i.e.}\
f'(\sin^2\theta_{\star})=-\int_{0}^{\theta_{\star}}\frac{\md \theta'}{P(\theta', \theta_{\star})}.
$$
The function $f$ is bounded.
Then the expression for $F$ becomes
\bea
\label{F_imp_def2}
F(r, \theta,\theta_{\star})
&=&\int_{\theta_{\star}}^{\theta} \frac{\md \theta'}{P(\theta', \theta_{\star})}+\int_r^{r_+}\frac{\md r'}{Q(r', \theta_{\star})}.
\eea

Adapting the construction of~\cite{m-luk} to our case,
for $r\in[r_-,r_+]$ and $\theta \in (0, \frac{\pi}{2})$, 
we define $\theta_{\star}\in [\theta, \frac{\pi}{2})$ implicitly to be the solution of  
\begin{equation}\label{F-equal-zero}
F(r, \theta,\theta_{\star})=0
\end{equation}
(see Lemma~\ref{theta-well-defined}).
Also, let 
$$
\theta_{\star}(r,0)=0,\quad \theta_{\star}\left(r,\frac{\pi}{2}\right)=
\frac{\pi}{2}
$$
and
\be\label{mirror}
\theta_{\star}(r,\theta)=\pi-\theta_{\star}(r,\pi-\theta)\ \mbox{for}\ 
\theta\in\left(\frac{\pi}{2},\pi\right].
\ee
Then, the function
$$
r_\star=\varrho(r,\theta,\sin^2\theta_{\star}(r,\theta))
$$
is another solution of~\eqref{dr*}. 
The functions 
\bea
\label{u_v}
\qquad u=\frac{r_{\star}(r,\theta)-t}{2}\quad\mbox{and}\quad v=\frac{t+{r}_{\star}(r, \theta)}{2}
\eea
are solutions of the eikonal equation.
Just as in the case of Kerr, it turns out
that $\theta_{\star}$ 
is an appropriate angle coordinate. 
This can be understood starting with the construction of~\cite{pretorius}:
when $\Lambda=M=a=e=0$ (so that we are reduced to the Minkowski spacetime),
$\theta_{\star}$ is the spherical polar angle. Moreover, 
for $r$ close to $r_+$, $\theta_{\star}$ is close to
$\theta$ (see~\eqref{F_imp_def2} and~\eqref{F-equal-zero}).
The function $\theta_\star$ is
interpreted as the spherical polar angle and the hypersurfaces where
$u$ and $v$ are constant are called quasi-spherical light cones.
From
\bea
g^{\alpha \beta}(\partial_{\alpha} r)(\partial_{\beta} r)
&=&\frac{1}{\rho^2}
(\Delta_r (\partial_r {r}_{\star})^2+\Delta_{\theta}(\partial_{\theta} {r}_{\star})^2)\ =\ 
\frac{1}{\rho^2\Delta_r\Delta_{\theta}}(\Delta_rP^2+\Delta_{\theta}Q^2)
\nonumber\\
&=&\frac{1}{\rho^2\Delta_{r}\Delta_{\theta}}\left((r^2+a^2)^2\Delta_{\theta}-a^2\sin^2\theta\Delta_{r}\right)\ <\ 0\nonumber
\eea
(recall that $\Delta_r<0$),
a hypersurface where $r_{\star}$ equals a constant is spacelike.
\begin{rmk}\label{swing}
	Note that $r_{\star}$ ranges between $-\infty$ and $+\infty$, as $r$ ranges between $r_+$ and $r_-$. More precisely, given $L>0$, there exists $\delta>0$
	such that $r_\star(r,\theta)>L$ for all $(r,\theta)\in(r_-,r_-+\delta)\times[0,\pi]$.  Moreover, given $\delta>0$,
	there exists $L>0$ such that $r_\star(r,\theta)>L$ implies that
	$r\in(r_-,r_-+\delta)$. In\/ {\rm Lemma~\ref{sharon}} we will prove that
	$(r,\theta)\mapsto(r_\star,\theta_\star)$ is invertible.
	So, we are observing that\/ $\lim_{r\searrow r_-}r_\star(r,\theta)=+\infty$
	and that\/ $\lim_{r_\star\nearrow +\infty}r(r_\star,\theta_\star)=r_-$, and that these
	limits are uniform in $\theta$ and in $\theta_\star$, respectively, for $\theta$ and $\theta_\star$ in $[0,\pi]$. Analogous statements can be made
	for the other endpoint.
\end{rmk}
The behavior of the coordinates $u$ and $v$ is sketched in Figure~\ref{ll}.
\begin{figure}[ht]
	\begin{psfrags}
		\psfrag{c}{{\small $u=+\infty$}}
		\psfrag{m}{{\small $v=-\infty$}}
		\psfrag{e}{{\small $u=-\infty$}}
		\psfrag{h}{{\small $v=+\infty$}}
		\centering
		\includegraphics[scale=.8]{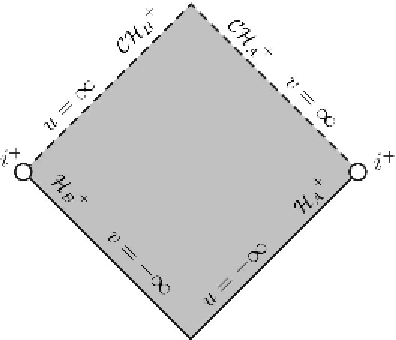}
		\caption{Behavior of the coordinates $u$ and $v$.}
		\label{ll}
	\end{psfrags}
\end{figure}

\begin{rmk}
	Because of the symmetry in~\eqref{mirror}, our statements 
	about $\theta$ and $\theta_\star$ will refer to the interval
	$\left(0,\frac{\pi}{2}\right)$ and it is understood that
	corresponding statements will hold in $\left(\frac{\pi}{2},\pi\right)$.
\end{rmk}

\subsubsection{The metric in $(t,r_\star,\theta_{\star},\phi)$ coordinates}
Denote 
\bea
\label{Upsilon}
\Upsilon:= \Delta_{r}P^2+\Delta_{\theta}Q^2= (r^2+a^2)^2\Delta_{\theta}-a^2\sin^2\theta\Delta_{r},
\eea
	$$
G:=\partial_{\theta_{\star}} F
$$  
and
\be
\label{L}
L:= -GPQ.
\ee
Differentiating both sides of~\eqref{F_imp_def2} with respect to $r$ and $\theta$ yields
\be\label{UQP}
\partial_r\theta_\star = \frac{1}{GQ},\qquad\partial_\theta\theta_\star=-\,\frac{1}{GP}.
\ee
Since
$$
\left|
\begin{array}{cc}
\partial_rr_\star&\partial_\theta r_\star\\
\partial_r\theta_\star&\partial_\theta \theta_\star
\end{array}
\right|=
\frac{\Upsilon}{\Delta_r\Delta_{\theta}L},
$$
the differentials of $r$ and $\theta$ are given by
\bea
\label{differentials}
\md r&=&\frac{\Delta_{r}\Delta_{\theta}Q}{\Upsilon}\md r_\star-\,\frac{\Delta_{r}LP}{\Upsilon}\md \theta_{\star},\\
\label{differentials2}
\md \theta &=& \frac{\Delta_{r}\Delta_{\theta}P}{\Upsilon}\md r_\star+\frac{\Delta_{\theta}LQ}{\Upsilon}\md \theta_{\star}.
\eea
To write the metric in $(t,r_{\star},\theta_{\star},\phi)$ coordinates one 
uses~\eqref{differentials} and~\eqref{differentials2} in~\eqref{rt_metric} 
and obtains
\begin{eqnarray*}
		g&=&\frac{\rho^2\Delta_r\Delta_{\theta}}{\Upsilon}\md r_{\star}^2+\frac{\rho^2L^2}{\Upsilon}\md \theta_{\star}^2
	+\frac{1}{\rho^2}(a^2\sin^2\theta\Delta_{\theta}-\Delta_r)\md t^2 +\frac{\sin^2\theta}{\Xi^2\rho^2}\Upsilon\md \phi^2\nonumber\\
	&&
	-\,\frac{2a\sin^2\theta}{\Xi\rho^2}((r^2+a^2)\Delta_{\theta}-\Delta_r)\md \phi \md t\\
	&=&g_{r_{\star} r_{\star}} \left(\mmd r_{\star}\otimes \mmd r_{\star}-\md t\otimes \mmd t\right)
	+g_{\theta _{\star} \theta _{\star}}\md\theta_{\star}\otimes \mmd\theta_{\star} 
	+g_{\phi \phi }(\mmd\phi-B\md t)\otimes(\mmd\phi-B\md t),
\end{eqnarray*}
with
$$
B=-\,\frac{g_{\phi t}}{g_{\phi\phi}}=\frac{\Xi a((r^2+a^2) \Delta _{\theta }-\Delta _r)}{\Upsilon}.
$$

\subsubsection{Definition of $\phi_{\star}$ and the metric in double null coordinates $(u,v,\theta_{\star},\phi_{\star})$}\label{cha}
From~\eqref{u_v}, one gets
$$
\left\{\begin{array}{lcl}
dr_{\star}&=&\md u+\md v,\\
dt&=&\md v-\md u.
\end{array}\right.
$$
Now introduce a new coordinate $\phi_{\star}$, defined by
$$
\phi_{\star}=\phi -h(r_{\star},\theta_{\star}),
$$
where
\be\label{prh}
\partial_{r_{\star}} h(r_{\star},\theta _{\star})=-B=-\,\frac{\Xi a ((r^2+a^2) \Delta _{\theta }-\Delta _r)}{\Upsilon },\qquad\mbox{with}\ h(0,\theta_{\star})=0.
\ee
For a general function $f$, one has
$$
\hat{f}(\hat{u},\hat{v},\hat{\theta}_\star,\phi)=
\hat{f}(\hat{u},\hat{v},\hat{\theta}_\star,\phi_\star+h(\hat{u}+\hat{v},\hat{\theta}_\star))=
f(u,v,\theta_\star,\phi_\star),
$$
where $\hat{f}$ is the function $f$ written in the $(\hat{u},\hat{v},\hat{\theta}_\star,\phi)$ coordinate system and
$\hat{u}=u$, $\hat{v}=v$, $\hat{\theta}_\star=\theta_\star$.
So, it follows that
\begin{eqnarray*}
	\partial_{u}&=&\partial_{\hat{u}}+\partial_{r_\star}h\,\partial_\phi,\\
	\partial_{v}&=&\partial_{\hat{v}}+\partial_{r_\star}h\,\partial_\phi,\\
	\partial_{\theta_\star}&=&\partial_{\hat{\theta}_\star}+\partial_{\theta_\star}h\,\partial_\phi,\\
	\partial_{\phi_\star}&=&\partial_\phi.
\end{eqnarray*}
These equations help us with the geometric interpretation of the change of 
coordinates operated by passing from $\phi$ to $\phi_\star$. 
Of course, for functions $f$ that do not depend on $\phi$, like the coefficients
of our metric, $\partial_uf=\partial_{\hat{u}}f$, $\partial_vf=\partial_{\hat{v}}f$
and $\partial_{\theta_\star}f=\partial_{\hat{\theta}_\star}f$.
Defining 
\begin{eqnarray}
\Omega^2&=&-g_{r_{\star}r_{\star}}=-\,\frac{\rho^2\Delta_r\Delta_\theta}{\Upsilon},\label{Omega-squared}\\
b^{\phi_{\star}}&=&2B=2\frac{\Xi a((r^2+a^2)\Delta_{\theta }-\Delta_r)}{\Upsilon},\label{b-phi}
\end{eqnarray}
the expression for the metric becomes
\begin{eqnarray}
g&=&-2\Omega^2 \left(\mmd u\otimes \mmd v+\md v\otimes \mmd u\right)
\nonumber\\
&&+\gamma_{\theta_{\star}\theta_{\star}}
\md\theta_{\star}\otimes \mmd\theta_{\star}
+\gamma_{\theta_{\star}\phi_{\star}}\md\theta_{\star}\otimes(\mmd\phi_{\star}-b^{\phi_{\star}}\md v)
+\gamma_{\theta_{\star}\phi_{\star}}(\mmd\phi_{\star}-b^{\phi_{\star}}\md v)\otimes
\mmd\theta_{\star}\nonumber\\
&&+\gamma_{\phi_{\star}\phi_{\star}}
(\mmd\phi_{\star}-b^{\phi_{\star}}\md v)\otimes(\mmd\phi_{\star}-b^{\phi_{\star}}\md v).\label{metric-final}
\end{eqnarray}
For each pair $(u,v)$, $\gamma$ is a metric defined on a two-sphere.
The calculation above shows that the coefficients of this metric are
\begin{eqnarray}
\gamma_{\theta_{\star}\theta_{\star}}&=&g_{\theta_{\star}\theta_{\star}}
+g_{\phi\phi}(\partial_{\theta_{\star}}h)^2
\ =\
\frac{\rho^2L^2}{\Upsilon}+\frac{\sin^2\theta}{\Xi^2\rho^2}\Upsilon(\partial_{\theta_{\star}}h)^2,\label{gamma-tt}\\
\gamma_{\theta_{\star}\phi_{\star}}&=&g_{\phi\phi}(\partial_{\theta_{\star}}h)
\ =\ 
\frac{\sin^2\theta}{\Xi^2\rho^2}\Upsilon(\partial_{\theta_{\star}}h),
\label{gamma-tf}\\
\gamma_{\phi_{\star}\phi_{\star}}&=&g_{\phi\phi}\ =\
\frac{\sin^2\theta}{\Xi^2\rho^2}\Upsilon.\label{gamma-ff}
\end{eqnarray}
The determinants of the metrics $\gamma$ and $g$ are
\be\label{det-gamma}
\rm{det}\,\gamma=\frac{L^2\sin^2\theta}{\Xi^2}
\ee
and
$$
\mbox{det}\,g=-
4\Omega^4 \frac{L^2 \sin^2\theta}{\Xi^2},
$$
and the inverse of the metric $g$ is 
\begin{eqnarray*}
	g^{-1}&=&-\,\frac{1}{2\Omega^2}\left(\partial_u\otimes(\partial_v+b^{\phi_{\star}}\,
	\partial_{\phi_{\star}})+(\partial_v+b^{\phi_{\star}}\,
	\partial_{\phi_{\star}})\otimes\partial_u\right)\\
	&&+\gamma^{\theta_{\star}\theta_{\star}}\,\partial_{\theta_{\star}}\otimes\partial_{\theta_{\star}}
	+\gamma^{\theta_{\star}\phi_{\star}}\,\partial_{\theta_{\star}}\otimes\partial_{\phi_{\star}}
	+\gamma^{\theta_{\star}\phi_{\star}}\,\partial_{\phi_{\star}}\otimes\partial_{\theta_{\star}}
	+\gamma^{\phi_{\star}\phi_{\star}}\,\partial_{\phi_{\star}}\otimes\partial_{\phi_{\star}},
\end{eqnarray*}
with coefficients given by
$$
\left[
\begin{array}{cc}
\gamma^{\theta_{\star}\theta_{\star}}&\gamma^{\theta_{\star}\phi_{\star}}\\
\gamma^{\theta_{\star}\phi_{\star}}&\gamma^{\phi_{\star}\phi_{\star}}\\
\end{array}
\right]
=
\left[
\begin{array}{cc}
\frac{\Upsilon }{L^2 \rho ^2} & -\,\frac{\Upsilon}{L^2 \rho ^2} (\partial_{\theta_{\star}}h) \\
-\,\frac{\Upsilon}{L^2 \rho ^2} (\partial_{\theta_{\star}}h)& \frac{\Xi^2 \rho ^2}{\Upsilon  \sin ^2\theta} +\frac{\Upsilon}{L^2 \rho ^2}(\partial_{\theta_{\star}}h)^2\\
\end{array}
\right].
$$

\subsubsection{Normals to hypersurfaces and volume elements}

We finish this subsection by writing down the volume elements of 
hypersurfaces 
$$
\Sigma_{r_\star}=\{r_\star=\mbox{constant}\},\quad
\underline{\cal C}_v=\{v=\mbox{constant}\}\quad\mbox{and}\quad
{\cal C}_u=\{u=\mbox{constant}\}
$$
of our spacetime, corresponding to constant $r_\star$,
$v$ and $u$. As
$$
2\Omega^2(\mmd u\otimes\mmd v+\md v\otimes\mmd u)=\Omega^2(\mmd r_\star\otimes\mmd r_\star-
\mmd t\otimes\mmd t),
$$
recalling~\eqref{det-gamma} for the determinant of $\gamma$,
the volume element for $\Sigma_{r_\star}$ is
$$
\VS=\Omega\frac{L \sin\theta}{\Xi}\md\theta_{\star}\md\phi_{\star} \md t_{\star}.
$$
Our choice for the normals to constant $v$ and $u$ hypersurfaces are
\be\label{ncu}
\nCv=\partial_u\qquad\mbox{and}\qquad \nCu=\partial_v+b^{\phi_{\star}}\partial_{\phi_{\star}}.
\ee
We have that
$$
g\left(\partial_u,\frac{1}{2\Omega^2}\partial_v\right)=-1
\qquad\mbox{and}\qquad
g\left(\partial_v+b^{\phi_{\star}}\partial_{\phi_{\star}},\frac{1}{2\Omega^2}\partial_u\right)=-1
$$
and so, since the volume element associated to the metric $g$ is
\be\label{volume}
\dV=2\Omega^2 \frac{L \sin\theta}{\Xi}
\md u \md v\md \theta_{\star}\md\phi_{\star},
\ee
the volume elements associated to constant $v$ and $u$ hypersurfaces are
$$
\VCv=\frac{L \sin\theta}{\Xi}\md u\md\theta_{\star}\md\phi_{\star} 
$$
and
$$
\VCu=\frac{L \sin\theta}{\Xi}\md v\md\theta_{\star} \md\phi_{\star}.
$$

As $r_\star=u+v$, the tangent space to a hypersurface 
$\Sigma_{r_\star}$, where $r_\star$ is constant, is spanned by 
$\partial_v-\partial_u$, $\partial_{\theta_{\star}}$ and $\partial_{\phi_{\star}}$,
and
$$
n_{\Sigma_{r_\star}}=\frac{\partial_u+\partial_v+\bphi\partial_{\phi_{\star}}}{2\Omega}=-\,\frac{\grad r_\star}{(-g(\grad r_\star,\grad r_\star))^{\frac{1}{2}}}.
$$
\begin{rmk}
	$\partial_{\phi_\star}$ is equal to zero when $\theta_\star$ is either
	$0$ or $\pi$.
\end{rmk}
Indeed, the vector field $\partial_{\phi_\star}$ is tangent to the spheres 
$u=\mbox{constant}$ and $v=\mbox{constant}$ which are contained in the spacelike
hypersurfaces $r_\star=\mbox{constant}$ and
$$
\lim_{\theta_\star\to 0}g(\partial_{\phi_\star},\partial_{\phi_\star})
=\lim_{\theta_\star\to 0}\gamma_{\phi_\star\phi_\star}=0.
$$ 

\subsection{Regularity of the change of coordinates}
\label{regularity}

The regularity of transformation of coordinates  $(r_\star,\theta_\star)\mapsto(r,\theta)$ for Kerr spacetimes,
namely at $\theta_\star=0$ and $\theta_\star=\frac{\pi}{2}$,
was shown by Dafermos and Luk~\cite{m-luk}.  In this subsection
we adapt their work to the setting of Kerr--Newman--de Sitter spacetimes.

\subsubsection{The coordinate $\theta_{\star}$ is well defined and continuous}

\begin{lem}\label{theta-well-defined}
$\theta_{\star}$ is well defined.
\end{lem}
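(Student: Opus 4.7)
The plan is to fix $(r,\theta)\in[r_-,r_+]\times\left(0,\frac{\pi}{2}\right)$ and study the continuous function $\phi(\theta_\star):=F(r,\theta,\theta_\star)$ on $\theta_\star\in\left[\theta,\frac{\pi}{2}\right)$. I will show that $\phi$ is strictly decreasing, with $\phi(\theta)\geq 0$ and $\phi(\theta_\star)\to-\infty$ as $\theta_\star\nearrow\frac{\pi}{2}$; existence and uniqueness of a $\theta_\star$ solving $\phi(\theta_\star)=0$ then follow from the intermediate value theorem.

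\textbf{Step 1 (clean-up substitution).} In the $\theta'$-integral of~\eqref{F_imp_def2} I perform the change of variable $\sin\theta'=\sin\theta_\star\, s$. Since then $\Delta_{\theta'}=\Xi-\frac{\Lambda}{3}a^2\sin^2\theta_\star\, s^2$ and $\cos\theta'=\sqrt{1-\sin^2\theta_\star\, s^2}$, a direct computation gives
\[
F(r,\theta,\theta_\star)=-\int_{\sin\theta/\sin\theta_\star}^{1}\frac{\md s}{a\sqrt{1-\sin^2\theta_\star\, s^2}\,\sqrt{\Xi-s^2\bigl(1+\tfrac{\Lambda}{3}a^2\sin^2\theta_\star\bigr)}}+\int_r^{r_+}\frac{\md r'}{Q(r',\theta_\star)}.
\]
Both $\theta_\star$-dependencies (integrand and lower endpoint) are now explicit, and $\phi$ is manifestly continuous on $\left[\theta,\frac{\pi}{2}\right)$.

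\textbf{Step 2 (boundary values).} At $\theta_\star=\theta$ the lower limit of the first integral is $1$, so that integral vanishes and $\phi(\theta)=\int_r^{r_+}\md r'/Q(r',\theta)\geq 0$ (strictly positive if $r<r_+$; if $r=r_+$ then $\theta_\star=\theta$ is immediately the desired solution). As $\theta_\star\nearrow\frac{\pi}{2}$ the integrand of the first integral converges pointwise to $\bigl(a\sqrt{\Xi}(1-s^2)\bigr)^{-1}$, which is non-integrable at $s=1$. Splitting the interval at $s=1-\delta$, applying monotone convergence on $[\sin\theta,1-\delta]$, and then letting $\delta\downarrow 0$ shows that the first integral tends to $-\infty$, while the second stays bounded (its integrand has a strictly positive continuous limit). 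Hence $\phi(\theta_\star)\to-\infty$.

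\textbf{Step 3 (strict monotonicity).} For each fixed $s\in(0,1)$, both radicands $1-\sin^2\theta_\star\, s^2$ and $\Xi-s^2\bigl(1+\tfrac{\Lambda}{3}a^2\sin^2\theta_\star\bigr)$ strictly decrease in $\theta_\star$, so the integrand strictly increases; simultaneously the lower limit $\sin\theta/\sin\theta_\star$ strictly decreases. Both effects make the (negative) first term of $\phi$ strictly decreasing. For the second term, $\Delta_{r'}\le 0$ throughout $[r,r_+]$ (with strict inequality a.e.) makes $Q(r',\theta_\star)$ nondecreasing in $\theta_\star$, hence $1/Q$ nonincreasing. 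Combined, $\phi$ is strictly decreasing on $\left(\theta,\frac{\pi}{2}\right)$. The intermediate value theorem together with strict monotonicity yields a unique $\theta_\star\in\left[\theta,\frac{\pi}{2}\right)$ with $F(r,\theta,\theta_\star)=0$. The explicit prescriptions $\theta_\star(r,0)=0$, $\theta_\star\bigl(r,\frac{\pi}{2}\bigr)=\frac{\pi}{2}$ and the symmetry~\eqref{mirror} then cover the remaining values of $\theta$.

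The main obstacle is Step 2: in the original variables the integrand $1/P(\theta',\theta_\star)$ is regular for each $\theta_\star<\frac{\pi}{2}$ but acquires a pole at the upper endpoint $\theta'=\theta_\star$ precisely in the limit $\theta_\star\to\frac{\pi}{2}$, so one cannot pass the limit inside the integral directly. The substitution $\sin\theta'=\sin\theta_\star\, s$ freezes the integration domain inside $[0,1]$ and converts this moving singularity into the fixed singularity at $s=1$, after which the divergence is a routine monotone convergence argument. The same substitution also bundles the $\theta_\star$-dependence into manifestly monotone factors, making Step 3 essentially automatic.
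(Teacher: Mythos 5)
Your proof is correct and follows the same strategy as the paper's: show $F(r,\theta,\theta)\ge 0$, show $F\to-\infty$ as $\theta_\star\nearrow\frac{\pi}{2}$, establish strict monotonicity of $F$ in $\theta_\star$, and conclude by the intermediate value theorem. The only difference is implementational: you run both analytic steps through the fixed-domain substitution $\sin\theta'=\sin\theta_\star\, s$ (which the paper itself deploys only in the following lemma), whereas the paper obtains the divergence from the elementary pointwise bound $P(\theta',\theta_{\star})\le a\,\Xi^{\frac12}\cos\theta'$ and the monotonicity from a rescaling comparing two values of $\theta_\star$; both routes are sound.
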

\begin{proof}
Clearly $$F(r,\theta,\theta)\geq 0.$$ 
Moreover
\begin{eqnarray*}
\int_{\theta}^{\theta_{\star}}\frac{\md \theta'}{P(\theta', \theta_{\star})}
&\geq&
\int_{\theta}^{\theta_{\star}}\frac{\md \theta'}{ a\sqrt{\Delta_{\theta'}-\sin^2\theta'}}\nonumber\\
&=&
\int_{\theta}^{\theta_{\star}}\frac{\md \theta'}{\Xi^{\frac{1}{2}}a\cos\theta'}\to+\infty\
\ \mbox{as}\ \theta_{\star}\to\frac{\pi}{2},
\end{eqnarray*}
which implies that
$$
\lim_{\theta_{\star}\to\frac{\pi}2}F(r,\theta,\theta_{\star})=-\infty.
$$
The continuity of $F$ implies that for each pair $(r,\theta)$, with $\theta\in\bigl(0,\frac{\pi}{2}\bigr)$, there exists $\theta_{\star}=\theta_{\star}(r,\theta)\in
\bigl(\theta,\frac{\pi}{2}\bigr)$
such that $F(r,\theta,\theta_{\star})=0$.

The function $F$ is decreasing in $\theta_{\star}$.
This follows from the fact that
$$
\theta_{\star}\to
\int_0^{\theta_\star}\frac{\md\theta'}{\sqrt{\sin^2\theta_{\star}\Delta_{\theta'}-\sin^2\theta'}}
$$
is strictly increasing. Indeed, suppose $\theta_{\star}<\tilde\theta_{\star}$. Setting
$\sin\tilde{\theta}'=\frac{\sin\tilde{\theta}_\star}{\sin\theta_{\star}}\sin\theta'$,
as $\theta'<\tilde{\theta}'$, we have $\Delta_{\theta'}>\Delta_{\tilde\theta'}$ and
\bea
&&\int_0^{\tilde\theta_\star}\frac{\md\tilde\theta'}{\sqrt{\sin^2\tilde\theta_{\star}\Delta_{\tilde\theta'}-\sin^2\tilde\theta'}}
-\int_0^{\theta_\star}\frac{\md\theta'}{\sqrt{\sin^2\theta_{\star}\Delta_{\theta'}-\sin^2\theta'}}
\nonumber\\
&&\ \ \geq \int_0^{\tilde\theta_\star}\frac{\md\tilde\theta'}{\sqrt{\sin^2\tilde\theta_{\star}\Delta_{\theta'}-\sin^2\tilde\theta'}}
-\int_0^{\theta_\star}\frac{\md\theta'}{\sqrt{\sin^2\theta_{\star}\Delta_{\theta'}-\sin^2\theta'}}
\nonumber\\
&&\ \ =\int_0^{\theta_\star}
\left(\frac{\cos\theta'}{\cos\tilde{\theta}'}-1\right)
\frac{\md\theta'}{\sqrt{\sin^2\theta_{\star}\Delta_{\theta'}-\sin^2\theta'}}\ \geq\ 0.
\nonumber
\eea
\end{proof}

\begin{lem}
$\theta_{\star}$ is continuous at $\theta=0$ and $\theta=\frac{\pi}{2}$ with
\be
\frac{\sin\theta_{\star}}{\sin\theta}+\frac{\cos\theta}{\cos\theta_{\star}}\lesssim 1.
\label{bound}
\ee
\end{lem}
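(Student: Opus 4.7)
The plan is to extract the bound by exploiting the defining relation $F(r,\theta,\theta_\star)=0$, which we rearrange as
\begin{equation*}
\int_\theta^{\theta_\star}\frac{\md\theta'}{P(\theta',\theta_\star)}
=\int_r^{r_+}\frac{\md r'}{Q(r',\theta_\star)},
\end{equation*}
and then producing an upper bound for the right-hand side and a matching lower bound for the left-hand side. Since $\Delta_r\le 0$ on $[r_-,r_+]$, we have $Q(r',\theta_\star)^2\ge(r'^2+a^2)^2$, so the right-hand side is bounded uniformly in $\theta_\star$ by $\frac{1}{a}\bigl[\arctan(r_+/a)-\arctan(r_-/a)\bigr]$.

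For the left-hand side I would use the two equivalent rewritings
\begin{equation*}
\sin^2\theta_\star\Delta_{\theta'}-\sin^2\theta'
=\Xi\sin^2\theta_\star-\bigl(1+\tfrac{\Lambda a^2}{3}\sin^2\theta_\star\bigr)\sin^2\theta'
=B\cos^2\theta'-\cos^2\theta_\star,
\end{equation*}
where $B=1+\frac{\Lambda a^2}{3}\sin^2\theta_\star$. From the first, $P(\theta',\theta_\star)\le a\sqrt{\Xi\sin^2\theta_\star-\sin^2\theta'}$, and the substitution $\sin\theta'=\sqrt\Xi\sin\theta_\star\sin\psi$ converts the lower bound into
\begin{equation*}
\int_\theta^{\theta_\star}\!\frac{\md\theta'}{P(\theta',\theta_\star)}
\ge\frac{1}{a}\!\int_{\psi_1}^{\psi_2}\!\frac{\md\psi}{\sqrt{1-\Xi\sin^2\theta_\star\sin^2\psi}}
\ge\frac{1}{a}(\psi_2-\psi_1),
\end{equation*}
with $\sin\psi_2=1/\sqrt\Xi$ and $\sin\psi_1=\sin\theta/(\sqrt\Xi\sin\theta_\star)$. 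Pairing this with the RHS bound gives
\begin{equation*}
\arcsin\frac{\sin\theta}{\sqrt\Xi\sin\theta_\star}
\ge\arcsin\frac{1}{\sqrt\Xi}-\Bigl(\arctan\frac{r_+}{a}-\arctan\frac{r_-}{a}\Bigr),
\end{equation*}
and the inequality~\eqref{Xii} (via its consequence quoted in the introduction) ensures the right-hand side is a strictly positive constant $\delta$, forcing $\sin\theta_\star/\sin\theta\le(\sqrt\Xi\sin\delta)^{-1}$.

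For the companion bound on $\cos\theta/\cos\theta_\star$, I would run the mirror argument starting from $P=a\sqrt{B\cos^2\theta'-\cos^2\theta_\star}$, using the substitution $\sqrt B\cos\theta'\sin\chi=\cos\theta_\star$ (so $\sin\chi$ runs from $\cos\theta_\star/(\sqrt B\cos\theta)$ up to $1/\sqrt B$). The resulting integral simplifies to $\int\md\chi/(a\sqrt{B\sin^2\chi-\cos^2\theta_\star})$, and using $B\sin^2\chi-\cos^2\theta_\star\le\sin^2\theta_\star$ inside the square root yields a lower bound proportional to $(\chi_2-\chi_1)/(a\sin\theta_\star)$. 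Comparing again with the RHS upper bound and invoking~\eqref{Xii} produces a strictly positive lower bound for $\cos\theta_\star/(\sqrt B\cos\theta)$, hence for $\cos\theta_\star/\cos\theta$, as required.

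Continuity at $\theta=\frac{\pi}{2}$ is then immediate from the sandwich $\theta\le\theta_\star\le\frac{\pi}{2}$, and continuity at $\theta=0$ follows from $\sin\theta_\star\lesssim\sin\theta$, which forces $\theta_\star\to 0$ as $\theta\to 0^+$. The main obstacle is verifying that~\eqref{Xii} indeed converts into a \emph{uniform} strictly positive gap between $\arcsin(1/\sqrt\Xi)$ (respectively $\arcsin(1/\sqrt B)$) and the $r$-integral contribution; this is precisely the role played by the subextremality characterisation developed in Appendix~\ref{charge_appendix}, without which the differences of arcsines could collapse to zero and the bound~\eqref{bound} would fail.
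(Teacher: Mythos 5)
Your estimates are sound and your overall strategy is the paper's: bound the $r$-integral above by $\frac{1}{a}\bigl(\arctan(r_+/a)-\arctan(r_-/a)\bigr)$, bound the $\theta$-integral below by a difference of arcsines, and use $F=0$ to force $\sin\theta/\sin\theta_\star$ away from zero. Your treatment of $\sin\theta_\star/\sin\theta$ coincides with the paper's (same substitution, same arcsine bound, cf.~\eqref{i_o_Q} and~\eqref{sinsin}). For $\cos\theta/\cos\theta_\star$ you genuinely diverge: the paper substitutes $\cos\theta'=\cos\theta_\star\sec\tilde\theta$ and derives a logarithmic lower bound of the form $\frac{1}{\sqrt{2}\Xi^{1/2}a}\ln\bigl(\cos\theta/\cos\theta_\star+\sqrt{\cdots}\bigr)+O(1)$, whereas your substitution $\sqrt{B}\cos\theta'\sin\chi=\cos\theta_\star$ with $B=1+\frac{\Lambda a^2}{3}\sin^2\theta_\star\le\Xi$ produces an arcsine bound exactly mirroring the first half; the algebra checks out ($B\sin^2\chi-\cos^2\theta_\star=\cos^2\theta_\star\tan^2\theta'\le\sin^2\theta_\star$), and it is arguably cleaner, since both halves then rest on the single positivity condition $\arcsin(1/\sqrt{\Xi})>\arctan(r_+/a)-\arctan(r_-/a)$.

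That positivity condition is where the gap sits. You invoke ``\eqref{Xii} via its consequence quoted in the introduction,'' but the introduction's chain of implications is a summary of what this very lemma establishes, so citing it is circular; and Appendix~\ref{charge_appendix} by itself only yields $\Xi<1+\frac{1}{3}l(\alpha,\Lambda e^2)\le 1+\frac{1}{3}l(\alpha,0)$. Two further steps are needed and absent from your write-up: first, the reduction over $a$ — for fixed $r_\pm$ the quantity $\arctan(r_+/a)-\arctan(r_-/a)$ is maximized at $a=\sqrt{r_-r_+}$, so it suffices to verify the $a$-independent inequality~\eqref{Xi}; second, the purely numerical inequality $1+\frac{1}{3}l(\alpha,0)<\csc^2\bigl(\arctan\sqrt{\alpha}-\arctan(1/\sqrt{\alpha})\bigr)$ for all $\alpha>1$, which is the content of the claim $\rhos(\alpha)>1$ in the paper's proof. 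You correctly identify this as ``the main obstacle,'' but flagging it is not discharging it: without that inequality the difference of arcsines could collapse to zero and~\eqref{bound} would fail, exactly as you note.
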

\begin{proof}
The second integral in~\eqref{F_imp_def2} is bounded above by
\bea
\int_{r}^{r_+}\frac{\md r'}{Q(r',\theta_{\star})}&\leq&\int_{r_-}^{r_+}
\frac{\md r'}{r'^2+a^2}\nonumber\\
&=&\frac{1}{a}\left(\arctan\left(\frac{r_+}{a}\right)-
\arctan\left(\frac{r_-}{a}\right)\right)\nonumber\\
&=&\frac{1}{a}\left(\frac{\pi}{2}-c_{\Lambda,M,a,e}\right),\label{i_o_Q}
\eea
for some positive $c_{\Lambda,M,a,e}$.
On the other hand, 
using the substitution \mbox{$\sin\theta'=\sin\theta_{\star}\sin\tilde{\theta}$} we see that
the negative of the first integral in~\eqref{F_imp_def2} is bounded below by
\bea
\int_{\theta}^{\theta_{\star}}\frac{\md \theta'}{P(\theta', \theta_{\star})}
&=&\frac{1}{a}
\int_{\theta}^{\theta_{\star}}\frac{\md \theta'}{\sqrt{\sin^2\theta_{\star}\Delta_{\theta'}-
\sin^2\theta'}}\label{i_o_P}\\
&=&\frac{1}{a}\int_{\arcsin\left(\frac{\sin\theta}{\sin\theta_{\star}}\right)}^{\frac{\pi}{2}}
\frac{\cos\tilde{\theta}\,\md\tilde{\theta}}{\cos\theta'\sqrt{\Delta_{\theta'}-\sin^2\tilde{\theta}}}
\nonumber\\
&=&\frac{1}{a}\int_{\arcsin\left(\frac{\sin\theta}{\sin\theta_{\star}}\right)}^{\frac{\pi}{2}}
\frac{\cos\tilde{\theta}\,\md\tilde{\theta}}
{\sqrt{\left(1+\frac{\Lambda}{3}a^2\right)
-\left(1+\frac{\Lambda}{3}a^2\sin^2\theta_{\star}\right)\sin^2\tilde{\theta}}}
\nonumber\\
&\geq&\frac{1}{a}\int_{\frac{\sin\theta}{\sin\theta_{\star}}}^{1}\frac{\md x}{\sqrt{\Xi-x^2}}
\nonumber\\
&=&\frac{1}{a}\int_{\frac{\sin\theta}{\sqrt{\Xi}\sin\theta_{\star}}}^{
\frac{1}{\sqrt{\Xi}}}\frac{\md y}{\sqrt{1-y^2}}\nonumber\\
&=&\frac{1}{a}\left(
\arcsin\left(\frac{1}{\sqrt{\Xi}}\right)-\,
\arcsin\left(\frac{\sin\theta}{\sqrt{\Xi}\sin\theta_{\star}}\right)
\right)\nonumber\\
&\geq&\frac{1}{a}\left(
\arcsin\left(\frac{1}{\sqrt{\Xi}}\right)-\,
\arcsin\left(\frac{\sin\theta}{\sin\theta_{\star}}\right)
\right).\nonumber
\eea
Since $F(r,\theta,\theta_{\star})$ is equal zero, we obtain the estimate
\be\label{sinsin}
\arcsin\left(\frac{\sin\theta}{\sin\theta_{\star}}\right)\geq
\arcsin\left(\frac{1}{\sqrt{\Xi}}\right)+c_{\Lambda,M,a,e}-\,\frac{\pi}{2},
\ee
provided the right-hand side is positive, i.e.\
\bea
\Xi&<&\csc^2\left(\frac{\pi}{2}-c_{\Lambda,M,a,e}\right)\nonumber\\
&=&\csc^2
\left(\arctan\left(\frac{r_+}{a}\right)-
\arctan\left(\frac{r_-}{a}\right)\right).\label{lira}
\eea
For fixed $r_-$ and $r_+$, the minimum of the last expression is attained
when $a=\sqrt{r_-r_+}$. So the last inequality is implied
by the stronger restriction 
\be\label{Xi}
\Xi<\csc^2\left(\arctan\left(\sqrt{\alpha}\right)-
\arctan\left(\frac{1}{\sqrt{\alpha}}\right)\right),\qquad\alpha=\frac{r_+}{r_-}.
\ee
According to Lemma~\ref{subextremal}, for all subextremal black holes, 
we have
$$
\Xi<1+\frac{l(\alpha,\Lambda e^2)}{3}\leq 1+\frac{l(\alpha,0)}{3},
$$
where $l$ is given by~\eqref{l}, so that
\begin{eqnarray*}
l(\alpha,0)&:=&
\frac{3}{2 \alpha^2 (1+2 \alpha)}
\Bigl(\left(1+4 \alpha+12 \alpha^2+16 \alpha^3+9 \alpha^4\right)\nonumber\\
&&\qquad\qquad
-\left(1+3 \alpha+5 \alpha^2+3 \alpha^3\right) \sqrt{1+2\alpha+9\alpha^2}\Bigr).
\end{eqnarray*}

We claim that
$$
1+\frac{l(\alpha,0)}{3}<\csc^2\left(\arctan\left(\sqrt{\alpha}\right)-
\arctan\left(\frac{1}{\sqrt{\alpha}}\right)\right)=\left(\frac{\alpha+1}{\alpha-1}\right)^2,
$$
for all $\alpha$ greater than $1$.
To show this, we define the function
$$
\rhos(\alpha):=\frac{\bigl(\frac{\alpha+1}{\alpha-1}\bigr)^2-1}{\frac{l(\alpha,0)}{3}}=
	\frac{\frac{4\alpha}{(\alpha-1)^2}}{\frac{l(\alpha,0)}{3}},
$$
which we wish to check is always greater than one.
By direct 
computation one may check that 
$$
l(\alpha,0)\leq\frac{2}{3\alpha}.
$$
Indeed, \begin{eqnarray*}
l(\alpha,0)\leq\frac{2}{3\alpha}
&
\Leftrightarrow&
		\left(1+4 \alpha+12 \alpha^2+16 \alpha^3+9 \alpha^4\right)
		-\left(1+3 \alpha+5 \alpha^2+3 \alpha^3\right) \sqrt{1+2\alpha+9\alpha^2}\leq\frac{4}{9}\alpha(1+2\alpha)\\
		&\Leftrightarrow&
		1+\frac{32\alpha}{9}+\frac{100\alpha^2}{9}+16\alpha^3+9\alpha^4\leq
		\left(1+3 \alpha+5 \alpha^2+3 \alpha^3\right) \sqrt{1+2\alpha+9\alpha^2}.
	\end{eqnarray*}
	Since both sides of the last inequality are positive, the inequality
	holds if and only if
	the square of the right-hand side minus the
	square of the left-hand side is nonnegative.
	Calculating this difference, one finds out that the terms in
	$\alpha^8$, $\alpha^7$ and $\alpha^6$ cancel out. One is left with
	the polynomial
	$$
	\frac{4\alpha}{81}(18+104\alpha+344\alpha^2+623\alpha^3+414\alpha^4),
	$$
	whose coefficients are all positive. Hence, this polynomial is positive
	(for $\alpha>1$).
So,
$$
\rhos(\alpha)\geq\frac{18\alpha^2}{(\alpha-1)^2}\geq 18>1.
$$
The claim is proven.

The previous paragraph implies that~\eqref{Xi} is satisfied, and so
the right-hand side of~\eqref{sinsin} is positive. Then, we have 
\be\label{good2}
\frac{\sin\theta}{\sin\theta_{\star}}\geq c>0.
\ee

If we use the substitution $\cos\theta'=\cos\theta_{\star}\sec\tilde{\theta}$ in~\eqref{i_o_P} we get
\bea
\int_{\theta}^{\theta_{\star}}\frac{\md \theta'}{P(\theta', \theta_{\star})}
&=&\frac{1}{a}
\int_{\theta}^{\theta_{\star}}\frac{\md \theta'}{\sqrt{\sin^2\theta_{\star}\Delta_{\theta'}-
		\sin^2\theta'}}\nonumber\\
&=&\frac{1}{a}\int_0^{\arcsec\left(\frac{\cos\theta}{\cos\theta_{\star}}\right)}
	\frac{\sec\tilde\theta\cos\theta_\star\tan\tilde\theta\md\tilde\theta}
	{\sin\theta'\sqrt{\Delta_{\theta'}-1+\cos^2\theta_{\star}\sec^2\tilde{\theta}-\Delta_{\theta'}\cos^2\theta_{\star}}}\nonumber\\
&=&\frac{1}{a}\int_0^{\arcsec\left(\frac{\cos\theta}{\cos\theta_{\star}}\right)}
	\frac{\sec\tilde\theta\tan\tilde\theta\md\tilde\theta}
	{\sin\theta'\sqrt{\frac{\Lambda}{3}a^2\sec^2\tilde\theta+\sec^2\tilde{\theta}-\Delta_{\theta'}}}\nonumber\\
&=&\frac{1}{a}\int_0^{\arcsec\left(\frac{\cos\theta}{\cos\theta_{\star}}\right)}
	\frac{\sec\tilde\theta\tan\tilde\theta\md\tilde\theta}
	{\sin\theta'\sqrt{\frac{\Lambda}{3}a^2\sec^2\tilde\theta\sin^2\theta_{\star}+\tan^2\tilde{\theta}}}\nonumber\\
&=&\frac{1}{a}\int_0^{\arcsec\left(\frac{\cos\theta}{\cos\theta_{\star}}\right)}
\frac{\sec\tilde\theta\md\tilde\theta}
{\sin\theta'\sqrt{1+\frac{\Lambda}{3}a^2\frac{\sin^2\theta_{\star}}{\sin^2\tilde\theta}}}.\nonumber
\eea
If $\frac{\cos\theta}{\cos\theta_{\star}}\leq\sqrt{2}$, we are done.
Otherwise, as $$\frac{1}{\sin^2(\arcsec x)}=\frac{1}{1-\,\frac{1}{x^2}}$$ and
$\sin^2\theta_{\star}\leq 1$, we have that
\bea
\int_{\theta}^{\theta_{\star}}\frac{\md \theta'}{P(\theta', \theta_{\star})}
&\geq&
\frac{1}{a}\int_{\frac{\pi}{4}}^{\arcsec\left(\frac{\cos\theta}{\cos\theta_{\star}}\right)}
\frac{\sec\tilde\theta\md\tilde\theta}
{\sqrt{1+2\frac{\Lambda}{3}a^2}}\nonumber\\
&\geq&\frac{1}{\sqrt{2}\Xi^\frac{1}{2} a}
\ln(\sec x+\tan x)\Big|_{\frac{\pi}{4}}^{\arcsec\left(\frac{\cos\theta}{\cos\theta_{\star}}\right)}\nonumber\\
&=&\frac{1}{\sqrt{2}\Xi^\frac{1}{2} a}\left[
\ln\left(\frac{\cos\theta}{\cos\theta_{\star}}+
\sqrt{\frac{\cos^2\theta}{\cos^2\theta_{\star}}-1}\right)-\ln(1+\sqrt{2})
\right].\nonumber
\eea
The result follows since~\eqref{i_o_Q} implies that the last expression is bounded above.
\end{proof}

\subsubsection{The derivative of the function defining $\theta_\star$}

\begin{rmk} Let us define
	$$
	D(\theta,\theta_{\star}):=\sin ^2\theta_{\star} \Delta_\theta-\sin ^2\theta,
	$$
so that $D(\theta,\theta_{\star})=P^2(\theta,\theta_{\star})/a^2$.
Clearly, 
$$
\partial_\theta D(\theta,\theta_{\star})=
\partial_\theta\left(\sin ^2\theta_{\star} \left(1+\frac{\Lambda}{3} a^2   \cos ^2\theta\right)-\sin ^2\theta\right)
=-\sin(2\theta)
\left(1+\frac{\Lambda}{3} a^2   \sin ^2\theta_{\star}\right)
$$
holds.
The trigonometric equality
\bea
\sin ^2(2 \theta_{\star}) \left(1+\frac{\Lambda}{3} a^2   \cos ^2\theta\right)&=&\sin ^2(2 \theta) \left(1+\frac{\Lambda}{3} a^2   \sin ^2\theta_{\star}\right)\nonumber\\
&&+2 (\cos (2 \theta)+\cos (2 \theta_{\star})) \left(\sin ^2\theta_{\star} \left(1+\frac{\Lambda}{3} a^2   \cos ^2\theta\right)-\sin ^2\theta\right)\nonumber
\eea
allows us to conclude that
\bea
\sin ^2(2 \theta_{\star}) \Delta_\theta&=&-\sin (2 \theta)
\left[\partial_\theta\left(\sin ^2\theta_{\star} \Delta_\theta-\sin ^2\theta\right)\right]\nonumber\\
&&+2 (\cos (2 \theta)+\cos (2 \theta_{\star})) \left(\sin ^2\theta_{\star} \Delta_\theta-\sin ^2\theta\right)\nonumber\\
&=&-\sin(2\theta)\,\partial_\theta D(\theta,\theta_{\star})+
2 (\cos (2 \theta)+\cos (2 \theta_{\star}))D(\theta,\theta_{\star}).
\label{trick}
\eea
We will have to use the identity~\eqref{trick} repeatedly in our 
calculations.
\end{rmk}

\begin{lem}
For $\theta_{\star}\in\left(0,\frac{\pi}{2}\right)$, we have that
\bea
G(r,\theta,\theta_{\star})&=&\partial_{\theta_{\star}}
F(r,\theta,\theta_{\star})
\nonumber\\
&=&
-\,\frac{\csc(2\theta_{\star})\sin(2\theta)}
{a\sqrt{\sin^2\theta_{\star}\Delta_{\theta}-\sin^2\theta}}
-\,\frac{2\csc(2\theta_{\star})}{a}
\int_{\theta}^{\theta_{\star}}
\frac{(\sin^2\theta_{\star}-\sin^2\theta')}
{\sqrt{\sin^2\theta_{\star}\Delta_{\theta'}-\sin^2\theta'}}\md\theta'
\nonumber\\
&&+\int_r^{r_+} \frac{a^2\sin(2\theta_{\star}) \Delta_{r'}}{2\left((r'^2+a^2)^2-a^2\sin^2\theta_{\star}\Delta_{r'}\right)^{\frac{3}{2}}}\md r'.\label{derivada}
\eea
\end{lem}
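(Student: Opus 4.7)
The plan is to differentiate $F$ under the integral sign, handling the two terms in \eqref{F_imp_def2} separately. The $r$-integral is the easy piece: since $\theta_\star$ appears in $Q^{-1}$ only through $\sin^2\theta_\star$, writing $\partial_{\theta_\star}Q^{-1} = -\tfrac{1}{2}Q^{-3}\partial_{\theta_\star}Q^2$ and using $\partial_{\theta_\star}\sin^2\theta_\star = \sin(2\theta_\star)$ immediately produces the third term on the right-hand side of \eqref{derivada}.

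For the $\theta$-integral I would apply Leibniz's rule. Since the lower limit also depends on $\theta_\star$, this generates a boundary contribution $-1/P(\theta_\star,\theta_\star)$ together with $\int_{\theta_\star}^{\theta}\partial_{\theta_\star} P^{-1}(\theta',\theta_\star)\,\mmd\theta'$. A direct computation using $P^2 = a^2 D$ gives
\begin{equation*}
\partial_{\theta_\star}\frac{1}{P(\theta',\theta_\star)} = -\,\frac{\sin(2\theta_\star)\Delta_{\theta'}}{2a\, D(\theta',\theta_\star)^{3/2}}.
\end{equation*}
At this point both pieces look troublesome: the boundary term has the factor $\cos\theta_\star$ in the denominator (since $D(\theta_\star,\theta_\star) = \tfrac{\Lambda}{3}a^2 \sin^2\theta_\star\cos^2\theta_\star$), and the integrand is singular like $D^{-3/2}$ at $\theta'=\theta_\star$. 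Neither singularity appears in the target formula, so both must cancel.

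The key move is to invoke the identity \eqref{trick}, which lets me replace the numerator $\sin(2\theta_\star)\Delta_{\theta'}$ by
\begin{equation*}
\csc(2\theta_\star)\bigl[-\sin(2\theta')\,\partial_{\theta'}D(\theta',\theta_\star) + 2(\cos(2\theta')+\cos(2\theta_\star))\,D(\theta',\theta_\star)\bigr].
\end{equation*}
The term with $D$ in the numerator combines with $D^{-3/2}$ to leave the harmless weight $D^{-1/2}$. The term containing $\partial_{\theta'}D/D^{3/2} = -2\,\partial_{\theta'}(D^{-1/2})$ can be integrated by parts; this is where the magic happens. The boundary piece produced at $\theta' = \theta_\star$ will be $\dfrac{\csc(2\theta_\star)\sin(2\theta_\star)}{a\,D(\theta_\star,\theta_\star)^{1/2}} = \dfrac{1}{P(\theta_\star,\theta_\star)}$, precisely cancelling the Leibniz boundary term, while the boundary piece at $\theta' = \theta$ yields the first term on the right-hand side of \eqref{derivada}. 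The remaining interior integrals combine via the elementary identity $\cos(2\theta')-\cos(2\theta_\star) = 2(\sin^2\theta_\star - \sin^2\theta')$ into the single integral shown in \eqref{derivada}, with the sign fixed by reversing the orientation from $\int_{\theta_\star}^{\theta}$ to $\int_{\theta}^{\theta_\star}$.

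The main obstacle is precisely orchestrating this cancellation: one has to see that the apparent non-integrable singularity in $\partial_{\theta_\star} P^{-1}$ and the apparent $1/\cos\theta_\star$ divergence of the Leibniz boundary term are two faces of the same phenomenon, both resolved by rewriting $\sin(2\theta_\star)\Delta_{\theta'}$ via \eqref{trick} and integrating $\partial_{\theta'}D\cdot D^{-3/2}$ by parts. Once that is spotted, the rest is bookkeeping.
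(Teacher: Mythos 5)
Your proposal is correct and follows essentially the same route as the paper: Leibniz's rule, the identity \eqref{trick} to rewrite $\sin(2\theta_\star)\Delta_{\theta'}$, and an integration by parts of the $\sin(2\theta')\,\partial_{\theta'}D\cdot D^{-3/2}$ piece whose endpoint contribution at $\theta'=\theta_\star$ cancels the Leibniz boundary term while the endpoint at $\theta'=\theta$ produces the first term of \eqref{derivada}. The only (harmless) quibble is motivational: for $\theta_\star\in\left(0,\frac{\pi}{2}\right)$ there is no actual singularity at $\theta'=\theta_\star$, since $D(\theta_\star,\theta_\star)=\frac{\Lambda}{3}a^2\sin^2\theta_\star\cos^2\theta_\star>0$ (as the paper notes, denominators never vanish); the point of \eqref{trick} is rather to obtain a form whose bounds degenerate gracefully as $\theta_\star\to 0,\frac{\pi}{2}$, which is what Lemma~\ref{G} needs.
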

\begin{proof} 
	We start from the definition of $F$ in~\eqref{F_imp_def2}.
	Since $\theta_{\star}\in\left(0,\frac{\pi}{2}\right)$,
for $\theta'\in[\theta,\theta_{\star}]$, we get
\mbox{$\Delta_{\theta'}\geq 1+\frac{\Lambda}{3}a^2\cos^2\theta_{\star}$}, and so
$P(\theta',\theta_{\star})\geq \sqrt{\frac{\Lambda}{3}}\frac{a^2}{2}\sin{(2\theta_{\star})}>0$.
So, we do not have to worry about vanishing denominators.
The result follows from the calculation
\bea
&&a\left.\frac{\partial}{\partial\theta_{\star}}\right|_{\theta\mbox{\tiny\  fixed}}\int_{\theta}^{\theta_{\star}}\frac{\md \theta'}{P(\theta', \theta_{\star})}=\left.\frac{\partial}{\partial\theta_{\star}}\right|_{\theta\mbox{\tiny\  fixed}}\int_{\theta}^{\theta_{\star}}
\frac{\md \theta'}{\sqrt{\sin^2\theta_{\star}\Delta_{\theta'}-\sin^2\theta'}}\nonumber\\
&&\ \ =
\frac{2}{\sqrt{\frac{\Lambda}{3}}a\sin(2\theta_{\star})}-
\int_{\theta}^{\theta_{\star}}
\frac{\sin(2\theta_{\star})\Delta_{\theta'}}
{2\sqrt{\left(D(\theta',\theta_{\star})\right)^3}}\md\theta'
\nonumber\\
&&\ \ \stackrel{\eqref{trick}}{=}
\frac{2}{\sqrt{\frac{\Lambda}{3}}a\sin(2\theta_{\star})}+
\int_{\theta}^{\theta_{\star}}
\frac{\csc(2\theta_{\star})\sin(2\theta')
\partial_{\theta'}D(\theta',\theta_{\star})}
{2\sqrt{\left(D(\theta',\theta_{\star})\right)^3}}\md\theta'
\nonumber\\
&&\ \ \ \ \ \ -
\int_{\theta}^{\theta_{\star}}
\frac{\csc(2\theta_{\star})(\cos(2\theta')+\cos(2\theta_{\star}))}
{\sqrt{D(\theta',\theta_{\star})}}\md\theta'
\nonumber\\
&&\ \ =
\frac{2}{\sqrt{\frac{\Lambda}{3}}a\sin(2\theta_{\star})}+
\int_{\theta}^{\theta_{\star}}
\frac{\csc(2\theta_{\star})\sin(2\theta')
	\partial_{\theta'}D(\theta',\theta_{\star})
}
{2\sqrt{\left(D(\theta',\theta_{\star})\right)^3}}\md\theta'
\nonumber\\
&&\ \ \ \ \ \ -
\int_{\theta}^{\theta_{\star}}
\frac{2\csc(2\theta_{\star})\cos(2\theta')}
{\sqrt{D(\theta',\theta_{\star})}}\md\theta'+
\int_{\theta}^{\theta_{\star}}
\frac{\csc(2\theta_{\star})(\cos(2\theta')-\cos(2\theta_{\star}))}
{\sqrt{D(\theta',\theta_{\star})}}\md\theta'
\nonumber\\
&&\ \ =
\frac{2}{\sqrt{\frac{\Lambda}{3}}a\sin(2\theta_{\star})}-
\int_{\theta}^{\theta_{\star}}\frac{\partial}{\partial\theta'}
\frac{\csc(2\theta_{\star})\sin(2\theta')}
{\sqrt{D(\theta',\theta_{\star})}}\md\theta'
\nonumber\\
&&\ \ \ \ \ \ +
\int_{\theta}^{\theta_{\star}}
\frac{2\csc(2\theta_{\star})(\sin^2\theta_{\star}-\sin^2\theta')}
{\sqrt{D(\theta',\theta_{\star})}}\md\theta'
\nonumber\\
&&\ \ =
\frac{\csc(2\theta_{\star})\sin(2\theta)}
{\sqrt{D(\theta,\theta_{\star})}}
+2\csc(2\theta_{\star})
\int_{\theta}^{\theta_{\star}}
\frac{(\sin^2\theta_{\star}-\sin^2\theta')}
{\sqrt{D(\theta',\theta_{\star})}}\md\theta'.
\nonumber
\eea
For the second equality we used~\eqref{trick}.
\end{proof}

The expression for $G$ and the definition of $L$ in~\eqref{L} yield
\bea
L&=&\sqrt{(r^2+a^2)^2-a^2\sin^2\theta_{\star}\Delta_r}\Biggl(\frac{\sin(2\theta)}{\sin(2\theta_{\star})}\Biggr.\nonumber\\
&&
+2\sin^2(2\theta_{\star})\frac{\sqrt{\sin^2\theta_{\star}\Delta_{\theta}-\sin^2\theta}}{\sin(2\theta_{\star})}\left(\frac{1}{\sin^2(2\theta_{\star})}
\int_{\theta}^{\theta_{\star}}
\frac{(\sin^2\theta_{\star}-\sin^2\theta')}
{\sqrt{\sin^2\theta_{\star}\Delta_{\theta'}-\sin^2\theta'}}\md\theta'\right)
\nonumber\\
&&\Biggl. -\sin^2(2\theta_{\star})\frac{\sqrt{\sin^2\theta_{\star}\Delta_{\theta}-\sin^2\theta}}{\sin(2\theta_{\star})}\int_r^{r_+} \frac{a^3 \Delta_{r'}}{2\left((r'^2+a^2)^2-a^2\sin^2\theta_{\star}\Delta_{r'}\right)^{\frac{3}{2}}}\md r'\Biggr).\label{LLL}
\eea

\begin{lem}\label{G}
	We have the following estimate for $G$:
\bea
-\,\frac{C_{\Lambda,M,a,e}}{\sqrt{\sin^2\theta_{\star}\Delta_{\theta}-\sin^2\theta}}\leq G(r,\theta, \theta_{\star}(r,\theta))\leq -\,\frac{\csc(2\theta_{\star})\sin(2\theta)}{ a\sqrt{\sin^2\theta_{\star}\Delta_{\theta}-\sin^2\theta}}.\label{bounds_for_G}
\eea
for some constant $C_{\Lambda,M,a,e}>0$ depending on $\Lambda, M, a$ and $e$.
\end{lem}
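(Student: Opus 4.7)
The plan is to start from the explicit formula for $\partial_{\theta_\star}F$ derived in the previous lemma and to decompose $G=T_1+T_2+T_3$ into its three summands: the boundary term $T_1=-\csc(2\theta_\star)\sin(2\theta)/\bigl(a\sqrt{D(\theta,\theta_\star)}\bigr)$, the angular integral $T_2$ (with prefactor $-2\csc(2\theta_\star)/a$), and the radial integral $T_3$. Both sides of~\eqref{bounds_for_G} will then follow from term-by-term estimates.

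For the upper bound I would observe that on $[\theta,\theta_\star]$ the integrand of $T_2$ is pointwise nonnegative, so $T_2\leq 0$; likewise $T_3\leq 0$ because $\Delta_{r'}<0$ throughout $(r_-,r_+)$. Hence $G\leq T_1$, which is exactly the claimed right-hand inequality. For the lower bound it suffices to show $|T_i|\leq C_{\Lambda,M,a,e}/\sqrt{D(\theta,\theta_\star)}$ for each $i$. The bound on $|T_1|$ follows by writing $\sin(2\theta)/\sin(2\theta_\star)=(\sin\theta/\sin\theta_\star)(\cos\theta/\cos\theta_\star)$ and invoking~\eqref{bound}. The bound on $|T_3|$ is routine: $(r'^2+a^2)^2-a^2\sin^2\theta_\star\Delta_{r'}\geq(r_-^2+a^2)^2$ and $|\Delta_{r'}|$ is bounded on $[r_-,r_+]$, so $|T_3|\leq C$; since $D\leq\Xi\sin^2\theta_\star\leq\Xi$ is bounded above, this constant is absorbed into $C_{\Lambda,M,a,e}/\sqrt{D(\theta,\theta_\star)}$.

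The hard part will be the estimate on $|T_2|$. First I would use the identity $\sin^2\theta_\star-\sin^2\theta'=D(\theta',\theta_\star)-\frac{\Lambda}{3}a^2\sin^2\theta_\star\cos^2\theta'\leq D(\theta',\theta_\star)$ to bound the integrand of $T_2$ by $\sqrt{D(\theta',\theta_\star)}$. The computation $\partial_{\theta'}D(\theta',\theta_\star)=-\sin(2\theta')\bigl(1+\frac{\Lambda}{3}a^2\sin^2\theta_\star\bigr)\leq 0$ on $(0,\pi/2)$ shows that $D(\cdot,\theta_\star)$ is decreasing, so $D(\theta',\theta_\star)\leq D(\theta,\theta_\star)$ on the integration interval and the integral is bounded by $(\theta_\star-\theta)\sqrt{D(\theta,\theta_\star)}$. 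This yields $|T_2|\leq 2a^{-1}\csc(2\theta_\star)(\theta_\star-\theta)\sqrt{D(\theta,\theta_\star)}$, and the estimate closes provided $\csc(2\theta_\star)(\theta_\star-\theta)D(\theta,\theta_\star)$ is uniformly bounded. For this I would expand $\sin(\theta_\star-\theta)=\sin\theta_\star\cos\theta-\cos\theta_\star\sin\theta$ and use the inequalities $\sin\theta_\star\lesssim\sin\theta$ and $\cos\theta\lesssim\cos\theta_\star$ from~\eqref{bound} to obtain $\sin(\theta_\star-\theta)\lesssim\sin\theta\cos\theta_\star$; combined with $\sin(\theta_\star-\theta)\geq(2/\pi)(\theta_\star-\theta)$ and $D(\theta,\theta_\star)\leq\Xi\sin^2\theta_\star$, this reduces the quantity to a constant depending only on the metric parameters, completing the estimate on $|T_2|$ and hence the proof.
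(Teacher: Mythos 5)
Your proposal is correct and takes essentially the same approach as the paper: the same three-term decomposition of~\eqref{derivada}, the same observation that the two integral terms are nonpositive (giving the upper bound), and the same reliance on~\eqref{bound} to control $\csc(2\theta_{\star})\sin(2\theta)$ and the angular integral (giving the lower bound). The only deviation is in estimating the angular integral, where you bound the integrand by $\sqrt{D(\theta,\theta_{\star})}$ via monotonicity of $D$ in $\theta'$ and then show $\csc(2\theta_{\star})(\theta_{\star}-\theta)D(\theta,\theta_{\star})$ is uniformly bounded, whereas the paper bounds the integrand by $\sqrt{\sin^2\theta_{\star}-\sin^2\theta'}$ and splits into the cases $\theta_{\star}$ near $0$ and near $\frac{\pi}{2}$; both reductions rest on~\eqref{bound} and are equally valid.
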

\begin{proof} We use expression~\eqref{derivada}. Note that
$$
\frac{2\csc(2\theta_{\star})}{a}
\int_{\theta}^{\theta_{\star}}
\frac{(\sin^2\theta_{\star}-\sin^2\theta')}
{\sqrt{\sin^2\theta_{\star}\Delta_{\theta'}-\sin^2\theta'}}\md\theta'
\leq
\frac{2\csc(2\theta_{\star})}{a}
\int_{\theta}^{\theta_{\star}}
\sqrt{\sin^2\theta_{\star}-\sin^2\theta'}\md\theta'.
$$
For $\theta_{\star}$ close to zero, the right-hand side is bounded above by
$$
\frac{\pi/2}{a\cos\theta_{\star}},
$$
whereas for $\theta_{\star}$ close to $\frac{\pi}{2}$, the right-hand side is bounded above by
\bea
\frac{2\csc(2\theta_{\star})}{a}\int_{\theta}^{\theta_{\star}}\cos\theta'\md\theta'
&=&\frac{2\csc(2\theta_{\star})}{a}(\sin\theta_{\star}-\sin\theta)\nonumber\\
&\leq& \frac{2\csc(2\theta_{\star})}{a}(\theta_{\star}-\theta)\cos\theta\nonumber\\
&\leq&\frac{\pi/2}{a\sin\theta_{\star}}\frac{\cos\theta}{\cos\theta_\star}.\nonumber
\eea
Recalling~\eqref{bound}, we conclude that the second term on the right-hand side
of~\eqref{derivada} is uniformly bounded below by a negative constant. Clearly,
the same is true for the third  term on the right-hand side
of~\eqref{derivada}. Since both terms are negative,
$\csc(2\theta_{\star})\sin(2\theta)$ is bounded,
and $1/\sqrt{\sin^2\theta_{\star}\Delta_{\theta }-\sin^2\theta}$ is bounded below 
by a positive constant, we obtain~\eqref{bounds_for_G}.
\end{proof}

\subsubsection{The coordinates $r$ and $\theta$ as functions of $r_\star$ and $\theta_\star$}

\begin{lem}\label{sharon}
	The mapping $(r,\theta)\to(r_\star,\theta_\star)$,
	from $(r_-,r_+)\times\left[0,\frac{\pi}{2}\right]$ to
	$(-\infty,+\infty)\times\left[0,\frac{\pi}{2}\right]$,
	 is globally invertible.
\end{lem}
\begin{proof}
Fix $(r_\star)_0\in\bbR$.
For each fixed $\theta\in\bigl(0,\frac{\pi}{2}\bigr)$, let $r(\theta)$ be the
unique (because $\partial_rr_\star<0$) value of $r\in(r_-,r_+)$ such that
$$
r_\star(r(\theta),\theta)=(r_\star)_0.
$$
The Implicit Function Theorem guarantees that $(r(\,\cdot\,),\,\cdot\,)$ is a $C^1$ curve
and that 
$$
\dot{r}=-\,\frac{\partial_\theta r_\star}{\partial_rr_\star}=-\,\frac{P}{Q}
\frac{\Delta_r}{\Delta_\theta}.
$$
To see how $\theta_\star$ varies along this curve we calculate
$$
\frac{d}{d\theta}\theta_\star(r(\theta),\theta)=\dot{r}\,\partial_r\theta_\star+
\partial_\theta\theta_\star=
\frac{1}{GQ}\left(-\,\frac{P}{Q}
\frac{\Delta_r}{\Delta_\theta}\right)-\,\frac{1}{GP}=-\,\frac{\Upsilon}{GPQ^2}>0.
$$
So, $\theta_\star$ is strictly increasing along the curve $(r(\,\cdot\,),\,\cdot\,)$.
Moreover, \eqref{bound} shows that $\lim_{\theta\searrow 0}\theta_\star(r(\theta),\theta)=0$ and $\lim_{\theta\nearrow \frac{\pi}{2}}\theta_\star(r(\theta),\theta)=\frac{\pi}{2}$.
Thus, given $((r_\star)_0,(\theta_\star)_0)\in
(-\infty,+\infty)\times\left[0,\frac{\pi}{2}\right]$ there exists one and only one
$(r,\theta)$ such that $(r_\star(r,\theta),\theta_\star(r,\theta))=
((r_\star)_0,(\theta_\star)_0)$.
We conclude that the mapping 
$$(r_\star,\theta_\star):(r_-,r_+)\times\left[0,\frac{\pi}{2}\right]\longrightarrow
(-\infty,+\infty)\times\left[0,\frac{\pi}{2}\right]
$$
is one-to-one and onto. 

\end{proof}

\subsubsection{First partial derivatives}

\begin{lem}
	\label{A.5}
	The partial derivatives of $(r, {\theta})$ with respect to $(r_{\star}, \theta_{\star})$ are given by
	\begin{eqnarray}
	\frac{\partial r}{\partial r_{\star}}&=&\frac{\Delta _r
		\Delta_{\theta}Q}{\Upsilon }=\frac{\Delta _r \Delta _{\theta} \sqrt{(r^2+a^2)^2-a^2 \Delta _r \sin ^2{\theta_{\star}}}}{\left((r^2+a^2)^2 \Delta _{\theta}-a^2 \Delta _r \sin ^2{\theta}\right)},
	\label{r_r*}
	\end{eqnarray}
	\bea
	\nonumber
	\frac{\partial r}{\partial {\theta_{\star}}}&=&-\,\frac{\Delta_rL P }{\Upsilon }=
	\frac{\Delta _rG P^2 Q}{\Upsilon }=G \frac{a^2   \Delta _r \left(\sin ^2{\theta_{\star}}\Delta _{\theta} -\sin ^2{\theta}\right) \sqrt{(r^2+a^2)^2-a^2  \sin ^2{\theta_{\star}}\Delta _r}}{(r^2+a^2)^2 \Delta _{\theta}-a^2\sin ^2{\theta} \Delta _r },\\
	&=&\frac{a\Delta _r  \sqrt{(r^2+a^2)^2-a^2  \sin ^2{\theta_{\star}}\Delta _r}}{(r^2+a^2)^2 \Delta _{\theta}-a^2\sin ^2{\theta} \Delta _r }
	\Biggl(
	-\,\frac{\sin(2\theta)}{\sin(2\theta_{\star})}
	\frac{\sqrt{\sin^2\theta_{\star}\Delta_{\theta}-\sin^2\theta}}{\sin(2\theta_{\star})}\Biggr.
	\nonumber\\
	&&
	-2\frac{\left(\sin ^2{\theta_{\star}}\Delta _{\theta} -\sin ^2{\theta}\right)}{\sin^2(2\theta_{\star})}
	\int_{\theta}^{\theta_{\star}}
	\frac{(\sin^2\theta_{\star}-\sin^2\theta')}
	{\sqrt{\sin^2\theta_{\star}\Delta_{\theta'}-\sin^2\theta'}}\md\theta'
	\nonumber\\
	&&\Biggl.+\frac{\left(\sin ^2{\theta_{\star}}\Delta _{\theta} -\sin ^2{\theta}\right)}
	{\sin^2(2\theta_{\star})}
	\int_r^{r_+} \frac{a^3\sin^2(2\theta_{\star}) \Delta_{r'}}{2\left((r'^2+a^2)^2-a^2\sin^2\theta_{\star}\Delta_{r'}\right)^{\frac{3}{2}}}\md r'\Biggr)\sin(2\theta_{\star}),\label{r_t*}
	\eea
	\bea
	\label{t_r*}
	\frac{\partial {\theta}}{\partial r_{\star}}&=&
	\frac{\Delta_r \Delta_{\theta}P}{\Upsilon }=
	\frac{\sqrt{ \sin ^2{\theta_{\star}}\Delta _{\theta}-\sin ^2{\theta}}}
	{\sin(2\theta_{\star})}
	\frac{a \Delta _r \Delta _{\theta}}
	{\left((r^2+a^2)^2 \Delta _{\theta}-a^2  \sin ^2{\theta}\Delta _r\right)}\sin(2\theta_{\star}),
	\eea
	\bea
	\frac{\partial {\theta}}{\partial {\theta_{\star}}}&=&\frac{\Delta_{\theta}L Q }{\Upsilon }=-\,\frac{\Delta _{\theta}G P Q^2 }{\Upsilon }=-G\frac{a \Delta _{\theta} \sqrt{ \sin ^2{\theta_{\star}}\Delta _{\theta}-\sin ^2{\theta}} \left((r^2+a^2)^2-a^2  \sin ^2{\theta_{\star}}\Delta_r \right)}{(r^2+a^2)^2 \Delta _{\theta}-a^2  \sin ^2{\theta}\Delta _r}\nonumber\\
	&=&
	\Delta_{\theta}\frac{(r^2+a^2)^2-a^2\sin^2\theta_{\star}\Delta_r}{ (r^2+a^2)^2\Delta_{\theta}-a^2\sin^2\theta\Delta_{r}}\Biggl({\frac{\sin(2\theta)}{\sin(2\theta_{\star})}}
	\Biggr.\nonumber\\
	&&\ \ \ \ +\,{2\frac{\sqrt{\sin^2\theta_{\star}\Delta_{\theta}-\sin^2\theta}}{\sin(2\theta_{\star})}}
	\int_{\theta}^{\theta_{\star}}
	\frac{(\sin^2\theta_{\star}-\sin^2\theta')}
	{\sqrt{\sin^2\theta_{\star}\Delta_{\theta'}-\sin^2\theta'}}\md\theta'
	\nonumber\\
	&&\Biggl.\ \ \ \ -\,\frac{\sqrt{\sin^2\theta_{\star}\Delta_{\theta}-\sin^2\theta}}{\sin(2\theta_{\star})}\int_r^{r_+} \frac{a^3\sin^2(2\theta_{\star}) \Delta_{r'}}{2\left((r'^2+a^2)^2-a^2\sin^2\theta_{\star}\Delta_{r'}\right)^{\frac{3}{2}}}\md r'\Biggr).\label{t_t*}
	\eea
In the region $r_-<r<r_+$, we have 
\begin{equation}
\begin{array}{ll}
\displaystyle
\left|\frac{\partial r}{\partial r_{\star}}\right|\lesssim |\Delta _r|, 
&\displaystyle
 \left|\frac{\partial r}{\partial {\theta_{\star}}}\right|\lesssim |\Delta _r|\sin (2 {\theta_{\star}})\\
\vspace{-2mm}
\\ 
 \displaystyle
\left|\frac{\partial {\theta}}{\partial r_{\star}}\right|\lesssim |\Delta _r|\sin (2 {\theta_{\star}}),
&\displaystyle \left|\frac{\partial {\theta}}{\partial {\theta_{\star}}}\right|\lesssim 1.
\end{array}
\label{deriv_bounds}
\end{equation}
\end{lem}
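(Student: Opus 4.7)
The first equalities in~\eqref{r_r*}--\eqref{t_t*} are immediate from reading off the coefficients of $\mathrm{d}r_\star$ and $\mathrm{d}\theta_\star$ in the differentials~\eqref{differentials}--\eqref{differentials2}. The alternative forms follow by substituting $L=-GPQ$ from~\eqref{L}, expanding the definitions of $P$, $Q$ and $\Upsilon$, and finally inserting the explicit expression~\eqref{derivada} for $G$. Factoring $\sin(2\theta_\star)$ out front in the long formulas for $\partial r/\partial\theta_\star$ and $\partial\theta/\partial\theta_\star$ is what makes manifest the correct $\theta_\star$-scaling in the bounds.

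\textbf{Preliminary estimates.} The bounds~\eqref{deriv_bounds} rest on four ingredients. First, since $\Delta_r\leq 0$ in $(r_-,r_+)$ and $\Delta_\theta\geq 1$,
\[
\Upsilon=(r^2+a^2)^2\Delta_\theta+a^2\sin^2\theta\,|\Delta_r|\geq(r_-^2+a^2)^2,
\]
so $1/\Upsilon$ is bounded above. Second, $Q$ is uniformly bounded. Third, Lemma~\ref{G} gives $|G|P\leq C_{\Lambda,M,a,e}$. Fourth, and most importantly, one has $P\lesssim\sin(2\theta_\star)$: when $\theta_\star$ stays away from $\pi/2$ this follows from $P\leq a\sin\theta_\star\sqrt{\Delta_\theta}$ and $\cos\theta_\star\gtrsim 1$, while as $\theta_\star\nearrow\pi/2$ the bound~\eqref{bound} gives $\cos\theta\lesssim\cos\theta_\star$, so
\[
\sin^2\theta_\star\Delta_\theta-\sin^2\theta=(\cos^2\theta-\cos^2\theta_\star)+\tfrac{\Lambda}{3}a^2\sin^2\theta_\star\cos^2\theta\lesssim\cos^2\theta_\star,
\]
whence $P\lesssim\cos\theta_\star\asymp\sin(2\theta_\star)$.

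\textbf{Assembling the four bounds.} Using $\Upsilon\geq(r^2+a^2)^2\Delta_\theta$ in each denominator, the four estimates in~\eqref{deriv_bounds} fall out in turn: $|\partial r/\partial r_\star|\leq|\Delta_r|\Delta_\theta Q/\Upsilon\lesssim|\Delta_r|$; $|\partial\theta/\partial r_\star|\leq|\Delta_r|\Delta_\theta P/\Upsilon\lesssim|\Delta_r|P\lesssim|\Delta_r|\sin(2\theta_\star)$; the identity $L=-GPQ$ rewrites $|\partial r/\partial\theta_\star|=|\Delta_r|\,|G|P^2Q/\Upsilon\lesssim|\Delta_r|PQ/\Upsilon\lesssim|\Delta_r|P\lesssim|\Delta_r|\sin(2\theta_\star)$ after using $|G|P\lesssim 1$; and finally $|\partial\theta/\partial\theta_\star|=\Delta_\theta|G|PQ^2/\Upsilon\lesssim\Delta_\theta Q^2/\Upsilon$, where the algebraic identity $\Delta_\theta Q^2=\Upsilon+|\Delta_r|P^2$ (a direct expansion of the definitions) yields $\Delta_\theta Q^2/\Upsilon\leq 1+|\Delta_r|P^2/\Upsilon\lesssim 1$.

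\textbf{Main obstacle.} The delicate point is the estimate $P\lesssim\sin(2\theta_\star)$ in the regime $\theta_\star\nearrow\pi/2$, which genuinely uses~\eqref{bound}: without the control $\cos\theta\lesssim\cos\theta_\star$ coming from the preceding continuity lemma, one would lose the factor $\sin(2\theta_\star)$ in the bounds for $\partial r/\partial\theta_\star$ and $\partial\theta/\partial r_\star$, and hence the sharp regularity one needs at the axis.
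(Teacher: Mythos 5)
Your proposal is correct and follows essentially the same route as the paper: the formulas are read off from the differentials and the expression for $G$, and the bounds all reduce to the key estimate $\sqrt{\sin^2\theta_\star\Delta_\theta-\sin^2\theta}\lesssim\sin(2\theta_\star)$, proved by the same two-regime argument relying on~\eqref{bound} near $\theta_\star=\pi/2$. Your assembly step is somewhat more explicit than the paper's (invoking $|G|P\lesssim 1$ from Lemma~\ref{G} and the identity $\Delta_\theta Q^2=\Upsilon-\Delta_rP^2$), but the content is the same.
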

\begin{proof}
	We start from \eqref{differentials} and~\eqref{differentials2}.
	We use~\eqref{derivada} to obtain~\eqref{r_t*} and~\eqref{t_t*}.
	The estimates for the derivatives of $r$ and $\theta$ follow
	from the estimates
$$
\sqrt{\sin^2\theta_{\star}\Delta_{\theta }-\sin^2\theta}\leq
\sin\theta_{\star}\sqrt{\Delta_{\theta }}\lesssim\sin\theta_{\star}
\lesssim\sin(2\theta_{\star}),\ \mbox{for}\ \theta_{\star}\ \mbox{close to}\ 0,
$$
$$
\sqrt{\sin^2\theta_{\star}\Delta_{\theta }-\sin^2\theta}=
\cos\theta_{\star}\sqrt{\Xi\frac{\cos^2\theta}{\cos^2\theta_{\star}}-\Delta_\theta}
\stackrel{\eqref{bound}}{\lesssim}\cos\theta_{\star}\lesssim\sin(2\theta_{\star}),
\ \mbox{for}\ \theta_{\star}\ \mbox{close to}\ \frac{\pi}{2},
$$
which together imply that
$$
\frac{\sqrt{\sin^2\theta_{\star}\Delta_{\theta }-\sin^2\theta}}{\sin(2\theta_{\star})}
\lesssim 1.
$$
\end{proof}

\subsubsection{Higher order derivatives}

\begin{lem}
	\label{A.6}
	The functions $r$ and $\frac{\partial \theta}{\partial \theta_{\star}}$ are $C^\infty$. For every 
$k\geq 2$, we have that
\bea
\sum_{1\leq k_1+k_2\leq k}\left|\left(\partial_{r_{\star}}\right)^{k_1}\left(\frac{1}{\sin(2\theta_{\star})}\partial_{\theta_{\star}}\right)^{k_2}r\right|&\lesssim& |\Delta_r|,\label{AA}\\
\sum_{1\leq k_1\leq k}\left|\left(\partial_{r_{\star}}\right)^{k_1}\theta\right|&\lesssim& |\Delta_r| \sin(2\theta_{\star}),\label{BB}\\
\sum_{0\leq k_1\leq k-1}\left|\left(\frac{1}{\sin(2\theta_{\star})}\partial_{\theta_{\star}}\right)^{k_1}\left(\frac{\partial \theta}{\partial \theta_{\star}}\right)\right|&\lesssim& 1,\label{C}\\
\sum_{\stackrel{{1\leq k_1+k_2\leq k-1}}{
		{\mbox{\tiny $k_1 \geq 1$}}}}
	\left|
	\left(\partial_{r_{\star}}\right)^{k_1}
	\left(\frac{1}{\sin(2\theta_{\star})}\partial_{\theta_{\star}}\right)^{k_2}
	\left(\frac{\partial \theta}{\partial \theta_{\star}}\right)\right|&\lesssim& |\Delta_r|.\label{D}
\eea
Moreover, the derivatives of the function $L$ given in~\eqref{L} are bounded as
follows:
\bea
\label{bound1:L}
\sum_{0\leq k_1\leq k-1}\left|\left(\frac{1}{\sin(2\theta_{\star})}\partial_{\theta_{\star}}\right)^{k_1}L\right|&\lesssim& 1,\\
\label{bound2:L}
\sum_{\stackrel{{1\leq k_1+k_2\leq k-1}}{
		{\mbox{\tiny $k_1 \geq 1$}}}}
\left|
\left(\partial_{r_{\star}}\right)^{k_1}
\left(\frac{1}{\sin(2\theta_{\star})}\partial_{\theta_{\star}}\right)^{k_2}
L\right|&\lesssim& |\Delta_r|.
\eea
\end{lem}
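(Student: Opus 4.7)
The proof is by induction on $k$, proving the estimates \eqref{AA}, \eqref{BB}, \eqref{C}, \eqref{D}, \eqref{bound1:L}, and \eqref{bound2:L} simultaneously; $C^\infty$ regularity of $r$ and of $\partial\theta/\partial\theta_\star$ will emerge as a byproduct of the uniform $C^k$ bounds. The base case $k=1$ is already contained in Lemma \ref{A.5}: inspection of \eqref{r_r*}--\eqref{t_t*} and the estimates \eqref{deriv_bounds}, together with the bound $\sqrt{D(\theta,\theta_\star)}/\sin(2\theta_\star)\lesssim 1$ recorded at the end of that proof, delivers all first-order bounds. The estimate \eqref{bound1:L} with $k=1$ reduces to $|L|\lesssim 1$, which follows directly from \eqref{LLL} and the same square-root bound.

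For the inductive step, I apply the chain rule in the form
\begin{align*}
\partial_{r_\star}\Psi &= \Psi_r\,\frac{\partial r}{\partial r_\star} + \Psi_\theta\,\frac{\partial \theta}{\partial r_\star},\\
\frac{1}{\sin(2\theta_\star)}\partial_{\theta_\star}\Psi &= \frac{1}{\sin(2\theta_\star)}\Bigl(\Psi_r\,\frac{\partial r}{\partial \theta_\star} + \Psi_\theta\,\frac{\partial \theta}{\partial \theta_\star}+\Psi_{\theta_\star}\Bigr),
\end{align*}
to functions built from $r,\theta,\theta_\star,\Delta_r,\Delta_\theta,P,Q$, and the integrals in \eqref{derivada}. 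Using \eqref{r_r*}--\eqref{t_t*}, each factor of $\partial r/\partial\theta_\star$ or $\partial\theta/\partial r_\star$ carries an explicit $\sin(2\theta_\star)$ that cancels the $1/\sin(2\theta_\star)$ of the regularized derivative; this is precisely why the statement is formulated with the operator $\tfrac{1}{\sin(2\theta_\star)}\partial_{\theta_\star}$ rather than $\partial_{\theta_\star}$ alone. The powers of $|\Delta_r|$ appearing in \eqref{AA}--\eqref{BB} and \eqref{D} are produced by the $\Delta_r$ factors in $\partial r/\partial r_\star$, $\partial r/\partial\theta_\star$, and $\partial\theta/\partial r_\star$; since $|\Delta_r|$ is bounded, only a single copy need be tracked.

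The main obstacle is regularity at $\theta_\star\in\{0,\pi/2\}$ and along the diagonal $\theta=\theta_\star$, where the square root $\sqrt{D}$ degenerates. Naïve differentiation of expressions containing $\sqrt{D}$ produces integrands of the form $\Delta_{\theta'}/D^{3/2}$, which are non-integrable. The remedy is the trigonometric identity \eqref{trick}: by rewriting $\sin^2(2\theta_\star)\Delta_\theta$ in terms of $\partial_\theta D$ and $D$ itself, substitution converts the bad integrand into a total $\theta'$-derivative plus a bounded remainder, exactly as in the derivation of \eqref{derivada}. Iterating this trick at each inductive step produces, for $\partial\theta/\partial\theta_\star$ and for $L$, integral representations whose derivatives are uniformly controlled by the products of $|\Delta_r|$ and $\sin(2\theta_\star)$ predicted by the statement. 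The bound on $L$ then feeds back into the formulas of Lemma \ref{A.5} to close the induction for $r$ and $\theta$, since $\Upsilon\gtrsim 1$ throughout the region considered.

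Finally, once every mixed derivative of $r$ and $\theta$ is bounded (with the $\tfrac{1}{\sin(2\theta_\star)}$-regularization in the angular direction), a standard bootstrap on the explicit formulas \eqref{r_r*} and \eqref{t_t*} yields $C^\infty$ regularity of $r$ and of $\partial\theta/\partial\theta_\star$: each right-hand side is a smooth function of its arguments once $\sqrt{D}/\sin(2\theta_\star)$ is recognized, via \eqref{trick}, as a smooth function of $\sin^2\theta_\star$, so $C^k$ smoothness of $r$ and $\theta$ upgrades to $C^{k+1}$.
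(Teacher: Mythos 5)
Your proposal is correct and follows essentially the same route as the paper: the paper packages your induction as a closure property of the finite family $R_1,\dots,R_4$ of building blocks under the operators $\partial_{r_\star}$ and $\frac{1}{\sin(2\theta_\star)}\partial_{\theta_\star}$ (every derivative of an $R_i$ is again a polynomial in the $R_j$ and manifestly smooth quantities), with the identity~\eqref{trick} and the pairing of each stray $\sin(2\theta_\star)$ with a matching $\sin(2\theta)$ playing exactly the roles you assign them. One small imprecision: since $D(\theta',\theta_\star)\gtrsim\sin^2(2\theta_\star)$ for $\theta'\in[\theta,\theta_\star]$ when $\Lambda>0$, the integrands of the form $\Delta_{\theta'}/D^{3/2}$ are in fact integrable; what~\eqref{trick} cures is that the naive bound degenerates like $\sin^{-2}(2\theta_\star)$ as $\theta_\star\to 0$ or $\frac{\pi}{2}$ (and the degeneracy sits at those endpoints, not on the diagonal $\theta=\theta_\star$), rather than a failure of convergence.
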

\begin{rmk}\label{sense}
	We say that a function depending on $r_\star$ and
	$\theta_\star$ 
	is $C^\infty$ if it has derivatives of all
	orders with respect of $\partial_{r_\star}$ and
	$\frac{1}{\sin(2\theta_\star)}\partial_{\theta_\star}$.
	Refer to\/ {\rm Remarks~\eqref{polo1}, \eqref{polo3}}
	and\/~{\rm \eqref{polo2}} for an explanation about
	the reason for introducing the factor $\frac{1}{\sin(2\theta_\star)}$
	behind $\partial_{\theta_\star}$. The function $\sin^2\theta$ is $C^\infty$.
\end{rmk}
To prove Lemma~\ref{A.6} we need to know the derivatives of
\bea
\nonumber 
R_1&=&\frac{1}{\sin ^{2 n}(2 {\theta_{\star}})}\int_{\theta}^{\theta_{\star}} \frac{\left(\sin^2\theta_{\star}-\sin^2\theta'\right)^n}{\sqrt{\sin^2\theta_{\star} \Delta_{\theta'}-\sin^2\theta'}} \md {\theta'},\\
\nonumber 
R_2&=&\frac{\sqrt{\sin^2\theta_{\star} \Delta_{\theta}-\sin^2\theta}}{\sin (2 {\theta_{\star}})},\\
\nonumber 
R_3&=&\frac{\sin^2\theta_{\star}-\sin^2\theta}{\sin^2 (2 {\theta_{\star}})},\\
\nonumber 
R_4&=&\frac{\sin (2 {\theta})}{\sin (2 {\theta_{\star}})}.
\eea
Note that the dependence of 
$\frac{\partial r}{\partial r_{\star}}$,
$\frac{1}{\sin(2\theta_{\star})}\frac{\partial r}{\partial\theta_{\star}}$,
$\frac{\partial\theta}{\partial\theta_{\star}}$ and
$L$ on $\theta$ and $\theta_{\star}$ is done through $\sin^2\theta$, 
$\sin^2\theta_{\star}$,
$\Delta_\theta$ (which is a function of $\sin^2\theta$, and this is crucial 
for our argument to work),
$\sin^2(2\theta_{\star})$, and $R_1$ to $R_4$. 
The same is true for $\frac{\partial\theta}{\partial r_{\star}}$, which however
also has a factor
$\sin(2\theta_{\star})$.
Moreover, note that
\be
\begin{array}{rcl}
	\partial_{r_{\star}}(\sin^2\theta)&=&\displaystyle\sin(2\theta)\frac{\partial\theta}{\partial r_{\star}},\quad \mbox{this has a factor}\ \sin(2\theta)\sin(2\theta_{\star}),\\
	\displaystyle
	\partial_{r_{\star}}(\sin(2\theta)\sin(2\theta_{\star}))&=&\displaystyle\cos(2\theta)\frac{\partial\theta}{\partial r_{\star}}\sin(2\theta_{\star}),\quad \mbox{this has factors}\ \cos(2\theta)\ \mbox{and}\ \sin^2(2\theta_{\star}),\\
	\displaystyle
	\frac{1}{\sin(2\theta_{\star})}\partial_{\theta_{\star}}(\sin(2\theta)\sin(2\theta_{\star}))&=&\displaystyle2\cos(2\theta)\frac{\partial\theta}{\partial\theta_{\star}}+2\frac{\sin(2\theta)}{\sin(2\theta_{\star})}\cos(2\theta_{\star})\\
	\displaystyle
	\frac{1}{\sin(2\theta_{\star})}\partial_{\theta_{\star}}(\sin^2\theta)&=&\displaystyle
	\frac{\sin(2\theta)}{\sin(2\theta_{\star})}\frac{\partial\theta}{\partial\theta_{\star}},\\
	\displaystyle
		\frac{1}{\sin(2\theta_{\star})}\partial_{\theta_{\star}}(\sin^2\theta_{\star})
	&=&1,\\
	\displaystyle
	\frac{1}{\sin(2\theta_{\star})}\partial_{\theta_{\star}}(\sin^2(2\theta_{\star}))&=&
	4\cos(2\theta_{\star}),\\
	\displaystyle
	\frac{1}{\sin(2\theta_{\star})}\partial_{\theta_{\star}}\cos(2\theta)&=&\displaystyle-2\frac{\sin(2\theta)}{\sin(2\theta_{\star})}\frac{\partial\theta}{\partial\theta_{\star}},\\
	\displaystyle
	\frac{1}{\sin(2\theta_{\star})}\partial_{\theta_{\star}}\cos(2\theta_{\star})
	&=&-2,
\end{array}
\label{good1}
\end{equation}
or, alternatively, $\cos(2\theta)=1-2\sin^2\theta$.
If the factor $\sin(2\theta)$ were to appear by itself in our formulas,
or if the factor $\sin(2\theta_{\star})$ were to appear by itself,
then our argument would not go through because the derivatives 
$\frac{1}{\sin(2\theta_{\star})}\partial_{\theta_{\star}}$ of each of these
are not bounded. The structure of our problem is such that when one of these factors
is present, then the other one is also present, and their product has derivatives of all orders with
respect to
$\partial_{r_{\star}}$ and $\frac{1}{\sin(2\theta_{\star})}\partial_{\theta_{\star}}$ which are bounded.
Furthermore, as we will see below,
 the derivatives of $R_1$ to $R_4$ are sums whose summands
are products of factors that are either $R_1$, $R_2$, $R_3$, $R_4$, or, when
this is not the case, others that are clearly smooth (since the denominators that
will appear do not  vanish for $r_-\leq r\leq r_+$). Hence, it turns out 
that to prove that $r$ and $\theta$ are smooth, we just have to 
check that $R_1$ to $R_4$ are $C^1$ and that their first derivatives have the
aforementioned property. Next we calculate the first derivatives of
$R_1$ to $R_4$.

As a small  note, let $\theta_{\star}\in
\bigl(0,\frac{\pi}{2}\bigr)$. Consider the function
$$
\frac{1}{\sin (2 {\theta_{\star}})}\int_\theta^{\theta_{\star}}\md\theta'
=
\frac{{\theta_{\star}}-{\theta(r_{\star},\theta_{\star})}}{\sin (2 {\theta_{\star}})}.
$$
This is clearly bounded for $\theta_{\star}$ close to zero.
It is also bouded for ${\theta_{\star}}$ close to $\frac{\pi}{2}$.
Indeed, in this situation, we have
$$
\frac{{\theta_{\star}}-{\theta}}{\sin (2 {\theta_{\star}})}\leq \frac{\frac{\pi }{2}-{\theta}}{\sin (2 {\theta_{\star}})}=\frac{\frac{\pi }{2}-{\theta}}{2 \sin ({\theta_{\star}}) \cos ({\theta_{\star}})}\leq \frac{C \left(\frac{\pi }{2}-{\theta}\right)}{\cos ({\theta})}=\frac{C \left(\frac{\pi }{2}-{\theta}\right)}{\sin \left(\frac{\pi }{2}-{\theta}\right)}\leq C,
$$	
because $\frac{\cos\theta}{\cos\theta_{\star}}$ is bounded (\eqref{bound}). So the function
is bounded for $\theta_{\star}\in\bigl(0,\frac{\pi}{2}\bigr)$.

\begin{rmk}
	The reader will notice, using the expressions below, that 
	\bea
	|\partial_{r_{\star}}R_i|&\lesssim&|\Delta_r|\label{dr},\\
	\left|\frac{1}{\sin (2 {\theta_{\star}})}
	\partial_{\theta_{\star}}R_i\right|&\lesssim&1,\label{dt}
	\eea
	for each $i\in\{1,2,3,4\}$.
\end{rmk}

\vspace{\baselineskip}

\noindent{\bf Derivatives of $R_1$.} For any integer $n\geq 1$, the following identities hold:
	\bea
	&&\partial_{r_{\star}}\left( \frac{1}{\sin ^{2 n}(2 {\theta_{\star}})}\int_{\theta}^{\theta_{\star}} \frac{\left(\sin^2\theta_{\star}-\sin^2\theta'\right)^n}{\sqrt{\sin^2\theta_{\star} \Delta_{\theta'}-\sin^2\theta'}} \md {\theta'}\right)\nonumber\\
	&&\qquad=-\,\frac{\left(\sin^2\theta_{\star}-\sin^2\theta\right)^n}{\sin ^{2 n}(2 {\theta_{\star}})}\frac{a \Delta _r \Delta _{\theta}}{\left((r^2+a^2)^2 \Delta _{\theta}-a^2 \sin^2\theta \Delta_r\right)},\label{A.8}
	\eea
\bea
&&\left(\frac{1}{\sin (2 {\theta_{\star}})}\partial_{\theta_{\star}}\right)\left(\frac{1}{\sin ^{2 n}(2 {\theta_{\star}})}\int_{\theta}^{\theta_{\star}} \frac{\left(\sin^2\theta_{\star}-\sin^2\theta'\right)^n}{\sqrt{\sin^2\theta_{\star} \Delta_{\theta'}-\sin^2\theta'}} \md {\theta'}\right)\nonumber\\
&&\ \ \ \ =\frac{2 (2 n+1)}{\sin ^{2 (n+1)}(2 {\theta_{\star}})} \int_{\theta}^{\theta_{\star}} \frac{\left(\sin^2\theta_{\star}-\sin^2\theta'\right)^{n+1}}{\sqrt{\sin^2\theta_{\star} \Delta_{\theta'}-\sin^2\theta'}} \md {\theta'}\nonumber\\
&&\ \ \ \qquad-\Biggl(\frac{2}{\sin^2 (2 {\theta_{\star}})}  \int_{\theta}^{\theta_{\star}} \frac{\sin^2\theta_{\star}-\sin^2\theta'}{\sqrt{\sin^2\theta_{\star} \Delta_{\theta'}-\sin^2\theta'}} \md {\theta'}\Biggr.\nonumber\\
&&\qquad\ \  \qquad \Biggl.
- \int_r^{r_+} \frac{a^3 \Delta _{r'}}{2 {\left(\left((r')^2+a^2\right)^2-a^2 \sin^2\theta_{\star} \Delta _{r'}\right)^{3/2}}} \md r'\Biggr)
\nonumber\\
&&\qquad\qquad\qquad\qquad \times
\Delta _{\theta}\frac{(r^2+a^2)^2-a^2 \sin^2\theta_{\star} \Delta_r}{(r^2+a^2)^2 \Delta _{\theta}-a^2 \sin^2\theta \Delta_r}
\frac{ \left(\sin^2\theta_{\star}-\sin^2\theta\right)^n } {\sin ^{2n}(2 {\theta_{\star}}) }\nonumber
\\
&&\ \ \ \qquad + \frac{a^2 \Delta _r}{(r^2+a^2)^2 \Delta _{\theta}-a^2 \sin^2\theta \Delta_r} 
\frac{\sqrt{\sin^2\theta_{\star} \Delta_{\theta}-\sin^2\theta}}{\sin (2 {\theta_{\star}})}
\frac{\sin (2 {\theta})}{ \sin (2 {\theta_{\star}})} 
\frac{ \left(\sin^2\theta_{\star}-\sin^2\theta\right)^n}{\sin ^{2 n}(2 {\theta_{\star}})}.\label{ddtheta}
\eea

\begin{proof}
The proof of~\eqref{A.8} is immediate using~\eqref{t_r*}, as
	\bea
&&\partial_{r_{\star}}\left( \frac{1}{\sin ^{2 n}(2 {\theta_{\star}})}\int_{\theta}^{\theta_{\star}} \frac{\left(\sin^2\theta_{\star}-\sin^2\theta'\right)^n}{\sqrt{\sin^2\theta_{\star} \Delta_{\theta'}-\sin^2\theta'}} \md {\theta'}\right)\nonumber\\
&&\qquad=-\,\frac{1}{\sin ^{2 n}(2 {\theta_{\star}})}\frac{\left(\sin^2\theta_{\star}-\sin^2\theta\right)^n}{\sqrt{\sin^2\theta_{\star} \Delta_{\theta}-\sin^2\theta}}\frac{\partial {\theta}}{\partial r_{\star}}.
\nonumber
\eea
By differentiation we obtain
\bea
&&\partial_{\theta_{\star}}\left(\frac{1}{\sin ^{2 n}(2 {\theta_{\star}})}\int_{\theta}^{\theta_{\star}} \frac{\left(\sin^2\theta_{\star}-\sin^2\theta'\right)^n}{\sqrt{\sin^2\theta_{\star} \Delta_{\theta'}-\sin^2\theta'}} \md {\theta'}\right)\nonumber\\
&&\qquad=-\,\frac{4 n \cos (2 {\theta_{\star}})}{\sin ^{2 n+1}(2 {\theta_{\star}})} \int_{\theta}^{\theta_{\star}} \frac{\left(\sin^2\theta_{\star}-\sin^2\theta'\right)^n}{\sqrt{D(\theta',\theta_{\star})}} \md {\theta'}\nonumber\\
&&\qquad\ \ \ +\frac{n}{\sin ^{2 n-1}(2 {\theta_{\star}})} \int_{\theta}^{\theta_{\star}} \frac{\left(\sin^2\theta_{\star}-\sin^2\theta'\right)^{n-1}}{\sqrt{D(\theta',\theta_{\star})}} \md {\theta'}
\nonumber\\
&&\qquad\ \ \ 
-\,\frac{1}{2 \sin ^{2 n-1}(2 {\theta_{\star}})}\int_{\theta}^{\theta_{\star}} \frac{\Delta_{\theta'} \left(\sin^2\theta_{\star}-\sin^2\theta'\right)^n}{\sqrt{\left(D(\theta',\theta_{\star})\right)^3}} \md {\theta'}\label{here}\\
&&\qquad\ \ \
-\,\frac{\partial {\theta}}{\partial {\theta_{\star}}}\frac{\left(\sin^2\theta_{\star}-\sin^2\theta\right)^n}{\sin ^{2 n}(2 {\theta_{\star}}) \sqrt{D(\theta,\theta_{\star})}}\nonumber.
\eea
We keep the two first integrals unchanged and use~\eqref{trick}
on the integral marked \eqref{here} to obtain
\begin{eqnarray*}
&&-\,\frac{4 n \cos (2 \theta_{\star})}{\sin ^{2 n+1}(2 \theta_{\star})}\int_\theta^{\theta_{\star}} \frac{\left(\sin ^2\theta_{\star}-\sin ^2\theta'\right)^n}{\sqrt{D(\theta',\theta_{\star})}} \md \theta'\\
&&+\frac{n}{\sin ^{2 n-1}(2 \theta_{\star})} \int_\theta^{\theta_{\star}} \frac{\left(\sin ^2\theta_{\star}-\sin ^2\theta'\right)^{n-1}}{\sqrt{D(\theta',\theta_{\star})}} \md \theta'\\
&&-\,\frac{1}{\sin ^{2 n+1}(2 \theta_{\star})}\int_\theta^{\theta_{\star}} \frac{(\cos (2 \theta')+\cos (2 \theta_{\star})) \left(\sin ^2\theta_{\star}-\sin ^2\theta'\right)^n}{\sqrt{D(\theta',\theta_{\star})}}\md \theta'\\
&&-\,\frac{1}{\sin ^{2 n+1}(2 \theta_{\star})}\int_\theta^{\theta_{\star}}\left(\sin ^2\theta_{\star}-\sin ^2\theta'\right)^n\sin (2 \theta')\,\partial_{\theta'}
\left(\frac{1}{\sqrt{D(\theta',\theta_{\star})}}\right)\md \theta'\\
&&
-\,\frac{\partial {\theta}}{\partial {\theta_{\star}}}\frac{\left(\sin^2\theta_{\star}-\sin^2\theta\right)^n}{\sin ^{2 n}(2 {\theta_{\star}}) \sqrt{D(\theta,\theta_{\star})}}\nonumber.
\end{eqnarray*}
Integrating the last integral by parts and using the fact that
\begin{eqnarray*}
&&-4 n\cos(2 \theta_{\star})\left(\sin^2\theta_{\star}-\sin^2\theta\right)
+n\sin^2(2\theta_{\star})
 - (\cos(2 \theta')+\cos(2 \theta_{\star})) \left(\sin^2\theta_{\star}-\sin^2\theta\right)\\
 &&
 \ \ \ +2 \cos(2 \theta')\left(\sin^2\theta_{\star}-\sin^2\theta\right)
 -n\sin^2(2\theta') \ =\ 
2 (2 n + 1) (\sin^2\theta_{\star}- \sin^2\theta')^2,
\end{eqnarray*}
we obtain
\bea
&&\partial_{\theta_{\star}}\left(\frac{1}{\sin ^{2 n}(2 {\theta_{\star}})}\int_{\theta}^{\theta_{\star}} \frac{\left(\sin^2\theta_{\star}-\sin^2\theta'\right)^n}{\sqrt{D(\theta',\theta_{\star})}} \md {\theta'}\right)\nonumber\\
&&\qquad=
\frac{2 (2 n+1) }{\sin ^{2 n+1}(2 {\theta_{\star}})}\int_{\theta'}^{\theta_{\star}} \frac{\left(\sin^2\theta_{\star}-\sin^2\theta'\right)^{n+1}}{\sqrt{D(\theta',\theta_{\star})}} \md {\theta'}\nonumber\\
&&\qquad\ \ \ +\left(\frac{\sin (2 {\theta})}{\sin (2 {\theta_{\star}})}-\,\frac{\partial {\theta}}{\partial {\theta_{\star}}}\right)\frac{\left(\sin^2\theta_{\star}-\sin^2\theta\right)^n}{\sin ^{2 n}(2 {\theta_{\star}}) \sqrt{D(\theta,\theta_{\star})}}.\label{A.33}
\eea
From equality \eqref{t_t*}, it follows that
\bea
&&\frac{\partial {\theta}}{\partial {\theta_{\star}}}-\,\frac{\sin (2 {\theta})}{\sin (2 {\theta_{\star}})}\nonumber\\
&&\ \ =\Delta _{\theta}\frac{(r^2+a^2)^2-a^2 \sin^2\theta_{\star} \Delta_r}{(r^2+a^2)^2 \Delta _{\theta}-a^2 \sin^2\theta' \Delta_r} 
\frac{\sqrt{\sin^2\theta_{\star}\Delta_{\theta}-\sin^2\theta}}{\sin (2 {\theta_{\star}})}\nonumber\\
&&\ \ \ \ \times \Biggl(2  \int_{\theta}^{\theta_{\star}} \frac{\sin^2\theta_{\star}-\sin^2\theta'}{\sqrt{\sin^2\theta_{\star} \Delta_{\theta'}-\sin^2\theta'}} \md {\theta'}- \int_r^{r_+} \frac{a^3 \sin^2 (2 {\theta_{\star}}) \Delta _{r'}}{2 \left(\left((r')^2+a^2\right)^2-a^2 \sin^2\theta_{\star} \Delta _{r'}\right)^{3/2}} \md r'\Biggr)\nonumber\\
&&\ \ \ \ 
-a^2 \Delta _r \frac{\sin^2\theta_{\star}\Delta_{\theta} -\sin^2\theta}{ (r^2+a^2)^2 \Delta _{\theta}-a^2 \sin^2\theta \Delta_r}\frac{ \sin (2 {\theta})}{\sin (2 {\theta_{\star}})}.\label{menos-um}
\eea
Using the last equality in \eqref{A.33}
and dividing by $\sin(2\theta_{\star})$, we obtain~\eqref{ddtheta}.
\end{proof}

\vspace{\baselineskip}

\noindent{\bf Derivatives of $R_2$.} The following identities hold:
	\bea
	&&\partial_{r_{\star}} \frac{\sqrt{\sin^2\theta_{\star} \Delta_{\theta}-\sin^2\theta}}{\sin (2 {\theta_{\star}})}
	=
	-\,\frac{a}{2}\frac{\sin (2 {\theta})}{\sin (2 {\theta_{\star}})}\frac{\Delta _r \Delta _{\theta} \left(1+\frac{\Lambda}{3} a^2  \sin^2\theta_{\star}\right)}{\left((r^2+a^2)^2 \Delta _{\theta}-a^2 \sin^2\theta \Delta_r\right)},\label{d_sqrt_r}
	\eea
\bea
&&\frac{1}{\sin (2 {\theta_{\star}})}\partial_{\theta_{\star}}\frac{\sqrt{\sin^2\theta_{\star} \Delta_{\theta}-\sin^2\theta}}{\sin (2 {\theta_{\star}})}\nonumber\\
&&\qquad=2\frac{\left(\sin^2\theta_{\star}-\sin^2\theta\right)}{\sin^2(2 {\theta_{\star}})}\frac{\sqrt{\sin^2\theta_{\star} \Delta_{\theta}-\sin^2\theta}}{\sin(2 {\theta_{\star}})}\nonumber\\
&&\qquad\ \ \ +\frac{1}{2}
\frac{\sin ^2(2 {\theta})}{\sin^2(2 {\theta_{\star}})}
\frac{ \sqrt{\sin^2\theta_{\star} \Delta_{\theta}-\sin^2\theta}}
{\sin(2 {\theta_{\star}})}\frac{a^2 \Delta _r\left(1+\frac{\Lambda}{3} a^2  \sin^2\theta_{\star}\right)}{(r^2+a^2)^2 \Delta _{\theta}-a^2 \sin^2\theta \Delta_r} \nonumber\\
&&\qquad\ \ \ -\,\frac{1}{2}\frac{\sin (2 {\theta})}{\sin(2 {\theta_{\star}}) }\Delta _{\theta}\frac{((r^2+a^2)^2 -a^2 \Delta _r  \sin^2\theta_{\star})\left(1+\frac{\Lambda}{3} a^2  \sin^2\theta_{\star}\right)}{(r^2+a^2)^2 \Delta _{\theta}-a^2 \sin^2\theta \Delta_r}\nonumber\\
&&\qquad\qquad\ \ \  \times\Biggl(\frac{2}
{\sin^2 (2 {\theta_{\star}})}\int_\theta^{\theta_{\star}} \frac{\sin^2\theta_{\star}-\sin^2\theta'}{\sqrt{\sin^2\theta_{\star} \Delta_{\theta'}-\sin^2\theta'}} \md {\theta'}
\Biggr.\nonumber\\ &&\qquad\qquad\ \ \ \ \ \ \ \ \Biggl.
- \int_r^{r_+} \frac{a^3 \Delta _{r'}}{2 \left(\left((r')^2+a^2\right)^2-a^2 \sin^2\theta_{\star} \Delta _{r'}\right)^{3/2}}\md r'\Biggr).\label{d_sqrt}
\eea
\begin{proof}
	From~\eqref{t_r*}, we get~\eqref{d_sqrt_r}.
	To start the proof of~\eqref{d_sqrt}, note that
\begin{eqnarray*}
	\frac{1}{\sin (2 {\theta_{\star}})}\partial_{\theta_{\star}}\frac{\sqrt{\sin^2\theta_{\star} \Delta_{\theta}-\sin^2\theta}}{\sin (2 {\theta_{\star}})}
&=&-\,\frac{2 \cos(2\theta_{\star}) \sqrt{D(\theta,\theta_{\star})}}{\sin^3(2\theta_{\star})}\\ 
&&+\frac{1}{2} \frac{\Delta_{\theta}}{\sin(2\theta_{\star}) \sqrt{D(\theta,\theta_{\star})}}\\
&&-\,\frac{1}{2}\frac{1}{\sin^2(2\theta_{\star})}\frac{\sin(2\theta) \left(1+\frac{\Lambda}{3} a^2 \sin^2\theta_{\star}\right)}{\sqrt{D(\theta,\theta_{\star})}}\frac{\partial\theta}{\partial\theta_{\star}}.
\end{eqnarray*}
For any $A$, the last expression is equal to
\begin{eqnarray*}
&&=\frac{1}{2 \sin^3(2\theta_{\star}) \sqrt{D(\theta,\theta_{\star})}}\Biggl(-4 \cos(2\theta_{\star}) D(\theta,\theta_{\star})\Biggr.\\ 
&&\ \ \ +\sin^2(2\theta_{\star}) \Delta_{\theta}-\sin^2(2\theta) \left(1+\frac{\Lambda}{3} a^2 \sin^2\theta_{\star}\right)\\ 
&&\ \ \ -(A-1) \sin^2(2\theta) \left(1+\frac{\Lambda}{3} a^2 \sin^2\theta_{\star}\right)\\ 
&&\ \ \ \Biggl.
+\sin(2\theta)\left(1+\frac{\Lambda}{3} a^2 						
						\sin^2\theta_{\star}\right)\sin(2\theta_{\star})
\left(
	A \frac{\sin(2\theta)}{\sin(2\theta_{\star})}	-\,\frac{\partial\theta}{\partial\theta_{\star}}
\right)\Biggr),
\end{eqnarray*}
because the terms with $A$ cancel out and the terms with $\sin^2(2\theta)$
cancel out. The value of $A$ will be chosen taking~\eqref{t_t*}
into account, that is
we choose $A=\Delta _{\theta}\frac{(r^2+a^2)^2 -a^2 \sin^2\theta_{\star} \Delta_r}{(r^2+a^2)^2 \Delta _{\theta}-a^2 \sin^2\theta \Delta_r}$.
The reason we made a term with $A-1$ appear is that such a term
is proportional to $\Delta_r$. In fact,
$$
A-1=\Delta _{\theta}\frac{(r^2+a^2)^2 -a^2 \sin^2\theta_{\star} \Delta_r}{(r^2+a^2)^2 \Delta _{\theta}-a^2 \sin^2\theta \Delta_r}-1=-\,\frac{a^2 \Delta _r 
	\left(\sin^2\theta_{\star}\Delta_{\theta}-\sin^2\theta\right)}
{(r^2+a^2)^2 \Delta _{\theta}-a^2 \sin^2\theta \Delta_r}.
$$
According to~\eqref{trick}, we have that
\begin{eqnarray*}
&&-4 \cos(2\theta_{\star}) D(\theta,\theta_{\star})
+\sin^2(2\theta_{\star}) \Delta_{\theta}-\sin^2(2\theta) \left(1+\frac{\Lambda}{3} a^2 \sin^2\theta_{\star}\right)\\
&&\qquad =-4 \cos(2\theta_{\star}) D(\theta,\theta_{\star})
+\sin^2(2\theta_{\star}) \Delta_{\theta}+\sin(2\theta) \partial_\theta 
D(\theta,\theta_{\star})\\
&&\qquad =-4 \cos(2\theta_{\star}) D(\theta,\theta_{\star})
+2(\cos(2\theta)+\cos(2\theta_{\star}))D(\theta,\theta_{\star})\\
&&\qquad =
2(\cos(2\theta)-\cos(2\theta_{\star}))D(\theta,\theta_{\star})\\
&&\qquad =4 \left(\sin^2\theta_{\star}-\sin^2\theta\right) D(\theta,\theta_{\star}).
\end{eqnarray*}
Hence, the expression above is equal to
\begin{eqnarray*}
&&\frac{1}{2 \sin^3(2\theta_{\star}) \sqrt{D(\theta,\theta_{\star})}}\Biggl(4 \left(\sin^2\theta_{\star}-\sin^2\theta\right) D(\theta,\theta_{\star})\Biggr.\\ 
&&\ \ \ -(A-1) \sin^2(2\theta) \left(1+\frac{\Lambda}{3} a^2 \sin^2\theta_{\star}\right)\\ 
&&\ \ \ -\sin(2\theta)\left(1+\frac{\Lambda}{3} a^2 \sin^2\theta_{\star}\right)\Biggl.\sin(2\theta_{\star})\left(\frac{\partial\theta}{\partial\theta_{\star}}-A\frac{\sin(2\theta)}{\sin(2\theta_{\star})}\right)\Biggr)\\
&&=2\frac{\left(\sin^2\theta_{\star}-\sin^2\theta\right)}{\sin^2(2\theta_{\star})}\frac{\sqrt{D(\theta,\theta_{\star})}}{\sin(2\theta_{\star})}\\ 
&&\ \ \ +\frac{1}{2}\frac{\sin^2(2\theta)}{\sin^2(2\theta_{\star})}\frac{\sqrt{D(\theta,\theta_{\star})}}{\sin(2\theta_{\star})}
\frac{a^2 \Delta_r\left(1+\frac{\Lambda}{3} a^2 \sin^2\theta_{\star}\right)}{\left((r^2+a^2)^2 \Delta_\theta-a^2  \sin^2\theta\Delta_r\right)}\\ 
&&\ \ \ -\,\frac{1}{2}\frac{\sin(2\theta)}{\sin^2(2\theta_{\star})}\frac{ \left(1+\frac{\Lambda}{3} a^2 \sin^2\theta_{\star}\right)}{ \sqrt{D(\theta,\theta_{\star})}}\left(\frac{\partial\theta}{\partial\theta_{\star}}-A\frac{\sin(2\theta)}{\sin(2\theta_{\star})}\right).
\end{eqnarray*}
The final expression~\eqref{d_sqrt} is obtained using~\eqref{t_t*}.
\end{proof}

\vspace{\baselineskip}

\noindent{\bf Derivatives of $R_3$.} The following identities hold:
	\begin{eqnarray}
		&&	\frac{\partial }{\partial {r_{\star}}}\frac{(\sin^2\theta_{\star} -\sin^2\theta)}{\sin^2(2 {\theta_{\star}})}\nonumber\\	
		&&\qquad =-\,\frac{\sin(2\theta)}{\sin(2 {\theta_{\star}})}
		\frac{\sqrt{ \sin ^2{\theta_{\star}}\Delta _{\theta}-\sin ^2{\theta}}}{\sin(2 {\theta_{\star}})}
		\frac{a \Delta _r \Delta _{\theta}}{\left((r^2+a^2)^2 \Delta _{\theta}-a^2  \sin ^2{\theta}\Delta _r\right)},\label{q_r}
	\end{eqnarray}	
\begin{eqnarray*}
	&&	\frac{1}{\sin (2 {\theta_{\star}})}\partial_{\theta_{\star}}\frac{(\sin^2\theta_{\star} -\sin^2\theta)}{\sin^2(2 {\theta_{\star}})}\\	
	&&\qquad\ \ =
	4\frac{(\sin^2\theta_{\star}-\sin^2\theta)^2}{\sin^4(2\theta_{\star})}\\
	&&\qquad\ \ \ \ \ +\frac{\left(\sin^2\theta_{\star}\Delta_{\theta}-\sin^2\theta\right)}{\sin^2(2\theta_{\star})}
	\frac{\sin^2(2\theta)}{\sin^2(2\theta_{\star})}
	\frac{a^2 \Delta _r}
	{(r^2+a^2)^2 \Delta _{\theta}-a^2 \sin^2\theta \Delta_r}\\
	&&\qquad\ \ \ \ \ -\,\frac{\sin(2\theta)}{\sin(2\theta_{\star})}\frac{\sqrt{\sin^2\theta_{\star}\Delta_{\theta}-\sin^2\theta}}{\sin(2\theta_{\star})}
		\Delta_{\theta}\frac{(r^2+a^2)^2-a^2\sin^2\theta_{\star}\Delta_r}{ (r^2+a^2)^2\Delta_{\theta}-a^2\sin^2\theta\Delta_{r}}\\
	&&\qquad\qquad\qquad \times\Biggl(\frac{2}{\sin^2(2\theta_{\star})}
	\int_{\theta}^{\theta_{\star}}
	\frac{(\sin^2\theta_{\star}-\sin^2\theta')}
	{\sqrt{\sin^2\theta_{\star}\Delta_{\theta'}-\sin^2\theta'}}\md\theta'\Biggr.
	\nonumber\\
	&&\qquad\qquad\qquad\ \ \ \ \Biggl.-\int_r^{r_+} \frac{a^3 \Delta_{r'}}{2\left((r'^2+a^2)^2-a^2\sin^2\theta_{\star}\Delta_{r'}\right)^{\frac{3}{2}}}\md r'\Biggr).
\end{eqnarray*}
\begin{proof}
	From~\eqref{t_r*}, we get~\eqref{q_r}.
	Arguing as in the proof of~\eqref{d_sqrt}, we obtain
\begin{eqnarray*}
	\frac{1}{\sin (2 {\theta_{\star}})}\partial_{\theta_{\star}}\frac{(\sin^2\theta_{\star} -\sin^2\theta)}{\sin^2(2 {\theta_{\star}})}
	&=&-\,\frac{4 \cos(2\theta_{\star}) (\sin^2\theta_{\star} -\sin^2\theta)}{\sin^4(2\theta_{\star})}
	+ \frac{1}{\sin^2(2\theta_{\star})}
	-\,\frac{\sin(2\theta)}{\sin^3(2\theta_{\star})}\frac{\partial\theta}{\partial\theta_{\star}}\\
	&=&-\,\frac{4 \cos(2\theta_{\star}) (\sin^2\theta_{\star} -\sin^2\theta)}{\sin^4(2\theta_{\star})}
	+ \frac{1}{\sin^2(2\theta_{\star})}
	-\,\frac{\sin^2(2\theta)}{\sin^4(2\theta_{\star})}\\
	&&-(A-1)\frac{\sin^2(2\theta)}{\sin^4(2\theta_{\star})}
	-\,\frac{\sin(2\theta)}{\sin^3(2\theta_{\star})}
	\Biggl(\frac{\partial\theta}{\partial\theta_{\star}}-A\frac{\sin(2\theta)}{\sin(2\theta_{\star})}\Biggr)\\
		&=&
	4\frac{(\sin^2\theta_{\star}-\sin^2\theta)^2}{\sin^4(2\theta_{\star})}\\
	&&+\frac{\left(\sin^2\theta_{\star}\Delta_{\theta}-\sin^2\theta\right)}{\sin^2(2\theta_{\star})}
	\frac{\sin^2(2\theta)}{\sin^2(2\theta_{\star})}
	\frac{a^2 \Delta _r}
	{(r^2+a^2)^2 \Delta _{\theta}-a^2 \sin^2\theta \Delta_r}\\
	&&-\,\frac{\sin(2\theta)}{\sin^3(2\theta_{\star})}
	\Biggl(\frac{\partial\theta}{\partial\theta_{\star}}-A\frac{\sin(2\theta)}{\sin(2\theta_{\star})}\Biggr).
\end{eqnarray*}
One concludes by once again applying~\eqref{t_t*}.
\end{proof}

\vspace{\baselineskip}

\noindent{\bf Derivatives of $R_4$.} The following identities hold:
	\bea
	\label{r_der_A.10}
	\partial_{r_{\star}} \left(\frac{\sin (2 {\theta})}{\sin (2 {\theta_{\star}})}\right)
	&=&\frac{ \sqrt{\sin^2\theta_{\star} \Delta_{\theta}-\sin^2\theta}}{\sin (2 {\theta_{\star}})}\frac{2 a \Delta _r \Delta _{\theta} \cos (2 {\theta})}
	{\left((r^2+a^2)^2 \Delta _{\theta}-a^2 \sin^2\theta \Delta_r\right)},\label{ss_r}
	\eea
\bea
&&\left(\frac{1}{\sin (2 {\theta_{\star}})}\partial_{\theta_{\star}}\right)\left(\frac{\sin (2 {\theta})}{\sin (2 {\theta_{\star}})}\right)\nonumber\\
&&\ \ \ =
4\frac{\sin (2 {\theta})}{\sin(2 {\theta_{\star}})}
\frac{\left(\sin^2\theta_{\star}-\sin^2\theta\right)}{\sin^2(2 {\theta_{\star}})}
-2\frac{\sin (2 {\theta})}{\sin(2 {\theta_{\star}})}
\frac{\left(\sin^2\theta_{\star}\Delta_{\theta}-\sin^2\theta\right)}{\sin^2(2 {\theta_{\star}})}\frac{ a^2 \Delta_r\cos (2 {\theta})}{(r^2+a^2)^2 \Delta_{\theta}
	-a^2 \sin^2\theta\Delta_r}
\nonumber\\
&&\qquad +2 \frac{\sqrt{\sin^2\theta_{\star}\Delta_{\theta}-\sin^2\theta}}{\sin(2 {\theta_{\star}})}\Delta _{\theta}\frac{(r^2+a^2)^2-a^2  \sin^2\theta_{\star}\Delta_r}{(r^2+a^2)^2 \Delta _{\theta}-a^2 \sin^2\theta \Delta_r}\cos (2 {\theta})
\nonumber\\
&&\qquad  \times
\Biggl(\frac{2}{\sin^2(2 {\theta_{\star}})}\int_{\theta}^{\theta_{\star}} \frac{\sin^2\theta_{\star}-\sin^2\theta'}{\sqrt{\sin^2\theta_{\star} \Delta_{\theta'}-\sin^2\theta'}} \md {\theta'}
- \int_r^{r_+}{ \frac{a^3  \Delta _{r'}}{2 \left(\left((r')^2+a^2\right)^2-a^2 \sin^2\theta_{\star} \Delta _{r'}\right)^{3/2}} \md r'}\Biggr).\label{t_der_A.10}
\eea
\begin{proof}
	From~\eqref{t_r*}, we immediately get~\eqref{ss_r}.
Moreover
\begin{eqnarray*}
\left(\frac{1}{\sin (2 {\theta_{\star}})}\partial_{\theta_{\star}}\right)\left(\frac{\sin (2 {\theta})}{\sin (2 {\theta_{\star}})}\right)
&=&-\,\frac{2 \sin (2 {\theta}) \cos (2 {\theta_{\star}})}{\sin ^3(2 {\theta_{\star}})}+\frac{2 \cos (2 {\theta})}{\sin ^2(2 {\theta_{\star}})}\frac{\partial {\theta}}{\partial {\theta_{\star}}}\\
&=&-\,\frac{\sin(2\theta)}{\sin(2\theta_{\star})}
\Biggl(\frac{2\cos(2\theta_{\star})}{\sin^2(2\theta_{\star})}
-\,\frac{2\cos(2\theta)}{\sin^2(2\theta_{\star})}\Biggr)\\
&&+\frac{2 \cos (2 {\theta})}{\sin ^2(2 {\theta_{\star}})}\Biggl(\frac{\partial {\theta}}{\partial {\theta_{\star}}}-
\,\frac{\sin(2\theta)}{\sin(2\theta_{\star})}
\Biggr).
\end{eqnarray*}
Equality~\eqref{t_der_A.10} is obtained using \eqref{menos-um}.
\end{proof}

\vspace{\baselineskip}

\begin{altproof}{{\,\rm Lemma \ref{A.6}}}\\
\begin{altproof}{\eqref{AA}}
We have seen in~\eqref{deriv_bounds} that
$\left|\frac{\partial r}{\partial{r_{\star}}}\right|\lesssim |\Delta_r|$
(this is \eqref{AA} with $k_1=1$ and $k_2=0$)
and that
$\left|\frac{1}{\sin(2\theta_{\star})}\frac{\partial r}{\partial{\theta_{\star}}}\right|\lesssim |\Delta_r|$
(this is \eqref{AA} with $k_1=0$ and $k_2=1$).
Thus, inequalities~\eqref{AA} follow from 
\begin{enumerate}[(i)]
	\item \eqref{dr} holds, 	
	\item \eqref{dt} holds, 
	\item $\left|\frac{\partial r}{\partial{r_{\star}}}\right|\lesssim |\Delta_r|$,
	which in particular implies that 
	$\left|\frac{\partial \Delta_r}{\partial{r_{\star}}}\right|\lesssim |\Delta_r|$,
	\item $\left|\frac{1}{\sin(2\theta_{\star})}\frac{\partial r}{\partial{\theta_{\star}}}\right|\lesssim |\Delta_r|$, which in particular implies that 
	$\left|\frac{1}{\sin(2\theta_{\star})}\frac{\partial \Delta_r}{\partial{\theta_{\star}}}\right|\lesssim |\Delta_r|$,
	\item $\left|\frac{\partial\theta}{\partial{r_{\star}}}\right|\lesssim |\Delta_r|\sin(2\theta_{\star})\lesssim |\Delta_r|\lesssim 1$,
	\item $\left|\frac{1}{\sin(2\theta_{\star})}\frac{\partial(\sin^2\theta)}{\partial\theta_{\star}}\right|\lesssim 1$,
	$\left|\frac{1}{\sin(2\theta_{\star})}\frac{\partial(\sin(2\theta)\sin(2\theta_{\star}))}{\partial\theta_{\star}}\right|\lesssim 1$,
	$\left|\frac{1}{\sin(2\theta_{\star})}\frac{\partial(\cos(2\theta))}{\partial\theta_{\star}}\right|\lesssim 1$.
	
	$\left|\frac{1}{\sin(2\theta_{\star})}\frac{\partial(\sin^2\theta_{\star})}{\partial\theta_{\star}}\right|=1$,
	$\left|\frac{1}{\sin(2\theta_{\star})}\frac{\partial(\sin^2(2\theta_{\star}))}{\partial\theta_{\star}}\right|\lesssim 1$,
	$\left|\frac{1}{\sin(2\theta_{\star})}\frac{\partial(\cos(2\theta_{\star}))}{\partial\theta_{\star}}\right|=2$.
\end{enumerate}
\end{altproof}
\begin{altproof}{\eqref{BB}}
It was shown in~\eqref{deriv_bounds} that
 $\left|\frac{\partial\theta}{\partial{r_{\star}}}\right|\lesssim 
 |\Delta_r|\sin(2\theta_{\star})$
(this is \eqref{BB} with $k_1=1$).
Hence, inequalities~\eqref{BB} follow from (i), (iii) and (v) (with the bound $1$).
\end{altproof}
\begin{altproof}{\eqref{C}}
We have seen in~\eqref{deriv_bounds} that
$\left|\frac{\partial\theta}{\partial{\theta_{\star}}}\right|\lesssim 1$
(this is \eqref{C} with $k_1=0$). 
Thus, inequalities~\eqref{C} are a consequence of (ii), (iv) and (vi).
\end{altproof}
\begin{altproof}{\eqref{D}} 
Inequalities~\eqref{D} result from~\eqref{C}, (i), (iii) and (v) (with the bound $|\Delta_r|$).
\end{altproof}
\begin{altproof}{\eqref{bound1:L} and \eqref{bound2:L}}
These inequalities ensue from~\eqref{LLL}, \eqref{AA} to~\eqref{D} and~\eqref{good1}.
\end{altproof}
\noindent This completes the proof of Lemma \ref{A.6}.
\end{altproof}
\begin{rmk}
	Note that \eqref{bound1:L} and \eqref{bound2:L} hold for any smooth function
	of $r_\star$ and $\sin^2\theta_{\star}$.
\end{rmk}

\begin{lem}\label{smooth-functions}
	The functions
	$\frac{1}{\sin(2\theta_{\star})}\partial_{\theta_{\star}}h$,
	 $b^{\phi_{\star}}$, $\gamma_{\theta_{\star},\theta_{\star}}$,
	 $\frac{\gamma_{\theta_{\star}\phi_{\star}}}{\sin^2\theta_{\star}\sin(2\theta_\star)}$,
	 $\frac{\gamma_{\phi_{\star}\phi_{\star}}}{\sin^2\theta_{\star}}$
	 and
	 $\frac{1}{\sin^2\theta_{\star}}\left(\gamma_{\theta_{\star}\theta_{\star}}
	 -\,\frac{\gamma_{\phi_{\star}\phi_{\star}}}{\sin^2(\theta_{\star})}\right)$
	are smooth.
\end{lem}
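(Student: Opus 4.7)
The plan is to build up the smoothness of each listed function from the smoothness of $r$ and $\sin^2\theta$ as functions of $(r_\star,\theta_\star)$, established in Lemma~\ref{A.6} and the accompanying remark, together with the smoothness of $L$, which follows from~\eqref{bound1:L}$-$\eqref{bound2:L}. Consequently $\Delta_r$, $\Delta_\theta$, $\rho^2$ and $\Upsilon$ are smooth; since $\Upsilon>0$ on the black-hole interior, $1/\Upsilon$ is smooth as well.

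The first immediate step is to deduce that
\[
b^{\phi_\star}=2\Xi a\,\frac{(r^2+a^2)\Delta_\theta-\Delta_r}{\Upsilon}
\]
is smooth by composition. For the ratio $\sin^2\theta/\sin^2\theta_\star$, I would exploit the smoothness of $R_3$ (whose first derivatives have been computed above and shown to be built from the $R_i$'s and manifestly smooth pieces), together with the algebraic identity
\[
\frac{\sin^2\theta}{\sin^2\theta_\star}=1-4R_3\cos^2\theta_\star,
\]
obtained by rearranging $R_3=(\sin^2\theta_\star-\sin^2\theta)/\sin^2(2\theta_\star)$. Smoothness of $\gamma_{\phi_\star\phi_\star}/\sin^2\theta_\star=(\Upsilon/\Xi^2\rho^2)(\sin^2\theta/\sin^2\theta_\star)$ then follows.

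Next, to handle $\partial_{\theta_\star}h/\sin(2\theta_\star)$ I would integrate the defining relation for $h$: since $\partial_{r_\star}h=-b^{\phi_\star}/2$ and $h(0,\theta_\star)=0$, we have
\[
\frac{\partial_{\theta_\star}h}{\sin(2\theta_\star)}(r_\star,\theta_\star)=-\int_0^{r_\star}\frac{1}{\sin(2\theta_\star)}\partial_{\theta_\star}\!\Bigl(\tfrac{b^{\phi_\star}}{2}\Bigr)(r',\theta_\star)\,\mathrm{d}r'.
\]
Smoothness of $b^{\phi_\star}$ makes the integrand smooth (by our very definition of smoothness), and differentiation under the integral sign delivers smoothness of the left-hand side. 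Combining the above ingredients yields
$\gamma_{\theta_\star\phi_\star}/(\sin^2\theta_\star\sin(2\theta_\star))=(\Upsilon/\Xi^2\rho^2)(\sin^2\theta/\sin^2\theta_\star)(\partial_{\theta_\star}h/\sin(2\theta_\star))$. For $\gamma_{\theta_\star\theta_\star}$, I would rewrite it as $\rho^2L^2/\Upsilon+\gamma_{\theta_\star\phi_\star}\partial_{\theta_\star}h$; the second summand equals $\sin^2(2\theta_\star)$ times a product of already-established smooth factors, and the identity $\sin^2(2\theta_\star)=4\sin^2\theta_\star(1-\sin^2\theta_\star)$ shows this is smooth. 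Finally, on the interior range of $\theta_\star$ where $1/\sin^2\theta_\star$ is smooth, the combination $\sin^{-2}\theta_\star(\gamma_{\theta_\star\theta_\star}-\gamma_{\phi_\star\phi_\star}/\sin^2\theta_\star)$ is smooth by the previous two items.

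The step I expect to be the main obstacle is the justification of differentiation under the integral in the formula for $\partial_{\theta_\star}h/\sin(2\theta_\star)$: one must ensure that every iterated derivative of $b^{\phi_\star}$ with respect to the operators $\partial_{r_\star}$ and $\sin(2\theta_\star)^{-1}\partial_{\theta_\star}$ remains integrable uniformly in $\theta_\star$. This hinges on the structural fact, spelled out in~\eqref{good1}, that the singular factors $\sin(2\theta)$ and $\sin(2\theta_\star)$ enter our expressions only in the paired ratio $\sin(2\theta)/\sin(2\theta_\star)=R_4$ (and never in isolation), which is precisely what allows successive applications of $\sin(2\theta_\star)^{-1}\partial_{\theta_\star}$ to produce again a smooth function of the $R_i$'s and of $r,\sin^2\theta,\sin^2\theta_\star$.
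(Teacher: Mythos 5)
Your treatment of the first five functions is essentially the paper's argument, with one pleasant shortcut: the identity $\frac{\sin^2\theta}{\sin^2\theta_\star}=1-4R_3\cos^2\theta_\star$ lets you import the smoothness of $R_3$ directly instead of redoing the pole analysis for $\sin^2\theta/\sin^2\theta_\star$ by hand, and your integral representation of $\partial_{\theta_\star}h/\sin(2\theta_\star)$ (pulling $\frac{1}{\sin(2\theta_\star)}\partial_{\theta_\star}$ inside the $r'$-integral and using that $\partial_{\theta_\star}r$ carries a factor $\Delta_r\sin(2\theta_\star)$ while $\partial_\theta B$ carries a factor $\sin(2\theta)$) is exactly what the paper does. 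Those parts are fine.

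There is, however, a genuine gap in your last step. For the quantity $\frac{1}{\sin^2\theta_\star}\bigl(\gamma_{\theta_\star\theta_\star}-\frac{\gamma_{\phi_\star\phi_\star}}{\sin^2\theta_\star}\bigr)$ you only argue smoothness ``on the interior range of $\theta_\star$ where $1/\sin^2\theta_\star$ is smooth'' — but that assertion is vacuous: away from $\theta_\star=0,\pi$ everything in sight is trivially smooth, and the entire content of this item of the lemma is regularity \emph{at the poles}. Note that $\gamma_{\theta_\star\theta_\star}$ and $\gamma_{\phi_\star\phi_\star}/\sin^2\theta_\star$ each tend to the same nonzero limit $\frac{r^2+a^2}{\Xi}\bigl(\frac{\partial\theta}{\partial\theta_\star}\bigr)^2$ as $\theta_\star\to 0$ (this is precisely what Appendix~\ref{polo} computes), so their difference vanishes at the pole; but to divide that difference by $\sin^2\theta_\star$ and retain smoothness you need a quantitative cancellation, not just the smoothness of the two terms separately. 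The paper carries this out in the computation culminating in~\eqref{telheiro}: after substituting $L$ from~\eqref{LLL} and $\Upsilon$ from~\eqref{Upsilon}, the difference $\rho^4L^2-\frac{\sin^2\theta}{\sin^2\theta_\star}\frac{\Upsilon^2}{\Xi^2}$ reduces, modulo manifestly smooth terms, to $\frac{\sin^2\theta}{\sin^2\theta_\star}\frac{1}{\cos^2\theta_\star}\,p(\sin^2\theta,\sin^2\theta_\star)$ where $p$ is a polynomial with $p(0,0)=0$, and it is the vanishing of $p$ at the origin that makes $p(\sin^2\theta,\sin^2\theta_\star)/\sin^2\theta_\star$ smooth (since $\sin^2\theta\lesssim\sin^2\theta_\star$ by~\eqref{bound}). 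Without some version of this cancellation argument your proof of the sixth item does not go through.
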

\begin{proof}
Since $h(0,\theta_{\star})=0$, we have that $\partial_{\theta_{\star}}h(0,\theta_{\star})=0$. So,
as $\partial_{r_{\star}}h=-B$, it follows that
\begin{eqnarray*}
\partial_{\theta_{\star}}h(r_{\star},\theta_{\star})&=&\int_0^{r_\star}
(\partial_{r_{\star}}\partial_{\theta_{\star}}h)(r',\theta_{\star})\md r'\ =\ 
\int_0^{r_\star}
(\partial_{\theta_{\star}}\partial_{r_{\star}}h)(r',\theta_{\star})\md r'\\
&=&-\int_0^{r_\star}(
(\partial_{\theta_{\star}}r)(\partial_{r}B)+
	(\partial_{\theta_{\star}}\theta)(\partial_\theta B))(r',\theta_{\star})
\md r'\\
&=&-\sin(2\theta_{\star})\int_0^{r_\star}\Biggl(\left(\frac{1}{\sin(2\theta_{\star})}\partial_{\theta_{\star}}r\right)(\partial_{r}B)\Biggr.\\
&&\qquad\qquad\qquad+\Biggl.
(\partial_{\theta_{\star}}\theta)\left(\frac{1}{\sin(2\theta)}\partial_\theta B\right)\frac{\sin(2\theta)}{\sin(2\theta_{\star})}\Biggr)(r',\theta_{\star})
\md r'.
\end{eqnarray*}
Recall from~\eqref{r_t*} that $\partial_{\theta_{\star}}r$ has a factor
$\Delta_r\sin(2\theta_{\star})$ and observe that
$$
\frac{1}{\sin(2\theta)}
\partial_\theta B=\frac{\Xi a^3\Delta_r((r^2+a^2)\left(1-\,\frac{\Lambda}{3}r^2\right)-\Delta_r)}{\Upsilon^2}.
$$
This shows that $\frac{1}{\sin(2\theta_{\star})}\partial_{\theta_{\star}}h$
has derivatives of all orders with respect to $\partial_{r_{\star}}$ and
$\frac{1}{\sin(2\theta_{\star})}\partial_{\theta_{\star}}$.
Moreover, as $b^{\phi_{\star}}=2B$, we obtain
$$
\left|
\frac{1}{\sin(2\theta_{\star})}\partial_{\theta_{\star}}b^{\phi_{\star}}
\right|\lesssim|\Delta_r|
$$
and $\gamma_{\theta_{\star}\theta_{\star}}$ is smooth.

Note that $\frac{\sin^2\theta}{\sin^2\theta_{\star}}$ has derivatives of all
orders with respect to $r_\star$ and $\sin^2\theta_{\star}$. Indeed, we have
\begin{eqnarray*}
\partial_{r_{\star}}
\left(\frac{\sin^2\theta}{\sin^2\theta_{\star}}\right)&=&
\frac{\sin(2\theta)}{\sin^2\theta_{\star}}\frac{\partial\theta}{\partial r_\star}.
\end{eqnarray*}
As $\frac{\partial\theta}{\partial r_\star}$ has a factor $\sin(2\theta_\star)$,
the right-hand side has a factor $\sin(2\theta)\sin(2\theta_\star)$.
Taking~\eqref{good1} into account, differentiability at $\theta_{\star}$
equal to $\frac{\pi}{2}$ is not a problem. Neither is there a problem
at $\theta_{\star}$ equal to zero because
$$
\frac{\sin(2\theta)\sin(2\theta_{\star})}{\sin^2\theta_{\star}}=
4\frac{\sin(2\theta)}{\sin(2\theta_{\star})}(1-\sin^2\theta_\star).
$$
Differentiability with respect to $\sin^2\theta_{\star}$
holds at $\theta_{\star}$ equal to zero because
$$
\frac{\sin^2\theta}{\sin^2\theta_{\star}}=
\left(\frac{\sin(2\theta)}{\sin(2\theta_{\star})}\right)^2\frac{1-\sin^2\theta_{\star}}{1-\sin^2\theta},
$$
and differentiability holds at $\theta_{\star}$ equal to $\frac{\pi}{2}$ because
\begin{eqnarray*}
\frac{1}{\sin(2\theta_{\star})}\partial_{\theta_{\star}}
\left(\frac{\sin^2\theta}{\sin^2\theta_{\star}}\right)&=&
\frac{\sin(2\theta)}{\sin(2\theta_{\star})}
\frac{\partial\theta}{\partial\theta_\star}
\frac{1}{\sin^2\theta_{\star}}-\,\frac{\sin^2\theta}{\sin^2\theta_{\star}}
\frac{1}{\sin^2\theta_{\star}}.
\end{eqnarray*}
So, we see that
$$
\frac{\gamma_{\theta_{\star}\phi_{\star}}}{\sin^2\theta_{\star}\sin(2\theta_\star)}\qquad
\mbox{and}\qquad\frac{\gamma_{\phi_{\star}\phi_{\star}}}{\sin^2\theta_{\star}}
$$
are smooth.
Using~\eqref{Upsilon} and~\eqref{LLL}, we get
\begin{eqnarray}
&&\frac{1}{\sin^2\theta_{\star}}\left(\gamma_{\theta_{\star}\theta_{\star}}
-\,\frac{\gamma_{\phi_{\star}\phi_{\star}}}{\sin^2(\theta_{\star})}\right)
\label{telheiro}\\
&&=
\frac{1}{\sin^2\theta_{\star}}\left(\frac{\rho^2L^2}{\Upsilon}
-\,\frac{\sin^2\theta}{\sin^2\theta_{\star}}\frac{\Upsilon}{\Xi^2\rho^2}\right)
+\sin^2(2\theta_{\star})\frac{\sin^2\theta}{\sin^2\theta_{\star}}
\frac{\Upsilon}{\Xi^2\rho^2}
\left(\frac{\partial_{\theta_{\star}}h}{\sin(2\theta_{\star})}\right)^2\nonumber\\
&&=
\frac{1}{\Upsilon\rho^2\sin^2\theta_{\star}}\left({\rho^4L^2}
-\,\frac{\sin^2\theta}{\sin^2\theta_{\star}}\frac{\Upsilon^2}{\Xi^2}\right)
+\mbox{smooth}\nonumber\\
&&=
\frac{1}{\Xi^2\Upsilon\rho^2\sin^2\theta_{\star}}
\left((r^2+a^2\cos^2\theta)^2\Xi^2((r^2+a^2)^2-a^2\sin^2\theta_{\star}\Delta_r)
\frac{\sin^2(2\theta)}{\sin^2(2\theta_{\star})}\right.\nonumber\\
&&\qquad\qquad\qquad\ \ \left.
-\,\frac{\sin^2\theta}{\sin^2\theta_{\star}}((r^2+a^2)^2\Delta_\theta-a^2\sin^2\theta\Delta_r)^2
\right)
+\mbox{smooth}\nonumber\\
&&=
\frac{1}{\Xi^2\Upsilon\rho^2\sin^2\theta_{\star}}\frac{\sin^2\theta}{\sin^2\theta_{\star}}
\frac{1}{\cos^2\theta_{\star}}
\left[(r^2+a^2\cos^2\theta)^2\Xi^2((r^2+a^2)^2-a^2\sin^2\theta_{\star}\Delta_r)
\cos^2\theta\right.\nonumber\\
&&\qquad\qquad\qquad\qquad\qquad\qquad\ \ \left.
-((r^2+a^2)^2\Delta_\theta-a^2\sin^2\theta\Delta_r)^2\cos^2\theta_{\star}
\right]
+\mbox{smooth}.\nonumber\\
&&=
\frac{1}{\Xi^2\Upsilon\rho^2}\frac{\sin^2\theta}{\sin^2\theta_{\star}}
\frac{1}{1-\sin^2\theta_{\star}}\frac{p(\sin^2\theta,\sin^2\theta_{\star})}{\sin^2\theta_{\star}}+\mbox{smooth}.
\nonumber
\end{eqnarray}
The expression inside the square parenthesis is a polynomial $p$ in $\alpha:=\sin^2\theta$
and $\beta:=\sin^2\theta_{\star}$ satisfying $p(0,0)=0$. So,
$p(\alpha,\beta)=D_1p(0,0)\alpha+D_2p(0,0)\beta+$ higher order terms.
Thus, it is clear that $\frac{p(\sin^2\theta,\sin^2\theta_{\star})}{\sin^2\theta_{\star}}$
is smooth, and so it is clear that~\eqref{telheiro} is smooth at zero.
Differentiability at $\frac{\pi}{2}$ is not an issue since
$$
\frac{1}{\sin(2\theta_{\star})}\partial_{\theta_{\star}}\frac{1}{\sin^2\theta_{\star}}
=-\,\frac{1}{\sin^4\theta_{\star}}.
$$
\end{proof}

\begin{rmk}
	The Christoffel symbols
	of the metric $\gamma$ are bounded with the exception of 
	$\GGamma^{\phi_{\star}}_{\theta_{\star}\phi_{\star}}$
	which blows up precisely like $\frac{1}{\sin\theta_{\star}}$.
\end{rmk}
\begin{proof}
Because $\gamma$ behaves like the round metric on $\bbS^2$,
$\gamma^{-1}$ behaves like the inverse of the round metric on $\bbS^2$, in that $\gamma^{\phi_{\star}\phi_{\star}}$ blows up at $\theta_{\star}=0$ 
like $\frac{1}{\sin^2\theta_{\star}}$. Hence, the Christoffel symbols
of the metric $\gamma$ also behave like the ones of the round metric on $\bbS^2$,
namely, they are all smooth with the exception of 
$\GGamma^{\phi_{\star}}_{\theta_{\star}\phi_{\star}}$,
\begin{eqnarray*}
\GGamma^{\phi_{\star}}_{\theta_{\star}\phi_{\star}}&=&\frac{1}{2}
\gamma^{\phi_{\star}\alpha}(\partial_{\theta_{\star}}\gamma_{\alpha\phi_{\star}}
+\partial_{\phi_{\star}}\gamma_{\theta_{\star}\alpha}-\partial_\alpha\gamma_{\theta_{\star}\phi_{\star}})\\
&=&\mbox{smooth}+\frac{1}{2}
\gamma^{\phi_{\star}{\phi_{\star}}}(\partial_{\theta_{\star}}\gamma_{{\phi_{\star}}\phi_{\star}}
+\partial_{\phi_{\star}}\gamma_{\theta_{\star}\phi_{\star}}-\partial_{\phi_{\star}}\gamma_{\theta_{\star}\phi_{\star}})\\
&=&\frac{1}{2}
\gamma^{\phi_{\star}{\phi_{\star}}}\partial_{\theta_{\star}}\gamma_{{\phi_{\star}}\phi_{\star}}+\mbox{smooth}\\
&=&\frac{1}{2}\frac{\Xi^2 \rho ^2}{\Upsilon  \sin ^2\theta} \partial_{\theta_{\star}}
\left(\frac{\sin^2\theta}{\Xi^2\rho^2}\Upsilon\right)+\mbox{smooth}\\
&=&\frac{1}{2}\frac{1}{\sin^2\theta}\partial_{\theta_{\star}}(\sin^2\theta)+\mbox{smooth}\\
&=&\frac{\cos\theta}{\sin\theta}\frac{\partial\theta}{\partial\theta_{\star}}+\mbox{smooth}\\
&=&\Delta_{\theta}\frac{(r^2+a^2)^2-a^2\sin^2\theta_{\star}\Delta_r}{ (r^2+a^2)^2\Delta_{\theta}-a^2\sin^2\theta\Delta_{r}}{\frac{\sin(2\theta)}{\sin(2\theta_{\star})}}\frac{\cos\theta}{\sin\theta}+\mbox{smooth}\\
&=&\Delta_{\theta}\frac{(r^2+a^2)^2-a^2\sin^2\theta_{\star}\Delta_r}{ (r^2+a^2)^2\Delta_{\theta}-a^2\sin^2\theta\Delta_{r}}{\frac{\cos^2\theta}{\cos\theta_{\star}}}\frac{1}{\sin\theta_{\star}}+\mbox{smooth},
\end{eqnarray*}
which blows up precisely like $\frac{1}{\sin\theta_{\star}}$.
\end{proof}

\subsubsection{Regularity at $\theta_\star=0$ and $\theta_\star=\pi$}
\label{polo}

Note that the coordinates that are constructed are not just locally defined 
but are well-defined globally on a manifold diffeomorphic to $\mathbb{R}^2\times
\mathbb{S}^2$.

\paragraph{Regularity of the metric.}

We check the regularity of the metric at $\theta_\star=0$ using coordinates 
\begin{equation}
\begin{array}{rcl}
x&=&\sin\theta_\star\cos\phi_\star,\\
y&=&\sin\theta_\star\sin\phi_\star.
\end{array}\label{coordinates}
\end{equation}
\begin{lem}
	The metric is smooth at $\theta_\star=0$ and 
	\be\label{paiva}
	\lim_{\theta_\star\to 0}g=
	-2\Omega^2 \left(\mmd u\otimes \mmd v+\md v\otimes \mmd u\right)+
	\frac{r^2+a^2}{\Xi}\left(\frac{\partial\theta}{\partial\theta_\star}\right)^2(r_\star,0)\left(\md x\otimes\md x+\md y\otimes\md y\right).
	\ee
\end{lem}
\begin{proof}
	The relations
	\begin{eqnarray*}
		\md\theta_\star&=&\ \ \, \frac{\cos\phi_\star}{\cos\theta_\star}\md x
		+\frac{\sin\phi_\star}{\cos\theta_\star}\md y,\\
		\md\phi_\star&=&-\,\frac{\sin\phi_\star}{\sin\theta_\star}\md x
		+\frac{\cos\phi_\star}{\sin\theta_\star}\md y
	\end{eqnarray*}
	imply that
	\begin{eqnarray*}
		&&\,\gamma_{\theta_{\star}\theta_{\star}}
		\md\theta_{\star}\otimes \mmd\theta_{\star}
		+\gamma_{\theta_{\star}\phi_{\star}}\md\theta_{\star}\otimes\mmd\phi_{\star}
		+\gamma_{\theta_{\star}\phi_{\star}}\mmd\phi_{\star}\otimes
		\mmd\theta_{\star}\nonumber+\gamma_{\phi_{\star}\phi_{\star}}
		\mmd\phi_{\star}\otimes\mmd\phi_{\star}\\
		&&\ \ \ =\gamma_{\theta_{\star}\theta_{\star}}\left(
		\frac{\cos^2\phi_\star}{\cos^2\theta_\star}\md x\otimes\md x
		+\frac{\sin(2\phi_\star)}{2\cos^2\theta_\star}\md x\otimes\md y
		+\frac{\sin(2\phi_\star)}{2\cos^2\theta_\star}\md y\otimes\md x
		+\frac{\sin^2\phi_\star}{\cos^2\theta_\star}\md y\otimes\md y
		\right)\\
		&&\qquad+\gamma_{\theta_{\star}\phi_{\star}}\left(
		-2\frac{\sin(2\phi_\star)}{\sin(2\theta_\star)}\md x\otimes\md x
		+\frac{\cos(2\phi_\star)}{\sin(2\theta_\star)}\md x\otimes\md y
		+\frac{\cos(2\phi_\star)}{\sin(2\theta_\star)}\md y\otimes\md x
		+2\frac{\sin^2\phi_\star}{\sin(2\theta_\star)}\md y\otimes\md y
		\right)\\
		&&\qquad+\gamma_{\phi_{\star}\phi_{\star}}\left(
		\frac{\sin^2\phi_\star}{\sin^2\theta_\star}\md x\otimes\md x
		-\,\frac{\sin(2\phi_\star)}{2\sin^2\theta_\star}\md x\otimes\md y
		-\,\frac{\sin(2\phi_\star)}{2\sin^2\theta_\star}\md y\otimes\md x
		+\frac{\cos^2\phi_\star}{\sin^2\theta_\star}\md y\otimes\md y
		\right).
	\end{eqnarray*}
	Using~\eqref{Upsilon}, \eqref{gamma-tt}, \eqref{gamma-ff} and~\eqref{LLL}, we have
	\begin{eqnarray*}
		\lim_{\theta_\star\to 0}\gamma_{\theta_{\star}\theta_{\star}}&=&
		\frac{r^2+a^2}{\Xi}\left(\frac{\partial\theta}{\partial\theta_\star}\right)^2(r_\star,0),\\
		\lim_{\theta_\star\to 0}\frac{\gamma_{\phi_{\star}\phi_{\star}}}{\sin^2\theta_\star}&=&
		\frac{r^2+a^2}{\Xi}\left(\frac{\partial\theta}{\partial\theta_\star}\right)^2(r_\star,0).
	\end{eqnarray*}
	As, by~\eqref{gamma-tf},
	$$
	\lim_{\theta_\star\to 0}\frac{\gamma_{\theta_\star\phi_\star}}{\sin(2\theta_\star)}=0,
	$$
	we obtain
	$$
	\lim_{\theta_\star\to 0}\gamma=\frac{r^2+a^2}{\Xi}\left(\frac{\partial\theta}{\partial\theta_\star}\right)^2(r_\star,0)\left(\md x\otimes\md x+\md y\otimes\md y\right).
	$$
	The trigonometric functions of $\phi_\star$, which would make the metric discontinuous
	at $\theta_\star=0$, have disappeared.
	Moreover, since
	$$
	\lim_{\theta_\star\to 0}b^{\phi_{\star}}\gamma_{\theta_{\star}\phi_{\star}}\md\theta_{\star}\otimes\md v=0,
	\qquad
	\lim_{\theta_\star\to 0}b^{\phi_{\star}}\gamma_{\theta_{\star}\phi_{\star}}\md v\otimes\md\theta_{\star}=0
	$$
	and
	$$
	\lim_{\theta_\star\to 0}\gamma_{\phi_\star\phi_\star}=0,
	$$
	we conclude that~\eqref{paiva} holds.
	This shows that the metric is continuous at $\theta_\star=0$.
	Lemma~\ref{smooth-functions} implies that the extension of
	the metric is smooth.
\end{proof}

\paragraph{Calculation of $\partial_{\theta_\star}\theta(r_\star,0)$.}
\begin{lem} We have that
	\be\label{braga}
	\frac{\partial\theta}{\partial\theta_\star}(r_\star,0)=
	\frac{1}{\frac{\partial\theta_\star}{\partial\theta}(r,0)}=
	\sqrt{\Xi}\sin\left(
	\arcsin\left(\frac{1}{\sqrt{\Xi}}\right)+
	\arctan\left(\frac{r}{a}\right)
	-\,\arctan\left(\frac{r_+}{a}\right)
	\right).
	\ee
\end{lem}
\begin{proof}
	Using the definition of $F$ in~\eqref{F_imp_def2}, the equation
	$
	F(r,\theta,\theta_\star(r,\theta))=0 
	$
	can be written as 
	\be\label{fim}
	\int_\theta^{\theta_\star}\frac{\md\theta'}{P(\theta',\theta_\star)}
	=\int_r^{r_+}\frac{\md r'}{Q(r',\theta_\star)}.
	\ee
	We will take the limit of both sides of~\eqref{fim} as $\theta$ goes to $0$.
	As $\lim_{\theta\to 0}\theta_\star(r,\theta)=0$, uniformly in $r$,
	$$
	\lim_{\theta\to 0}\int_r^{r_+}\frac{\md r'}{Q(r',\theta_\star)}=
	\int_r^{r_+}\frac{\md r'}{(r')^2+a^2}=\frac{1}{a}\arctan\left(\frac{r_+}{a}\right)-
	\frac{1}{a}\arctan\left(\frac{r}{a}\right).
	$$
	Using the substitution $s=\frac{\sin\theta'}{\sin\theta_\star}$,
	\begin{eqnarray*}
		\int_\theta^{\theta_\star}\frac{\md\theta'}{P(\theta',\theta_\star)}&=&
		\frac{1}{a}
		\int_{\frac{\sin\theta}{\sin\theta_\star}}^1
		\frac{1}{\cos\theta'}\frac{\md s}{\sqrt{\left(1+\frac{\Lambda}{3}a^2\cos^2\theta'\right)-s^2}},
	\end{eqnarray*}
	with $\cos\theta'=\sqrt{1-s^2\sin^2\theta_\star}$.
	The last integrand converges uniformly to
	$$
	\frac{1}{\sqrt{\Xi-s^2}}
	$$
	as $\theta$ goes to zero. Thus, we get
	\begin{eqnarray*}
		\lim_{\theta\to 0}
		\int_\theta^{\theta_\star}\frac{\md\theta'}{P(\theta',\theta_\star)}
		&=&\frac{1}{a}\int_{1\left/\left(\frac{\partial\theta_\star}{\partial\theta}(r,0)\right)\right.}^1
		\frac{\md s}{\sqrt{\Xi-s^2}}=\frac{1}{a}\left.
		\arcsin\left(\frac{s}{\sqrt{\Xi}}\right)
		\right|_{1\left/\left(\frac{\partial\theta_\star}{\partial\theta}(r,0)\right)\right.}^1\\
		&=&\frac{1}{a}\arcsin\left(\frac{1}{\sqrt{\Xi}}\right)-\frac{1}{a}
		\arcsin\left(\frac{1}{\sqrt{\Xi}\frac{\partial\theta_\star}{\partial\theta}(r,0)}\right).
	\end{eqnarray*}
	So, from~\eqref{fim} we conclude that
	$$
	\arcsin\left(\frac{1}{\sqrt{\Xi}\frac{\partial\theta_\star}{\partial\theta}(r,0)}\right)=\arcsin\left(\frac{1}{\sqrt{\Xi}}\right)+
	\arctan\left(\frac{r}{a}\right)
	-\,\arctan\left(\frac{r_+}{a}\right).
	$$
	We know that the right-hand side of this equality is positive 
	because we guaranteed that~\eqref{lira} holds. 
	And the right-hand side is obviously smaller than $\frac{\pi}{2}$.
	Therefore, we have
	$$
	\frac{\partial\theta_\star}{\partial\theta}(r,0)=
	\frac{1}{\sqrt{\Xi}}\csc\left(
	\arcsin\left(\frac{1}{\sqrt{\Xi}}\right)+
	\arctan\left(\frac{r}{a}\right)
	-\,\arctan\left(\frac{r_+}{a}\right)
	\right).
	$$
	This is strictly greater than $1$ (and goes to $1$ as $r\nearrow r_+$).
	According to~\eqref{QP} and~\eqref{UQP},
	the derivative of the map $(r,\theta)\mapsto(r_\star,\theta_\star)$ at 
	$(r,0)$ is represented by the matrix
	$$
	\left[
	\begin{array}{cc}
	\frac{r^2+a^2}{\Delta_r}&0\\
	0&\frac{\partial\theta_\star}{\partial\theta}(r,0)
	\end{array}
	\right].
	$$
	Equality~\eqref{braga} follows.
\end{proof}

\paragraph{Regularity of functions at the poles.}
Using the change of coordinates~\eqref{coordinates},
the variable $\theta_\star$ is written in terms of $x$
and $y$ as
$$
\theta_\star=\arcsin\sqrt{x^2+y^2}.
$$
Given a function $f$ that transforms pairs $(r_\star,\theta_\star)$,
we want to study the differentiability of
$$
(r_\star,x,y)
\stackrel{\hat{f}}{
	\longmapsto} f\left(r_\star,\arcsin\sqrt{x^2+y^2}\right).
$$
\begin{rmk}\label{polo1}
	Let $f:(-\infty,+\infty)\times\left[0,\pi\right]\to\bbR$ be $C^1$ such that
	$
	\frac{1}{\sin(2\theta_\star)}\left(
	\partial_{\theta_\star}f\right)$ has a limit when $\theta_\star=0$.
	Then $\hat{f}$ is $C^1$.
\end{rmk}
\begin{proof}
	The derivative of $\hat{f}$ with respect to $x$ is
	\begin{eqnarray}
	\partial_x\hat{f}&=&
	\left(
	\partial_{\theta_\star}f\right)\frac{1}{\sqrt{1-(x^2+y^2)}}\frac{x}
	{\sqrt{x^2+y^2}}.\label{direita}
	\end{eqnarray}
	We observe that, although the map $(x,y)\mapsto\sqrt{x^2+y^2}$
	is not differentiable at the origin, when $x=y=0$
	the quotient $\frac{x}
	{\sqrt{x^2+y^2}}$ in~\eqref{direita} is
	$+1$ or $-1$, according to the calculation of a right
	or a left derivative.
	If we assume 
	that 
	$
	\frac{1}{\sin(2\theta_\star)}\left(
	\partial_{\theta_\star}f\right)$ has a limit at $((r_\star)_0,0)$, then 
	$\partial_{\theta_\star}f((r_\star)_0,0)=0$, so
	there is no indetermination 
	in~\eqref{direita}, and
	$$
	\partial_x\hat{f}(r_\star,0,0)=0.
	$$
	The function $\hat{f}$ has continuous partial derivatives 
	with respect to $r_\star$, $x$ and $y$ in a neighborhood
	of each point $((r_\star)_0,0,0)$, and so it is 
	differentiable. 
\end{proof}
Note that we can write~\eqref{direita} as
\be\label{first}
\partial_x\hat{f}
\ =\
\left(
\partial_{\theta_\star}f\right)\frac{1}{\cos\theta_\star}\frac{x}
{\sin\theta_\star}
\ =\ 2x\frac{1}{\sin(2\theta_\star)}\left(
\partial_{\theta_\star}f\right).
\ee
(This is the derivative of $f$ with respect to $\sin^2\theta_\star=x^2+y^2$
multiplied by the derivative of $\sin^2\theta_\star$ with respect to $x$.)
Expression~\eqref{first} shows that if the quotient 
$\frac{1}{\sin(2\theta_\star)}\left(
\partial_{\theta_\star}f\right)
$ were unbounded, then one might have problems with the differentiability 
when $(x,y)=(0,0)$. For example, if this quotient were to behave like 
$\frac{1}{\sin\theta_\star}$ around $((r_\star)_0,0)$, then $\partial_x\hat{f}$ would behave
like $\cos\phi_\star$, so that $\partial_x\hat{f}((r_\star)_0,0,0)$
would not exist.
\begin{rmk}\label{polo3}
	Let $f:(-\infty,+\infty)\times\left[0,\pi\right]\to\bbR$ be $C^\infty$ 
	in the sense of\/ {\rm Remark~\ref{sense}}.
	Then $\hat{f}$ is $C^\infty$.
\end{rmk}
\begin{proof}
	It is clear that $\partial_{r_\star}^2\hat{f}$,
	$\partial_x\partial_{r_\star}\hat{f}$ and
	$\partial_y\partial_{r_\star}\hat{f}$ are continuous.
	The continuity of $\partial_x^2\hat{f}$ is a consequence of
	$$
	\partial_x^2\hat{f}=
	\partial_x\left(
	2x\frac{1}{\sin(2\theta_\star)}\left(
	\partial_{\theta_\star}f\right)
	\right)=2\frac{1}{\sin(2\theta_\star)}\left(
	\partial_{\theta_\star}f\right)+4x^2\left(
	\frac{1}{\sin(2\theta_\star)}\partial_{\theta_\star}
	\right)^2f.
	$$ 
	The continuity of $\partial_x\partial_y\hat{f}$ and
	$\partial_y^2\hat{f}$ follow in the same way. So $\hat{f}$ is $C^2$.
	One proves by induction that $\hat{f}$ is $C^\infty$.
\end{proof}

\begin{rmk}\label{polo2}
	A differentiable function such that $f(r_\star,\theta_\star)
	=f(r_\star,\pi-\theta_\star)$ satisfies 
	$\partial_{\theta_\star}f\left(r_\star,\frac{\pi}{2}\right)=0$.
	The quotient $\frac{1}{\sin(2\theta_\star)}\partial_{\theta_\star}f$
	has a finite limit at $\left(r_\star,\frac{\pi}{2}\right)$,
	provided that the numerator is analytic. 
\end{rmk}
Indeed,
both the numerator and the denominator vanish at $\left(r_\star,\frac{\pi}{2}\right)$ and the denominator has a first order 
zero there.

\begin{rmk}\label{corona}
	Suppose that $f:(-\infty,+\infty)\times[0,\pi]\times S^1\to\bbR$ is smooth
	and define
	$$
	\hat{f}(r_\star,x,y)=f\left(r_\star,\arcsin\sqrt{x^2+y^2},\arg(x+iy)\right),
	$$
	where $\phi_\star=\arg(x+iy)=\arctan\frac{y}{x}$ for $x>0$, and otherwise 
	$\arg(x+iy)$ is $\arctan\frac{y}{x}$ with an appropriate constant added.
	Then
\begin{eqnarray}
\partial_x\hat{f}&=&\frac{x}{\sqrt{1-(x^2+y^2)}\sqrt{x^2+y^2}}\partial_{\theta_\star}f
-\,\frac{y}{x^2+y^2}\partial_{\phi_\star}f\nonumber\\
&=&\frac{\cos\phi_\star}{\cos\theta_\star}\partial_{\theta_\star}f-\,
\frac{\sin\phi_\star}{\sin\theta_\star}\partial_{\phi_\star}f,\nonumber\\
\partial_y\hat{f}&=&\frac{y}{\sqrt{1-(x^2+y^2)}\sqrt{x^2+y^2}}\partial_{\theta_\star}f
+\frac{x}{x^2+y^2}\partial_{\phi_\star}f\nonumber\\
&=&\frac{\sin\phi_\star}{\cos\theta_\star}\partial_{\theta_\star}f+
\frac{\cos\phi_\star}{\sin\theta_\star}\partial_{\phi_\star}f.\nonumber
\end{eqnarray}

\end{rmk}

\paragraph{Regularity of the change of coordinates
	$(t,r,\theta,\phi)\mapsto(t,r_\star,\theta_\star,\phi)$.}
Define
$$ 
\begin{array}{rcl}
\hat{x}&=&\sin\hat{\theta}_\star\cos{\phi},\\
\hat{y}&=&\sin\hat{\theta}_\star\sin{\phi},
\end{array}\qquad\qquad
\begin{array}{rcl}
\tilde{x}&=&\sin\theta\cos\phi,\\
\tilde{y}&=&\sin\theta\sin\phi.
\end{array}
$$
\begin{enumerate}[(i)]
	\item 
The map $(t,r_\star,\hat{x},\hat{y})\mapsto(t,r,\tilde{x},\tilde{y})$ is smooth.
The derivative $\partial_{r_\star}r$ is given by~\eqref{r_r*}. Moreover,
using~\eqref{first}, we get
\begin{eqnarray*}
	\partial_{\hat{x}} r&=&2{\hat{x}} \frac{1}{\sin(2\hat{\theta}_\star)}
	\partial_{\theta_\star}r\quad (\mbox{see}~\eqref{r_t*}\ \mbox{for}\ 
	\partial_{\theta_\star}r),\\
	\partial_{\hat{y}} r&=&2{\hat{y}} \frac{1}{\sin(2\hat{\theta}_\star)}
	\partial_{\theta_\star}r,\\
	\partial_{r_\star}\sin\theta&=&\cos\theta
	\frac{\partial\theta}{\partial r_\star}\ =\
	2\frac{\sin(2\hat{\theta}_\star)}{\sin(2\theta)}\sin\theta\cos^2\theta
	\left(
	\frac{1}{\sin(2\hat{\theta}_\star)}\partial_{r_\star}\theta\right)
	\quad (\mbox{see}~\eqref{t_r*}\ \mbox{for}\ 
	\partial_{r_\star}\theta).
\end{eqnarray*}

Applying Remark~\ref{corona}, we obtain
\begin{eqnarray*}
	\partial_{\hat{x}}\tilde{x}&=&\frac{\cos\theta}{\cos\theta_\star}\cos^2\phi\,
	\partial_{\theta_\star}\theta
	+
	\frac{\sin\theta}{\sin\theta_\star}\sin^2\phi,\\
	\partial_{\hat{y}}\tilde{x}&=&\frac{\cos\theta}{\cos\theta_\star}\sin\phi\cos\phi\,
	\partial_{\theta_\star}\theta
	-\,
	\frac{\sin\theta}{\sin\theta_\star}\sin\phi\cos\phi,\\
	\partial_{\hat{x}}\tilde{y}&=&\frac{\cos\theta}{\cos\theta_\star}\sin\phi\cos\phi\,
	\partial_{\theta_\star}\theta
	-\,
	\frac{\sin\theta}{\sin\theta_\star}\sin\phi\cos\phi,\\
	\partial_{\hat{y}}\tilde{y}&=&\frac{\cos\theta}{\cos\theta_\star}\sin^2\phi\,
	\partial_{\theta_\star}\theta
	+
	\frac{\sin\theta}{\sin\theta_\star}\cos^2\phi.
\end{eqnarray*}
Notice that when $\theta=0$, we have
\begin{eqnarray*}
	\partial_{\hat{x}}\tilde{x}&=&
	\partial_{\theta_\star}\theta,\\
	\partial_{\hat{y}}\tilde{x}&=&0,\\
	\partial_{\hat{x}}\tilde{y}&=&0,\\
	\partial_{\hat{y}}\tilde{y}&=&
	\partial_{\theta_\star}\theta.
\end{eqnarray*}
The quotient $\frac{\cos\theta}{\cos\theta_\star}$ is smooth because
\be\label{analogous}
\frac{1}{\sin(2\theta_\star)}\partial_{\theta_\star}\left(
\frac{\cos\theta}{\cos\theta_\star}\right)=
-\,\frac{1}{2}\frac{\sin(2\theta)}{\sin(2\theta_\star)}
\frac{\cos\theta_\star}{\cos\theta}\frac{\partial\theta}{\partial\theta_\star}
+\frac{1}{2}\frac{\cos\theta}{\cos\theta_\star}\frac{1}{\cos^2\theta_\star}.
\ee
The quotient $\frac{\sin\theta}{\sin\theta_\star}$ is smooth because
$$
\frac{\sin\theta}{\sin\theta_\star}=\frac{\sin(2\theta)}{\sin(2\theta_\star)}
\frac{\cos\theta_\star}{\cos\theta}.
$$
Thus $r$, $\tilde{x}$ and $\tilde{y}$ are $C^\infty$ functions of 
$r_\star$, $\hat{x}$ and $\hat{y}$.
\item
The map $(t,r,\tilde{x},\tilde{y})\mapsto(t,r_\star,\hat{x},\hat{y})$ is smooth.
Recall that $\partial_rr_\star$ is given in~\eqref{QP}, and
$$
\partial_r\sin\theta_\star=\cos\theta_\star\partial_r\theta_\star
=\cos\theta_\star\frac{1}{GQ}=
2\frac{\sin(2\theta)}{\sin(2\theta_\star)}\sin\theta_\star\cos^2\theta_\star
\left(
\frac{1}{\sin(2\theta)}\frac{1}{GQ}
\right),
$$
as $\partial_r\theta_\star$ is given in~\eqref{UQP}. 
Inequalities~\eqref{bounds_for_G} yield
$$
\lim_{\theta\to 0}\frac{1}{G(r,\theta,\theta_\star(r,\theta))}=0.
$$
One other consequence of~\eqref{QP} and~\eqref{UQP} is
\begin{eqnarray*}
\partial_{\tilde{x}}r_\star&=&2\tilde{x}\frac{1}{\sin(2\theta)}\partial_\theta r_\star\ =\
2a\tilde{x}\frac{\sqrt{\sin^2\theta_\star\Delta_\theta-\sin^2\theta}}{\Delta_\theta\sin(2\theta)},
\\
\partial_{\tilde{y}}r_\star&=&2\tilde{y}\frac{1}{\sin(2\theta)}\partial_\theta r_\star\ =\
2a\tilde{y}\frac{\sqrt{\sin^2\theta_\star\Delta_\theta-\sin^2\theta}}{\Delta_\theta\sin(2\theta)}.
\end{eqnarray*}
The formulas for $\partial_{\tilde{x}}\hat{x}$, $\partial_{\tilde{y}}\hat{x}$,
		$\partial_{\tilde{x}}\hat{y}$ and $\partial_{\tilde{y}}\hat{y}$,
are similar to the ones for 
$\partial_{\hat{x}}\tilde{x}$, $\partial_{\hat{y}}\tilde{x}$,
		$\partial_{\hat{x}}\tilde{y}$ and $\partial_{\hat{y}}\tilde{y}$
				 (interchange $\theta$ and $\theta_\star$).
				 Recall that $\partial_\theta\theta_\star=-\,\frac{1}{GP}$.
It follows that $r_\star$, $\hat{x}$ and $\hat{y}$ are $C^\infty$ functions of $r$, $\tilde{x}$
and $\tilde{y}$.
\end{enumerate}

\paragraph{Regularity of the change of coordinates
	$(t,r_\star,\theta_\star,\phi)\mapsto (t,r_\star,\theta_\star,\phi_\star)$.}
Recall that
$$
\phi_\star=\phi-h(r_\star,\theta_\star).
$$
The fact that both $(t,r_\star,\theta_\star,\phi)\mapsto (t,r_\star,\theta_\star,\phi_\star)$ and
$(t,r_\star,\theta_\star,\phi_\star)\mapsto (t,r_\star,\theta_\star,\phi)$ are
$C^\infty$ is a simple consequence of~\eqref{prh},
which gives $\partial_{r_\star}h$, and
Lemma~\ref{smooth-functions}, which gives $\frac{1}{\sin(2\theta_\star)}\partial_{\theta_\star}h$.\\
We remark that
\begin{eqnarray}
	\partial_x&=&\frac{\cos\phi_\star}{\cos\theta_\star}\partial_{\theta_\star}
	-\,\frac{\sin\phi_\star}{\sin\theta_\star}\partial_{\phi_\star},\\
	\partial_y&=&\frac{\sin\phi_\star}{\cos\theta_\star}\partial_{\theta_\star}
	+\frac{\cos\phi_\star}{\sin\theta_\star}\partial_{\phi_\star}.
\end{eqnarray}
Hence, the differentiability of
$(t,r_\star,x,y)\mapsto(t,r_\star,\hat{x},\hat{y})$ follows from
\begin{eqnarray*}
\partial_{r_\star}\hat{x}&=&-\sin\theta_\star\sin\phi\,\partial_{r_\star}h,\\
	\partial_x\hat{x}&=&
	\cos(h(r_\star,\theta_\star))
	-\,\frac{\sin\theta_\star}{\cos\theta_\star}\cos\phi_\star\sin\phi\,
	\partial_{\theta_\star}h,\\
	\partial_y\hat{x}&=&
	-\sin(h(r_\star,\theta_\star))
	-\,\frac{\sin\theta_\star}{\cos\theta_\star}\sin\phi_\star\sin\phi\,
	\partial_{\theta_\star}h,\\
\partial_{r_\star}\hat{y}&=&\sin\theta_\star\cos\phi\,\partial_{r_\star}h,\\
	\partial_x\hat{y}&=&
	\sin(h(r_\star,\theta_\star))
	+\frac{\sin\theta_\star}{\cos\theta_\star}\cos\phi_\star\cos\phi\,
	\partial_{\theta_\star}h,\\
	\partial_y\hat{y}&=&
	\cos(h(r_\star,\theta_\star))
	+\frac{\sin\theta_\star}{\cos\theta_\star}\sin\phi_\star\cos\phi\,
	\partial_{\theta_\star}h.
\end{eqnarray*}	
Note that these are smooth functions and that at $\theta_\star=0$ they
are independent of $\phi$ and $\phi_\star$.
The differentiability of
$(t,r_\star,\hat{x},\hat{y})\mapsto(t,r_\star,x,y)$ follows in a similar way.

\paragraph{Regularity of the spheres given by the intersection of
	hypersurfaces $u=\mbox{constant}$ and $v=\mbox{constant}$.}

It is obvious that the two atlases 
$\{(t,r_\star,\theta_\star,\phi_\star),(t,r_\star,x,y)\}$
and
${\cal A}_{\mbox{\tiny DN}}=\{(u,v,\theta_\star,\phi_\star),(u,v,x,y)\}$ are compatible. So, combining
the conclusions of the previous paragraphs, the two atlases
${\cal A}_{\mbox{\tiny BL}}=\{(t,r,\theta,\phi),(t,r,\tilde{x},\tilde{y})\}$
and
${\cal A}_{\mbox{\tiny DN}}$ are compatible.
The spheres given by the intersection of
hypersurfaces $u=\mbox{constant}$ and $v=\mbox{constant}$ are smooth
in the atlas ${\cal A}_{\mbox{\tiny DN}}$ by definition.
Therefore, they are smooth in the atlas ${\cal A}_{\mbox{\tiny BL}}$. This proves
\begin{thm}\label{thm1}
The topological two-spheres given by the intersection of
hypersurfaces $u=\mbox{constant}$ and $v=\mbox{constant}$ are $C^\infty$
in the Boyer--Lindquist coordinates.
\end{thm}

\subsection{Coordinates at the horizons}

\subsubsection{The decay of $\Omega^2$ at the horizons}

Recall that the surface gravities of the horizons are given by
\begin{eqnarray}
\kappa_-&=&-\,\frac{\Lambda}{6}\frac{(r_--r_c)(r_--r_+)(r_--r_n)}{r_-^2+a^2},\label{kappa-}\\
\kappa_+&=&-\,\frac{\Lambda}{6}\frac{(r_+-r_c)(r_+-r_-)(r_+-r_n)}{r_+^2+a^2},\nonumber\\
\kappa_c&=&-\,\frac{\Lambda}{6}\frac{(r_c-r_+)(r_c-r_-)(r_c-r_n)}{r_c^2+a^2}\nonumber
\end{eqnarray}
(confirm the formula for $\kappa_-$ with Example~\ref{surface-gravity}).

\begin{lem}
	Given $C_R\in\bbR$, there exist $c,C>0$ such that
	\bea
	ce^{2\kappa_-r_\star}\leq
	\Omega^2\leq Ce^{2\kappa_-r_\star}\ \mbox{for}\ r_\star\geq C_R,
	\label{omega1}\\
	ce^{2\kappa_+r_\star}\leq
	\Omega^2\leq Ce^{2\kappa_+r_\star}\ \mbox{for}\ r_\star\leq C_R.\,
	\label{omega2}
	\eea
\end{lem}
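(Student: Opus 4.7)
The plan is to reduce the decay estimates on $\Omega^2$ to decay estimates on $|\Delta_r|$ and then to extract the exponential rate from the logarithmic singularity of $r_\star$ at the two horizons via a partial fraction decomposition.

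First I would establish that $\Omega^2$ is comparable to $|\Delta_r|$ uniformly on the whole black hole interior. From \eqref{Omega-squared}, using that $\Delta_r\leq 0$ on $[r_-,r_+]$,
\[
\Omega^2=\frac{\rho^2\,|\Delta_r|\,\Delta_\theta}{\Upsilon},
\]
so it suffices to show that the prefactor $\Phi(r,\theta,\theta_\star):=\rho^2\Delta_\theta/\Upsilon$ is continuous and strictly positive on the compact set $[r_-,r_+]\times[0,\pi]\times[0,\pi]$. This is immediate from $\Upsilon\geq(r^2+a^2)^2\Delta_\theta\geq(r_-^2+a^2)^2>0$ (the term $-a^2\sin^2\theta\,\Delta_r$ in $\Upsilon$ being nonnegative), together with $\rho^2\geq r_-^2$ and $\Delta_\theta\geq 1$. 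Hence $\Omega^2\asymp|\Delta_r|$ uniformly, and the task is reduced to proving $|\Delta_r|\asymp e^{2\kappa_\pm r_\star}$ in the two asymptotic regions.

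Next I would analyse $r_\star$ via the factorization $\Delta_r=-\frac{\Lambda}{3}(r-r_n)(r-r_-)(r-r_+)(r-r_c)$. Since $2\kappa_\pm=\Delta_r'(r_\pm)/(r_\pm^2+a^2)$, the residues of $(r'^2+a^2)/\Delta_{r'}$ at $r'=r_\pm$ are $1/(2\kappa_\pm)$, and the partial fraction decomposition reads
\[
\frac{r'^2+a^2}{\Delta_{r'}}=\frac{1}{2\kappa_+(r'-r_+)}+\frac{1}{2\kappa_-(r'-r_-)}+\frac{A_n}{r'-r_n}+\frac{A_c}{r'-r_c},
\]
whose last two summands are smooth on $[r_-,r_+]$. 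Integrating and adding the remaining pieces of $\varrho$ yields
\[
r_\star=\frac{1}{2\kappa_+}\ln|r-r_+|+\frac{1}{2\kappa_-}\ln(r-r_-)+B(r,\theta,\theta_\star),
\]
where $B$ collects: the $\hat Q$-integral (bounded, because $(r'^2+a^2)-\hat Q(r',\lambda)=\frac{a^2\lambda\Delta_{r'}}{2(r'^2+a^2)}+O(\Delta_{r'}^2)$, so the integrand extends continuously to $r'=r_+$ uniformly in $\lambda\in[0,1]$); the bounded $\theta'$-integral $\int_0^\theta \hat P/\Delta_{\theta'}\,\mathrm{d}\theta'$; the bounded term $\frac{a^2}{2}f(\lambda)$; and the continuous integral of the two smooth summands above.

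Finally, I would specialize to the two half-lines. For $r_\star\leq C_R$, Remark~\ref{swing} furnishes some $r_0'>r_-$ with $r\geq r_0'$, so $\ln(r-r_-)$ is bounded and the displayed identity gives $\ln|r-r_+|=2\kappa_+r_\star+O(1)$, whence $|r-r_+|\asymp e^{2\kappa_+r_\star}$; a compactness argument on $[r_0',r_+]$ shows that $|\Delta_r|/|r-r_+|$ is a continuous positive function (with value $2\kappa_+(r_+^2+a^2)$ at $r_+$), so $|\Delta_r|\asymp|r-r_+|$, and combining with the first paragraph yields \eqref{omega2}. The proof of \eqref{omega1} is entirely symmetric, using that $|r-r_+|$ is bounded away from zero for $r_\star\geq C_R$ and that $|\Delta_r|\asymp(r-r_-)$ on a compact interval $[r_-,r_1']$. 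The main technical point, and the only place where care is needed, is uniformity of $B$ in the angular variables $(\theta,\theta_\star)$; this reduces to continuity (and hence uniform boundedness) of the $\hat Q$-integrand up to $r'=r_+$ for every $\lambda\in[0,1]$, which is exactly what the expansion of $\hat Q$ above provides.
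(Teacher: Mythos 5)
Your proposal is correct, and it takes a genuinely different route from the paper's proof. The paper argues dynamically: from \eqref{r_r*} it computes $\partial_r\bigl(\tfrac{\partial r}{\partial r_\star}\bigr)\big|_{r=r_-}=2\kappa_-$, writes $\tfrac{\partial r}{\partial r_\star}=2\kappa_-(r-r_-)+O((r-r_-)^2)$, and integrates the resulting differential inequality in $r_\star$ to get $\log(r-r_-)=2\kappa_-r_\star+O(1)$; only afterwards does it pass to $|\Delta_r|$ and then to $\Omega^2$. You instead exploit the explicit closed form of $\varrho$: the partial fraction decomposition of $(r'^2+a^2)/\Delta_{r'}$ isolates the two logarithmic singularities with the correct residues $1/(2\kappa_\pm)$, and everything else in $r_\star$ is shown to be uniformly bounded (your observation that the $\hat Q$-integrand equals $a^2\lambda/\bigl((r'^2+a^2)+\hat Q\bigr)$, hence is bounded by $a^2/(2(r_-^2+a^2))$ uniformly in $\lambda$, settles the only delicate term). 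Both arguments share the reduction $\Omega^2\asymp|\Delta_r|$ from \eqref{Omega-squared} and both invoke Remark~\ref{swing} to confine $r$ to a neighbourhood of the relevant horizon on each half-line $\{r_\star\geq C_R\}$, $\{r_\star\leq C_R\}$. What your route buys is a single global identity $r_\star=\tfrac{1}{2\kappa_+}\ln|r-r_+|+\tfrac{1}{2\kappa_-}\ln(r-r_-)+O(1)$ valid on the whole interior, which handles \eqref{omega1} and \eqref{omega2} simultaneously and makes the constants traceable; the cost is that it leans on the explicit integral formula for $r_\star$. The paper's ODE argument is more local and would survive in settings where $r_\star$ is only known through the equation $\partial_{r_\star}r=\Delta_r\Delta_\theta Q/\Upsilon$ rather than in closed form.
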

\begin{proof}
The formula
$$
\frac{\partial r}{\partial r_{\star}}=\frac{\Delta _{\theta } \Delta _r \sqrt{(r^2+a^2)^2-a^2 \sin ^2\theta_{\star} \Delta _r}}{\Upsilon}
$$
shows that $\frac{\partial r}{\partial r_{\star}}(r=r_-,\theta)=0$.
To write the linear approximation for this function, we start by calculating
\begin{eqnarray*}
\left.\partial_r\left(\frac{\partial r}{\partial r_{\star}}\right)\right|_{r=r_-}
&=&\left.\frac{\Delta _{\theta }(\partial_r \Delta _r) \sqrt{(r^2+a^2)^2-a^2 \sin ^2\theta_{\star} \Delta _r}}{\Upsilon}\right|_{r=r_-}
\\
&=&\left.\frac{\Delta _{\theta }(\partial_r \Delta _r) (r^2+a^2)}{(r^2+a^2)^2\Delta_{\theta }}\right|_{r=r_-}
=\ \frac{\left.(\partial_r\Delta_r)\right|_{r=r_-}}{r_-^2+a^2}
\ =\ 2\kappa_-.
\end{eqnarray*}
Hence, we get
$$
\frac{\partial r}{\partial r_{\star}}=2\kappa_-(r-r_-)+O((r-r_-)^2).
$$
According to Remark~\ref{swing},
there exists $C > 0$ and $(r_\star)_0$ such that for $r_\star\geq (r_\star)_0$
we have
$$
( 1 - C (r - r_-)) (1 + O (r - r_-)) \leq 1 \leq (1 + 
C (r - r_-)) (1 + O (r - r_-)).
$$
So, for $r\geq(r_\star)_0$ we obtain
$$
\frac{1}{r-r_-}-C\leq \frac{1}{\left(r-r_-\right) \left(1+O \left(r-r_-\right)\right)}\leq \frac{1}{r-r_-}+C.
$$
Remembering that $\kappa_-<0$, it follows that
$$
\left(\frac{1}{r-r_-}+C\right)\frac{\partial r}{\partial r_{\star}}\leq \frac{1}{\left(r-r_-\right) \left(1+O \left(r-r_-\right)\right)}\frac{\partial r}{\partial r_{\star}}=2 \kappa _-\leq \left(\frac{1}{r-r_-}-C\right)\frac{\partial r}{\partial r_{\star}}.
$$
Integrating from $(r_{\star})_0$ to $r_\star$ yields
\begin{eqnarray*}
&&\log \left(r(r_{\star},\theta_\star)-r_-\right)
-\log \left(r((r_{\star})_0,\theta_\star)-r_-\right)
+C \left(r(r_{\star},\theta_\star)-r((r_{\star})_0,\theta_\star)\right)\\
&&\qquad\leq 2 \kappa _- (r_{\star}-(r_{\star})_0)\leq\log \left(r(r_{\star},\theta_\star)-r_-\right)
-\log \left(r((r_{\star})_0,\theta_\star)-r_-\right)
-C \left(r(r_{\star},\theta_\star)-r((r_{\star})_0,\theta_\star)\right).
\end{eqnarray*}
These inequalities can be rearranged to
\begin{eqnarray*}
	 &&2 \kappa _- (r_{\star}-(r_{\star})_0)
	 +\log \left(r((r_{\star})_0,\theta_\star)-r_-\right)
	 +C \left(r(r_{\star},\theta_\star)-r((r_{\star})_0,\theta_\star)\right)\\
	 &&\qquad\leq\log \left(r(r_{\star},\theta_\star)-r_-\right)\leq
	 2 \kappa _- (r_{\star}-(r_{\star})_0)
	 +\log \left(r((r_{\star})_0,\theta_\star)-r_-\right)
	 -C \left(r(r_{\star},\theta_\star)-r((r_{\star})_0,\theta_\star)\right).
\end{eqnarray*}
This shows that
there exists $D > 0$ and $(r_\star)_0$ such that 
$$
2 \kappa _- r_{\star}-D
\leq\log \left(r(r_{\star},\theta_\star)-r_-\right)\leq 2 \kappa _- r_{\star}+D
$$
for $r_\star\geq (r_\star)_0$.
Thus, there exist $c,C>0$, such that $r_\star\geq (r_\star)_0$ implies
$$
ce^{2\kappa_-r_\star}\leq r-r_-\leq Ce^{2\kappa_-r_\star}.
$$
Let $C_R$ be a real number. Decreasing $c$ and increasing $C$ if necessary,
one sees that
$$
ce^{2\kappa_-r_\star}\leq r-r_-\leq Ce^{2\kappa_-r_\star}
$$
for $C_R\leq r_\star\leq(r_\star)_0$. Combining the previous two inequalities,
they hold for $r_\star\geq C_R$. Moreover, in this region, 
$$
c(r-r_-)\leq-\Delta_r\leq C(r-r_-),
$$
for other appropriate constants $c,C>0$. Hence, given $C_R\in\bbR$,
there exist $c,C>0$ such that 
\be\label{r-r}
ce^{2\kappa_-r_\star}\leq-\Delta_r\leq Ce^{2\kappa_-r_\star}
\ee
for $r_\star\geq C_R$. As $\Omega^2$ is comparable to $|\Delta_r|$, we conclude that $\Omega^2$ is comparable to $e^{2\kappa_-r_\star}$
in the region $r_\star\geq C_R$.
This proves~\eqref{omega1}.
The proof of~\eqref{omega2} is analogous.
\end{proof}

\begin{rmk}
	The function $\Omega^2$ given in~\eqref{Omega-squared} is obviously smooth on the spheres where 
	$u$ and $v$ are simultaneously constant.
\end{rmk}

\subsubsection{Coordinates at the Cauchy horizon}\label{sub-Cauchy}

Let us recall how one may define coordinates to cover the Cauchy horizon.
We consider a new smooth coordinate $\vch(v)$, with positive derivative, equal to $v$ for $v\leq -1$, satisfying $\vch\to 0$ as $v\to+\infty$, and satisfying
\be\label{v-v}
\mmd v=e^{-2\kappa_-v}\md\vch
\ee
for $v\geq 0$. Moreover, we define
$$
\phich=\phi_{\star}-\left.b^{\phi_{\star}}\right|_{r=r_-}v.
$$
\begin{rmk}
$\vch$ is also a smooth function of $v$, and so the change of coordinates $(v,\phi_{\star})\leftrightarrow
(\vch,\phich)$ is smooth.
\end{rmk}
From~\eqref{b-phi}, we see that
$$
\left.b^{\phi_{\star}}\right|_{r=r_-}=\left.2\frac{\Xi a((r^2+a^2)\Delta_{\theta}-\Delta_r)}{((r^2+a^2)^2\Delta_{\theta}-a^2\sin^2\theta\Delta_r)}\right|_{r=r_-}\ =\ 
\frac{2a\Xi}{r_-^2+a^2}.
$$
For $v\geq 0$, the differentials of $\phi_\star$ and $\phich$ are related by
$$
\mmd\phi_{\star}=\mmd\phich+\left.b^{\phi_{\star}}\right|_{r=r_-}\mmd v
=\mmd\phich+\left.b^{\phi_{\star}}\right|_{r=r_-}
e^{-2\kappa_-v}\md\vch.
$$
For $v\geq 0$, the expression of the metric~\eqref{metric-final} in the new coordinates is 
\begin{eqnarray*}
g&=&-2\Omegasqch \left(\mmd u\otimes \mmd \vch+\md \vch\otimes \mmd u\right)
\nonumber\\
&&+\gamma_{\theta_{\star}\theta_{\star}}
\md\theta_{\star}\otimes \mmd\theta_{\star}
+\gamma_{\theta_{\star}\phi_{\star}}\md\theta_{\star}\otimes(\mmd\phich-\bch\md \vch)
+\gamma_{\theta_{\star}\phi_{\star}}(\mmd\phich-\bch\md \vch)\otimes
\mmd\theta_{\star}\nonumber\\
&&+\gamma_{\phi_{\star}\phi_{\star}}
(\mmd\phich-\bch\md \vch)\otimes(\mmd\phich-\bch\md \vch),
\end{eqnarray*}
with
\begin{eqnarray}
	\Omegasqch&=&\Omega^2e^{-2\kappa_-v},\label{qomega}\\
\bch&=&\left(b^{\phi_{\star}}-\left.b^{\phi_{\star}}\right|_{r=r_-}\right)
e^{-2\kappa_-v}.\nonumber
\end{eqnarray}
Henceforth we will assume that we are working in the region $v\geq 0$,
our formulas will always refer to this region.
To estimate $\bch$ we calculate
$$
b^{\phi_{\star}}-\left.b^{\phi_{\star}}\right|_{r=r_-}=
-2a\Xi \frac{(r_-^2+a^2\cos^2\theta)\Delta_r+(r^2+a^2)(r+r_-)(r-r_-)\Delta_{\theta}}
{(r_-^2+a^2)((r^2+a^2)^2\Delta_{\theta}-a^2\sin^2\theta\Delta_r)}.
$$
Using~\eqref{u_v} and~\eqref{r-r}, we estimate
$$
\left|b^{\phi_{\star}}-\left.b^{\phi_{\star}}\right|_{r=r_-}\right|
\lesssim e^{2\kappa_-(u+v)},
$$
for $u+v\geq C_R$.
Thus, inequalities~\eqref{omega1} implies the following bounds
for $\Omegasqch$ and $\bch$, when $u+v\geq C_R$:
\begin{eqnarray*}
	e^{2\kappa_-u}\lesssim&\Omegasqch&\lesssim e^{2\kappa_-u},\\
&|\bch|&\lesssim e^{2\kappa_-u}.
\end{eqnarray*}

For a general function $f$, we have
$$
f(u,v,\theta_{\star},\phi_\star)=f\left(u,v,\theta_{\star},\phich+\bphir v
\right)=
\tilde{f}
(u,\tilde{v},\theta_{\star},\phich),$$
where $\tilde{f}$ is the function $f$ written in the coordinates
$(u,\tilde{v},\theta_{\star},\phich)$ and $\tilde v=v$. So
\be\label{muda}
\partial_{\tilde{v}}=\partial_v+\bphir\partial_{\phi_\star}.
\ee
We define
\be\label{sigma}
\sigma=e^{-2\kappa_-v}.
\ee
This, \eqref{v-v} and $\partial_{\phi_{\star}}=\partial_{\phich}$ imply that
\begin{eqnarray*}
\partial_{\vch}+\bch\pp&=&e^{-2\kappa_-v}\left(\partial_{\tilde v}+\left(\bphi-\bphir\right)\pp\right)
\\
&=&e^{-2\kappa_-v}\left(
\partial_v+\bphi\partial_{\phi_\star}
\right)\\
&=&\sigma\left(
\partial_v+\bphi\partial_{\phi_\star}
\right).
\end{eqnarray*}

We can write the vector field $\partial_t$ using the coordinates at the Cauchy horizon as
\begin{eqnarray*}
	\partial_t&=&\frac{1}{2}\left(
	\partial_{\tilde{v}}-\bphir\pp
	\right)-\,\frac{1}{2}\partial_u\\
	&=&\frac{e^{2\kappa_-v}}{2}\left(
	\partial_{\vch}+\bch\pp
	\right)-\,\frac{1}{2}\bphi\pp-\,\frac{1}{2}\partial_u.
\end{eqnarray*}
The vector field $\partial_t$ is not null on the Cauchy horizon. A Killing 
vector field which is null on the Cauchy horizon is 
\begin{eqnarray}
Z&=&-\partial_t-\,\frac{1}{2}\bphi|_{r=r_-}\pp\nonumber\\
&=&\frac{1}{2}\partial_u+\frac{1}{2}\left(\bphi-\bphi|_{r=r_-}\right)\pp
-\,\frac{e^{2\kappa_-v}}{2}(\partial_{\vch}+\bch\pp)\nonumber\\
&=&\frac{1}{2}\partial_u+\frac{1}{2}\left(\bphi-\bphi|_{r=r_-}\right)\pp
-\,\frac{\Omega^2}{2}\pvpdois.
\label{Z}
\end{eqnarray}

\subsubsection{Coordinates at the event horizon}\label{sub-event}

We consider a new smooth coordinate $\uh(u)$, with positive derivative, equal to $u$ for $u\geq 1$, satisfying $\uh\to 0$ as $u\to-\infty$, and satisfying
$$
\mmd u=e^{-2\kappa_+u}\md\uh
$$
for $u\leq 0$. Moreover, we define $\vh=v$ and
$$
\phih=\phi_\star-\bphi|_{r=r_+}v.
$$
\begin{rmk}
$\uh$ is also a smooth function of $u$, and the change of coordinates $(v,\phi_{\star})\leftrightarrow
(\vh,\phih)$ is smooth.
\end{rmk}
Note that
$$
\bphi|_{r=r_+}=\frac{2a\Xi}{r_+^2+a^2}.
$$
For $u\leq 0$, we may write the metric as
\begin{eqnarray*}
	g&=&-2\Omegasqh \left(\mmd \uh\otimes \mmd \vh+\md \vh\otimes \mmd \uh\right)
	\nonumber\\
	&&+\gamma_{\theta_{\star}\theta_{\star}}
	\md\theta_{\star}\otimes \mmd\theta_{\star}
	+\gamma_{\theta_{\star}\phi_{\star}}\md\theta_{\star}\otimes(\mmd\phih-\bh\md \vh)
	+\gamma_{\theta_{\star}\phi_{\star}}(\mmd\phih-\bh\md \vh)\otimes
	\mmd\theta_{\star}\nonumber\\
	&&+\gamma_{\phi_{\star}\phi_{\star}}
	(\mmd\phih-\bh\md \vh)\otimes(\mmd\phih-\bh\md \vh),
\end{eqnarray*}
with  
\begin{eqnarray*}
\Omegasqh&=&\Omega^2e^{-2\kappa_+u},\\
\bh&=&\bphi-\bphi|_{r=r_+}.
\end{eqnarray*}
Henceforth we will assume that we are working in the region $u\leq 0$,
our formulas will always refer to this region.
For a general function $f$, we have
$$
f(u,v,\theta_{\star},\phi_\star)=f\left(u(\uh),v,\theta_{\star},\phih+\bphirh v
\right)=
\tilde{f}
(\uh,\vh,\theta_{\star},\phih),$$
where $\tilde{f}$ is the function $f$ written in the coordinates
$(\uh,\vh,\theta_{\star},\phih)$. So, defining 
\be\label{varsigma}
\varsigma=e^{-2\kappa_+u},
\ee
we get
\begin{eqnarray*}
\puh&=&e^{-2\kappa_+u}\partial_u\ =\ \varsigma\partial_u,\\
\pvh&=&\partial_v+\bphirh\partial_{\phi_\star}.
\end{eqnarray*}
This, \eqref{v-v} and $\partial_{\phi_{\star}}=\partial_{\phih}$ imply that
\begin{eqnarray*}
\partial_v+\bphi\partial_{\phi_\star}&=&
\partial_{\vh}+\bh\pp.
\end{eqnarray*}
A Killing vector field which is null on the event horizon is
\be\label{W}
W=\partial_t+\left.\frac{1}{2}\bphi\right|_{r=r_+}\pp=
\frac{1}{2}\pvh-\,\frac{e^{2\kappa_+u}}{2}\puh.
\ee
The value 
\begin{equation}\label{angular}
\left.\frac{1}{2}\bphi\right|_{r=r_+}
=\frac{a\Xi}{r_+^2+a^2}
\end{equation}
is the angular velocity $\Omega_H$ on the event horizon.

\section{The energy of the solutions of the wave equation}\label{aplicacao}

We will use the vector field method to study the energy of solutions
of the wave equation which have compact support on ${\cal H}^+$.
As is well known, the method, used by Morawetz~\cite{Morawetz}, John~\cite{John}, Klainerman~\cite{K,KL},
Dafermos~\cite{DHR,m-luk,DR,m-lec,DS} and Rodnianski~\cite{m-lec,KL}, among many others,
consists in applying the Divergence 
Theorem to some currents obtained by contracting the energy-momentum
tensor $T_{\mu\nu}$ with appropriate vector field multipliers constructed 
specifically according to each region of spacetime. 

We refer to the region close to the Cauchy horizon as the blue-shift 
region, and the region close to the event horizon as the
red-shift region. We call the intermediate region the
no-shift region. A very general construction of red-shift vector fields
on general spacetimes which contain Killing horizons with positive surface
gravity is carried out in the lecture notes~\cite{m-lec}.
Here we perform the computations explicitly in double null coordinates.

The blue-shift vector field~\eqref{Nb} is
constructed using the vector field $Y$ in Lemma~\ref{Y}, and the red-shift vector field~\eqref{nr}
is constructed using the vector field $V$ in Lemma~\ref{VV}.
We go on to calculate the covariant derivative of $Y$, $\nabla^\mu Y^\nu$, and the
scalar current associated to $Y$, $T_{\mu\nu}\nabla^\mu Y^\nu$. We obtain the usual inequalities
for the currents associated to the blue-shift vector field,
and for the currents associated to the red-shift vector field.
We finish with Theorem~\ref{thm}, which is Sbierski's result, for the Reissner--Nordstr\"{o}m and Kerr spacetimes, applied to Kerr--Newman--de Sitter spacetimes.

\subsection{The blue-shift and red-shift vector fields}
\subsubsection{Construction}
Here the blue-shift vector field is defined to be
\be\label{Nb}
N_b=Y+Z,
\ee
where $Y$ is given in
\begin{lem}\label{Y}
	Let $\iota\in\bbR^+$ be given.
	The initial value problem 
	\begin{eqnarray}\label{diff-equation}
	\nabla_YY&=&-\iota(Y+Z),\\
	\label{ic}
	Y|_{{\cal CH}^+}&=&\pvpdois
	\end{eqnarray}
	($Z$ as in~\eqref{Z})
	has a unique time invariant solution, $Y$, defined in a
	neighborhood of the Cauchy horizon, i.e.~defined for 
	$r_\star$ sufficiently large.
\end{lem}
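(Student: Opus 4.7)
My plan is to reinterpret~\eqref{diff-equation} as a forced geodesic ODE on the tangent bundle, solve it by Picard--Lindel\"of flowing transversally off the Cauchy horizon, and then invoke Killing equivariance to obtain time-invariance.

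Combined with the tautological relation $\dot{x}^\mu = Y^\mu$ along integral curves,~\eqref{diff-equation} becomes the second-order system
\[
\ddot{x}^\mu + \Gamma^\mu_{\alpha\beta}(x)\,\dot{x}^\alpha\dot{x}^\beta = -\iota\bigl(\dot{x}^\mu + Z^\mu(x)\bigr).
\]
In the regular coordinates $(u,\vch,\theta_{\star},\phich)$ of Section~\ref{sub-Cauchy}, the metric, its Christoffel symbols, and the coefficients of $Z$ extend smoothly across ${\cal CH}^+=\{\vch=0\}$, so the right-hand side is smooth there. I parametrize ${\cal CH}^+$ by $(u_0,\theta_{\star,0},\phich_{0})$; at each such point $p_0$ I impose the initial data $x(0)=p_0$ and $\dot{x}(0)=\pvp|_{p_0}$, which is the horizon value of $\pvpdois$ demanded by~\eqref{ic}.

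Picard--Lindel\"of then supplies a unique $C^\infty$ solution $s\mapsto x(s)$ for small $|s|$, smoothly parametrized by $p_0$. Because $\dot{x}(0)$ has nonvanishing $\partial_{\vch}$-component $1/\Omegasqch|_{\vch=0}$ (bounded away from zero by positivity of $\Omegasqch$ on the horizon), it is transverse to ${\cal CH}^+$. The inverse function theorem then shows that $\Phi(s,u_0,\theta_{\star,0},\phich_{0}):=x(s)$ is a local diffeomorphism onto a neighborhood of ${\cal CH}^+$. Setting $Y(\Phi(s,\cdot)):=\dot{x}(s)$ yields a smooth vector field in this neighborhood that solves~\eqref{diff-equation} with~\eqref{ic} by construction.

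Time-invariance then follows by equivariance: $\partial_t$ is Killing and tangent to ${\cal CH}^+$, both $Z$ and $\pvpdois$ are $\partial_t$-invariant (their coefficients in $(t,r_\star,\theta_\star,\phi_\star)$ are $t$-independent), and the modified geodesic ODE is $\partial_t$-equivariant. Hence $\Phi$ intertwines the $\partial_t$-action on $u_0$ with the $\partial_t$-action on spacetime, and the resulting $Y$ commutes with $\partial_t$; uniqueness within the time-invariant class then follows from ODE uniqueness combined with the local diffeomorphism structure. The main technical obstacle is the simultaneous transversality and coefficient smoothness at ${\cal CH}^+$: both are supplied by Section~\ref{sub-Cauchy}, where smoothness of $\Omegasqch$ and $\bch$ yields smooth Christoffel symbols across the horizon, and the positivity of $\Omegasqch|_{\vch=0}$ gives the required transversality.
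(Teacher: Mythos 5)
Your overall strategy --- solving \eqref{diff-equation} by integrating the forced geodesic ODE along the integral curves of $Y$ launched transversally off ${\cal CH}^+$ --- is in essence the same method of characteristics the paper uses; the paper merely performs the symmetry reduction first (imposing the ansatz $Y=\f\pu+\g\pvp+\h\pt+\jj\pp$ with $\f,\g,\h,\jj$ functions of $(r,\theta)$ only, so that time invariance and axisymmetry are built in and the characteristic system lives on $[r_-,r_+]\times[0,\pi]$ with noncharacteristic data on $\{r=r_-\}$), whereas you work in four dimensions and recover the invariance a posteriori by equivariance. That part of your outline is sound, as is the observation that the initial velocity $\pvp$ is transverse to $\{\vch=0\}$ because $\Omegasqch$ is bounded away from zero there.

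The genuine gap is the blanket claim that ``the metric, its Christoffel symbols, and the coefficients of $Z$ extend smoothly across ${\cal CH}^+$'' in the coordinates $(u,\vch,\theta_\star,\phich)$. First, these coordinates are singular on the rotation axis, and by Corollary~\ref{mountain} the Christoffel symbols $\Gamma_{v\theta_\star}^{\phi_\star}$ and $\Gamma_{\theta_\star\phi_\star}^{\phi_\star}$ blow up like $1/\sin\theta_\star$; your Picard--Lindel\"of argument therefore fails, as stated, for characteristics emanating from or passing near the poles of ${\cal CH}^+$. The paper's proof is organized precisely to neutralize this: it first proves $\h(r,0)=\h\left(r,\frac{\pi}{2}\right)=0$, so the characteristics never leave $[r_-,r_+]\times\left[0,\frac{\pi}{2}\right]$ and the singular symbols only ever occur multiplied by $\h$, which vanishes at the axis. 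You would need either this cancellation or a separate argument in the regular coordinates~\eqref{coordinates} near the poles. Second, smoothness of the coefficients up to $\vch=0$ is not free: for instance $\Gamma_{\vch\vch}^{\phi_\star}$ contains the terms $2\kappa_-\sigma\bch-\sigma^2\partial_{r_\star}\bphi$ weighted by $\sigma=e^{-2\kappa_-v}\to\infty$, whose boundedness rests on a cancellation, and the paper never establishes (nor needs) $C^\infty$ extendibility of the metric across ${\cal CH}^+$ in these coordinates --- its reduced characteristic system instead has coefficients depending on $(r,\theta)$, manifestly regular on $[r_-,r_+]$ with the only delicacy at the axis. Finally, a smaller point: Picard--Lindel\"of off $\{\vch=0\}$ yields a one-sided collar $\{-\epsilon<\vch\le 0\}$ over compact ranges of $u$, not a set of the form $\{r_\star\ge L\}$; to obtain the domain claimed in the lemma you must actually propagate $Y$ along the $\partial_t$-orbits using the invariance you establish, a step you only gesture at.
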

\begin{proof}
\begin{enumerate}[{\bf (a)}]
\item {\bf ($Y$ time invariant.)} The vector field $\pvp$ commutes with $\partial_t$. 
In fact, we have
$$
\left[\partial_t,\frac{1}{\Omegasqch}\pv\right]=
\left[\partial_t,\frac{1}{\Omega^2}\partial_{\tilde{v}}\right]
=\frac{1}{\Omega^2}\left[\frac{1}{2}\partial_v-\,\frac{1}{2}\partial_u,
\left(\partial_v+\bphi|_{r=r_-}\partial_{\phi_\star}\right)\right]
=0
$$
and
$$
\frac{\bch}{\Omegasqch}=\frac{\bphi-\bphi|_{r=r_-}}{\Omega^2}
\ \Rightarrow\ 
\left[
\partial_t,\frac{\bch}{\Omegasqch}\pp
\right]=0.
$$
So, any $Y$ of the form
\begin{eqnarray*}
	Y&=&\f\pu+\g\pvp+\h\pt+\tldej\pp,
\end{eqnarray*}
with
$$
\f=\f(r,\theta),\ \g=\g(r,\theta),\ \h=\h(r,\theta),\ \tldej=\tldej(r,\theta),
$$
commutes with $\partial_t$.
\item {\bf (The differential equation.)}
Expanding the left-hand side of~\eqref{diff-equation}, we get
\begin{eqnarray*}
\nabla_YY&=&\f\pr\f\pu+\f\pr\g\pvp+\f\pr \h\pt+\f\pr\tldej\pp\\
&&+\f^2\nabla_{\pu}\pu+\f\g\nabla_{\pu}\pvp+\f\h\nabla_{\pu}\pt+\f\tldej\nabla_{\pu}\pp\\
&&+\g\frac{\sigma}{\Omegasqch}\pr\f\pu+\g\frac{\sigma}{\Omegasqch}\pr\g\pvp+\g\frac{\sigma}{\Omegasqch}\pr \h\pt+\g\frac{\sigma}{\Omegasqch}\pr\tldej\pp\\
&&+\f\g\nabla_{\pvp}\pu+\g\h\nabla_{\pvp}\pt+\g\tldej\nabla_{\pvp}\pp\\
&&+\h\pt\f\pu+\h\pt\g\pvp+\h\pt\h\pt+\h\pt\tldej\pp\\
&&+\f\h\nabla_{\pt}\pu+\g\h\nabla_{\pt}\pvp+\h^2\nabla_{\pt}\pt+\h\tldej\nabla_{\pt}\pp\\
&&+\f\tldej\nabla_{\pp}\pu+\g\tldej\nabla_{\pp}\pvp+\h\tldej\nabla_{\pp}\pt+\tldej^2\nabla_{\pp}\pp.
\end{eqnarray*}
We used~\eqref{z} to eliminate the term in $\g^2$.
This initial value problem~\eqref{diff-equation}, \eqref{ic}
is equivalent to a system of four equations, for the $\pu$, $\pvp$, $\pt$ and $\pp$ components of each side, for the four unknowns $\f$, $\g$, $\h$ and $\tldej$,
with 
\be\label{inicio}
\f(r_-,\theta)=0,\quad \g(r_-,\theta)=1,\quad
\h(r_-,\theta)=0,\quad \tldej(r_-,\theta)=0.
\ee
 The system reads
\begin{eqnarray*}
	\f\pr\f+\g\frac{\sigma}{\Omegasqch}\pr\f+\h\pt\f&=&\ldots,\\
	\f\pr\g+\g\frac{\sigma}{\Omegasqch}\pr\g+\h\pt\g&=&\ldots,\\
	\f\pr\h+\g\frac{\sigma}{\Omegasqch}\pr\h+\h\pt\h&=&\ldots,\\
	\f\pr\tldej+\g\frac{\sigma}{\Omegasqch}\pr\tldej+\h\pt\tldej&=&\ldots,
\end{eqnarray*}
where the right-hand sides involve the Christoffel symbols of the metric,
$\pr\bch$, $\pt\bch$, and $\f$, $\g$, $\h$ and~$\tldej$, but do not involve any derivatives of these last four functions.
The system may be written as
\begin{eqnarray}
{\cal V}\cdot\f&=&\ldots,\label{f}\\
{\cal V}\cdot\g&=&\ldots,\label{g}\\
{\cal V}\cdot\h&=&\ldots,\label{h}\\
{\cal V}\cdot\,\tldej&=&\ldots,\label{j}
\end{eqnarray}
where
$$
{\cal V}=
\left(\left(\frac{1}{\Delta_r}\frac{\partial r}{\partial r_\star}\right)
\left(\f\Delta_r+\g\left(-\,\frac{\Upsilon}{\rho^2\Delta_\theta}\right)\right)
+\h\frac{\partial r}{\partial\theta_\star}\right)
\frac{\partial}{\partial r}+
\left(\left(\frac{1}{\Delta_r}\frac{\partial\theta}{\partial r_\star}\right)
\left(\f\Delta_r+\g\left(-\,\frac{\Upsilon}{\rho^2\Delta_\theta}\right)\right)
+\h\frac{\partial \theta}{\partial\theta_\star}\right)
\frac{\partial}{\partial\theta}.
$$
We used
$$
\frac{\sigma\Delta_r}{\Omegasqch}=-\,\frac{\Upsilon}{\rho^2\Delta_\theta},
$$
which we obtain from~\eqref{Omega-squared}, \eqref{qomega} and~\eqref{sigma}.
\item {\bf (An auxiliary calculation.)} In the next step we will use
the following identity. We mention that it implies
that $\h$ is not identically equal to zero. 
Using~\eqref{long1a} and~\eqref{long1b}, we have
\begin{eqnarray}
&&\md\theta_\star\left(\nabla_{\pu}\pvp+\nabla_{\pvp}\pu\right)
=2\gamma^{\theta_\star\theta_\star}\frac{\partial_{\theta_\star}(\Omegasqch)}{\Omegasqch}.
\label{long1}
\end{eqnarray}
\item {\bf (The characteristics do not cross the boundary.)}
We now check that
\be\label{zero}
\h(r,0)=0\qquad\mbox{and}\qquad \h\left(r,\frac{\pi}{2}\right)=0.
\ee
This implies 
\be\label{V}
{\cal V}^\theta(r,0)=0\qquad\mbox{and}\qquad {\cal V}^\theta\left(r,\frac{\pi}{2}\right)=0,
\ee
and guarantees that the characteristics
of our differential equations do not leave the region,
$[r_-,r_+]\times\left[0,\frac{\pi}{2}\right]$,
where we want to solve our system. It also guarantees that $Y$ is well
defined when $\theta=0$, notwithstanding $\partial_{\theta_\star}$ not being 
well defined when $\theta=0$. The vanishing of $\h$ at 
$\theta=\frac{\pi}{2}$ can also be seen as a consequence of the
symmetry of our problem under the reflection $\theta\mapsto\pi-\theta$,
which implies that $Y$ should not have any component in the 
$\partial_{\theta_\star}$ direction at the equators of the spheres
where $u$ and $v$ are both constant.
The right hand side of~\eqref{h} consists of a sum of terms which 
we divide into two parts. The first part consists of sum of the eight summands
that have $\h$ as a factor.
The term $$\md\theta_\star(-\iota(Y+Z))=-\iota \md\theta_\star(Y)
=-\iota \tilde{h}$$
is proportional to $\tilde{h}$.
The second part consists of the sum of the
remaining eight summands, which are $\md \theta_\star$ applied to
\begin{eqnarray*}
&&-\f^2\nabla_{\pu}\pu-\f\g\nabla_{\pu}\pvp-2\f\tldej\nabla_{\pu}\pp\\
&&-\f\g\nabla_{\pvp}\pu-2\g\tldej\nabla_{\pvp}\pp-\tldej^2\nabla_{\pp}\pp.
\end{eqnarray*}
Taking into account~\eqref{long1}, the second part is
\begin{eqnarray*}
&&-\f^2\Gamma_{uu}^{\theta_\star}-4\f\g\frac{\gamma^{\theta_\star\theta_\star}}
{\Omegasqch}\partial_{\theta_\star}(\Omegasqch)-2\f\tldej\Gamma_{u\phich}^{\theta_\star}\\
&&-2\g\tldej\frac{1}{\Omegasqch}
\left(\Gamma_{\vch\phich}^{\theta_\star}+\bch\Gamma_{\phich\phich}^{\theta_\star}\right)-\tldej^2\Gamma_{\phich\phich}^{\theta_\star}.
\end{eqnarray*}
At $\theta_\star=0$ and at $\theta_\star=\frac{\pi}{2}$ this sum is zero
because each of the terms is equal to zero. 
Indeed, all terms are a product of a differentiable function by $\sin(2\theta_\star)$. This is easy to check. Let us exemplify this assertion with
one of the less immediate terms to analyze, the one which contains
$$
\Gamma_{u\phich}^{\theta_\star}=
\frac{\gamma^{\theta_\star\theta_\star}}{2}
\partial_{r_\star}\gamma_{\theta_\star\phi_\star}
+\frac{\gamma^{\theta_\star\phi_\star}}{2}
\partial_{r_\star}\gamma_{\phi_\star\phi_\star}
$$
(see~\eqref{byke}).
Since both $\gamma_{\theta_\star\phi_\star}$ and $\gamma^{\theta_\star\phi_\star}$
contain $\sin(2\theta_\star)$, so does this Christoffel symbol.

Let us examine in more detail~\eqref{h} for points 
$(r,0)$ and $\left(r,\frac{\pi}{2}\right)$. 
Since $\partial_{r_\star}\theta$ also contains a factor $\sin(2\theta_\star)$,
over these two segments, $[r_-,r_+)\times\{0\}$ and
$[r_-,r_+)\times\{\frac{\pi}{2}\}$, \eqref{h} reads
$$
{\cal V}^r\partial_r\h=-\h\frac{\partial\theta}{\partial\theta_\star}\frac{\partial\h}
{\partial\theta}+\h\times\mbox{smooth function}=\h\times\mbox{smooth function}.
$$
As 
$$
\h(r_-,0)=0\qquad\mbox{and}\qquad\h\left(r_-,\frac{\pi}{2}\right)=0
$$
and ${\cal V}^r$ is not zero (at least initially at $r_-$, see below),
we conclude that, if there exists a solution to our initial value problem, then it must satisfy~\eqref{zero}. Using the fact that
$\partial_{r_\star}\theta$ contains a factor $\sin(2\theta_\star)$ once again,
we obtain~\eqref{V}.
\item {\bf (Existence and uniqueness of solution.)}
Using~\eqref{Upsilon} and~\eqref{r_r*}, we have
$$
{\cal V}^r(r_-,\theta)=\left(\frac{1}{\Delta_r}\frac{\partial r}{\partial r_\star}\right)
\left(-\,\frac{\Upsilon}{\rho^2\Delta_\theta}\right)=\frac{1}{(r_-^2+a^2)}
\left(-\,\frac{(r_-^2+a^2)^2}{\rho^2}\right)
=-\,\frac{(r_-^2+a^2)}{r_-^2+a^2\cos^2\theta}.
$$
This shows that the segment $\{r_-\}\times\left[0,\frac{\pi}{2}\right]$ is
noncharacteristic for our system of four first-order quasilinear partial
differential equations.

We observe that when the Christoffel symbols 
$\Gamma_{v\theta_\star}^{\phi_\star}$ and $\Gamma_{\theta_\star\phi_\star}^{\phi_\star}$, which 
blow up at $\theta_\star=0$ like $\frac{1}{\sin\theta_\star}$
(see Corollary~\ref{mountain}), appear in the system above, then they appear multiplied by $\h$, which has to vanish to first order at $(r,0)$. So that the summands where these Christoffel 
appear are continuous functions.

By a standard existence and uniqueness theorem for non characteristic first order
quasilinear partial differential equations, we know that our initial value 
problem has a solution for $(r,\theta)\in[r_-,r_-+\delta]\times[0,\pi]$,
for some positive $\delta$. Recall that constructing the solution involves
 solving the system of ordinary differential equations
$$
\left\{
\begin{array}{rcl}
\dot{r}&=&{\cal V}^r\\
\dot{\theta}&=&{\cal V}^{\theta}\\
\dot{\f}&=&\mbox{right-hand side of~\eqref{f}}\\
\dot{\g}&=&\mbox{right-hand side of~\eqref{g}}\\
\dot{\h}&=&\mbox{right-hand side of~\eqref{h}}\\
\dot{\tldej}&=&\mbox{right-hand side of~\eqref{j}}
\end{array}
\right.
\qquad
\mbox{with}\qquad
\left\{
\begin{array}{rcl}
r(0)&=&r_-\\
\theta(0)&=&\theta_0\\
\f(0)&=&0\\
\g(0)&=&1\\
h(0)&=&0\\
\tldej(0)&=&0
\end{array}
\right..
$$
We know from Remark~\ref{swing} that, for $r_\star$ sufficiently large,
$r$ is close to $r_-$. So we have the existence of a solution for large $r_\star$.
\end{enumerate}
\end{proof}

Here the red-shift vector field is defined to be
\be\label{nr}
N_r=V+W,
\ee
where $V$ is given in
\begin{lem}\label{VV}
		Let $\iota\in\bbR^+$ be given.
	The initial value problem 
	\begin{eqnarray*}\label{diff-eqn}
	\nabla_VV=&&-\iota(V+W),\\
	\label{ictwo}
	V|_{{\cal H}^+}&=&\frac{\puh}{\Omegasqh}
	\end{eqnarray*}
	($W$ as in~\eqref{W})
	has a unique time invariant solution, $V$, defined in a
	neighborhood of the event horizon, i.e.~defined for 
	$r_\star$ sufficiently negative (i.e.\ for $r_\star<-C$ with $C$ sufficiently large).
\end{lem}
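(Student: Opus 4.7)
The plan is to mirror the proof of Lemma~\ref{Y} step by step, via the dictionary of Section~\ref{sub-event} between event horizon and Cauchy horizon quantities. The roles played above by $\sigma$, $\Omegasqch$, $\bch$ and $\pvpdois$ will be played here by $\varsigma$, $\Omegasqh$, $\bh$ and $\puh/\Omegasqh$; in particular, the identity $\sigma\Delta_r/\Omegasqch=-\Upsilon/(\rho^2\Delta_\theta)$ has the exact analogue $\varsigma\Delta_r/\Omegasqh=-\Upsilon/(\rho^2\Delta_\theta)$, since both reduce to $\Delta_r/\Omega^2$.

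First I would seek $V$ in the time-invariant ansatz form
\[
V = \f\,\frac{\puh}{\Omegasqh} + \g\,\pvh + \h\,\pt + \jj\,\pp,
\]
with $\f,\g,\h,\jj$ functions of $(r,\theta)$ only, the initial data at $r=r_+$ being $\f=1$ and $\g=\h=\jj=0$. Using $\puh=\varsigma\,\partial_u$, $\pvh=\partial_v+\bphirh\,\pp$, and the fact that $\Omegasqh$ and $\bh$ depend only on $(r,\theta)$, one checks $[\partial_t,V]=0$ exactly as in step~(a) of the proof of Lemma~\ref{Y}. Expanding $\nabla_V V$ and equating it to $-\iota(V+W)$ with $W$ as in~\eqref{W} then produces a system of four first-order quasilinear PDEs
\[
\mathcal{V}\cdot\f=\ldots,\qquad \mathcal{V}\cdot\g=\ldots,\qquad \mathcal{V}\cdot\h=\ldots,\qquad \mathcal{V}\cdot\jj=\ldots,
\]
whose right-hand sides are algebraic in $\f,\g,\h,\jj$ and in the Christoffel symbols. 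A direct computation of $\mathcal{V}^r$ at $r=r_+$, using $\partial r/\partial r_\star=\Delta_r\Delta_\theta Q/\Upsilon$ together with $Q(r_+,\theta_\star)=r_+^2+a^2$, yields $\mathcal{V}^r(r_+,\theta)=-(r_+^2+a^2)/(r_+^2+a^2\cos^2\theta)\neq 0$, so the initial hypersurface $\{r=r_+\}$ is noncharacteristic.

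The essential verification — which I expect to be the main obstacle, just as in Lemma~\ref{Y} — is that the flow of $\mathcal{V}$ preserves the boundary segments $\theta_\star=0$ and $\theta_\star=\pi/2$, equivalently that $\h(r,0)=\h(r,\pi/2)=0$. This will be handled exactly as in step~(d) above: an identity analogous to~\eqref{long1}, together with the $\theta\mapsto\pi-\theta$ symmetry of the problem, forces every summand on the right-hand side of the $\h$-equation other than those already proportional to $\h$ to carry a factor $\sin(2\theta_\star)$ (via the structural observations recorded in~\eqref{good1} and in the proof of Lemma~\ref{smooth-functions}). Hence on these two segments the $\h$-equation collapses to $\mathcal{V}^r\partial_r\h=\h\cdot(\text{smooth})$ with vanishing initial data, forcing $\h\equiv 0$ there. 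Simultaneously, the Christoffel symbols $\GGamma^{\phi_\star}_{\vh\theta_\star}$ and $\GGamma^{\phi_\star}_{\theta_\star\phi_\star}$, which blow up like $1/\sin\theta_\star$ at the pole, always enter multiplied by $\h$ and so yield continuous right-hand sides throughout.

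Once these items are in place, a standard existence and uniqueness theorem for noncharacteristic first-order quasilinear systems, applied via the method of characteristics with initial data at $r=r_+$, produces a unique solution on a collar $[r_+-\delta,r_+]\times[0,\pi]$ for some $\delta>0$. By Remark~\ref{swing}, the region $r_\star<-C$ with $C$ sufficiently large corresponds to $r\in(r_+-\delta,r_+)$ uniformly in $\theta_\star$, which gives the neighborhood of the event horizon stated in the lemma.
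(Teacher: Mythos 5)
Your proposal is correct and takes essentially the same approach as the paper, whose own proof of this lemma consists only of recording the ansatz $V=\f\puph+\g\pvph+\h\pt+\jj\pp$ with data $\f(r_+,\theta)=1$, $\g(r_+,\theta)=\h(r_+,\theta)=\jj(r_+,\theta)=0$ and leaving the repetition of the Lemma~\ref{Y} argument implicit; your noncharacteristic computation ${\cal V}^r(r_+,\theta)=-(r_+^2+a^2)/(r_+^2+a^2\cos^2\theta)$ and the boundary-preservation step for $\h$ are exactly that intended transfer (and your use of $\g\,\pvh$ in place of $\g\,\pvph$ only shifts $\jj$ by $\g\bh$, which is harmless). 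One cosmetic slip: $\Omegasqh=\Omega^2e^{-2\kappa_+u}$ does depend on $u$, but the combination $\puh/\Omegasqh=\partial_u/\Omega^2$ has coefficients depending only on $(r,\theta)$, which is all the time-invariance argument actually needs.
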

\begin{proof}
	Choose $V$ of the form
	\begin{eqnarray*}
		V&=&\f\puph+\g\pvph+\h\pt+\tldej\pp,
	\end{eqnarray*}
	with
	$$
	\f=\f(r,\theta),\ \g=\g(r,\theta),\ \h=\h(r,\theta),\ \tldej=\tldej(r,\theta),
	$$
	$$
	\f(r_+,\theta)=1,\ \g(r_+,\theta)=0,\ \h(r_+,\theta)=0,\ \tldej(r_+,\theta)=0.
	$$
\end{proof}

\subsubsection{Covariant derivative}

One could consider working in the frame $(Z,Y,\partial_{\theta_\star},\partial_{\phi_\star})$ but this is not
a good choice because the energy-momentum tensor does not have a simple expression in this frame.
So, instead, we define
$$
\overline{T}=\frac{1}{2}\partial_u\qquad\mbox{and}\qquad\overline{Y}=\pvp=
\frac{\partial_v+\bphi\partial_{\phi_\star}}{\Omega^2},
$$
and work in the frame
$$
(X_{\tinyt},X_{\tinyy},X_{\theta_\star},X_{\phi_\star}):=
(\overline{T},\overline{Y},\partial_{\theta_\star},\partial_{\phi_\star}).
$$ 
We see that
$$
Y=2\f\overline{T}+\g\overline{Y}+\h\partial_{\theta_\star}+\tldej\pp.
$$
The dual frame is
\begin{eqnarray*}
	(\omega_{\tinyt},\omega_{\tinyy},\omega_{\theta_\star},\omega_{\phi_\star})&=&
	\left(
	2\md u,\Omegasqch\md\vch,\md\theta_\star,\md\phich-\bch\md\vch
	\right)\\
	&=&
	\left(
	2\md u,\Omega^2\md v,\md\theta_\star,\md\phi_\star-\bphi\md v
	\right).
\end{eqnarray*}

\paragraph{Calculation of the covariant derivative of $Y$.} 

The covariant derivative of $Y$ is
\begin{eqnarray*}
	\nabla Y&=&\omega_{\tinyt}^\sharp\otimes\nabla_{\overline{T}}Y
	+\omega_{\tinyy}^\sharp\otimes\nabla_{\overline{Y}}Y
	+\omega_{\theta_\star}^\sharp\otimes\nabla_{\partial_{\theta_\star}}Y
	+\omega_{\phi_\star}^\sharp\otimes\nabla_{\partial_{\phi_\star}}Y.
\end{eqnarray*}
One readily checks that the metric dual basis to  $(\omega_{\tinyt},\omega_{\tinyy},\omega_{\theta_\star},\omega_{\phi_\star})$ is
\begin{eqnarray*}
	(\omega_{\tinyt}^\sharp,\omega_{\tinyy}^\sharp,
	\omega_{\theta_\star}^\sharp,\omega_{\phi_\star}^\sharp)&=&
	(-\overline{Y},-\overline{T},
	\gamma^{\theta_\star\theta_\star}\partial_{\theta_\star}
	+\gamma^{\theta_\star\phi_\star}\partial_{\phi_\star},
	\gamma^{\theta_\star\phi_\star}\partial_{\theta_\star}
	+\gamma^{\phi_\star\phi_\star}\partial_{\phi_\star})\\
	&=&(-\overline{Y},-\overline{T},
	\partial^{\theta_\star},
	\partial^{\phi_\star}).
\end{eqnarray*}	
Using the fact that $2\overline{T}=\partial_u=\partial_{r_\star}=
(\partial_{r_\star}r)\partial_r+(\partial_{r_\star}\theta)\partial_\theta$,
\eqref{inicio} and~\eqref{long1b}, we obtain
\begin{eqnarray}
	\nabla_{\overline{T}}Y&=&
	\nabla_{\overline{T}}\left(2\f\overline{T}+\g\overline{Y}+\h\partial_{\theta_\star}+\tldej\pp\right)\nonumber\\
	&=&\nabla_{\overline{T}}\overline{Y}+\overline{O}(r-r_-)\nonumber\\
	&=&-\kappa_-\overline{Y}+a^{\theta_\star}\partial_{\theta_\star}+
	a^{\phi_\star}\partial_{\phi_\star}
	+\tilde{O}(r-r_-),\label{Emi1}
\end{eqnarray}	
where 
$$
\tilde{O}(r-r_-)=O(r-r_-)\overline{T}+O(r-r_-)\overline{Y}+O(r-r_-)\partial_{\theta_\star}
+O(r-r_-)\partial_{\phi_\star}.
$$
The values of $a^{\theta_\star}$ and $a^{\phi_\star}$ can be read off 
from~\eqref{long1b}. 
Using~\eqref{diff-equation} and the formulas in Appendix~\ref{Chris}, we get
\begin{eqnarray}
\nabla_{\overline{Y}}Y&=&\nabla_YY-\nabla_{Y-\overline{Y}}Y\nonumber\\
&=&\nabla_YY-\nabla_{\left(2\f\overline{T}+(\g-1)\overline{Y}+\h\partial_{\theta_\star}+\tldej\pp\right)}
\left(2\f\overline{T}+\g\overline{Y}+\h\partial_{\theta_\star}+\tldej\pp\right)\nonumber\\
&=&-\iota(Y+Z)+\overline{O}(r-r_-)\nonumber\\
&=&-\iota(\overline{Y}+\overline{T})+\overline{O}(r-r_-),\label{Emi2}
\end{eqnarray}
where 
$$
\overline{O}(r-r_-)=O(r-r_-)\overline{T}+O(r-r_-)\overline{Y}+O(r-r_-)\partial_{\theta_\star}
+O(r-r_-)\frac{1}{\sin\theta}\partial_{\phi_\star}.
$$
The factor $\frac{1}{\sin\theta}$ in front of $\partial_{\phi_\star}$
in the last summand arises from $\Gamma_{v\theta_\star}^{\phi_\star}$
and $\Gamma_{\theta_\star\phi_\star}^{\phi_\star}$.
Using the fact that 
$\partial_{\theta_\star}=
(\partial_{\theta_\star}r)\partial_r
+(\partial_{\theta_\star}\theta)\partial_\theta$
and~\eqref{h1},
 we obtain
\begin{eqnarray}
	\nabla_{\partial_{\theta_\star}}Y&=&
	\nabla_{\partial_{\theta_\star}}\left(2\f\overline{T}+\g\overline{Y}+\h\partial_{\theta_\star}+\tldej\pp\right)\nonumber\\
	&=&\nabla_{\partial_{\theta_\star}}\overline{Y}+\overline{O}(r-r_-)\nonumber\\
	&=&-a_{\theta_\star}\overline{Y}
	+h^{\ \ \theta_\star}_{\theta_\star}\partial_{\theta_\star}+
	h^{\ \ \phi_\star}_{\theta_\star}\partial_{\phi_\star}
	+\overline{O}(r-r_-).\label{Emi3}
\end{eqnarray}	
 The values of $h^{\ \ \theta_\star}_{\theta_\star}$ and $h^{\ \ \phi_\star}_{\theta_\star}$ can be read off 
 from~\eqref{h1}.
 Finally, we have 
\begin{eqnarray}
	\nabla_{\partial_{\phi_\star}}Y&=&
	\nabla_{\partial_{\phi_\star}}\left(2\f\overline{T}+\g\overline{Y}+\h\partial_{\theta_\star}+\tldej\pp\right)\nonumber\\
	&=&\nabla_{\partial_{\phi_\star}}\overline{Y}+\overline{O}(r-r_-)\nonumber\\
	&=&-a_{\phi_\star}\overline{Y}
	+h^{\ \ \theta_\star}_{\phi_\star}\partial_{\theta_\star}+
	h^{\ \ \phi_\star}_{\phi_\star}\partial_{\phi_\star}
	+\hat{O}(r-r_-).\label{Emi4}
\end{eqnarray}	
The values of $h^{\ \ \theta_\star}_{\phi_\star}$ and $h^{\ \ \phi_\star}_{\phi_\star}$ can be read off 
from~\eqref{h2}.
Here
$$
\hat{O}(r-r_-)=O(r-r_-)\sin\theta\overline{T}+O(r-r_-)\sin\theta\overline{Y}+O(r-r_-)
\sin\theta\partial_{\theta_\star}
+O(r-r_-)\partial_{\phi_\star}.
$$
Note that although $\Gamma_{\theta_\star\phi_\star}^{\phi_\star}$ does contain 
the factor $\frac{1}{\sin\theta}$, in the last calculation this Christoffel symbol
appears multiplied by $\h$ which vanishes to first order at $\theta_\star=0$ and
$\theta_\star=\pi$. So, the last equality is a consequence of Lemma~\ref{behave}
and of
the fact that $\f$, $\g$, $\h$ and $\tldej$ do not depend on $\phi_\star$. Indeed, the components of
$\nabla_{\pp}X_\dagger$ in $\overline{T}$, $\overline{Y}$ and $\partial_{\theta_\star}$
all contain the factor $\sin\theta$, for 
$\dagger\,\in\,\{\smallt,\smally,\theta_\star,\phi_\star\}$.
The expressions~\eqref{Emi1}--\eqref{Emi4} above correspond to~\cite[(19)--(22)]{m-lec}.
Combining the previous results, we can write the covariant derivative of $Y$
as
\begin{eqnarray*}
	\nabla Y&=&-\overline{Y}\otimes\left(
	-\kappa_-\overline{Y}+a^{\theta_\star}\partial_{\theta_\star}+
	a^{\phi_\star}\partial_{\phi_\star}
	\right)+\overline{Y}\otimes\tilde{O}(r-r_-)
	\\
	&&-\overline{T}\otimes\left(
	-\iota(\overline{Y}+\overline{T})
	\right)+\overline{T}\otimes\overline{O}(r-r_-)\\
	&&+\partial^{\theta_\star}\otimes
	\left(
	-a_{\theta_\star}\overline{Y}
	+h_{\theta_\star\theta_\star}\partial^{\theta_\star}+
	h_{\theta_\star\phi_\star}\partial^{\phi_\star}
	\right)+\partial^{\theta_\star}\otimes\overline{O}(r-r_-)\\
	&&+\partial^{\phi_\star}\otimes
	\left(
	-a_{\phi_\star}\overline{Y}
	+h_{\phi_\star\theta_\star}\partial^{\theta_\star}+
	h_{\phi_\star\phi_\star}\partial^{\phi_\star}
	\right)+\partial^{\phi_\star}\otimes\hat{O}(r-r_-)\\
	&=&\kappa_-\overline{Y}\otimes\overline{Y}
	-\overline{Y}\otimes(a^{\theta_\star}\partial_{\theta_\star})
	-\overline{Y}\otimes(a^{\phi_\star}\partial_{\phi_\star})\\
	&&+\iota\overline{T}\otimes\overline{Y}+\iota\overline{T}\otimes\overline{T}\\
	&&-(a^{\theta_\star}\partial_{\theta_\star})\otimes\overline{Y}
	-(a^{\phi_\star}\partial_{\phi_\star})\otimes\overline{Y}\\
	&&+h^{\theta_\star\theta_\star}\partial_{\theta_\star}\otimes\partial_{\theta_\star}
	+h^{\theta_\star\phi_\star}\partial_{\theta_\star}\otimes\partial_{\phi_\star}
	+h^{\phi_\star\theta_\star}\partial_{\phi_\star}\otimes\partial_{\theta_\star}
	+h^{\phi_\star\phi_\star}\partial_{\phi_\star}\otimes\partial_{\phi_\star}\\
	&&+\overline{\overline{O}}(r-r_-).
\end{eqnarray*}
We may write the error term as
\begin{eqnarray*}
\overline{\overline{O}}(r-r_-)&=&\sum_{\dagger,\ddagger\,
\in\,\{\tinyt,\tinyy,\theta_\star\}}
O(r-r_-)\,\omega^\sharp_\dagger\otimes X_\ddagger\\
&&+O(r-r_-)\,\overline{Y}\otimes\frac{1}{\sin\theta}\partial_{\phi_\star}
+O(r-r_-)\,\frac{1}{\sin\theta}\partial_{\phi_\star}\otimes\overline{Y}\\
&&+O(r-r_-)\,\overline{T}\otimes\frac{1}{\sin\theta}\partial_{\phi_\star}
+O(r-r_-)\,\frac{1}{\sin\theta}\partial_{\phi_\star}\otimes\overline{T}\\
&&+O(r-r_-)\,\partial_{\theta_\star}\otimes\frac{1}{\sin\theta}\partial_{\phi_\star}
+O(r-r_-)\,\frac{1}{\sin\theta}\partial_{\phi_\star}\otimes\partial_{\theta_\star}\\
&&+O(r-r_-)\frac{1}{\sin^2\theta}\partial_{\phi_\star}\otimes\partial_{\phi_\star}
\end{eqnarray*}
because $\omega_{\phi_\star}^\sharp=\partial^{\phi_\star}$ behaves like
$\gamma^{\phi_\star\phi_\star}\partial_{\phi_\star}$, which in turn behaves like
$\frac{1}{\sin^2\theta}\partial_{\phi_\star}$.

\subsubsection{Currents}

\paragraph{The energy momentum tensor of a massless scalar field and the vector current.}

The energy momentum tensor is given by
$$
T_{\mu\nu}=\partial_\mu\psi\partial_\nu\psi-\,\frac{1}{2}g_{\mu\nu}\partial^\alpha\psi\,\partial_\alpha\psi,
$$
and one readily checks that
\begin{eqnarray*}
	\partial^\alpha\psi\,\partial_\alpha\psi&=&-\,\frac{1}{\Omega^2}(\partial_u\psi)(
	\partial_v\psi+\bphi\partial_{\phi_{\star}}\psi)
	+|\nabb\psi|_\gamma^2\\
	&=&-\,\frac{1}{\Omega^2}(\partial_u\psi)(
	\partial_v\psi+\bphi\partial_{\phi_{\star}}\psi)
	+\gamma^{\theta_\star\theta_\star}(\partial_{\theta_\star}\psi)^2
	+2\gamma^{\theta_\star\phi_\star}(\partial_{\theta_\star}\psi)(\partial_{\phi_\star}\psi)
	+\gamma^{\phi_\star\phi_\star}(\pp\psi)^2.
\end{eqnarray*}
The energy-momentum tensor is written as				
\begin{eqnarray*}
	T=\!\!\!\!\!\!\!\!\!\!&&\frac{1}{4}
	\left(\partial_{u}\psi\right)^2\omt\ot\omt
	+
	\frac{1}{2}|\nabb\psi|_\gamma^2\,(\omt\ot\omy
	+\omy\ot\omt)+
	\frac{1}{\Omegafch}\left(\partial_{\vch}\psi+\bch\pp\psi\right)^2\omy\ot\omy
	\\
	&&+\frac{1}{2}
	\left(\partial_{u}\psi\right)(\partial_{\theta_\star}\psi)
	(\omt\ot\omtt+\omtt\ot\omt)
	+\frac{1}{2}\left(\partial_{u}\psi\right)(\pp\psi)
	(\omt\ot\omp+\omp\ot\omt)
	\\
	&&+\frac{1}{\Omegasqch}
	\left(\partial_{\vch}\psi+\bch\pp\psi\right)
	(\partial_{\theta_\star}\psi)(\omy\ot\omtt+\omtt\ot\omy)\\
	&&+\frac{1}{\Omegasqch}
	\left(\partial_{\vch}\psi+\bch\pp\psi\right)
	(\pp\psi)(\omy\ot\omp+\omp\ot\omy)\\
	&&+\left(
	(\partial_{\theta_\star}\psi)^2-\,\frac{\gamma_{\theta_{\star},\theta_{\star}}}
	{2}\left(-\,\frac{1}{\Omegasqch}(\partial_u\psi)
	(\partial_{\vch}\psi+\bch\partial_{\phi_{\star}}\psi)
	+|\nabb\psi|_\gamma^2\right)\right)\omtt\ot\omtt\\
	&&+\left(
	(\partial_{\theta_\star}\psi)(\pp\psi)-\,\frac{\gamma_{\theta_{\star},\phi_{\star}}}
	{2}\left(-\,\frac{1}{\Omegasqch}(\partial_u\psi)
	(\partial_{\vch}\psi+\bch\partial_{\phi_{\star}}\psi)
	+|\nabb\psi|_\gamma^2\right)\right)(\omtt\ot\omp+\omp\ot\omtt)\\
	&&+\left(
	(\pp\psi)^2-\,\frac{\gamma_{\phi_{\star},\phi_{\star}}}
	{2}\left(-\,\frac{1}{\Omegasqch}(\partial_u\psi)
	(\partial_{\vch}\psi+\bch\partial_{\phi_{\star}}\psi)
	+|\nabb\psi|_\gamma^2\right)\right)\omp\ot\omp.
\end{eqnarray*}
The vector currents associated to the blue-shift and red-shift vector fields
are
$$
J^{N_b}_\mu=T_{\mu\nu}N_b^\nu\qquad\mbox{and}\qquad
J^{N_r}_\mu=T_{\mu\nu}N_r^\nu,
$$
respectively.

\paragraph{The scalar current associated to the blue-shift vector field.}

The scalar current associated to $N_b$ is $K^{N_b}=T_{\mu\nu}\nabla^\mu N_b^\nu$.
Since $Z$ is a Killing vector field this is equal to $K^Y$, which is
\begin{eqnarray*}
	K^Y&=&T_{\mu\nu}\nabla^\mu Y^\nu\\
	&=&\kappa_-T(\overline{Y},\overline{Y})
	-2T(\overline{Y},a^{\theta_\star}\partial_{\theta_\star})
	-2T(\overline{Y},a^{\phi_\star}\partial_{\phi_\star})\\
	&&+\iota T(\overline{T},\overline{Y})+\iota T(\overline{T},\overline{T})\\
	&&+h^{\theta_\star\theta_\star}T(\partial_{\theta_\star},\partial_{\theta_\star})
	+h^{\theta_\star\phi_\star}T(\partial_{\theta_\star},\partial_{\phi_\star})
	+h^{\phi_\star\theta_\star}T(\partial_{\phi_\star},\partial_{\theta_\star})
	+h^{\phi_\star\phi_\star}T(\partial_{\phi_\star},\partial_{\phi_\star})\\
	&&+O(r-r_-)(T(\overline{T},\overline{T})+T(\overline{T},\overline{Y})+T(\overline{Y},\overline{Y})).
\end{eqnarray*}
We estimate $K^Y$. Suppose we are given $\delta\in\bigl(0,\frac 13\bigr)$. Let $\overline{c}=2(-\kappa_-)\delta$. 
Choose $\iota=c+\overline{c}+1$, where $c=c_\delta$ is the constant below
(which is independent of $Y$).
There exists $r_0>r_-$ such that 
$r\in(r_-,r_0)$ implies that
\begin{eqnarray}
K^Y
&\geq&\kappa_-T(\overline{Y},\overline{Y})
+\iota T(\overline{T},\overline{Y})+\iota T(\overline{T},\overline{T})\nonumber\\
&&-cT(\overline{T},\overline{T})-cT(\overline{T},\overline{Y})-(-\kappa_-)(\delta/2) T(\overline{Y},\overline{Y})\nonumber\\
&&+O(r-r_-)(T(\overline{T},\overline{T})+T(\overline{T},\overline{Y})+T(\overline{Y},\overline{Y}))\nonumber\\
&\geq&\kappa_-(1+\delta)T(\overline{Y},\overline{Y})+\overline{c}(T(\overline{T},\overline{T})+
T(\overline{T},\overline{Y})).\label{K}
\end{eqnarray}
We have used the fact that
$$
\gamma^{\phi_\star\phi_\star}(\partial_{\phi_\star}\psi)^2\leq|\nabb\psi|^2_\gamma.
$$

\paragraph{Inequalities relating the currents.}
The blue-shift vector field satisfies
\begin{lem}
Let $0<\delta<\frac{1}{3}$. If $(r_\star)_0$ is chosen sufficiently large (see\/ {\rm Remark~\ref{swing}}), then
\be\label{blue}
\int\limits_{\stackrel{u_0<u<u_1}{\mbox{\tiny $v\!\!>\!\!(r_\star)_0\!\!-\!\!u$}}} K^{N_b}\dV\geq 2\kappa_-(1+3\delta)\int_{[u_0,u_1]}\left(
\int_{v>(r_\star)_0-u} J^{N_b}_\mu\nCu^\mu\VCu
\right)\md u.
\ee
\end{lem}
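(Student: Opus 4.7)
The plan is to reduce the integral inequality~\eqref{blue} to a pointwise estimate and deduce the latter from the pointwise bound~\eqref{K} on $K^Y$ by a small-perturbation argument near $\mathcal{CH}^+$.

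First I observe that $\nCu=\partial_v+\bphi\partial_{\phi_\star}=\Omega^2\overline{Y}$, so $J^{N_b}_\mu\nCu^\mu=\Omega^2\,T(\overline{Y},N_b)$. Using the formula~\eqref{volume} for $\dV$ together with the expression for $\VCu$, both sides of~\eqref{blue} become integrals against the common positive measure $\frac{L\sin\theta}{\Xi}\md u\md v\md\theta_\star\md\phi_\star$, so by Fubini~\eqref{blue} reduces to the pointwise estimate
\[
K^{N_b}\;\geq\;\kappa_-(1+3\delta)\,T(\overline{Y},N_b)
\]
on $\{u_0<u<u_1,\ r_\star>(r_\star)_0\}$. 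Next I expand $T(\overline{Y},N_b)=T(\overline{Y},Y)+T(\overline{Y},Z)$. The initial data~\eqref{inicio} give $Y-\overline{Y}=2\f\,\overline{T}+(\g-1)\overline{Y}+\h\,\partial_{\theta_\star}+\jj\,\partial_{\phi_\star}=O(r-r_-)$, and from~\eqref{Z} I read off $Z-\overline{T}=-\tfrac{\Omega^2}{2}\overline{Y}+\tfrac{1}{2}(\bphi-\bphi|_{r=r_-})\partial_{\phi_\star}=O(\Omega^2)+O(r-r_-)$. By~\eqref{omega1} and~\eqref{r-r}, both $|r-r_-|$ and $\Omega^2$ are $\lesssim e^{2\kappa_-r_\star}$, which tends uniformly to $0$ as $r_\star\to+\infty$. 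Hence
\[
T(\overline{Y},N_b)=T(\overline{Y},\overline{Y})+T(\overline{Y},\overline{T})+E,
\]
where $E$ is a sum of terms, each an $O(e^{2\kappa_-r_\star})$ prefactor times some $T(\overline{Y},X_\dagger)$ with $X_\dagger$ a frame vector.

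Combining $K^{N_b}=K^Y$ (since $Z$ is Killing) with~\eqref{K} then yields
\begin{align*}
K^{N_b}-\kappa_-(1+3\delta)T(\overline{Y},N_b)\;&\geq\;-2\delta\kappa_-\,T(\overline{Y},\overline{Y})+\overline{c}\,T(\overline{T},\overline{T})\\
&\quad+(\overline{c}+|\kappa_-|(1+3\delta))\,T(\overline{T},\overline{Y})-\kappa_-(1+3\delta)E.
\end{align*}
Because $\kappa_-<0$, each of the three leading coefficients is strictly positive, and because $\overline{T},\overline{Y}$ are null with $g(\overline{T},\overline{Y})=-1$, one has $T(\overline{Y},\overline{Y})=(\overline{Y}\psi)^2$, $T(\overline{T},\overline{T})=(\overline{T}\psi)^2$, and $T(\overline{T},\overline{Y})=\tfrac12|\nabb\psi|^2_\gamma$, all nonnegative. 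The mixed contributions to $E$, namely $T(\overline{Y},\partial_{\theta_\star})$ and $T(\overline{Y},\partial_{\phi_\star})$, simplify to $(\overline{Y}\psi)(\partial_{\theta_\star}\psi)$ and $(\overline{Y}\psi)(\partial_{\phi_\star}\psi)$ respectively (since $g(\overline{Y},\partial_{\theta_\star})=g(\overline{Y},\partial_{\phi_\star})=0$), and a Cauchy--Schwarz split bounds them by $\epsilon\,T(\overline{Y},\overline{Y})+C_\epsilon\,T(\overline{T},\overline{Y})$. Taking $(r_\star)_0$ large enough that the exponentially small coefficients in $E$ beat $\delta\overline{c}$ and $C_\epsilon\overline{c}$ then absorbs $E$ into the three leading nonnegative terms, yielding the pointwise inequality and hence~\eqref{blue}.

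The main obstacle I anticipate is the careful bookkeeping of the absorption step: the Cauchy--Schwarz parameter $\epsilon$ must be chosen small relative to $\delta$, and then $(r_\star)_0$ large relative to $\epsilon$ and to the implicit constants in the $O(r-r_-)$ Taylor bounds for $\f,\g-1,\h,\jj$ (whose existence is guaranteed by the smoothness established in Lemma~\ref{Y}) as well as to the $O(r-r_-)$ constant in $\bphi-\bphi|_{r=r_-}$. Since the decay of these corrections is uniform in $\theta_\star$ by Remark~\ref{swing} and the trivial smoothness of $\bphi$, the choice can be made uniformly on the slab $\{u_0<u<u_1,\ r_\star>(r_\star)_0\}$, so a single $(r_\star)_0$ suffices.
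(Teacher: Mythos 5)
Your proposal is correct and follows essentially the same route as the paper: both arguments rest on the lower bound~\eqref{K} for $K^{Y}$, the expansion $N_b=\overline{T}+\overline{Y}+O(r-r_-)$ near ${\cal CH}^+$ coming from~\eqref{inicio} and~\eqref{Z}, the nonnegativity of $T(\overline{T},\overline{T})$, $T(\overline{Y},\overline{Y})$ and $T(\overline{T},\overline{Y})=\tfrac12|\nabb\psi|^2_\gamma$, and absorption of the exponentially small error terms into the $\delta$-margin for $(r_\star)_0$ large. The only (cosmetic) difference is bookkeeping: you phrase everything as a single pointwise inequality with an explicit Cauchy--Schwarz split of the mixed terms, whereas the paper bounds the flux integral by $(1-\delta)(T(\overline{T},\overline{Y})+T(\overline{Y},\overline{Y}))-\delta T(\overline{T},\overline{T})$ and then multiplies by $2\kappa_-(1+\delta)/(1-\delta)$, using $\tfrac{1+\delta}{1-\delta}<1+3\delta$.
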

\begin{proof} Using~\eqref{volume} and~\eqref{K}, we obtain
$$
\int\limits_{\stackrel{u_0<u<u_1}{\mbox{\tiny $v\!\!>\!\!(r_\star)_0\!\!-\!\!u$}}}
K^N\dV\geq\int\limits_{\stackrel{u_0<u<u_1}{\mbox{\tiny $v\!\!>\!\!(r_\star)_0\!\!-\!\!u$}}}
2\kappa_-(1+\delta)T(\overline{Y},\overline{Y})+
2\overline{c}(
T(\overline{T},\overline{T})+
T(\overline{T},\overline{Y}))\Omegasqch\frac{L\sin\theta}{\Xi}
\md u\md\vch\md\theta_\star\md\phi_\star.
$$
Using~\eqref{ncu}, we see that
\begin{eqnarray*}
	T\left(N_b,\frac{\nCu}{\Omega^2}\right)&=&T(N_b,\overline{Y})\\
	&=&T(\overline{T}+\overline{Y},\overline{Y})+O(r-r_-)
	\left(T(\overline{T},\overline{Y})+T(\overline{Y},\overline{Y})
	+T(\partial_{\theta_\star},\overline{Y})+T(\partial_{\phi_\star},\overline{Y})
	\right).
\end{eqnarray*}
For each $0<\delta<\frac{1}{3}$, there exists $r_0>r_-$ such that  
$r\in(r_-,r_0)$ implies that
$$
\int\limits_{v>(r_\star)_0-u}
T\left(N_b,\frac{\nCu}{\Omega^2}\right)\Omega^2\VCu\geq
\int\limits_{v>(r_\star)_0-u}
\left(
(1-\delta)\left(T(\overline{T},\overline{Y})+ T(\overline{Y},\overline{Y})\right)-\delta T(\overline{T},\overline{T})
\right)
\Omegasqch\frac{L\sin\theta}{\Xi}
\md\vch\md\theta_\star\md\phi_\star.
$$
Multiplying both sides by $2\kappa_-(1+\delta)/(1-\delta)$, we get
\begin{eqnarray*}
&&2\kappa_-(1+\delta)\int\limits_{v>(r_\star)_0-u}
\left(
\left(T(\overline{T},\overline{Y})+ T(\overline{Y},\overline{Y})\right)-\,
\frac{\delta}{1-\delta} T(\overline{T},\overline{T})\right)\Omegasqch\frac{L\sin\theta}{\Xi}
\md\vch\md\theta_\star\md\phi_\star\\
&&\qquad\geq 2\kappa_-\,\frac{1+\delta}{1-\delta}\int\limits_{v>(r_\star)_0-u} T\left(N_b,\frac{\nCu}{\Omega^2}\right)\Omega^2\VCu.
\end{eqnarray*}
This implies~\eqref{blue} because $\frac{1+\delta}{1-\delta}<1+3\delta$
and $-2\kappa_-\delta\frac{1+\delta}{1-\delta}<-4\kappa_-\delta=2\overline{c}$.
\end{proof}
Similarly to~\eqref{blue}, the red-shift vector field satisfies
\begin{lem}
Let $\delta>0$. For $(r_\star)_0$ sufficiently negative (see\/ {\rm Remark~\ref{swing}}), we have
\be\label{Nr}
\int_{\stackrel{v_0<v<v_1}{\mbox{\tiny $u\!\!<\!\!(r_\star)_0\!\!-\!\!v$}}} K^{N_r}\dV\geq 2\kappa_+(1-\delta)\int_{[v_0,v_1]}\left(
\int_{u<(r_\star)_0-v} J^{N_r}_\mu\nCv^\mu\VCv
\right)\md v.
\ee
\end{lem}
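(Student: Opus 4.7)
The plan is to mirror the derivation of~\eqref{blue}, but now centered at the event horizon $\mathcal{H}^+$ (where $u\to-\infty$, i.e.\ $r_\star\to-\infty$) rather than the Cauchy horizon. The main ingredients are the same: a good frame in which the energy-momentum tensor has positive components, a computation of $\nabla N_r$ up to an $O(r-r_+)$ error, and a comparison between the spacetime current $K^{N_r}$ and the flux current $J^{N_r}_\mu\nCv^\mu$.

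First I would set up the analogue of the frame used in the blue-shift case. Define
\[
\underline{T}=\tfrac{1}{2}\pvh,\qquad \underline{V}=\frac{\puh}{\Omegasqh},
\]
and work in the frame $(\underline{T},\underline{V},\partial_{\theta_\star},\partial_{\phi_\star})$. Writing $V=2\tilde{f}\underline{T}+\tilde{g}\underline{V}+\tilde{h}\partial_{\theta_\star}+\tilde{j}\partial_{\phi_\star}$ and using the initial conditions from Lemma~\ref{VV}, I would then repeat the four covariant-derivative computations~\eqref{Emi1}--\eqref{Emi4} with the roles of $\overline{Y}$ and $\overline{T}$ swapped, to obtain
\[
\nabla V=\kappa_+\underline{V}\otimes\underline{V}+\iota\,\underline{T}\otimes(\underline{V}+\underline{T})+h^{AB}\,\partial_A\otimes\partial_B+(\mbox{off-diagonal terms in }\underline{V})+\overline{\overline{O}}(r-r_+),
\]
where the indices $A,B$ range over $\theta_\star,\phi_\star$; the Christoffel symbols needed are precisely those already collected in Appendix~\ref{Chris}. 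Since $W$ is Killing, $K^{N_r}=T_{\mu\nu}\nabla^\mu V^\nu=K^V$.

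Next, contracting with the energy-momentum tensor $T_{\mu\nu}$ and using $\gamma^{\phi_\star\phi_\star}(\partial_{\phi_\star}\psi)^2\leq|\nabb\psi|_\gamma^2$, I would obtain, for a given $\delta>0$, the pointwise bound
\[
K^{N_r}\geq\kappa_+(1-\delta/2)T(\underline{V},\underline{V})+\overline{c}\bigl(T(\underline{T},\underline{T})+T(\underline{T},\underline{V})\bigr),
\]
valid for $r-r_+$ sufficiently small, where $\overline{c}=2\kappa_+\delta>0$ and $\iota$ is taken to absorb the $\delta$-independent constant arising from the angular term and the cross term, exactly as in the derivation leading to~\eqref{K}. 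The sign flip from $\kappa_-<0$ to $\kappa_+>0$ is what reverses the direction of the inequality versus~\eqref{K} and produces the factor $(1-\delta)$ rather than $(1+3\delta)$ in the conclusion.

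Finally, using $\nCv=\partial_u$ (cf.~\eqref{ncu}) and the volume element $\dV=2\Omega^2\frac{L\sin\theta}{\Xi}\md u\md v\md\theta_\star\md\phi_\star$, I would relate $T(N_r,\nCv/\Omega^2)=T(N_r,\underline{T})+O(r-r_+)\cdot(\mathrm{lower\ order})$ to the combination on the right-hand side of the pointwise bound, and conclude by integrating in $u$ over $\{u<(r_\star)_0-v\}$ and then in $v$ over $[v_0,v_1]$. The main obstacle is bookkeeping: making sure every Christoffel symbol carrying a $1/\sin\theta_\star$ singularity is multiplied by a factor (from $\tilde{h}$ or from angular components) which vanishes to first order at the poles, so that the $\overline{\overline{O}}(r-r_+)$ error term remains truly lower order in the quadratic form, exactly as in the blue-shift case. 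Once this is verified, choosing $(r_\star)_0$ sufficiently negative makes the error harmless, and~\eqref{Nr} follows.
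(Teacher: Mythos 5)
Your strategy is exactly the paper's: the paper's entire proof of this lemma consists of naming the frame $(\overline{V},\underline{T},\partial_{\theta_\star},\partial_{\phi_\star})$ with $\overline{V}=\puph$ and $\underline{T}=\frac{1}{2}\left(\pvh+\bh\pp\right)$ and leaving everything else as a mirror of the blue-shift lemma, which is what you carry out. Two details of your write-up need correction, however. First, a frame slip: since $\nCv=\partial_u$ and $\partial_u/\Omega^2=\partial_{\uh}/\Omegasqh$, the flux density is $T(N_r,\nCv/\Omega^2)=T(N_r,\underline{V})$ in your notation (the weighted $u$-direction), not $T(N_r,\underline{T})$; contracting against $\underline{T}$ would compute the flux through a constant-$u$ cone rather than through $\underline{\mathcal{C}}_v$. (Also note the paper's $\underline{T}$ carries the shift $\bh\pp$, so that $\underline{T}=\frac{1}{2}(\partial_v+b^{\phi_\star}\partial_{\phi_\star})$; your unshifted $\frac{1}{2}\partial_{\vh}$ differs by an $O(r-r_+)$ multiple of $\partial_{\phi_\star}$ that must itself be absorbed into the error.)

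Second, and more substantively, the constant $\overline{c}=2\kappa_+\delta$ transplanted from the blue-shift proof does not suffice here. In the blue-shift case the cross term $T(\overline{T},\overline{Y})=\frac{1}{2}|\nabb\psi|^2_\gamma\geq 0$ enters the right-hand side of the final comparison with the \emph{negative} coefficient $2\kappa_-(1+\delta)$, so any positive $\overline{c}$ dominates it; that is why $\overline{c}$ of order $|\kappa_-|\delta$ is enough there. In the red-shift case the analogous cross term $T(\underline{T},\underline{V})$ enters with the \emph{positive}, $O(1)$ coefficient $2\kappa_+(1-\delta)(1+\delta')$, so the coefficient it carries in $K^V$ must satisfy $\overline{c}\geq\kappa_+(1-\delta)(1+\delta')\approx\kappa_+$, not $O(\kappa_+\delta)$. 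This is harmless — $\iota$ in Lemma~\ref{VV} is a free parameter, so one simply takes it large enough that the coefficients of $T(\underline{T},\underline{V})$ and $T(\underline{T},\underline{T})$ in $K^V$ exceed $2\kappa_+$ — but it is the one place where the red-shift estimate is not a verbatim sign-flip of the blue-shift one, and with $\overline{c}=2\kappa_+\delta$ as literally stated the last inequality in your chain fails for small $\delta$.
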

\begin{proof} Work in the frame $(\overline{V},\underline{T},\partial_{\theta_\star},\partial_{\phi_\star})$,
	with 
$$
\overline{V}=\puph,
\qquad
\underline{T}=\frac{1}{2}\left(\pvh+\bh\pp\right).
$$
	
\end{proof}

\subsection{Energy estimates}
We are interested in solutions of the wave equation which are regular
up to, and including, ${\cal H}^+$, and which have
compact support on ${\cal H}^+$, i.e.\ we are interested in functions belonging to the space
$$
	{\cal F}:=\left\{\psi\in C^\infty({\cal M}\cup{\cal H}^+):\Box_g\psi=0\
	\mbox{and there exists $v_0\in\bbR$ such that}\
	 \psi|_{{\cal H}^+\cap\{v\geq v_0\}}=0\right\}.
	\label{Fs}
$$
The following theorem, established by Sbierski in his thesis~\cite{Sbierski} for
Reissner--Nordstr\"{o}m and Kerr black holes, applies to 
Kerr--Newman--de Sitter black holes.
\begin{thm}\label{thm}
	Let $2\kappa_+>-\kappa_-$ and $\psi\in{\cal F}$. Then, for any $u_0$, $v_0\in\bbR$,
	we have
	\begin{eqnarray*}
		&&\int\limits_{\stackrel{{\cal CH}^+}{{\mbox{\tiny $(-\infty,u_0]$}}}}
		J^{N_b}_\mu
		n^\mu_{{\cal CH}^+}\dV_{{\cal CH}^+}+
		\int\limits_{\stackrel{{\cal C}_{u_0}}
			{{\mbox{\tiny $[v_0,\infty)$}}}}
		J^{N_b}_\mu
		n^\mu_{{\cal C}_{u_0}}\dV_{{{\cal C}_{u_0}}}
		<+\infty.
	\end{eqnarray*}
\end{thm}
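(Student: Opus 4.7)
The plan is to adapt Sbierski's scheme in~\cite{Sbierski} to the present setting by combining, in order, a red-shift propagation near ${\cal H}^+$, a bounded transport across the intermediate ``no-shift'' bulk, and a blue-shift Gr\"onwall estimate near ${\cal CH}^+$. First fix $\delta>0$ small enough that $2\kappa_+(1-\delta) > -\kappa_-(1+3\delta)$, which is possible by the hypothesis $2\kappa_+>-\kappa_-$.

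Because $\psi|_{{\cal H}^+}$ has compact support in $v$, the $N_r$-flux on ${\cal H}^+$ is a finite quantity depending only on the initial data. Applying the divergence theorem to $J^{N_r}$ in the red-shift region $\{u<(r_\star)_0-v\}$ and inserting inequality~\eqref{Nr} into a standard Gr\"onwall argument yields exponential decay of the $N_r$-flux on $\underline{\cal C}_v\cap\{u<(r_\star)_0-v\}$ at rate $2\kappa_+(1-\delta)$. In the intermediate no-shift region $\{(r_\star)_0\le u+v\le(r_\star)_1\}$, the metric coefficients and Christoffel symbols are uniformly bounded, $\Omega^2$ stays between two positive constants, and both $N_r$ and $N_b$ are uniformly timelike; a direct (unweighted) energy estimate then transfers the exponential $v$-decay to a ``seed'' for the blue-shift estimate on an exit null hypersurface.

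In the blue-shift region $\{u+v\ge(r_\star)_1\}$, I would apply the divergence theorem to $J^{N_b}$ over rectangles $\{u_1\le u\le u_0,\ v_1\le v\le V\}$ with $V\nearrow+\infty$, where $u_1$ is chosen inside the no-shift region. Inequality~\eqref{blue}, together with its $\partial_u$-analogue obtained by repeating the construction of Lemma~\ref{Y} with the roles of $u$ and $v$ swapped, implies that the $N_b$-fluxes through $\underline{\cal C}_V$ and through the far part of ${\cal C}_{u_0}$ grow in $V$ at rate at most $-\kappa_-(1+3\delta)$; a joint Gr\"onwall argument then bounds them by the initial flux on $\{v=v_1\}$ inflated by $e^{-\kappa_-(1+3\delta)(V-v_1)}$. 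Combining with the exponentially decaying seed produced by the first two steps, the hypothesis $2\kappa_+(1-\delta)>-\kappa_-(1+3\delta)$ ensures that the bound is uniform in $V$, so that passing to the limit $V\to+\infty$ yields finiteness of both integrals in the statement.

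The main obstacle is this last matching of rates: a naive accounting of red-shift decay versus blue-shift growth would only require $\kappa_+>-\kappa_-$, whereas the factor of two in our hypothesis reflects the fact that controlling the $N_b$-flux (which carries an $\Omega^{-2}$ weight degenerating at ${\cal CH}^+$) in terms of $T$-type energies costs a factor of two, exactly as in the Reissner--Nordstr\"om and Kerr analyses of~\cite{Sbierski}. I would handle this by a pigeonhole in $v$ together with the axisymmetric Killing field $\partial_{\phi_\star}$, used to absorb the first-order $\partial_{\phi_\star}\psi$ terms appearing in the error $\overline{\overline{O}}(r-r_-)$ of the covariant derivative of $Y$; the time-translation invariance of $Y$ established in Lemma~\ref{Y} ensures that all these manipulations commute with the Killing symmetries we need.
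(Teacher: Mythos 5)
Your three-region scheme (red-shift, no-shift via Sbierski's Lemma~4.5.6, blue-shift) is the same as the paper's, but there is a genuine gap in the rate accounting, and it sits exactly where the factor of two in the hypothesis $2\kappa_+>-\kappa_-$ is decided. A plain Gr\"onwall argument from \eqref{Nr} yields decay of the $N_r$-flux at rate $2\kappa_+(1-\delta)$, as you state; on the other side, the constant $2\kappa_-(1+3\delta)$ in \eqref{blue} forces the weight placed on $N_b$ (equivalently, the Gr\"onwall exponent for the $N_b$-flux toward the Cauchy horizon) to grow at rate $2|\kappa_-|(1+3\delta)$ along the tail, not $|\kappa_-|(1+3\delta)$ as you claim. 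With both rates computed correctly your matching condition becomes $2\kappa_+(1-\delta)>-2\kappa_-(1+3\delta)$, i.e.\ $\kappa_+>-\kappa_-$ in the limit $\delta\to0$ --- a strictly stronger hypothesis than the one in the statement. Your two mis-stated rates happen to cancel and reproduce the correct-looking inequality $2\kappa_+(1-\delta)>-\kappa_-(1+3\delta)$, but the argument as described does not establish the theorem at the stated threshold.

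The missing ingredient lives in the red-shift region, not (as your closing paragraph suggests) in the $\Omega^{-2}$ normalization of the $N_b$-flux, and it is not recovered by a pigeonhole in $v$ or by commuting with $\partial_{\phi_\star}$. In the paper's step (a) the Gr\"onwall gain $e^{2\kappa v}$ (with $\kappa<\kappa_+(1-\delta)$, implemented as the weighted multiplier $e^{2\kappa v}N_r$) is multiplied by a second, independent exponential gain coming from the data: since $\psi\in{\cal F}$ is regular on ${\cal H}^+$ in the coordinate $\uh$ and $\md\uh=e^{2\kappa_+u}\md u$, the $N_r$-flux through $\underline{\cal C}_{v_0}\cap\{0\le\tuh\le\uh\}$ is bounded by $C\uh=Ce^{2\kappa_+u(\uh)}$; that is, the initial flux localized to $\{u\le U\}$ is already $O(e^{2\kappa_+U})$ before any propagation. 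Combining the two gains gives tail decay at rate $2\kappa+2\kappa_+\approx 4\kappa_+$ on $\Sigma_{(r_\star)_0}$, hence on $\Sigma_{(r_\star)_1}$ after the no-shift step, and it is this $4\kappa$ that is fed into the exponential lemma \eqref{lemmaExp} against the blue-shift weight $e^{-2\overline{\kappa}u}$ to produce precisely $2\kappa>-\overline{\kappa}$, i.e.\ $2\kappa_+>-\kappa_-$. Without quantifying the localized initial flux in the regular horizon coordinate, your scheme cannot reach the stated threshold. (A smaller point: the $u$--$v$-swapped analogue of Lemma~\ref{Y} you propose is unnecessary; the single inequality \eqref{blue}, applied to the weighted field $e^{-2\overline{\kappa}u}N_b$, already closes the blue-shift step, and the angular error terms in $\overline{\overline{O}}(r-r_-)$ are absorbed simply by taking $r$ close to $r_-$ and using $\gamma^{\phi_\star\phi_\star}(\partial_{\phi_\star}\psi)^2\le|\nabb\psi|^2_\gamma$, with no commutation required.)
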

\begin{proof}
	We sketch the proof and refer to~\cite{Sbierski} for further details.
	
\noindent {\bf (a) Estimates in the red-shift region.}
For $\kappa<\kappa_+$, define
$
\underline{N}_r=e^{2\kappa v}N_r
$. 
Choose the $\delta$ in~\eqref{Nr} such that $\kappa<\kappa_+(1-\delta)$.
Since $K^{\underline{N}_r}\geq 0$, the Divergence Theorem implies that we have
\begin{eqnarray*}
	\int\limits_{\stackrel{\Sigma_{(r_\star)_0}}{{\mbox{\tiny $[0,\uh]\cap[v_0,v_1]$}}}} J^{\underline{N}_r}_\mu
	n^\mu_{\Sigma_{(r_\star)_0}}\dV_{\Sigma_{(r_\star)_0}}
		&\leq&
		\int\limits_{\stackrel{\underline{\cal C}_{v_0}}{{\mbox{\tiny $[0,\uh]$}}}}
		J^{\underline{N}_r}_\mu
		n^\mu_{\underline{\cal C}_{v_0}}\dV_{\underline{\cal C}_{v_0}}
=e^{2\kappa v_0}\int\limits_{\stackrel{\underline{\cal C}_{v_0}}
			{{\mbox{\tiny $[0,\uh]$}}}}
		J^{N_r}_\mu 
		n^\mu_{\underline{\cal C}_{v_0}}\dV_{\underline{\cal C}_{v_0}}.
\end{eqnarray*}
The left-hand side can be bounded below by
$$
\int\limits_{\stackrel{\Sigma_{(r_\star)_0}}{{\mbox{\tiny $[0,\uh]\cap[v_0,v_1]$}}}}
J^{\underline{N}_r}_\mu
n^\mu_{\Sigma_{(r_\star)_0}}\dV_{\Sigma_{(r_\star)_0}}\geq e^{2\kappa (r_\star)_0}e^{-2\kappa u(\uh)}
\int\limits_{\stackrel{\Sigma_{(r_\star)_0}}{{\mbox{\tiny $[0,\uh]\cap[v_0,v_1]$}}}}
J^{N_r}_\mu
n^\mu_{\Sigma_{(r_\star)_0}}\dV_{\Sigma_{(r_\star)_0}},
$$
where $u(\uh)$ denotes the $u$ corresponding to $\uh$.
As $\psi$ is a regular function on ${\cal H}^+$, we have
\begin{eqnarray*}
&&\int\limits_{\stackrel{\underline{\cal C}_{v_0}}{{\mbox{\tiny $[0,\uh]$}}}}J^{N_r}_\mu n^\mu_{\underline{\cal C}_{v_0}}\dV_{\underline{\cal C}_{v_0}}\\
&&\qquad\qquad\leq C
\int_0^{\uh}\left(\left(\frac{(\partial_{\tuh}\psi)^2}{\Omegafh}+\frac{1}{2}|\nabb\psi|^2_\gamma\right)\Omegasqh\frac{L\sin\theta}{\Xi}
\right)(\,\cdot\,,v_0,\,\cdot\,,\,\cdot\,)\,\md\tuh\md\theta_\star\md\phi_\star\\
&&\qquad\qquad\leq C\uh=Ce^{2\kappa_+u(\uh)}.
\end{eqnarray*}
We conclude that
$$
\int\limits_{\stackrel{\Sigma_{(r_\star)_0}}{{\mbox{\tiny $[0,\uh]\cap[v_0,v_1]$}}}}
J^{N_r}_\mu
n^\mu_{\Sigma_{(r_\star)_0}}\dV_{\Sigma_{(r_\star)_0}}\leq Ce^{4\kappa u(\uh)}.
$$

\noindent {\bf (b) Estimates in the no-shift region.}
We recall~\cite[Lemma~4.5.6]{Sbierski}:
	Given $(r_\star)_1>(r_\star)_0$ and a smooth future directed timelike
	time invariant vector field $N$, there exists a constant $C>0$ such that
	\be\label{no-shift-region}
	\int\limits_{\stackrel{\Sigma_{(r_\star)_1}}{{\mbox{\tiny $[0,\uh]$}}}}J^N_\mu n^\mu_{\Sigma_{(r_\star)_1}}
	\dV_{\Sigma_{(r_\star)_1}}
			\leq C
	\int\limits_{\stackrel{\Sigma_{(r_\star)_0}}{{\mbox{\tiny $[0,\uh]$}}}}
	J^N_\mu n^\mu_{\Sigma_{(r_\star)_0}}
	\dV_{\Sigma_{(r_\star)_0}}
	\ee
	holds for all solutions of the wave equation.
	
\noindent {\bf (c) Estimates in the blue-shift region.}	
We also recall~\cite[Lemma~4.5]{CF}:
	Let $f:[t_0,\infty[\to\bbR$ and assume that for some $\alpha_1, C>0$, and for all
	$t\geq t_0$,
	$$
	\int_t^\infty f(s)\,ds\leq Ce^{-\alpha_1 t}.
	$$
	Then, for all $0<\alpha_2<\alpha_1$ and $t\geq t_0$, we have
	\be\label{lemmaExp}
	\int_t^\infty e^{\alpha_2s} f(s)\,ds\leq Ce^{-(\alpha_1-\alpha_2)t}.
	\ee
By assumption $2\kappa_+>-\kappa_-$. If we choose $\kappa<\kappa_+$ 
sufficiently close to $\kappa_+$ and
and $\overline{\kappa}<\kappa_-$ sufficiently close to $\kappa_-$, then
$2\kappa>-\overline{\kappa}$.
Using~\eqref{no-shift-region}, \eqref{lemmaExp} and the fact that
$$\int\limits_{\stackrel{\Sigma_{(r_\star)_1}}{{\mbox{\tiny $(-\infty,u]$}}}}
J^{N_r}_\mu n^\mu_{\Sigma_{(r_\star)_1}}
\dV_{\Sigma_{(r_\star)_1}}
\quad\mbox{and}\quad
\int\limits_{\stackrel{\Sigma_{(r_\star)_1}}{{\mbox{\tiny $(-\infty,u]$}}}}
J^{N_b}_\mu n^\mu_{\Sigma_{(r_\star)_1}}
\dV_{\Sigma_{(r_\star)_1}}
$$
are comparable, we conclude that
$$
\int\limits_{\stackrel{\Sigma_{(r_\star)_1}}{{\mbox{\tiny $(-\infty,u_0]\cap\left(-\infty,\left(\vch\right)_0\right]$}}}}
	e^{-2\overline{\kappa} u}J^{N_b}_\mu
n^\mu_{\Sigma_{(r_\star)_1}}\dV_{\Sigma_{(r_\star)_1}}<+\infty.
$$
Define
$
\underline{N}_b=e^{-2\overline{\kappa}u}N_b
$. 
Choose the $\delta$ in~\eqref{blue} such that $\overline{\kappa}<\kappa_-(1+3\delta)$.
Since $K^{\underline{N}_b}\geq 0$, the Divergence Theorem implies that
\begin{eqnarray*}
	&&\int\limits_{\stackrel{\Sigma_{(r_\star)_2}}{{\mbox{\tiny $(-\infty,u_0]\cap\left(-\infty,\left(\vch\right)_0\right]$}}}}
	e^{-2\overline{\kappa} u}J^{N_b}_\mu
	n^\mu_{\Sigma_{(r_\star)_2}}\dV_{\Sigma_{(r_\star)_2}}+
	\int\limits_{\stackrel{{\cal C}_{u_0}}
		{{\mbox{\tiny $(r_\star)_1\!\!-\!u_0\!<\!v\left(\vch\right)\!<\!(r_\star)_2\!\!-\!u_0$}}}}
	e^{-2\overline{\kappa} u_0}
	J^{N_b}_\mu
	n^\mu_{{\cal C}_{u_0}}\dV_{{{\cal C}_{u_0}}}
	\\
	&&\qquad\qquad\qquad\leq
	\int\limits_{\stackrel{\Sigma_{(r_\star)_1}}{{\mbox{\tiny $(-\infty,u_0]\cap\left(-\infty,\left(\vch\right)_0\right]$}}}}
		e^{-2\overline{\kappa} u}J^{N_b}_\mu 
	n^\mu_{\Sigma_{(r_\star)_1}}\dV_{\Sigma_{(r_\star)_1}}.
\end{eqnarray*}
This finishes the proof.
\end{proof}
This theorem shows that for KNdS black holes, with surface gravities satisfying $2\kappa_+>-\kappa_-$,
the energy of a wave 
with compact support on ${\cal H}^+$ has finite energy
along a null hypersurface intersecting the Cauchy horizon.
This suggests that there exists a set of 
 of black hole parameters for which compactly supported perturbations do not 
 lead to  mass inflation and to instability of the Cauchy horizon. It is one more step towards understanding  the conjectured instability 
 of black hole interiors in the context of KNdS spacetimes.

\appendix

\section{The Christoffel symbols}\label{Chris}

\subsection{$(u,v,\theta_\star,\phi_\star)$ coordinates}

Using the notation
$$
\bthetab:=\bphi\gamma_{\theta_\star\phi_\star},\qquad
\bphib:=\bphi\gamma_{\phi_\star\phi_\star},\qquad
\normb:=(\bphi)^2\gamma_{\phi_\star\phi_\star}=\bphi\bphib,
$$
we write the Christoffel symbols of the metric $g$. If we use coordinates
$(u,v,\theta_\star,\phi_\star)$, the values of both $\sigma$ and $\varsigma$
are equal to~$1$. We have that
\begin{eqnarray*}
	\Gamma_{uu}^u&=&\varsigma\frac{\partial_{r_\star}(\Omega^2)}{\Omega^2},\\
	\Gamma_{uu}^v&=& \Gamma_{uu}^{\theta_\star}\ =\ \Gamma_{uu}^{\phi_\star}\ =\ 0;
\end{eqnarray*}

\begin{eqnarray*}
	\Gamma_{uv}^u&=&-\varsigma\frac{\bphib}{4\Omega^2}
	\partial_{r_\star}\bphi,\\
	\Gamma_{uv}^v&=&0,\\
	\Gamma_{uv}^{\theta_\star}
	&=&-\varsigma\frac{\gamma^{\theta_\star\theta_\star}\bphi}{2}
	\partial_{r_\star}\gamma_{\theta_\star\phi_\star}
	-\varsigma\frac{\gamma^{\theta_\star\phi_\star}\bphi}{2}
	\partial_{r_\star}\gamma_{\phi_\star\phi_\star}+
	\gamma^{\theta_\star\theta_\star}\partial_{\theta_\star}(\Omega^2),\\
	\Gamma_{uv}^{\phi_\star}
	&=&-\varsigma\frac{\gamma^{\theta_\star\phi_\star}\bphi}{2}
	\partial_{r_\star}\gamma_{\theta_\star\phi_\star}
	-\varsigma\frac{\gamma^{\phi_\star\phi_\star}\bphi}{2}
	\partial_{r_\star}\gamma_{\phi_\star\phi_\star}+
	\gamma^{\theta_\star\phi_\star}\partial_{\theta_\star}(\Omega^2)
	-\,\frac{\varsigma}{2}\partial_{r_\star}\bphi;
\end{eqnarray*}	

\begin{eqnarray*}
	\Gamma_{vv}^u&=&\sigma\frac{(\bphi)^2}{4\Omega^2}\partial_{r_\star}
	\gamma_{\phi_\star\phi_\star},\\
	\Gamma_{vv}^v&=&\varsigma\frac{1}{4\Omega^2}
	\partial_{r_\star}(\normb)
	+\sigma\frac{1}{\Omega^2}\partial_{r_\star}(\Omega^2),\\
	\Gamma_{vv}^{\theta_\star}&=&-\sigma\gamma^{\theta_\star\theta_\star}\bphi
	\partial_{r_\star}\gamma_{\theta_\star\phi_\star}
	-\sigma\gamma^{\theta_\star\phi_\star}\bphi
	\partial_{r_\star}\gamma_{\phi_\star\phi_\star}
	-\,\frac{\gamma^{\theta_\star\theta_\star}}{2}
	\partial_{\theta_\star}(\normb),\\
	\Gamma_{vv}^{\phi_\star}&=&-\sigma\gamma^{\theta_\star\phi_\star}\bphi
	\partial_{r_\star}\gamma_{\theta_\star\phi_\star}
	-\sigma\gamma^{\phi_\star\phi_\star}\bphi
	\partial_{r_\star}\gamma_{\phi_\star\phi_\star}
	-\,\frac{\gamma^{\theta_\star\phi_\star}}{2}
	\partial_{\theta_\star}(\normb)\\
	&&+\sigma\frac{\bphi}{\Omega^2}\partial_{r_\star}(\Omega^2)+\varsigma
	\frac{\bphi}{4\Omega^2}
	\partial_{r_\star}(\normb)-\sigma\partial_{r_\star}\bphi;
\end{eqnarray*}

\begin{eqnarray*}
	\Gamma_{u\theta_\star}^u&=&\varsigma	\frac{\gamma_{\theta_\star\phi_\star}}{4\Omega^2}
	\partial_{r_\star}\bphi
	+\frac{1}{2\Omega^2}
	\partial_{\theta_\star}(\Omega^2),\\
	\Gamma_{u\theta_\star}^v&=&0,\\
	\Gamma_{u\theta_\star}^{\theta_\star}&=&
	\varsigma\frac{\gamma^{\theta_\star\theta_\star}}{2}
	\partial_{r_\star}\gamma_{\theta_\star\theta_\star}
	+\varsigma\frac{\gamma^{\theta_\star\phi_\star}}{2}
	\partial_{r_\star}\gamma_{\theta_\star\phi_\star},\\
	\Gamma_{u\theta_\star}^{\phi_\star}&=&
	\varsigma\frac{\gamma^{\theta_\star\phi_\star}}{2}
	\partial_{r_\star}\gamma_{\theta_\star\theta_\star}
	+\varsigma\frac{\gamma^{\phi_\star\phi_\star}}{2}
	\partial_{r_\star}\gamma_{\theta_\star\phi_\star};
\end{eqnarray*}

\begin{eqnarray}
	\Gamma_{u\phi_\star}^u&=&\varsigma\frac{\gamma_{\phi_\star\phi_\star}}{4\Omega^2}
	\partial_{r_\star}\bphi,\nonumber\\
	\Gamma_{u\phi_\star}^v&=&0,\nonumber\\
	\Gamma_{u\phi_\star}^{\theta_\star}&=&
	\varsigma\frac{\gamma^{\theta_\star\theta_\star}}{2}
	\partial_{r_\star}\gamma_{\theta_\star\phi_\star}
	+\varsigma\frac{\gamma^{\theta_\star\phi_\star}}{2}
	\partial_{r_\star}\gamma_{\phi_\star\phi_\star},\label{byke}\\
	\Gamma_{u\phi_\star}^{\phi_\star}&=&
	\varsigma\frac{\gamma^{\theta_\star\phi_\star}}{2}
	\partial_{r_\star}\gamma_{\theta_\star\phi_\star}
	+\varsigma\frac{\gamma^{\phi_\star\phi_\star}}{2}
	\partial_{r_\star}\gamma_{\phi_\star\phi_\star};\nonumber
\end{eqnarray}	

\begin{eqnarray}
\Gamma_{v\theta_\star}^u&=&-\sigma\frac{\bphi}{4\Omega^2}
\partial_{r_\star}\gamma_{\theta_\star\phi_\star}
-\,\frac{\bphib}{4\Omega^2}
\partial_{\theta_\star}\bphi,\nonumber\\
\Gamma_{v\theta_\star}^v&=&-\varsigma\frac{1}{4\Omega^2}
\partial_{r_\star}\bthetab
+\frac{1}{2\Omega^2}
\partial_{\theta_\star}(\Omega^2),\nonumber\\
\Gamma_{v\theta_\star}^{\theta_\star}&=&\sigma
\frac{\gamma^{\theta_\star\theta_\star}}{2}
\partial_{r_\star}\gamma_{\theta_\star\theta_\star}
+\sigma\frac{\gamma^{\theta_\star\phi_\star}}{2}
\partial_{r_\star}\gamma_{\theta_\star\phi_\star}
-\,\frac{\gamma^{\theta_\star\phi_\star}}{2}
\partial_{\theta_\star}\bphib,\nonumber\\
\Gamma_{v\theta_\star}^{\phi_\star}&=&
-\varsigma\frac{\bphi}{4\Omega^2}\partial_{r_\star}\bthetab
-\,
\frac{\gamma^{\phi_\star\phi_\star}}{2}
\partial_{\theta_\star}\bphib\label{blow-up-1}\\
&&+\sigma\frac{\gamma^{\theta_\star\phi_\star}}{2}
\partial_{r_\star}\gamma_{\theta_\star\theta_\star}
+\sigma\frac{\gamma^{\phi_\star\phi_\star}}{2}
\partial_{r_\star}\gamma_{\theta_\star\phi_\star}
+\frac{\bphi}{2\Omega^2}\partial_{\theta_\star}(\Omega^2);\nonumber
\end{eqnarray}

\begin{eqnarray*}
	\Gamma_{v\phi_\star}^u&=&-\sigma\frac{\bphi}{4\Omega^2}
	\partial_{r_\star}\gamma_{\phi_\star\phi_\star},\\
	\Gamma_{v\phi_\star}^v&=&-\varsigma\frac{1}{4\Omega^2}
	\partial_{r_\star}\bphib,\\
	\Gamma_{v\phi_\star}^{\theta_\star}&=&\sigma
	\frac{\gamma^{\theta_\star\theta_\star}}{2}
	\partial_{r_\star}\gamma_{\theta_\star\phi_\star}
	+\sigma\frac{\gamma^{\theta_\star\phi_\star}}{2}
	\partial_{r_\star}\gamma_{\phi_\star\phi_\star}
	+\frac{\gamma^{\theta_\star\theta_\star}}{2}\partial_{\theta_\star}\bphib,\\
	\Gamma_{v\phi_\star}^{\phi_\star}&=&-\varsigma
	\frac{\bphi}{4\Omega^2}\partial_{r_\star}\bphib
	+\frac{\gamma^{\theta_\star\phi_\star}}{2}
	\partial_{\theta_\star}\bphib\\
	&&+\sigma\frac{\gamma^{\theta_\star\phi_\star}}{2}
	\partial_{r_\star}\gamma_{\theta_\star\phi_\star}
	+\sigma\frac{\gamma^{\phi_\star\phi_\star}}{2}
	\partial_{r_\star}\gamma_{\phi_\star\phi_\star};
\end{eqnarray*}

\begin{eqnarray*}
	\Gamma_{\theta_\star\theta_\star}^u&=&\frac{\gamma_{\theta_\star\phi_\star}}{2\Omega^2}
	\partial_{\theta_\star}\bphi
	+\sigma\frac{1}{4\Omega^2}
	\partial_{r_\star}\gamma_{\theta_\star\theta\star},\\
	\Gamma_{\theta_\star\theta_\star}^v&=&\varsigma\frac{1}{4\Omega^2}
	\partial_{r_\star}\gamma_{\theta_\star\theta_\star},\\
	\Gamma_{\theta_\star\theta_\star}^{\theta_\star}&=&
	\frac{\gamma^{\theta_\star\theta_\star}}{2}
	\partial_{\theta_\star}\gamma_{\theta_\star\theta_\star}
	+\gamma^{\theta_\star\phi_\star}
	\partial_{\theta_\star}\gamma_{\theta_\star\phi_\star},\\
	\Gamma_{\theta_\star\theta_\star}^{\phi_\star}&=&
	\varsigma\frac{\bphi}{4\Omega^2}\partial_{r_\star}\gamma_{\theta_\star\theta_\star}
	+
	\frac{\gamma^{\theta_\star\phi_\star}}{2}
	\partial_{\theta_\star}\gamma_{\theta_\star\theta_\star}
	+\gamma^{\phi_\star\phi_\star}
	\partial_{\theta_\star}\gamma_{\theta_\star\phi_\star};
\end{eqnarray*}

\begin{eqnarray}
\Gamma_{\theta_\star\phi_\star}^u&=&\sigma
\frac{1}{4\Omega^2}
\partial_{r_\star}\gamma_{\theta_\star\phi_\star}
+\frac{\gamma_{\phi_\star\phi_\star}}{4\Omega^2}
\partial_{\theta_\star}\bphi,\nonumber\\
\Gamma_{\theta_\star\phi_\star}^v&=&\varsigma\frac{1}{4\Omega^2}
\partial_{r_\star}\gamma_{\theta_\star\phi_\star},\nonumber\\
\Gamma_{\theta_\star\phi_\star}^{\theta_\star}&=&
\frac{\gamma^{\theta_\star\phi_\star}}{2}
\partial_{\theta_\star}\gamma_{\phi_\star\phi_\star},\nonumber\\
\Gamma_{\theta_\star\phi_\star}^{\phi_\star}&=&\varsigma
\frac{\bphi}{4\Omega^2}\partial_{r_\star}\gamma_{\theta_\star\phi_\star}
+
\frac{\gamma^{\phi_\star\phi_\star}}{2}
\partial_{\theta_\star}\gamma_{\phi_\star\phi_\star};\label{blow-up-2}
\end{eqnarray}

\begin{eqnarray*}
	\Gamma_{\phi_\star\phi_\star}^u&=&\sigma
	\frac{1}{4\Omega^2}
	\partial_{r_\star}\gamma_{\phi_\star\phi_\star},\\
	\Gamma_{\phi_\star\phi_\star}^v&=&\varsigma\frac{1}{4\Omega^2}
	\partial_{r_\star}\gamma_{\phi_\star\phi_\star},\\
	\Gamma_{\phi_\star\phi_\star}^{\theta_\star}&=&
	-\,\frac{\gamma^{\theta_\star\theta_\star}}{2}
	\partial_{\theta_\star}\gamma_{\phi_\star\phi_\star},\\
	\Gamma_{\phi_\star\phi_\star}^{\phi_\star}&=&\varsigma
	\frac{\bphi}{4\Omega^2}\partial_{r_\star}\gamma_{\phi_\star\phi_\star}
	-\,
	\frac{\gamma^{\theta_\star\phi_\star}}{2}
	\partial_{\theta_\star}\gamma_{\phi_\star\phi_\star}.
\end{eqnarray*}

Recall that $\gamma^{\phi_\star\phi_\star}$ behaves like
$\frac{1}{\sin^2\theta_\star}$.
In the Christoffel symbols above, this metric coefficient appears multiplied 
by
$$
\partial_{r_\star}\bphib,\ \ \ \partial_{\theta_\star}\bphib,\ \ \ 
\partial_{r_\star}\gamma_{\theta_\star\phi_\star},\ \ \ \partial_{\theta_\star}\gamma_{\theta_\star\phi_\star},\ \ \ 
\partial_{r_\star}\gamma_{\phi_\star\phi_\star},\ \ \ \partial_{\theta_\star}\gamma_{\phi_\star\phi_\star}.
$$
\begin{lem}\label{behave}
	The derivatives\/
	$
	\partial_{r_\star}\bphib$,
	$\partial_{r_\star}\gamma_{\theta_\star\phi_\star}$, 
	$\partial_{\theta_\star}\gamma_{\theta_\star\phi_\star}$,
	$\partial_{r_\star}\gamma_{\phi_\star\phi_\star}$ 
	behave like\/ $\sin^2\theta_\star$, and the derivatives
	$
	\partial_{\theta_\star}\bphib$ and\/
	$\partial_{\theta_\star}\gamma_{\phi_\star\phi_\star}$
	behave like\/ $\sin(2\theta_\star)$.
\end{lem}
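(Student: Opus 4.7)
The plan is to exploit the structural results of Lemma~\ref{smooth-functions} that isolate the vanishing factors of each metric coefficient at $\theta_\star=0$, and then to differentiate using the convention that ``smooth'' means having derivatives of all orders with respect to $\partial_{r_\star}$ and $\frac{1}{\sin(2\theta_\star)}\partial_{\theta_\star}$. Concretely, I will write
$$
\gamma_{\phi_\star\phi_\star}=\sin^2\theta_\star\cdot S_1,\qquad
\gamma_{\theta_\star\phi_\star}=\sin^2\theta_\star\sin(2\theta_\star)\cdot S_2,\qquad
\bphib=\bphi\,\gamma_{\phi_\star\phi_\star}=\sin^2\theta_\star\cdot S_3,
$$
where $S_1,S_2,S_3$ are smooth functions of $(r_\star,\theta_\star)$ (in the sense above). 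The factorization of $S_1$ and $S_2$ comes directly from Lemma~\ref{smooth-functions}; the factorization of $\bphib$ follows by multiplying $S_1$ by the smooth function $\bphi$ (recall $b^{\phi_\star}=2B$ is smooth by Lemma~\ref{smooth-functions}).

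Next I would apply $\partial_{r_\star}$ to each factorization. Since $\partial_{r_\star}$ commutes with the purely trigonometric prefactors in $\theta_\star$, it only hits the smooth factor, and smoothness implies $\partial_{r_\star}S_i$ is bounded. Hence
$$
\partial_{r_\star}\bphib,\ \ \partial_{r_\star}\gamma_{\phi_\star\phi_\star}=O(\sin^2\theta_\star),
\qquad \partial_{r_\star}\gamma_{\theta_\star\phi_\star}=O\bigl(\sin^2\theta_\star\sin(2\theta_\star)\bigr)=O(\sin^2\theta_\star).
$$

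For the $\partial_{\theta_\star}$ derivatives, the key observation is that for any smooth $S$ we have $\partial_{\theta_\star}S=\sin(2\theta_\star)\cdot\bigl(\frac{1}{\sin(2\theta_\star)}\partial_{\theta_\star}S\bigr)$, and the quantity in parentheses is again smooth, hence bounded. Combined with the elementary identities
$$
\partial_{\theta_\star}(\sin^2\theta_\star)=\sin(2\theta_\star),\qquad
\partial_{\theta_\star}\bigl(\sin^2\theta_\star\sin(2\theta_\star)\bigr)=\sin^2(2\theta_\star)+2\sin^2\theta_\star\cos(2\theta_\star),
$$
and the estimate $\sin^2(2\theta_\star)=O(\sin^2\theta_\star)$ near $\theta_\star=0$, the product rule yields
$$
\partial_{\theta_\star}\bphib,\ \ \partial_{\theta_\star}\gamma_{\phi_\star\phi_\star}=O(\sin(2\theta_\star)),
\qquad\partial_{\theta_\star}\gamma_{\theta_\star\phi_\star}=O(\sin^2\theta_\star).
$$
The corresponding statements at $\theta_\star=\pi$ follow by the reflection symmetry~\eqref{mirror}.

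The only mild subtlety — and what I would treat as the main (though minor) bookkeeping obstacle — is keeping track of the two competing types of vanishing, $\sin^2\theta_\star$ versus $\sin(2\theta_\star)$, after the product rule. In particular one must notice that $\sin^2(2\theta_\star)=4\sin^2\theta_\star\cos^2\theta_\star$ contributes to the $\sin^2\theta_\star$ order (not to a $\sin(2\theta_\star)$ order) in $\partial_{\theta_\star}\gamma_{\theta_\star\phi_\star}$, which is precisely what is needed so that when these derivatives are multiplied by $\gamma^{\phi_\star\phi_\star}\sim 1/\sin^2\theta_\star$ in the Christoffel symbols~\eqref{blow-up-1} and~\eqref{blow-up-2}, the resulting products remain bounded (apart from the single genuine blow-up $\GGamma^{\phi_\star}_{\theta_\star\phi_\star}\sim 1/\sin\theta_\star$ already identified).
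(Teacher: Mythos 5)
Your proof is correct and follows essentially the same strategy as the paper's: factor each coefficient as an explicit vanishing prefactor times a function that is smooth in the paper's sense (bounded derivatives with respect to $\partial_{r_\star}$ and $\frac{1}{\sin(2\theta_\star)}\partial_{\theta_\star}$), and then apply the product rule, noting that $\sin^2(2\theta_\star)=O(\sin^2\theta_\star)$. The only cosmetic difference is that for $\gamma_{\phi_\star\phi_\star}$ and $\bphib$ the paper factors out $\sin^2\theta$ directly from~\eqref{gamma-ff} and differentiates through $\theta(r_\star,\theta_\star)$ using~\eqref{bound} and~\eqref{deriv_bounds}, whereas you factor out $\sin^2\theta_\star$ via Lemma~\ref{smooth-functions}; the two routes rest on the same underlying estimates.
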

Lemma~\ref{behave} implies 
\begin{cor}\label{mountain}
	The Christoffel symbols of the metric $g$ are all bounded,
	except for
	$$
	\Gamma_{v\theta_\star}^{\phi_\star}\qquad\mbox{and}\qquad
	\Gamma_{\theta_\star\phi_\star}^{\phi_\star},
	$$
	which blow up like $\frac{1}{\sin\theta_\star}$.
\end{cor}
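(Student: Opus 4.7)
The plan is to derive Corollary~\ref{mountain} by systematic inspection of the Christoffel symbols listed in Appendix~\ref{Chris}, using Lemma~\ref{behave} together with the behavior of the inverse metric coefficients near $\theta_\star=0$.

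First I would catalogue the blow-up behavior of each factor appearing in the formulas. From the explicit expression of $g^{-1}$ and Lemma~\ref{smooth-functions}, $\gamma^{\theta_\star\theta_\star}$ is smooth and $\gamma^{\theta_\star\phi_\star}$ is bounded (in fact proportional to $\sin(2\theta_\star)$, since $\frac{1}{\sin(2\theta_\star)}\partial_{\theta_\star}h$ is smooth), while $\gamma^{\phi_\star\phi_\star}$ blows up precisely like $\sin^{-2}\theta_\star$. All remaining ingredients that enter the Christoffel symbols — $\Omega^2$, $\partial_{r_\star}(\Omega^2)$, $\partial_{\theta_\star}(\Omega^2)$, $\bphi$, $\partial_{r_\star}\bphi$, $\partial_{\theta_\star}\bphi$, $\gamma_{\theta_\star\theta_\star}$ together with its derivatives, and $\bthetab$ — are bounded, either directly by Lemma~\ref{smooth-functions} or because $\bthetab=\bphi\gamma_{\theta_\star\phi_\star}$ inherits the factor $\sin^2\theta_\star\sin(2\theta_\star)$ from $\gamma_{\theta_\star\phi_\star}$.

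Given this, the corollary reduces to the observation that a Christoffel symbol is bounded iff every occurrence of $\gamma^{\phi_\star\phi_\star}$ is multiplied by a quantity vanishing at least like $\sin^2\theta_\star$. By Lemma~\ref{behave}, the derivatives $\partial_{r_\star}\bphib$, $\partial_{r_\star}\gamma_{\theta_\star\phi_\star}$, $\partial_{\theta_\star}\gamma_{\theta_\star\phi_\star}$ and $\partial_{r_\star}\gamma_{\phi_\star\phi_\star}$ all vanish to order $\sin^2\theta_\star$, and wherever they multiply $\gamma^{\phi_\star\phi_\star}$ the product remains bounded. The only offenders are $\partial_{\theta_\star}\bphib$ and $\partial_{\theta_\star}\gamma_{\phi_\star\phi_\star}$, which vanish only to order $\sin(2\theta_\star)\sim\sin\theta_\star$; multiplied by $\gamma^{\phi_\star\phi_\star}$ they produce a divergence of order $\sin^{-1}\theta_\star$. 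These bad combinations occur precisely in $\Gamma_{v\theta_\star}^{\phi_\star}$, through the summand $-\frac{\gamma^{\phi_\star\phi_\star}}{2}\partial_{\theta_\star}\bphib$ in~\eqref{blow-up-1}, and in $\Gamma_{\theta_\star\phi_\star}^{\phi_\star}$, through the summand $\frac{\gamma^{\phi_\star\phi_\star}}{2}\partial_{\theta_\star}\gamma_{\phi_\star\phi_\star}$ in~\eqref{blow-up-2}; all other summands in these two symbols involve only $\gamma^{\theta_\star\theta_\star}$ or $\gamma^{\theta_\star\phi_\star}$ (both bounded) times smooth functions, so the asserted $\sin^{-1}\theta_\star$ blow-up is not accidentally cancelled.

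The main obstacle, and essentially the only substantive step, lies in Lemma~\ref{behave} itself: once its precise vanishing orders are in hand, Corollary~\ref{mountain} follows by bookkeeping. I would conclude by remarking that the same inspection shows that the ``exceptional'' behavior of $\Gamma_{v\theta_\star}^{\phi_\star}$ and $\Gamma_{\theta_\star\phi_\star}^{\phi_\star}$ is of the same nature as in the round metric on $\mathbb{S}^2$, consistently with the qualitative statement made in the remark preceding the corollary.
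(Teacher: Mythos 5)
Your proposal is correct and follows essentially the same route as the paper: the paper likewise observes that $\gamma^{\phi_\star\phi_\star}\sim\sin^{-2}\theta_\star$ appears only multiplied by the six derivatives catalogued before Lemma~\ref{behave}, invokes that lemma for their vanishing orders, and isolates the two offending summands in~\eqref{blow-up-1} and~\eqref{blow-up-2}. The only cosmetic difference is your closing sentence about the other summands of those two symbols (one of them does contain $\gamma^{\phi_\star\phi_\star}$ times a $\sin^2\theta_\star$-order derivative, but as you already noted such products are bounded), so nothing of substance changes.
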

In the formulas above,
the terms that blow up appear in~\eqref{blow-up-1} and~\eqref{blow-up-2}.

\begin{altproof}{\,{\rm Lemma~\ref{behave}}} Using~\eqref{gamma-ff},
	$\gamma_{\phi_\star\phi_\star}$, and thus $\bphib$, is of the form
	${\cal A}\sin^2\theta$ for a regular function ${\cal A}$. We have
	$$
	\partial_{r_\star}({\cal A}\sin^2\theta)=
	\sin^2\theta\,\partial_{r_\star}{\cal A}+{\cal A}\sin(2\theta)\frac{\partial\theta}{\partial r_\star}.
	$$
	Using~\eqref{t_r*}, we see that $\frac{\partial\theta}{\partial r_\star}$
	contains a factor $\sin(2\theta_\star)$. Moreover, 
	$$
	\frac{1}{\sin(2\theta_\star)}
	\partial_{\theta_\star}({\cal A}\sin^2\theta)=
	\sin^2\theta\left(\frac{1}{\sin(2\theta_\star)}
	\partial_{\theta_\star}{\cal A}\right)+{\cal A}\frac{\sin(2\theta)}{\sin(2\theta_\star)}\frac{\partial\theta}{\partial \theta_\star}.
	$$
	We recall that, according to~\eqref{deriv_bounds}, $\frac{\partial\theta}{\partial \theta_\star}$ is bounded.
	Thus, recalling~\eqref{bound}, $\partial_{r_\star}({\cal A}\sin^2\theta)$
	behaves like $\sin^2\theta_\star$ and $\partial_{\theta_\star}({\cal A}\sin^2\theta)$ behaves like $\sin(2\theta_\star)$.
	On the other hand, according to Lemma~\ref{smooth-functions},
	$\gamma_{\theta_\star\phi_\star}={\cal A}\sin^2\theta_\star\sin(2\theta_\star)$,
	for another smooth function ${\cal A}$.
	Therefore, $\partial_{\theta_\star}\gamma_{\theta_\star\phi_\star}$
	behaves like $\sin^2\theta_\star$.
\end{altproof}

Note that the vector fields $\frac{\partial_u}{\Omega^2}$ and
$\pvpnormal$ are geodesic:
\begin{eqnarray}
\nabla_{\frac{\partial_u}{\Omega^2}}\frac{\partial_u}{\Omega^2}&=&0,\nonumber\\
\nabla_{\pvpnormal}\pvpnormal&=&0.\label{z}
\end{eqnarray}
Since we work in the frame 
\be\label{frame}
\left(
\frac{\partial_u}{2},\pvpnormaldois,\partial_{\theta_\star},\partial_{\phi_\star}
\right)=
\left(
\frac{\partial_u}{2},\pvpdois,\partial_{\theta_\star},\partial_{\phi_\star}
\right)
,
\ee
it is also convenient to have the following covariant derivatives:
\begin{eqnarray}\label{long1a}
\nabla_{\pvpnormal}\partial_u&=&\gamma^{\theta_\star\theta_\star}
\frac{\partial_{\theta_\star}(\Omega^2)}{\Omega^2}\partial_{\theta_\star}
	+\gamma^{\theta_\star\phi_\star}
	\frac{\partial_{\theta_\star}(\Omega^2)}{\Omega^2}\partial_{\phi_\star}
		-\,\frac{\partial_{r_\star}\bphi}{2\Omega^2}\pp,\\
\label{long1b}
\nabla_{\partial_u}\pvpnormal&=&\gamma^{\theta_\star\theta_\star}
\frac{\partial_{\theta_\star}(\Omega^2)}{\Omega^2}\partial_{\theta_\star}
+\left(\gamma^{\theta_\star\phi_\star}
\frac{\partial_{\theta_\star}(\Omega^2)}{\Omega^2}
+\frac{\partial_{r_\star}\bphi}{2\Omega^2}\right)\pp
-\,\frac{\partial_{r_\star}\Omega^2}{\Omega^2}\pvpnormaldois;
\end{eqnarray}
\begin{eqnarray}
	\nabla_{\pvpnormal}\partial_{\theta_\star}
	&=&\left(
	\sigma\frac{\gamma^{\theta_\star\theta_\star}}{2\Omega^2}\partial_{r_\star}
	\gamma_{\theta_\star\theta_\star}
	+\sigma\frac{\gamma^{\theta_\star\phi_\star}}{2\Omega^2}\partial_{r_\star}
	\gamma_{\theta_\star\phi_\star}
	-\,\frac{\gamma^{\theta_\star\phi_\star}\gamma_{\phi_\star\phi_\star}}{2\Omega^2}
	\partial_{\theta_\star}\bphi
	\right)\partial_{\theta_\star}\nonumber\\
	&&+\left(
	\sigma\frac{\gamma^{\theta_\star\phi_\star}}{2\Omega^2}\partial_{r_\star}
	\gamma_{\theta_\star\theta_\star}
	+\sigma\frac{\gamma^{\phi_\star\phi_\star}}{2\Omega^2}\partial_{r_\star}
	\gamma_{\theta_\star\phi_\star}
	-\,\frac{\gamma^{\phi_\star\phi_\star}\gamma_{\phi_\star\phi_\star}}{2\Omega^2}
	\partial_{\theta_\star}\bphi
	\right)\partial_{\phi_\star}\nonumber\\
	&&+\left(
	-\,\frac{\gamma_{\theta_\star\phi_\star}}{4\Omega^2}\partial_{r_\star}\bphi
	+\frac{\partial_{\theta_\star}\Omega^2}{2\Omega^2}
	\right)\pvpnormaldois,\nonumber\\
	\nabla_{\partial_{\theta_\star}}\pvpnormal
	&=&\left(
	\sigma\frac{\gamma^{\theta_\star\theta_\star}}{2\Omega^2}\partial_{r_\star}
	\gamma_{\theta_\star\theta_\star}
	+\sigma\frac{\gamma^{\theta_\star\phi_\star}}{2\Omega^2}\partial_{r_\star}
	\gamma_{\theta_\star\phi_\star}
	-\,\frac{\gamma^{\theta_\star\phi_\star}\gamma_{\phi_\star\phi_\star}}{2\Omega^2}
	\partial_{\theta_\star}\bphi
	\right)\partial_{\theta_\star}\nonumber\\
	&&+\left(
	\sigma\frac{\gamma^{\theta_\star\phi_\star}}{2\Omega^2}\partial_{r_\star}
	\gamma_{\theta_\star\theta_\star}
	+\sigma\frac{\gamma^{\phi_\star\phi_\star}}{2\Omega^2}\partial_{r_\star}
	\gamma_{\theta_\star\phi_\star}
	+\frac{\gamma^{\theta_\star\phi_\star}\gamma_{\theta_\star\phi_\star}}{2\Omega^2}
	\partial_{\theta_\star}\bphi+\frac{1}{2\Omega^2}\partial_{\theta_\star}\bphi
	\right)\partial_{\phi_\star}\nonumber\\
	&&+\left(
	-\,\frac{\gamma_{\theta_\star\phi_\star}}{4\Omega^2}\partial_{r_\star}\bphi
	-\,\frac{\partial_{\theta_\star}\Omega^2}{2\Omega^2}
	\right)\pvpnormaldois;\label{h1}
\end{eqnarray}
\begin{eqnarray}
	\nabla_{\pvpnormal}\partial_{\phi_\star}
	&=&\nabla_{\partial_{\phi_\star}}\pvpnormal
	\nonumber\\
	&=&\left(
	\sigma\frac{\gamma^{\theta_\star\theta_\star}}{2\Omega^2}\partial_{r_\star}
	\gamma_{\theta_\star\phi_\star}
	+\sigma\frac{\gamma^{\theta_\star\phi_\star}}{2\Omega^2}\partial_{r_\star}
	\gamma_{\phi_\star\phi_\star}
	+\frac{\gamma^{\theta_\star\theta_\star}\gamma_{\phi_\star\phi_\star}}{2\Omega^2}
	\partial_{\theta_\star}\bphi
	\right)\partial_{\theta_\star}\nonumber\\
	&&+\left(
	\sigma\frac{\gamma^{\theta_\star\phi_\star}}{2\Omega^2}\partial_{r_\star}
	\gamma_{\theta_\star\phi_\star}
	+\sigma\frac{\gamma^{\phi_\star\phi_\star}}{2\Omega^2}\partial_{r_\star}
	\gamma_{\phi_\star\phi_\star}
	+\frac{\gamma^{\theta_\star\phi_\star}\gamma_{\phi_\star\phi_\star}}{2\Omega^2}
	\partial_{\theta_\star}\bphi
	\right)\partial_{\phi_\star}\nonumber\\
	&&-\left(
	\frac{\gamma_{\phi_\star\phi_\star}}{4\Omega^2}\partial_{r_\star}\bphi
	\right)\pvpnormaldois.\label{h2}
\end{eqnarray}

\subsection{$(u,\vch,\theta_\star,\phich)$ coordinates}

If we use coordinates $(u,\vch,\theta_\star,\phich)$ then, in the formulas above,
the value of $\varsigma$ continues to be~$1$, but
one has to replace $\Omega^2$ by $\Omegasqch$, $\bphi$ by $\bch$, and
the value of $\sigma$ has to be as in~\eqref{sigma}.
Indeed, the expressions $\sigma\partial_{r_\star}$ arise when taking
derivatives with respect to $\vch$. Now, using~\eqref{v-v} and~\eqref{muda}, we have
$$
\partial_{\vch}=e^{-2\kappa_-v}\partial_{\tilde{v}}=
e^{-2\kappa_-v}\left(\partial_v+\left.\bphi\right|_{r=r_-}\partial_{\phi_\star}\right)\qquad\mbox{and}\qquad \partial_v=\partial_{r_\star}+\partial_t.
$$
But the coefficients of the metric do not depend either on $\phi_\star$ 
or on $t$. So, when computing the Christoffel symbols, and
differentiating functions that depend exclusively on $r$ and $\theta$,
the derivative
$\partial_{\vch}$ may be replaced by 
$$\partial_{\vch}=\sigma\partial_{r_\star}=e^{-2\kappa_-v}\partial_{r_\star}=e^{-2\kappa_-v}\left(
\frac{\partial r}{\partial r_\star}\partial_r+\frac{\partial \theta}{\partial r_\star}\partial_\theta
\right)=
\left(e^{-2\kappa_-v}\Delta_r\right)\left(\left(\frac{1}{\Delta_r}
\frac{\partial r}{\partial r_\star}\right)\partial_r+\left(\frac{1}{\Delta_r}\frac{\partial \theta}{\partial r_\star}\right)\partial_\theta
\right)
.$$
The expressions for the innermost parenthesis are obtained from~\eqref{r_r*}
and~\eqref{t_r*}. 
Note that they are well defined on the Cauchy horizon,
notwithstanding the coordinate $r_\star$ not being defined there.
The factor $e^{-2\kappa_-v}\Delta_r$ is bounded above and
below according to~\eqref{r-r}. So, given $C_R\in\bbR$, there exist constants
$c,C>0$ such that
$$
ce^{2\kappa_-u}\leq -e^{-2\kappa_-v}\Delta_r\leq Ce^{2\kappa_-u},
$$
for $u+v\geq C_R$. 
Special care has to be taken when differentiating $\bch$ and $\Omegasqch$
with respect to $\vch$
because these functions also depend on $v$.
This occurs when calculating $\Gamma_{\vch\vch}^*$:
\begin{eqnarray*}
	\Gamma_{\vch\vch}^u&=&\sigma\frac{(\bch)^2}{4\Omegasqch}\partial_{r_\star}
	\gamma_{\phi_\star\phi_\star},\\
	\Gamma_{\vch\vch}^{\vch}
	&=&\sigma\frac{\gamma_{\phi_{\star},\phi_{\star}}\bch}{2\Omegasqch}
	\partial_{r_\star}\bphi
	+\frac{(\bch)^2}{4\Omegasqch}\partial_{r_\star}\gamma_{\phi_{\star},\phi_{\star}}
	+\sigma\left(\frac{\partial_{r_\star}(\Omega^2)}{\Omega^2}-2\kappa_-\right),\\
	\Gamma_{\vch\vch}^{\theta_\star}&=&-\sigma\gamma^{\theta_\star\theta_\star}\bch
	\partial_{r_\star}\gamma_{\theta_\star\phi_\star}
	-\sigma\gamma^{\theta_\star\phi_\star}\bch
	\partial_{r_\star}\gamma_{\phi_\star\phi_\star}
	-\,\frac{\gamma^{\theta_\star\theta_\star}}{2}
	\partial_{\theta_\star}(\normbch),\\
	\Gamma_{\vch\vch}^{\phi_\star}&=&-\sigma\gamma^{\theta_\star\phi_\star}\bch
	\partial_{r_\star}\gamma_{\theta_\star\phi_\star}
	-\sigma\gamma^{\phi_\star\phi_\star}\bch
	\partial_{r_\star}\gamma_{\phi_\star\phi_\star}
	-\,\frac{\gamma^{\theta_\star\phi_\star}}{2}
	\partial_{\theta_\star}(\normbch)\\
	&&+\sigma\left(\frac{\partial_{r_\star}(\Omega^2)}{\Omega^2}-
	2\kappa_-
	\right)\bch
	+\sigma\frac{\gamma_{\phi_{\star},\phi_{\star}}(\bch)^2}{2\Omegasqch}
	\partial_{r_\star}\bphi
	+\frac{(\bch)^3}{4\Omegasqch}\partial_{r_\star}\gamma_{\phi_{\star},\phi_{\star}}
	\\
	&&+2\kappa_-\sigma\bch-\sigma^2\partial_{r_\star}\bphi.
\end{eqnarray*}
Notice that all terms are bounded except the last two summands of 
$\Gamma_{\vch\vch}^{\phi_\star}$ that add up to
$$
2\kappa_-\sigma\bch-\sigma^2\partial_{r_\star}\bphi=
-\,\frac{\partial\bch}{\partial\vch}.
$$
We remark that no terms that blow up at the Cauchy horizon appear
when calculating $\nabla_{X_\dagger}X_\ddagger$, for $X_\dagger$
and $X_\ddagger$ elements of the frame~\eqref{frame}.
For example, we have that
$$
\Gamma_{\vch\vch}^u=\frac{(\bch)^2}{4\Omegasqch}(e^{-2\kappa_-v}\Delta_r)
\left(\left(\frac{1}{\Delta_r}
\frac{\partial r}{\partial r_\star}\right)\partial_r\gamma_{\phi_\star\phi_\star}+\left(\frac{1}{\Delta_r}\frac{\partial \theta}{\partial r_\star}\right)\partial_\theta\gamma_{\phi_\star\phi_\star}
\right).
$$

\begin{ex}\label{surface-gravity}{\rm
	As a simple example, we calculate directly the surface gravity
	of the Cauchy horizon. Recall that the surface gravity $\kappa_-$ 
	is given by
	$$
	\nabla_ZZ=\kappa_-Z,
	$$
	where $Z$ is the Killing vector field given by~\eqref{Z}. 
	On the Cauchy horizon
	$\nabla_{\partial_u}\pp=\nabla_{\pp}\partial_u=0$ apply,
	because $\Gamma_{u\phich}^u=\Gamma_{u\phich}^{\theta_\star}=
	\Gamma_{u\phich}^{\phich}=0$ on the Cauchy horizon, and $\Gamma_{u\phich}^v$ is identically zero. Obviously, 
	$$
	\partial_u\left(\bphi-\bphi|_{r=r_-}\right)=0\qquad\mbox{and}\qquad
	\pp\left(\bphi-\bphi|_{r=r_-}\right)=0
	$$
	hold
	because the vector fields $\partial_u$ and $\pp$ are tangent to
	the Cauchy horizon. Moreover, we know that
	$$
	\partial_{\vch}e^{2\kappa_-v}=2\kappa_-e^{2\kappa_-v}\partial_{\vch}v=2\kappa_-.
	$$
	So, we conclude that 
	\begin{eqnarray*}
		\nabla_ZZ&=&\frac{1}{4}\nabla_{\partial_u}\partial_u+\frac{1}{4}
		\partial_u\left(\bphi-\bphi|_{r=r_-}\right)\pp
		+\frac{1}{4}\left(\bphi-\bphi|_{r=r_-}\right)\nabla_{\partial_u}\pp\\
		&&-\,
		\frac{e^{2\kappa_-v}}{4}\nabla_{\partial_u}(\partial_{\vch}+\bch\pp)
		+\frac{1}{2}(\bphi-\bphi|_{r=r_-})\nabla_{\pp}Z
		-\,\frac{e^{2\kappa_-v}}{2}
		\nabla_{\partial_{\vch}+\bch\pp}Z\\
		&=&\frac{1}{4}\nabla_{\partial_u}\partial_u\qquad\mbox{at}\ {\cal CH}^+\\
		&=&\frac{1}{4}\Gamma_{uu}^u\partial_u
		\ = \
		\frac{1}{4}\frac{\partial_{r_\star}\Omegasqch}{\Omegasqch}\,\partial_u
		\ =\ \frac{1}{4}\frac{\partial_{r_\star}\Omega^2}{\Omega^2}\,\partial_u
		\ =\ \frac{1}{4}\frac{\partial_r\Delta_r}{\Delta_r}\frac{\partial r}{\partial r_\star}\,\partial_u\\
		&=&\frac{1}{4}\frac{\partial_r\Delta_r}{\Delta_r}\frac{\Delta_r}{(r^2+a^2)}\,\partial_u
		\ =\ 
		\frac{1}{2}\frac{(\partial_r\Delta_r)_{r=r_-}}{(r_-^2+a^2)}\,\frac{\partial_u}{2}\ =\ \frac{1}{2}\frac{(\partial_r\Delta_r)_{r=r_-}}{(r_-^2+a^2)} Z.
	\end{eqnarray*}
	We used~\eqref{qomega}, then~\eqref{Omega-squared} and finally~\eqref{r_r*}.
	This shows that
	\be\label{surface}
	\kappa_-=\frac{\partial_{r_\star}\Omega^2}{2\Omega^2}=\frac{(\partial_r\Delta_r)_{r=r_-}}{2(r_-^2+a^2)},
	\ee
	which is equality~\eqref{kappa-}.}
\end{ex}

\subsection{$(\uh,\vh,\theta_\star,\phih)$ coordinates}

If we use coordinates $(\uh,\vh,\theta_\star,\phih)$ then, in the formulas above,
the value of $\sigma$ is~$1$, but
one has to replace $\Omega^2$ by $\Omegasqh$, $\bphi$ by $\bh$, and
the value of $\varsigma$ has to be as in~\eqref{varsigma}.
Special care has to be taken when calculating $\Gamma_{\uh\uh}^*$
because in this case one has to differentiate $\Omegasqh$ with respect to $u$:
\begin{eqnarray*}
	\Gamma_{\uh\uh}^{\uh}&=&\varsigma\left(\frac{\partial_{r_\star}(\Omega^2)}{\Omega^2}
	-2\kappa_+\right),\\
	\Gamma_{\uh\uh}^{\vh}&=&\Gamma_{\uh\uh}^{\theta_\star}\ =\ \Gamma_{\uh\uh}^{\phih}
	\ =\ 
	0.
\end{eqnarray*}

\renewcommand{\gamma}{\truegamma}

\section{Characterization of the parameters of subextremal KNdS spacetimes}
\label{charge_appendix}

The main result of this appendix is summarized in Lemma~\ref{subextremal},
which characterizes subextremal
Kerr--Newman--de Sitter black holes in terms of $(r_-, r_+,\Lambda a^2,\Lambda e^2)$. In Remarks~\ref{rmk-l} and~\ref{rmk-scale}, we consider alternative choices of parameters,
namely $\left(\Lambda,\frac{r_+}{r_-},a,e\right)$ or
$(\Lambda,M,a,e)$. As mentioned in the Introduction, in the case that $e=0$,
related results can be found in Lake and Zannias~\cite{lake} 
and Borthwick~\cite{borthwick}.

\begin{lem}\label{subextremal}
	Each Kerr--Newman--de Sitter subextremal solution is determined by a quadruple 
	$(r_-, r_+,\Lambda a^2,\Lambda e^2)$ satisfying\/ $0<r_-<r_+$,
	$$
	0\leq\Lambda e^2<\frac{3 \alpha^2 (1 + 2 \alpha)}{(1 + 2 \alpha + 3 \alpha^2)^2}<\frac{1}{4}
	$$
	and
	$$
	0<\Lambda a^2<l(\alpha, \Lambda e^2), 
	$$
	with 
	$$
	\alpha=\frac{r_+}{r_-},
	$$
	where $l$ is the function given by\/~\eqref{l}. 
	The graph of $l$ is sketched in  {\rm Figure~\ref{region}}.
	\begin{figure}[ht]
		\begin{psfrags}
			\centering
			\includegraphics[scale=1]{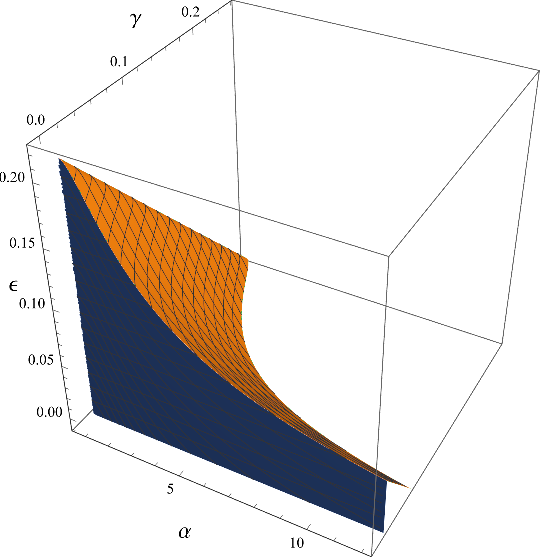}
			\caption[Text that appears in the picture directory]{The graph of $l$ bounds
				the region where the parameters $\alpha=\frac{r_+}{r_-}$, $\epsilon=\Lambda a^2$ and $\gamma=\Lambda e^2$
				can vary.}
			\label{region}
		\end{psfrags}
	\end{figure}
	The graphs of $l(\,\cdot\,,\Lambda e^2)$ for several values of $\Lambda e^2\in\left[0,\frac{1}{4}\right)$
	are sketched in\/ {\rm Figure~\ref{graph-of-l}}.
	\begin{figure}[ht]
		\begin{psfrags}
			\centering
			\includegraphics[scale=1]{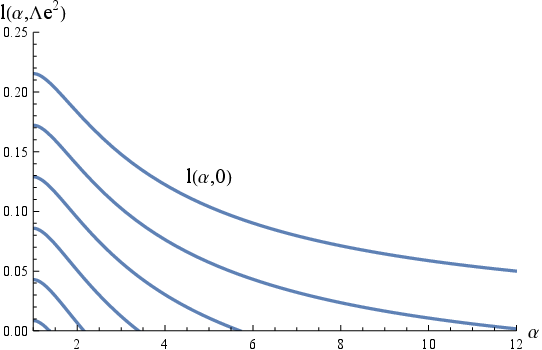}
			\caption[Text that appears in the picture directory]
			{Sketch of the graphs of $l(\,\cdot\,,0)$,
				$l(\,\cdot\,,0.05)$, $l(\,\cdot\,,0.1)$, $l(\,\cdot\,,0.15)$, 
				$l(\,\cdot\,,0.20)$, $l(\,\cdot\,,0.24)$.}
			\label{graph-of-l}
		\end{psfrags}
	\end{figure}
	For a choice of parameters on the graph of $l$ we have $r_+=r_c$ (see {\rm Figure~\ref{graph-l}}).
	\begin{figure}[ht]
		\begin{psfrags}
			\centering
			\includegraphics[scale=1]{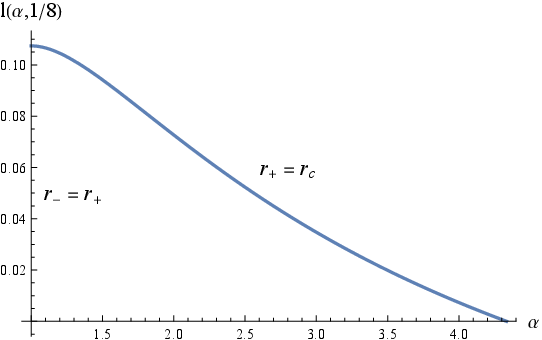}
			\caption[Text that appears in the picture directory]{The graph of $l(\,\cdot\,,1/8)$.}
			\label{graph-l}
		\end{psfrags}
	\end{figure}
	The value of $\Lambda$ is given by
	\bea
	\label{lambda_0}
	\Lambda=\frac{\left(3-\Lambda a^2 \right)-3
		\sqrt{\left(1-\,\frac{\Lambda a^2}{3} \right)^2-\,\frac{4(\Lambda a^2+\Lambda e^2) (r_-^2+r_+^2+r_-r_+)}{3r_-r_+}}}{2(r_-^2+r_+^2+r_-r_+)},
	\eea
	and the value of $M$ is given by
	\be\label{M}
	M=\frac{\Lambda}{6}(2r_-r_+r_c+r_-^2r_++r_-^2r_c+r_+^2r_-+r_+^2r_c+r_c^2r_-+r_c^2r_+),
	\ee
	with $r_c$ given by
	\bea
	\label{rc}
	r_c=-\,\frac{r_-+r_+}{2}+
	\sqrt{\frac{(r_-+r_+)^2}{4}+\frac{3(a^2+e^2)}{\Lambda r_-r_+}}.
	\eea
	Alternative formulas for the mass are
	\begin{eqnarray}\label{alternative-for-mass}
	\sqrt{\Lambda}M&=&\frac{(1+\alpha)(\alpha^2(\sqrt{\Lambda}r_-)^4+3(\Lambda a^2+\Lambda e^2))}{6\alpha (\sqrt{\Lambda}r_-)}\\
	&=&\frac{-(\sqrt{\Lambda}r_-)^4+(3-\Lambda a^2)(\sqrt{\Lambda}r_-)^2+3(\Lambda a^2+\Lambda e^2)}
	{6(\sqrt{\Lambda}r_-)}.\label{alt-2}
	\end{eqnarray}
\end{lem}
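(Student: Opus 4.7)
The strategy is to extract every formula from Vieta's relations applied to the factorization
\[\Delta_r=-\,\frac{\Lambda}{3}(r-r_n)(r-r_-)(r-r_+)(r-r_c).\]
Expanding the polynomial and comparing with $\Delta_r=-\frac{\Lambda}{3}r^4+(1-\Lambda a^2/3)r^2-2Mr+(a^2+e^2)$ produces four elementary-symmetric-polynomial identities. The vanishing of the coefficient of $r^3$ gives $r_n=-(r_-+r_++r_c)$, eliminating $r_n$. The constant-term identity then reduces to $r_-r_+r_c(r_-+r_++r_c)=3(a^2+e^2)/\Lambda$; dividing by $r_-r_+$ yields a quadratic in $r_c$ whose positive root is precisely formula~\eqref{rc}. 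The identity coming from the coefficient of $r^1$, after substituting for $r_n$, simplifies directly to~\eqref{M}.

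Formula~\eqref{lambda_0} comes from the coefficient-of-$r^2$ identity, which after eliminating $r_n$ reads
\[\frac{3}{\Lambda}=a^2+r_-^2+r_+^2+r_c^2+r_-r_+ + r_-r_c+r_+r_c.\]
Using the $r_c$-quadratic established above to eliminate $r_c^2+(r_-+r_+)r_c$, this becomes
\[\mathcal{K}\,\Lambda^2-(3-\Lambda a^2)\,\Lambda +\frac{3(\Lambda a^2+\Lambda e^2)}{r_-r_+}=0,\qquad\mathcal{K}:=r_-^2+r_+^2+r_-r_+,\]
a genuine quadratic in $\Lambda$ when $\Lambda a^2$ and $\Lambda e^2$ are treated as the given parameters. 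The minus-sign branch of the quadratic formula yields exactly~\eqref{lambda_0}; the choice of branch is justified by a Taylor expansion at $\Lambda a^2,\Lambda e^2\to 0$ showing that it recovers the correct Kerr--Newman limit $\Lambda\to(\Lambda a^2+\Lambda e^2)/(r_-r_+)$. The alternative mass expressions~\eqref{alternative-for-mass} and~\eqref{alt-2} follow next: the first is obtained by factoring~\eqref{M} as $6M/\Lambda=(r_-+r_+)[r_-r_+ + r_c^2+r_c(r_-+r_+)]$ and eliminating the bracketed $r_c$-terms via the $r_c$-quadratic, and the second by solving $\Delta_r(r_-)=0$ directly for $M$ and rescaling by $\sqrt{\Lambda}$.

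To identify the subextremal region, the plan is to analyze the cosmological-extremal boundary $r_+=r_c$. Imposing this in the two identities above, and writing $\epsilon:=\Lambda a^2$, $\eta:=\Lambda e^2$, $\alpha:=r_+/r_-$, gives
\[\epsilon+\eta=\frac{(3-\epsilon)^2\,\alpha^2(1+2\alpha)}{3(1+2\alpha+3\alpha^2)^2}.\]
Viewed as a quadratic in $\epsilon$ with $\eta$ fixed, the relevant root defines $l(\alpha,\eta)$; setting $\eta=0$ should reduce to~\eqref{l}. The range $l(\alpha,\eta)>0$ is non-empty precisely when the displayed relation at $\epsilon=0$ is consistent, which forces $\eta<\frac{3\alpha^2(1+2\alpha)}{(1+2\alpha+3\alpha^2)^2}$. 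Finally, the bound $\frac{3\alpha^2(1+2\alpha)}{(1+2\alpha+3\alpha^2)^2}<\frac{1}{4}$ for $\alpha>1$ reduces to the polynomial factorization
\[(1+2\alpha+3\alpha^2)^2-12\alpha^2(1+2\alpha)=(\alpha-1)^2(3\alpha+1)^2>0.\]

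The main obstacle will be verifying the closed-form expression~\eqref{l} starting from the quadratic in $\epsilon$ at the boundary $r_+=r_c$. This reduces to the polynomial identity
\[(1+2\alpha+3\alpha^2)^2\bigl(1+4\alpha+14\alpha^2+20\alpha^3+9\alpha^4\bigr)=(1+3\alpha+5\alpha^2+3\alpha^3)^2(1+2\alpha+9\alpha^2),\]
a routine but tedious algebraic check (both sides are degree-$8$ polynomials in $\alpha$). A secondary subtlety is the branch selection in the $\Lambda$-quadratic: one must confirm that the minus-sign root is exactly the branch producing the subextremal ordering $r_n<0<r_-<r_+<r_c$, which can be done by continuity in $\Lambda a^2$ starting from the Kerr--Newman limit up to the boundary $\Lambda a^2=l(\alpha,\Lambda e^2)$, where $r_+=r_c$.
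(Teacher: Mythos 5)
Your derivation of \eqref{rc}, \eqref{M}, \eqref{lambda_0} and the two alternative mass formulas via Vieta's relations is exactly the paper's route, down to the order in which the coefficients are matched, and your algebra checks out (including the identity $(1+\alpha)(1+2\alpha+3\alpha^2)=1+3\alpha+5\alpha^2+3\alpha^3$ that reconciles the two forms of $l(\alpha,0)$, and the factorization $(1+2\alpha+3\alpha^2)^2-12\alpha^2(1+2\alpha)=(\alpha-1)^2(3\alpha+1)^2$). Where you genuinely diverge is in identifying the admissible region. The paper never sets $r_+=r_c$; it keeps the strict inequality $r_c>r_+$, rewrites it as $\epsilon+\gamma>k\Lambda^2$ with $k=\tfrac13 r_-r_+^2(r_-+2r_+)$, substitutes the $\Lambda$-quadratic to eliminate $\Lambda^2$, and reduces each branch to an explicit inequality: for $\Lambda_0$ this yields $\bigl(1-\tfrac{\epsilon}{3}\bigr)^2-\tfrac{(1+2\alpha+3\alpha^2)^2}{3\alpha^2(1+2\alpha)}(\epsilon+\gamma)>0$, whose small-root condition is $0<\epsilon<l(\alpha,\gamma)$, while for $\Lambda_1$ the corresponding inequality \eqref{positive} is shown to \emph{never} hold, which is what rules out the plus branch. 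Your boundary computation (imposing $r_+=r_c$ in the $r^2$-coefficient identity and the $r_c$-quadratic) reaches the same curve more directly and is a nice simplification for deriving the closed form of $l$; but a boundary curve alone does not decide which side of it, or which branch of the $\Lambda$-quadratic, is subextremal, and your continuity-from-Kerr--Newman argument is the weak link. To make it complete you need two further checks that the paper does explicitly: (i) that the discriminant of the $\Lambda$-quadratic stays positive on all of $\{0<\epsilon<l(\alpha,\gamma)\}$, i.e.\ that \eqref{gamma-bound} implies \eqref{mmm} (the paper verifies $\tfrac{(1+2\alpha+3\alpha^2)^2}{3\alpha^2(1+2\alpha)}\ge\tfrac{4(1+\alpha+\alpha^2)}{3\alpha}$ for $\alpha\ge1$), without which \eqref{lambda_0} could fail to be real inside your region and the two branches could collide, invalidating the continuity argument; and (ii) that the $\Lambda_1$ branch is excluded on the whole region, not merely near the Kerr--Newman corner --- your observation that $r_c\to0$ on that branch as $\epsilon,\gamma\to0$ is a good start, but it must be combined with the fact that $r_c=r_+$ on the $\Lambda_1$ branch occurs nowhere in the admissible region (which is essentially the content of the paper's step 3b). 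With those two items supplied, your argument closes.
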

\begin{rmk}\label{rmk-l}
	We can write~\eqref{lambda_0} in the form
	\be\label{two}
	r_-^2=
	\frac{3-\Lambda a^2-\,\frac{\sqrt{9 \alpha-6 \left(2+3 \alpha+2 \alpha^2\right) \Lambda a^2 +\alpha (\Lambda a^2)^2
				-12(1+\alpha+\alpha^2)\Lambda e^2
		}}{\sqrt{\alpha}}}{2\Lambda \left(1+\alpha+\alpha^2\right)}.
	\ee
	This formula is useful if we start with a parameter set 
	$$(\Lambda,\alpha,a,e),\ \mbox{with}\ 0\leq\Lambda e^2<\frac{1}{4},\
	0<\Lambda a^2\leq 21-6\sqrt{12+\Lambda e^2}\ \mbox{and}\/\
	1<\alpha< l^{-1}\left(\Lambda a^2,\Lambda e^2\right)$$
	(see\/ {\rm Remark~\ref{beauty}} for a clarification of the meaning of\/ $l^{-1}$).
	Fixing $\Lambda$, $e$ and $a$, consider the function $r_-(\Lambda,\,\cdot\,,a,e)$.
	We have
	\begin{eqnarray*}
		\partial_\alpha r_-(\Lambda,1,a,e)&=&-\,\frac{r}{2}(\Lambda,1,a,e)
		\ =\ -\partial_\alpha r_+(\Lambda,1,a,e),\\
		\partial_\alpha r_-(\Lambda,l^{-1}(\Lambda a^2,\Lambda e^2),a,e)&=&0,\\
		\partial_\alpha r_c(\Lambda,1,a,e)&=&0,\\
		\partial_\alpha r_c(\Lambda,l^{-1}(\Lambda a^2,\Lambda e^2),a,e)&=&-
		\partial_\alpha r_+(\Lambda,l^{-1}(\Lambda a^2,\Lambda e^2),a,e).
	\end{eqnarray*}
	Note that
	the functions $\sqrt{\Lambda}r_\#$ depend solely on $(\alpha,\Lambda a^2,\Lambda e^2)$.
	In {\rm Figure~\ref{rrr}}, we sketch their graphs for 
	$\Lambda e^2=\gamma=1/8$ and
	$\Lambda a^2=\epsilon=(21-6\sqrt{12+\gamma})/2$. (See~\eqref{formula-for-lmo}
	for a formula for $l^{-1}(\Lambda a^2,\Lambda e^2)$.)
	\begin{figure}[ht]
		\begin{psfrags}
			\centering
			\includegraphics[scale=0.8]{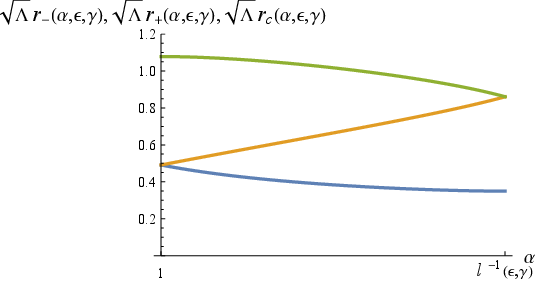}
			\caption[Text that appears in the picture directory]{Sketch the graphs of
				$r_-$, $r_+$ and $r_c$, for $\Lambda e^2=\gamma=1/8$ and
				$\Lambda a^2=\epsilon=(21-6\sqrt{12+\gamma})/2$.}
			\label{rrr}
		\end{psfrags}
	\end{figure}
	This implies that the mass $M$ (whose expression is given 
	in~\eqref{M}, \eqref{alternative-for-mass} and~\eqref{alt-2}) satisfies
	$$
	\partial_\alpha M_-(\Lambda,1,a,e)\ =\ \partial_\alpha M(\Lambda,l^{-1}(\Lambda a^2,\Lambda e^2),a,e)\ =\ 0.
	$$
	The function $M(\Lambda,\,\cdot\,,a,e)$ is strictly increasing
	in the interval $[1,l^{-1}(\Lambda a^2,\Lambda e^2)]$. 
	The function $\sqrt{\Lambda}M$ also depends solely on $(\alpha,\Lambda a^2,\Lambda e^2)$.
	In Figure~\eqref{figura-da-massa}, we sketch the graph of
	$\sqrt{\Lambda}M$ for $\Lambda e^2=1/8$ (recall that 
	$\mbox{$\Lambda e^2\in[0,\frac{1}{4})$}$).
	\begin{figure}[ht]
		\begin{psfrags}
			\psfrag{M}{{\small $M$}}
			\centering
			\includegraphics[scale=.8]{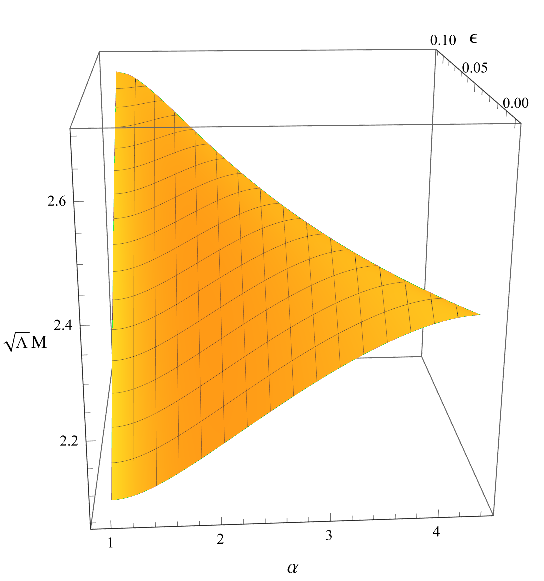}
			\caption[Text that appears in the picture directory]{Sketch of the graph of $\sqrt{\Lambda}M$ for $\Lambda e^2=\gamma=1/8$, $\Lambda a^2=\epsilon\in[0,21-6\sqrt{12+\gamma}]$
				and $\alpha\in[1,l^{-1}(\epsilon,\gamma)]$.}
			\label{figura-da-massa}
		\end{psfrags}
	\end{figure}
\end{rmk}
We can use $\partial_\alpha r_-(\Lambda,l^{-1}(\Lambda a^2,\Lambda e^2),a,e)=0$ to 
show that $\partial_\alpha M(\Lambda,l^{-1}(\Lambda a^2,\Lambda e^2),a,e)=0$; this
together with~\eqref{M} allows us to deduce that $	\partial_\alpha r_c(\Lambda,l^{-1}(\Lambda a^2,\Lambda e^2),a,e)=-
\partial_\alpha r_+(\Lambda,l^{-1}(\Lambda a^2,\Lambda e^2),a,e)$.
The other assertions follow by direct calculation.
\begin{rmk}\label{rmk-scale}
	In\/ {\rm Lemma~\ref{subextremal}} we can think that our subextremal solution is
	characterized by $(r_-,\alpha,\Lambda a^2,\Lambda e^2)$ and in\/ {\rm  Remark~\ref{rmk-l}} we can think that
	our subextremal solution is
	characterized by $(\Lambda,\alpha,\Lambda a^2,\Lambda e^2)$. Formula~\eqref{two}
	relates $r_-$ with $\Lambda$. In this paper, the preferred viewpoint
	is that each spacetime is characterized by the three parameters
	$(\alpha,\epsilon,\gamma)=\left(\frac{\sqrt{\Lambda}r_+}{\sqrt{\Lambda}r_-},(\sqrt{\Lambda}a)^2,(\sqrt{\Lambda}e)^2\right)$.
	These determine the other quantities, like $\sqrt{\Lambda}r_-$ or $\sqrt{\Lambda}M$.
	In this sense, the cosmological constant appears as a scale parameter.
\end{rmk}

\begin{rmk}
	The reader interested in the characterization of Kerr--de Sitter spacetimes
	in terms of the parameters $(\Lambda,M,a)$ (which is physically more natural,
	although not what we need in\/ {\rm Section~\ref{null})} should consult\/
	{\rm Borthwick~\cite{borthwick}}, where extreme and fast Kerr--de Sitter 
	are also considered.
\end{rmk}

\begin{rmk}
	In summary, the region where the parameters $(\alpha,\epsilon,\gamma)$
	can vary 
	(which is sketched in\/ {\rm Figure~\ref{region})} is defined by
	$$
	\left\{(\alpha,\epsilon,\gamma)\in\bbR^3:1<\alpha<\infty,\
	0\leq \gamma <E(\alpha),\ 0<\epsilon<l(\alpha,\gamma) \right\},$$
	where $E$ is given in~\eqref{um} and $l$ is given in~\eqref{l};
	alternatively, by
	$$\left\{(\alpha,\epsilon,\gamma)\in\bbR^3:0\leq\gamma<\frac{1}{4},\
	1<\alpha<\Gamma(\gamma),\  0<\epsilon<l(\alpha,\gamma)\right\},
	$$	
	where~$\Gamma$ is given in~\eqref{Gamma};
	or still by
	$$\left\{(\alpha,\epsilon,\gamma)\in\bbR^3:0\leq\gamma<\frac{1}{4},\
	0<\epsilon<21-6\sqrt{12+\gamma},\ 1<\alpha<l^{-1}(\epsilon,\gamma)\right\},
	$$
	where $l^{-1}$ is given in~\eqref{formula-for-lmo}.
\end{rmk}

\begin{altproof}{\,{\rm Lemma~\ref{subextremal}}}\\
	\noindent {\bf 1) ($\Delta_r$)}
	The relationship between $(M, \Lambda, a,e)$ and $(r_-, r_+,\Lambda a^2,\Lambda e^2)$ is
	obtained via the polynomial $\Delta_r$. On the one hand this polynomial is given by
	\bea
	\Delta_r&=&(r^2+a^2)\left(1-\,\frac{\Lambda}{3}r^2\right)-2Mr+ e^2\nonumber\\
	&=&-\,\frac{\Lambda}{3}r^4+\left(1-\,\frac{\Lambda}{3}a^2\right)r^2
	-2Mr+a^2+ e^2,\nonumber
	\eea
	and on the other hand it is given by
	\bea
	\Delta_r&=&-\,\flt(r-r_n)(r-r_-)(r-r_+)(r-r_c),\nonumber
	\eea
	since,
	according to our subextremal hypothesis, $\Delta_r$ has one negative root $r_n$,
	and three positive roots $\mbox{$r_-<r_+<r_c$}$. Expanding the last expression
	yields
	\bea
	\Delta_r&=&-\,\flt r^4+\flt(r_n+r_-+r_++r_c)r^3\nonumber\\
	&&-\flt(r_nr_-+r_nr_++r_nr_c+r_-r_++r_-r_c+r_+r_c)r^2\nonumber\\
	&&-\flt(-r_nr_-r_+-r_nr_-r_c-r_nr_+r_c-r_-r_+r_c)r\nonumber\\
	&&-\flt r_nr_-r_+r_c.\nonumber
	\eea
	We will now compare the coefficients of $\Delta_r$ in $r^0$, $r^1$, $r^2$ and $r^3$;
	the coefficient in $r^4$ is already matched.
	
	\noindent {\bf 1a) (Coefficient in $r^3$)}
	We start with the coefficient of $\Delta_r$ in $r^3$. As it is equal to $0$, we have
	\be\label{r-n}
	r_n=-(r_-+r_++r_c).
	\ee
	
	\noindent {\bf 1b) (Coefficient in $r^0$)}
	Equating the coefficients of $\Delta_r$ in $r^0$, we get, using~\eqref{r-n},
	\be\label{r-0}
	\flt(r_-+r_++r_c)r_-r_+r_c=a^2+e^2.
	\ee
	We regard this as a quadratic equation for $r_c$, which will be satisfied
	by choosing an appropriate $r_c$.
	Defining
	\be\label{x}
	x:=\frac{3(a^2+e^2)}{\Lambda r_-r_+},
	\ee
	\eqref{r-0} can be rewritten as 
	\be\label{x-0}
	x=r_c(r_c+r_-+r_+).
	\ee
	
	\noindent {\bf 1c) (Coefficient in $r^1$)} Equating the coefficients of $\Delta_r$ in $r^1$ we arrive at
	the value of $M$ given in~\eqref{M}. The expression~\eqref{alternative-for-mass} is
	obtained from~\eqref{M} using $r_+=\alpha r_-$ and~\eqref{rc} (deduced below).
	The formula~\eqref{alt-2} is the statement that $r_-$ is a root of $\Delta_r$.
	Of course, in~\eqref{alt-2} we may replace $r_-$ by $r_+$ or by $r_c$.
	
	\noindent {\bf 1d) (Coefficient in $r^2$)}
	Finally, we compare the coefficients of $\Delta_r$ in $r^2$. Using again~\eqref{r-n}, leads to
	\be\label{x2}
	\flt(r_-^2+r_+^2+r_-r_++x)=1-\,\flt a^2.
	\ee
	Defining
	\begin{eqnarray*}
		b&:=&r_-^2+r_+^2+r_-r_+,\\
		\epsilon&:=&\Lambda a^2,\\
		\gamma&:=&\Lambda e^2,
	\end{eqnarray*}
	and using~\eqref{x}, equation~\eqref{x2} is equivalent to
	$$
	\flt b+\frac{\epsilon+\gamma}{\Lambda r_-r_+}=1-\,\frac{\epsilon}{3},
	$$
	or
	\be\label{L2}
	\frac{b}{3}\Lambda^2-\left(1-\,\frac{\epsilon}{3} \right)\Lambda+\frac{\epsilon+\gamma}{r_-r_+}=0.
	\ee
	This is a quadratic equation which will determine $\Lambda$.
	
	\noindent {\bf 2) ($\Lambda$)}
	As we are considering de Sitter black holes, $\Lambda$ 
	is positive, and therefore $\epsilon$ is positive 
	and $\gamma$ is nonnegative.
	So, the product of the roots of~\eqref{L2} is positive,
	and the two roots have the same sign. Since $\Lambda$ is positive,
	we must have
	\be\label{3}
	0<\epsilon<3.
	\ee
	Solving~\eqref{L2}, we get
	\be\label{quadratic}
	\Lambda=\frac{\left(1-\,\frac{\epsilon}{3} \right)\pm
		\sqrt{\left(1-\,\frac{\epsilon}{3} \right)^2-\,\frac{4(\epsilon+\gamma) b}{3r_-r_+}}}{\frac{2b}{3}}.
	\ee
	We denote these values of $\Lambda$ by $\Lambda_0$ and $\Lambda_1$.
	$\Lambda_0$ and $\Lambda_1$ will have the minus sign and the plus sign in front of the square root,
	respectively,
	so that $\Lambda_0\leq\Lambda_1$. We will see below that only the
	value of $\Lambda_0$ is admissible.
	
	\noindent {\bf 2a) ($\Lambda$ real)}
	Define
	$$
	\alpha:=\frac{r_+}{r_-}>1,
	$$
	so that 
	$$
	\frac{b}{r_-r_+}=1+\alpha+\frac{1}{\alpha}.
	$$
	The discriminant in~\eqref{quadratic} should be nonnegative, that is
	\be
	\left(1-\,\frac{\epsilon}{3} \right)^2-\,\frac{4(1+\alpha+\alpha^2)}{3\alpha}(\epsilon+\gamma)\geq 0.
	\label{mmm}
	\ee
	Since $\epsilon$ is positive,
	we see that $\gamma$ has to satisfy
	\bea
	\label{gammaupper}
	0\leq\gamma<\frac{3}{4\left( 1+\alpha+\frac{1}{\alpha}\right)}\leq\frac{1}{4}.
	\eea
	This is not our final restriction on $\epsilon$ though.
	By requiring that $r_c>r_+$ a further restriction on $\epsilon$ will arise,
	as well as the requirement that $\Lambda=\Lambda_0$.
	
	\noindent {\bf 3) ($r_c$)}
	To obtain $r_c$, we rewrite~\eqref{x-0} as
	$$
	r_c^2+(r_-+r_+)r_c-x=0.
	$$
	Hence,
	$$
	r_c=-\,\frac{r_-+r_+}{2}+
	\sqrt{\frac{(r_-+r_+)^2}{4}+x}.
	$$
	This is~\eqref{rc}.
	The condition $r_c>r_+$ is equivalent to
	$$
	(r_-+r_+)^2+4x>(r_-+3r_+)^2,
	$$
	or
	$$
	x>2r_+^2+r_-r_+.
	$$
	Using the definition of $x$ in~\eqref{x}, this is the same as
	$$
	\frac{3(a^2+e^2)}{\Lambda r_-r_+}>2r_+^2+r_-r_+,
	$$
	or
	\be\label{k}
	\frac{a^2+e^2}{\Lambda}>\frac{1}{3}r_-r_+^2(r_-+2r_+)=:k.
	\ee
	Now we use the definition of $\epsilon$ and the assumption that $\Lambda$ is positive
	which, as we have seen, implies $0<\epsilon<3$. For $\epsilon$ in this range,
	\eqref{k} is equivalent to
	\be\label{choice}
	\epsilon+\gamma>k\Lambda^2=\frac{3k}{b}\left(1-\,\frac{\epsilon}{3}\right)\Lambda-
	\frac{3k}{br_-r_+}(\epsilon+\gamma),
	\ee
	where, for the last equality, we used the quadratic equation~\eqref{L2} for $\Lambda$.
	Our last task is to guarantee~\eqref{choice}.
	
	\noindent {\bf 3a) ($\Lambda=\Lambda_0$)}
	First we take $\Lambda=\Lambda_0$ in~\eqref{choice}. Then
	$$
	\left(\frac{b}{3k}+\frac{1}{r_-r_+}\right)\frac{\epsilon+\gamma}{1-\,\frac{\epsilon}{3}}>\Lambda_0=
	\frac{\left(1-\,\frac{\epsilon}{3} \right)-
		\sqrt{\left(1-\,\frac{\epsilon}{3} \right)^2-\,\frac{4(\epsilon+\gamma) b}{3r_-r_+}}}{\frac{2b}{3}},
	$$
	which can be rewritten as 
	\be\label{negative}
	\sqrt{\left(1-\,\frac{\epsilon}{3} \right)^2-\,\frac{4(\epsilon+\gamma) b}{3r_-r_+}}>
	\left(1-\,\frac{\epsilon}{3}\right)
	-\,\frac{2b}{3}\left(\frac{b}{3k}+\frac{1}{r_-r_+}\right)\frac{\epsilon+\gamma}
	{1-\,\frac{\epsilon}{3}}.
	\ee
	We need to examine the sign of the
	right-hand side of~\eqref{negative}. This can be written as
	$$
	\left(1-\,\frac{\epsilon }{3}\right)-\,\frac{2 \left(1+\alpha+\alpha^2\right) 
		\left(1+2\alpha+3\alpha^2\right)
		(\epsilon+\gamma)}{3 \alpha^2 (1+2 \alpha) \left(1-\,\frac{\epsilon }{3}\right)}.
	$$
	Above we guaranteed~\eqref{mmm}.
	Since we have that
	$$
	\frac{\frac{4(1+\alpha+\alpha^2)}{3\alpha}}
	{\frac{2 \left(1+\alpha+\alpha^2\right) 
			\left(1+2\alpha+3\alpha^2\right)}{3 \alpha^2 (1+2 \alpha)}}=
	\frac{2\alpha(1+2\alpha)}{1+2\alpha+3\alpha^2}\geq 1\ \ \mbox{for}\ \alpha\geq 1,
	$$
	the right-hand side of~\eqref{negative} is nonnegative.
	Therefore, \eqref{negative} is equivalent to
	\bea
	&&\left(1-\,\frac{\epsilon }{3}\right)^2
	-\,\frac{4}{3} \left(1+\alpha+\frac{1}{\alpha}\right) (\epsilon+\gamma) \nonumber\\
	&&\qquad-\left[\left(1-\,\frac{\epsilon }{3}\right)-\,\frac{2 \left(1+\alpha+\alpha^2\right) 
		\left(1+2\alpha+3\alpha^2\right)
		(\epsilon+\gamma)}{3 \alpha^2 (1+2 \alpha) \left(1-\,\frac{\epsilon }{3}\right)}\right]^2>0.
	\label{expand}
	\eea
	Let us define
	\begin{eqnarray*}
		c &=&\frac{4}{3} \left(1+\alpha+\frac{1}{\alpha}\right),\\
		d &=&\frac{2 \left(1+\alpha+\alpha^2\right) 
			\left(1+2\alpha+3\alpha^2\right)}{3 \alpha^2 (1+2 \alpha)}.
	\end{eqnarray*}
	Then, expanding the left-hand side of~\eqref{expand}, we get
	$$
	\left(1-\,\frac{\epsilon}{3}\right)^2(2d-c)-d^2(\epsilon+\gamma)>0,
	$$
	or
	$$
	\left(1-\,\frac{\epsilon}{3}\right)^2-\,\frac{d^2}{2d-c}(\epsilon+\gamma)>0,
	$$
	because
	$$
	2d-c=\frac{4 (1 + \alpha + \alpha^2)^2}{3 \alpha^2 (1 + 2 \alpha)}
	$$
	is positive. Substituting $c$ and $d$ by their expressions in terms of $\alpha$,
	we obtain
	\be
	\left(1-\,\frac{\epsilon}{3}\right)^2-\,\frac{(1 + 2 \alpha + 3 \alpha^2)^2}{3 \alpha^2 (1 + 2 \alpha)}(\epsilon+\gamma)>0.\label{gamma-bound}
	\ee
	The inequality~\eqref{gamma-bound} has solutions for $\epsilon$ positive
	if and only if
	\bea
	\gamma<E(\alpha):=\frac{3 \alpha^2 (1 + 2 \alpha)}{(1 + 2 \alpha + 3 \alpha^2)^2}\leq\frac{1}{4}.
	\label{um}
	\eea
	This inequality for $\gamma$ is more restrictive than \eqref{gammaupper}.
	We remark that
	$$
	\frac{\frac{(1 + 2 \alpha + 3 \alpha^2)^2}{3 \alpha^2 (1 + 2 \alpha)}}{\frac{4(1+\alpha+\alpha^2)}{3\alpha}}=
	\frac{(1 + 2 a + 3 a^2)^2}{4\alpha(1+2\alpha)(1+\alpha+\alpha^2)}\geq 1\ \ 
	\mbox{for}\ \alpha\geq 1
	$$
	(this function is $1$ at $\alpha=1$ and grows to $\frac{9}{8}$ as 
	$\alpha$ goes to $+\infty$).
	Therefore, all solutions of~\eqref{gamma-bound} are solutions of~\eqref{mmm}.
	The inequality~\eqref{gamma-bound} is equivalent to
	$$
	0<\epsilon <l(\alpha,\gamma)\quad\mbox{or}\quad\epsilon>\tilde{r}(\alpha,\gamma),
	$$
	where
	\bea
	l(\alpha,\gamma)&:=&
	\frac{3}{2 \alpha^2 (1+2 \alpha)}
	\Biggl(\left(1+4 \alpha+12 \alpha^2+16 \alpha^3+9 \alpha^4\right)\Biggr.\nonumber\\
	&&\qquad\qquad\Biggl.
	-\left(1+2 \alpha+3 \alpha^2\right) \sqrt{(1+\alpha)^2(1+2\alpha+9\alpha^2)
		+\frac{4\gamma}{3}\alpha^2(1+2\alpha)}\Biggr),
	\label{l}
	\eea
	and $\tilde{r}(\alpha,\gamma)$ is defined by the same expression, with the minus sign replaced by
	a plus sign.
	The function~$\tilde{r}$ satisfies
	$$
	\tilde{r}(\alpha,\gamma)>\frac{3}{2 \alpha^2 (1+2 \alpha)}
	\left(12 \alpha^2+16 \alpha^3\right)>3.
	$$
	So, the inequality $\epsilon>\tilde{r}(\alpha,\gamma)$ is incompatible with~\eqref{3}.
	Fix a $\gamma$ belonging to the interval $\left[0,\frac{1}{4}\right)$.
	The function $l(\,\cdot\,,\gamma)$ is strictly decreasing and satisfies
	$$
	l(1,\gamma)=21-6\sqrt{12+\gamma},\qquad
	\partial_\alpha l(1,\gamma)=0.
	$$
	In Figure~\ref{graph-of-l}, we sketch the graphs of $l(\,\cdot\,,0)$,
	$l(\,\cdot\,,0.05)$, $l(\,\cdot\,,0.1)$, $l(\,\cdot\,,0.15)$, 
	$l(\,\cdot\,,0.20)$ and $l(\,\cdot\,,0.24)$.
	The function~$l(\,\cdot\,,\gamma)$ is nonnegative for
	$\alpha\in
	\left(1,
	\Gamma(\gamma)
	\right)$, and
	$$
	l(\Gamma(\gamma),\gamma)=0,
	$$ 
	where $\Gamma(\gamma)$ is the inverse of the function $E$ in~\eqref{um}, i.e.\ is the solution greater than one of 
	$$
	\gamma=\frac{3 \Gamma^2(\gamma) (1 + 2 \Gamma(\gamma))}{(1 + 2 \Gamma(\gamma) + 3 \Gamma^2(\gamma))^2},
	$$
	namely
	\be\label{Gamma}
	\Gamma(\gamma):=\frac{1-2 \gamma+\sqrt{1-4 \gamma }+\sqrt{2} \sqrt{1-\gamma-4 \gamma ^2 +(1+\gamma) \sqrt{1-4 \gamma }} }{6 \gamma }
	\ee
	for $\gamma\in\left(0,\frac{1}{4}\right)$, $\Gamma(0)=+\infty$
	(in which case $\alpha\in(1,+\infty)$). 
	Note that
	$$
	\lim_{\gamma\nearrow \frac{1}{4}}\Gamma'(\gamma)=-\infty,\qquad
	\lim_{\gamma\searrow 0}\frac{\Gamma(\gamma)}{\frac{2}{3\gamma}}=1.
	$$
	The graph of $\Gamma$ is sketched in
	Figure~\ref{gamma-fig}. 
	\begin{figure}[ht]
		\begin{psfrags}
			\psfrag{g}{{\small $\gamma$}}
			\psfrag{B}{{\small $\alpha$}}
			\psfrag{z}{{\small $(\gamma,\Gamma(\gamma))$}}
			\psfrag{w}{{\small $(E(\alpha),\alpha)$}} 		
			\centering
			\includegraphics[scale=0.8]{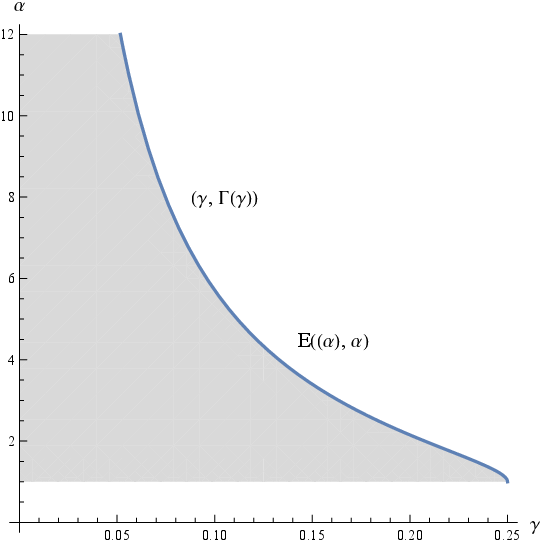}
			\caption[Text that appears in the picture directory]{Sketch the graph of
				$\Gamma$. The shaded region is the projection of the parameter 
				region in the $(\alpha,0,\gamma)$ plane.}
			\label{gamma-fig}
		\end{psfrags}
	\end{figure}
	We remark that 
	$$
	\lim_{\alpha\to\infty}\frac{l(\alpha,0)}{\frac{2}{3\alpha}}=1.
	$$
	In summary, for $\Lambda=\Lambda_0$ inequality~\eqref{choice} is 
	satisfied for
	$$
	0<\epsilon<l(\alpha,\gamma).
	$$

	\noindent {\bf 3b) ($\Lambda=\Lambda_1$)}
	Now we take $\Lambda=\Lambda_1$ in~\eqref{choice}. Then, instead of~\eqref{negative}, we obtain
	\be\label{positive}
	\frac{2b}{3}\left(\frac{b}{3k}+\frac{1}{r_-r_+}\right)\frac{\epsilon+\gamma}
	{1-\,\frac{\epsilon}{3}}
	-\left(1-\,\frac{\epsilon}{3}\right)>\sqrt{\left(1-\,\frac{\epsilon}{3} \right)^2-\,\frac{4(\epsilon+\gamma) b}{3r_-r_+}}.
	\ee
	Note that the left-hand side of~\eqref{positive} is the symmetric
	of the right-hand side of~\eqref{negative}. 
	We have shown above that the right-hand side of~\eqref{negative} is nonnegative.
	Hence, inequality~\eqref{positive} never holds.
	
	We conclude that we must choose \mbox{$\Lambda=\Lambda_0$} in~\eqref{quadratic}.
	This establishes expression \eqref{lambda_0} of Lemma~\ref{subextremal}.\\
\end{altproof}

Define
$$
s(\epsilon,\gamma):=\sqrt{\left(\epsilon - (21-6 \sqrt{12+\gamma})\right) \left(\epsilon - (21+6 \sqrt{12+\gamma})\right)}=
\sqrt{9-42 \epsilon +\epsilon ^2-36\gamma}.
$$
\begin{rmk}\label{beauty}
	The inverse of the function $l$, in the sense that
	$$
	l^{-1}(l(\alpha,\gamma),\gamma)=\alpha\qquad\mbox{and}\qquad l(l^{-1}(\epsilon,\gamma),\gamma)=\epsilon,
	$$
	is
	\bea
	&&	l^{-1}(\epsilon,\gamma)\ =\ 
	\frac{1}{54 (\epsilon+\gamma)}\Bigl(
	(3-\epsilon)s(\epsilon,\gamma)
	+9
	-24\epsilon
	+\epsilon ^2-18\gamma\Bigr.
	\label{formula-for-lmo}
	\\
	&&\qquad \  \Bigl.+\sqrt{2 \left(
		(27
		-\epsilon ^3 
		+9\gamma(3-\epsilon))s(\epsilon,\gamma)
		+(81
		-189 \epsilon
		-216 \epsilon ^2
		-21\epsilon ^3
		+\epsilon ^4) 
		-9\gamma(9+36\gamma+66\epsilon+\epsilon^2)
		\right)}\Bigr),
	\nonumber
	\eea
	for $0\leq\gamma<\frac{1}{4}$ and $\epsilon\in\,]0,21-6\sqrt{12+\gamma}]$. In particular, $l^{-1}(21-6\sqrt{12+\gamma},\gamma)=
	l^{-1}(l(1,\gamma),\gamma)=1$.
\end{rmk}

\begin{rmk}
	If we start with a parameter set $(\Lambda,M,a,e)$, we can regard~\eqref{lambda_0}
	and~\eqref{alternative-for-mass} as a system for~$r_-$ and~$\alpha$.
	In this case, it is natural to consider the quantities
	\begin{eqnarray*}
		A&:=&\sqrt{\alpha}+\frac{1}{\sqrt{\alpha}},\\
		R&:=&\sqrt{\Lambda\alpha}r_-\ =\ \sqrt{\Lambda r_-r_+}.
	\end{eqnarray*}
	In terms of these quantities, \eqref{lambda_0}
	and~\eqref{alternative-for-mass} can be written as
	\bea
	R^2&=&\frac{3-\epsilon-3\sqrt{\left(1-\,\frac{\epsilon}{3}\right)^2
			-\,\frac{4(A^2-1)(\epsilon+\gamma)}{3}}}{2(A^2-1)},
	\label{RM}\\
	\sqrt{\Lambda}M&=&A\frac{R^4+3(\epsilon+\gamma)}{R}.\label{LM}
	\eea
\end{rmk}
The equation~\eqref{RM} can be solved for $A$, yielding
\bea
A&=&\frac{\sqrt{R^4+(3-\epsilon)R^2-3(\epsilon+\gamma)}}{R^2}.\label{AR}
\eea
The system~\eqref{LM}$-$\eqref{AR} allows us to determine $A$ and $R$ from
$\sqrt{\Lambda}M$, $\Lambda a^2$ and $\Lambda e^2$. Of course, another
way to do this would be by calculating the roots of the polynomial $\Delta_r$.

\section{Komar integrals}\label{app-D}

\renewcommand{\gamma}{{\gin}}

The purpose of this appendix is to obtain expression~\eqref{mass}, for the 
mass of the black hole at the event horizon. Besides the 
expected term $\frac{M}{\Xi^2}$, the mass has one correction
term which is a constant multiple of the product of
the cosmological constant by the Parikh volume of the black hole,
and has another term which reflects the energy due to the electric field.
In this way we recall the physical interpretation of the parameters of our metric.
Let ${\tilde{K}}:=\pp$.
Consider the $2$-dimensional surface on the event horizon
${\cal S}=\{(\uh,\vh,\theta_\star,\phih):\uh=0,\ \vh=(\vh)_0\}$,
where $\vh=(\vh)_0$, for some fixed $(\vh)_0$.
Let us calculate the Komar integral
$$
JQ=\frac{1}{16\pi}\int_{{\cal S}}\star \md {\tilde{K}}_\flat.
$$
($JQ$ is not the angular momentum if $e\neq 0$.
See~\cite[Subsection~III.B]{Gibbons2} for the definition of the angular momentum.)
\begin{rmk} 
In the calculation below we use the coordinates $(u,v,\theta_\star,\phi_\star)$
of\/ {\rm Subsection~\ref{cha}},
notwithstanding the fact that ${\cal S}$ is contained in the event horizon,
where these coordinates are not defined. Our computations may be easily justified by considering a
sequence ${\cal S}_n$ of surfaces which approximate ${\cal S}$ from within the black
hole, which is covered by the coordinates, 
computing the integrals over ${\cal S}_n$,
and then passing to the limit.
\end{rmk}
We have that
$$
{\tilde{K}}_\flat=-\bphi\gamma_{\phi_\star\phi_\star}\md v
+\gamma_{\theta_\star\phi_\star}\md\theta_\star
+\gamma_{\phi_\star\phi_\star}\md\phi_\star
$$
and
$$
(\star \mmd {\tilde{K}}_\flat)_{\alpha\beta}=\frac{1}{2}(\mmd {\tilde{K}}_\flat)^{\mu\nu}\dV_{\mu\nu\alpha\beta}.
$$
As the vectors on the tangent space of ${\cal S}$ do not have neither
components in $\partial_u$ nor components in $\partial_v$,
only the components
$(\mmd {\tilde{K}}_\flat)^{uv}$ and $(\mmd {\tilde{K}}_\flat)^{vu}$
are relevant for the calculation of $JQ$. On the other hand, as
\begin{eqnarray*}
	(\mmd u)^\sharp&=&-\,\frac{1}{2\Omega^2}\partial_v+\frac{\bphi}{2\Omega^2}\pp,\\
		(\mmd v)^\sharp&=&-\,\frac{1}{2\Omega^2}\partial_u,\\
			(\mmd\theta_\star)^\sharp&=&\gamma^{\theta_\star\theta_\star}
			\partial_{\theta_{\star}}+\gamma^{\theta_\star\phi_\star}
			\partial_{\phi_{\star}},\\
	(\mmd\phi_\star)^\sharp&=&-\,\frac{\bphi}{2\Omega^2}\partial_u
	+\gamma^{\theta_\star\phi_\star}
	\partial_{\theta_{\star}}+\gamma^{\phi_\star\phi_\star}
	\partial_{\phi_{\star}},
\end{eqnarray*}
the terms $(\mmd {\tilde{K}}_\flat)_{\theta_\star *}$ and $(\mmd {\tilde{K}}_\flat)_{*\theta_\star}$
do not contribute to $JQ$. Notice that $(\mmd {\tilde{K}}_\flat)_{v\phi_\star}$ and
$(\mmd {\tilde{K}}_\flat)_{\phi_\star v}$ do not enter into the calculation
of $(\mmd {\tilde{K}}_\flat)^{uv}$ and $(\mmd {\tilde{K}}_\flat)^{vu}$
either. Thus, we write
$$ 
\mmd {\tilde{K}}_\flat=-\partial_u(\bphi\gamma_{\phi_\star\phi_\star})\md u\wedge\mmd v
+\partial_u\gamma_{\phi_\star\phi_\star}\mmd u\wedge\mmd\phi_\star
+\ldots,
$$ 
\begin{eqnarray*}
(\mmd {\tilde{K}}_\flat)^\sharp&=&\frac{1}{4\Omega^4}\left(
\partial_u(\bphi\gamma_{\phi_\star\phi_\star})
-\bphi\partial_u\gamma_{\phi_\star\phi_\star}
\right)(\partial_u\otimes\partial_v-\partial_v\otimes\partial_u)
+\ldots\\ 
&=&\frac{1}{4\Omega^4}\left(
\gamma_{\phi_\star\phi_\star}
\partial_{r_\star}\bphi
\right)(\partial_u\otimes\partial_v-\partial_v\otimes\partial_u)
+\ldots
\end{eqnarray*}
and
$$
\star\mmd {\tilde{K}}_\flat=\frac{1}{2\Omega^2}\gamma_{\phi_\star\phi_\star}
\partial_{r_\star}\bphi\frac{L\sin\theta}{\Xi}\md\theta_\star\wedge\mmd\phi_\star+
\ldots.
$$
Remembering that $\bphi$ is constant on the event horizon, we have
$$
\frac{\partial_{r_\star}\bphi}{\Omega^2}=
\frac{\partial_r\bphi\partial_{r_\star}{r}}{\Omega^2}
=\partial_r\bphi\frac{\Delta_r}{\Omega^2}\frac{\partial_{r_\star}{r}}{\Delta_r}=
-\partial_r\bphi\frac{\Upsilon}{\rho^2\Delta_\theta}\frac{1}{r^2+a^2}
=-\partial_r\bphi\frac{r^2+a^2}{\rho^2}
$$
on the event horizon, because $\Upsilon=4(r^2+a^2)^2\Delta_\theta$ there.
Since
\begin{eqnarray*}
\partial_r\bphi&=&\frac{4\Xi ar\Delta_\theta}{\Upsilon}-\,\frac{2\Xi a\partial_r\Delta_r}{\Upsilon}-\,\frac{2\Xi a(r^2+a^2)\Delta_\theta}{\Upsilon^2}
(4(r^2+a^2)r\Delta_\theta-a^2\sin^2\theta\partial_r\Delta_r)\\
&=&-\,\frac{4\Xi ar\Delta_\theta}{\Upsilon}-\,
\frac{2\Xi a\rho^2\partial_r\Delta_r}{\Upsilon(r^2+a^2)}
\end{eqnarray*}
and
$$
\gamma_{\phi_\star\phi_\star}=\frac{\sin^2\theta}{\Xi^2\rho^2}\Upsilon,
$$
it follows that
\begin{eqnarray*}
\star\mmd {\tilde{K}}&=&
\frac{a}{\Xi}\left(
2r(r^2+a^2)\frac{\Delta_\theta\sin^2\theta}{(r^2+a^2\cos^2\theta)^2}+
\partial_r\Delta_r\frac{\sin^2\theta}{r^2+a^2\cos^2\theta}
\right)
\frac{L\sin\theta}{\Xi}\md\theta_\star\wedge\mmd\phi_\star+
\ldots\\
&=&\frac{a}{\Xi}\left(
2r(r^2+a^2)\frac{\Delta_\theta\sin^3\theta}{(r^2+a^2\cos^2\theta)^2}+
\partial_r\Delta_r\frac{\sin^3\theta}{r^2+a^2\cos^2\theta}
\right)
\frac{r^2+a^2}{\Xi}\frac{\partial\theta}{\partial\theta_\star}\md\theta_\star\wedge\mmd\phi_\star+
\ldots,
\end{eqnarray*}
because $L=(r^2+a^2)\partial_{\theta_\star}\theta$ on the event horizon.
Now, we have
\begin{eqnarray*}
\int_0^{\frac{\pi}{2}}
\frac{(1+\frac{\Lambda}{3}a^2\cos^2\theta)\sin^3\theta}{(r^2+a^2\cos^2\theta)^2}\md\theta&=&
\int_0^1\frac{(1+\frac{\Lambda}{3}a^2x^2)(1-x^2)}{(r^2+a^2x^2)^2}\md x\\
&=&-\,\frac{3 a r \left(\Lambda  r^2-1\right)-\left(a^2 \left(\Lambda  r^2+3\right)+3 r^2 \left(\Lambda  r^2-1\right)\right) \arctan\left(\frac{a}{r}\right)}{6 a^3 r^3},\\
\int_0^{\frac{\pi}{2}}
\frac{\sin^3\theta}{r^2+a^2\cos^2\theta}\md\theta&=&
\int_0^1\frac{1-x^2}{r^2+a^2x^2}\md x\\
&=&\frac{-a r+\left(r^2+a^2\right) \arctan\left(\frac{a}{r}\right)}{a^3 r}.
\end{eqnarray*}
Taking into account that
\be\label{account1}
\partial_r\Delta_r=
-\,\frac{2}{3} \Lambda  r \left(r^2+a^2\right)-2 M+2 r \left(1-\,\frac{\Lambda  r^2}{3}\right)
\ee
and
\be\label{account2}
M=\frac{\left(r^2+a^2\right) \left(1-\,\frac{\Lambda  r^2}{3}\right)+e^2}{2 r},
\ee
we arrive at
\begin{eqnarray*}
&&2r(r^2+a^2)\int_0^\pi
\frac{\Delta_\theta\sin^2\theta}{(r^2+a^2\cos^2\theta)^2}\md\theta
+\partial_r\Delta_r\int_0^\pi
\frac{\sin^2\theta}{r^2+a^2\cos^2\theta}\md\theta
\\
&&\qquad\qquad=
\frac{-4 a^3 r \left(\Lambda  r^2-3\right)
+6e^2\left( a  r
-\left(r^2+a^2\right)\arctan\left(\frac{a}{r}\right)\right) }{3 a^3 r^2}
\\
&&\qquad\qquad=
\frac{8 M}{r^2+a^2}-\,\frac{2 e^2 \left(\left(r^2+a^2\right)^2 \arctan \left(\frac{a}{r}\right)-a r \left(r^2-a^2\right)\right)}{a^3 r^2 \left(r^2+a^2\right)}\\
&&\qquad\qquad=
\frac{8 M}{r^2+a^2}-\,\frac{2 e^2}{a(r^2+a^2)}{\rm BT}\left(\frac{a}{r}\right).
\end{eqnarray*}
Here
$$
{\rm BT}(x)=\left(x^2+2\right) \arctan x+\frac{\arctan x-x+x^3}{x^2},
$$
an expression that makes clear the behavior of the function ${\rm BT}$,
as the first two terms of the Taylor expansion of $x\mapsto\arctan x$
around zero are $x-\,\frac{x^3}{3}$. In particular,
the function ${\rm BT}$ is strictly increasing, satisfies ${\rm BT}(0)=0$,
${\rm BT}(0)=\frac{8}{3}$, and $\lim_{x\to+\infty}
\left({\rm BT}\,(x)/
\left(\frac{\pi}{2}x^2\right)\right)=1$.
This finally yields
\be\label{J}
JQ=\frac{Ma}{\Xi^2}-\,\frac{e^2}{4\Xi^2}{\rm BT}\left(\frac{a}{r_+}\right)
=\frac{Ma}{\Xi^2}-\,\frac{Q^2}{4}{\rm BT}\left(\frac{a}{r_+}\right),
\ee
with
$$
Q=\frac{e}{\Xi}
$$
the total charge
(see~\cite[(19)]{CCK} and~\cite[beginning of Subsection~3.2 on p.\ 32]{peter1}).

Now we consider $K$ to be $K=\frac{\partial_t}{\Xi}$ and we turn to the calculation of the Komar integral
$$
{\cal M}=-\,\frac{1}{8\pi}\int_{{\cal S}}\star \md K_\flat.
$$
This is the mass calculated at the event horizon. 
To justify this choice of $K$ we refer to~\cite[(2.12) and Subsection~2.2]{Gibbons}
(see also~\cite[(18)]{CCK}).
As $\partial_t=\frac{1}{2}\partial_v-\,\frac{1}{2}\partial_u$,
we have that
$$
\Xi K_\flat=-\Omega^2\md u+\Omega^2\md v+\frac{(\bphi)^2\gamma_{\phi_\star\phi_\star}}{2}
\md v-\,\frac{\bphi\gamma_{\theta_\star\phi_\star}}{2}\md\theta_\star-\,
\frac{\bphi\gamma_{\phi_\star\phi_\star}}{2}\md\phi_\star,
$$
\begin{eqnarray*}
\Xi \md K_\flat&=&2\partial_{r_\star}\Omega^2\md u\wedge\mmd v
+\frac{1}{2}\partial_{r_\star}((\bphi)^2\gamma_{\phi_\star\phi_\star})\md u\wedge\mmd v-\,\frac{1}{2}
\partial_{r_\star}(\bphi\gamma_{\phi_\star\phi_\star})\md u\wedge\mmd\phi_\star+
\ldots,
\end{eqnarray*}
\begin{eqnarray*}
	\Xi (\md K_\flat)^\sharp&=&
	-\,\frac{1}{4\Omega^4}\left(
	2\partial_{r_\star}\Omega^2
	+\frac{1}{2}\partial_{r_\star}((\bphi)^2\gamma_{\phi_\star\phi_\star})
	-\,\frac{\bphi}{2}
	\partial_{r_\star}(\bphi\gamma_{\phi_\star\phi_\star})
	\right)(\partial_u\otimes\partial_v-\partial_v\otimes\partial_u)
	+
	\ldots\\
	&=&
	-\,\frac{1}{4\Omega^4}\left(
	2\partial_{r_\star}\Omega^2
	+\frac{\bphi\gamma_{\phi_\star\phi_\star}}{2}
	\partial_{r_\star}\bphi
	\right)(\partial_u\otimes\partial_v-\partial_v\otimes\partial_u)
	+
	\ldots
\end{eqnarray*}
and
$$
-\Xi\star\mmd K_\flat=\left(
\frac{\partial_{r_\star}\Omega^2}{\Omega^2}
+\frac{\bphi}{2}\frac{1}{2\Omega^2}\gamma_{\phi_\star\phi_\star}
\partial_{r_\star}\bphi
\right)
\frac{L\sin\theta}{\Xi}\md\theta_\star\wedge\mmd\phi_\star+
\ldots.
$$
Using~\eqref{surface}, \eqref{account1}, \eqref{account2} and~\eqref{J}, this yields
\begin{eqnarray}
{\cal M}&=&\frac{1}{8\pi\Xi}\int_{{\cal S}}\frac{\partial_{r_\star}\Omega^2}{\Omega^2}
\frac{L\sin\theta}{\Xi}\md\theta_\star\wedge\mmd\phi_\star+
2\frac{\Xi a}{r^2+a^2}JQ\nonumber\\
&=&\frac{1}{8\pi\Xi}\int_{{\cal S}}(2\kappa_+)\frac{r^2+a^2}{\Xi}\sin\theta
\frac{\partial\theta}{\partial\theta_\star}\md\theta_\star\wedge\mmd\phi_\star+
2\frac{\Xi a}{r^2+a^2}JQ\label{smarr}\\
&=&\frac{1}{2\Xi^2}\partial_r\Delta_r+2\frac{\Xi a}{r^2+a^2}JQ\nonumber\\
&=&\frac{1}{\Xi^2}\left(
\frac{1}{2} \left(-\,\frac{a^2 \left(\Lambda  r^2+3\right)}{3 r}
-\Lambda  r^3+r-\,\frac{e^2}{r}\right)
+\frac{2M a^2}{r^2+a^2}
-\,\frac{e^2 \left(\left(r^2+a^2\right)^2 
	\arctan\left(\frac{a}{r}\right)-a r \left(r^2-a^2\right)\right)}{2 a r^2 \left(r^2+a^2\right)}\right)
\nonumber\\
&=&
\frac{1}{\Xi^2}\left(
-\,\frac{\left(r^2+a^2\right) \left(a r \left(\Lambda  r^2-1\right)+e^2 \arctan\left(\frac{a}{r}\right)\right)}{2 a r^2}\right)
\nonumber\\
&=&
\frac{1}{\Xi^2}\left(
M-\,\frac{\Lambda}{3}r(r^2+a^2)-\,\frac{e^2}{2\Xi a}
\frac{a r+\left(r^2+a^2\right) \arctan\left(\frac{a}{r}\right)}{r^2}
\right)\nonumber\\
&=&\frac{M}{\Xi^2}-\,\frac{\Lambda}{3\Xi}\frac{r_+(r_+^2+a^2)}{\Xi}-\,
\frac{e^2}{2\Xi^2a}{\rm AT}\left(\frac{a}{r_+}\right)\nonumber\\
&=&\frac{M}{\Xi^2}-\,\frac{\Lambda}{3\Xi}\frac{r_+(r_+^2+a^2)}{\Xi}-\,
\frac{Q^2}{2a}{\rm AT}\left(\frac{a}{r_+}\right),\label{mass}
\end{eqnarray}
with
$$
{\rm AT}(x)=x+(1+x^2)\arctan x.
$$
\begin{rmk} The quantity
	$$\frac{4\pi}{3}\frac{r_+(r_+^2+a^2)}{\Xi}$$
	is the\/ {\rm Parikh~\cite[(10)]{Parikh}} volume of the black hole (see also\/~{\rm \cite[(91)]{BL}}).
\end{rmk} 
Of course, \eqref{smarr} is Smarr's formula (recall~\eqref{angular})
\begin{eqnarray*}
{\cal M}&=&\frac{\kappa_+}{4\pi\Xi}\left(4\pi\frac{r_+^2+a^2}{\Xi}\right)+2\Omega_H JQ
\ =\ \frac{\kappa_+}{4\pi\Xi}{\cal A}+2\Omega_H JQ\\
&=&\frac{\kappa_+}{4\pi\Xi}{\cal A}+2\Omega_H\frac{Ma}{\Xi^2}-\frac{Q^2}{2}\frac{\Xi a}{r_+^2+a^2}
{\rm BT}\left(\frac{a}{r_+}\right)
\end{eqnarray*}
(see, for example, \cite[(9)]{CCK} for the value of ${\cal A}$, and see~\cite{dolan} for much more on Smarr's formula).

\section*{Acknowledgements}

The authors were both partially supported by FCT/Portugal through UID/MAT/04459/2019. 
A.T.~Franzen was also supported by SFRH/BPD/115959/2016.


\section*{Conflict of interest statement}
On behalf of all authors, the corresponding author states that there is no conflict of interest.

\end{document}